\documentclass[12pt]{article}
\emergencystretch=15pt

\usepackage{amssymb,latexsym}
\usepackage{amsmath}
\usepackage{hyperref}
\usepackage{amsthm}
\usepackage{cite}
\usepackage{graphicx}
\usepackage[dvipsnames]{xcolor}
\usepackage{mathtools}
\usepackage{cleveref}

\usepackage{tikz} 
\usepackage{pgfplots}
\usepackage{esvect}

\usetikzlibrary{intersections}
\usetikzlibrary{shapes,arrows}
\usetikzlibrary{calc}
\usetikzlibrary{shapes.geometric}
\usetikzlibrary{shapes.arrows}
\usetikzlibrary{arrows.meta}
\usetikzlibrary{decorations.markings}
\usetikzlibrary{patterns}
\usetikzlibrary{fit}
\usetikzlibrary{hobby}
\usepgfplotslibrary{patchplots}
\pgfplotsset{compat=1.12}

\usepackage[top=2cm,bottom=2cm,left=2cm,right=2cm]{geometry}
\usepackage{colonequals}
\usepackage[normalem]{ulem}
\numberwithin{equation}{section}

\newtheorem{Thm}{Theorem}[section]
\newtheorem{thm}[Thm]{Theorem}
\newtheorem{Lem}[Thm]{Lemma}
\newtheorem{Def}[Thm]{Definition}
\newtheorem{defi}[Thm]{Definition}
\newtheorem{defprop}[Thm]{Proposition and Definition}
\newtheorem{deflem}[Thm]{Lemma and Definition}
\newtheorem{Thmdef}[Thm]{Theorem and Definition}
\newtheorem{Prop}[Thm]{Proposition}
\newtheorem{prop}[Thm]{Proposition}
\newtheorem{Rem}[Thm]{Remark}
\newtheorem{Cor}[Thm]{Corollary}
\newtheorem{Ex}[Thm]{Example}

\newcommand\R{{\mathbb R}}
\newcommand\Sphere{{\mathbb S}}

\newcommand\hull{\operatorname{span}}
\newcommand{\photo}{P^{\:\!n}}
\renewcommand\d{\partial}

\newcommand\grad{\operatorname{grad}}

\renewcommand\a{\lambda}

\newcommand{\definedas}{\mathrel{\raise.095ex\hbox{\rm :}\mkern-5.2mu=}}
\newcommand{\asdefined}{\mathrel{=\mkern-5.2mu}\raise.095ex\hbox{\rm :}\;}
\newcommand{\surf}{\Sigma^{n-1}}

\newcommand{\slice}{M^{n}}

\newcommand\beq{\begin{equation}}
\newcommand\eeq{\end{equation}}
\newcommand\ben{\begin{enumerate}}
\newcommand\een{\end{enumerate}}
\newcommand\bit{\begin{itemize}}
\newcommand\eit{\end{itemize}}

\newcommand{\Ric}{\operatorname{Ric}}

\newcommand{\defeq}{\vcentcolon=}
\newcommand{\eqdef}{=\vcentcolon}
\newcommand{\sgn}{\operatorname{sgn}}

\definecolor{greyblue}{RGB}{0, 104, 139}
\definecolor{gruen}{RGB}{13,146,8}

 \makeatletter
\newenvironment{proofof}[1]{\par
  \pushQED{\qed}%
  \normalfont \topsep6\p@\@plus6\p@\relax
  \trivlist
  \item[\hskip\labelsep
        \itshape
    Proof of #1\@addpunct{.}]\ignorespaces
}{%
  \popQED\endtrivlist\@endpefalse
}
\makeatother

\title{On equipotential photon surfaces in (electro-)static spacetimes of arbitrary dimension}

\author{Carla Cederbaum\thanks{cederbaum@math.uni-tuebingen.de}, Sophia Jahns\thanks{jahns@math.uni-tuebingen.de}, Olivia Vi\v{c}\'{a}nek Mart\'{i}nez\thanks{olivia.vicanek-martinez@math.uni-tuebingen.de} 
 \\[1ex]Department of Mathematics, University of T\"ubingen} 
 
\begin{document}

\date{}
\maketitle

\begin{abstract}
We study timelike, totally umbilic hypersurfaces -- called photon surfaces -- in $n+1$-dimensional static, asymptotically flat spacetimes, for $n+1\geq4$. First, we give a complete characterization of photon surfaces in a class of spherically symmetric spacetimes containing the (exterior) subextremal Reissner--Nordstr\"om spacetimes, and hence in particular the (exterior) positive mass Schwarzschild spacetimes. Next, we give new insights into the spacetime geometry near equipotential photon surfaces and provide a new characterization of photon spheres (not appealing to any field equations).

We furthermore show that any asymptotically flat electrostatic electro-vacuum spacetime with inner boundary consisting of equipotential, (quasi-locally) subextremal photon surfaces and/or non-degenerate black hole horizons must be isometric to a suitable piece of the necessarily subextremal Reissner--Norstr\"om spacetime of the same mass and charge. Our uniqueness result applies work by Jahns and extends and complements several existing uniqueness theorems. Its proof fundamentally relies on the lower regularity rigidity case of the Riemannian Positive Mass Theorem.
\end{abstract}

\section{Introduction}\label{sec:intro}
\emph{Photon surfaces} are timelike, totally umbilic hypersurfaces in Lorentzian manifolds of dimension $n+1$ (see \cite{CVE,Perlick} for more information). They naturally occur in spacetimes with symmetries. In \cite{cederbaum2019photon}, the first named author and Galloway discuss photon surfaces in static, spherically symmetric spacetimes of dimension $n+1$. The first goal of our  work is to provide an extensive existence and uniqueness analysis of photon surfaces in this setting. More specifically,  \cite[Theorem 3.5]{cederbaum2019photon} characterizes all spherically symmetric photon surfaces arising in manifolds in the \emph{class $\mathcal{S}$} consisting of smooth Lorentzian manifolds $(\R\times \mathcal{I}\times\Sphere^{n-1},\mathfrak{g})$  for an open interval $\mathcal{I}\subseteq(0,\infty)$, and so that there exists a smooth, positive function $f\colon\mathcal{I}\to\R$ for which one can express the metric $\mathfrak{g}$ as
\begin{align*}
\mathfrak{g}&=-f(r)dt^{2}+\frac{1}{f(r)}dr^{2}+r^{2}\,\Omega
\end{align*}
in the global coordinates $t\in\R$, $r\in \mathcal{I}$, where $\Omega$ denotes the canonical metric on~$\Sphere^{n-1}$. 

\textbf{\emph{Analysis of photon surface ODE:}} It is asserted in~\cite[Theorem 3.5]{cederbaum2019photon} that the radial profile curve $r=r(t)$ of any spherically symmetric photon surface in a Lorentzian manifold of class $\mathcal{S}$ must obey the \emph{photon surface ODE}
\begin{align}\label{eq:photonsurfintro}
\dot{r}(t)^{2}&=\frac{f(r(t))^{2}(\lambda^{2}r(t)^{2}-f(r(t)))}{\lambda^{2}r(t)^{2}},
\end{align}
where $\lambda>0$ is a parameter arising as the (necessarily constant, necessarily positive) umbilicity factor of the photon surface in the ambient manifold $(\R\times \mathcal{I}\times\Sphere^{n-1},\mathfrak{g})$. 
 The first goal of this paper (see \Cref{sec:spherical}) is to fully analyze and classify all solutions of the photon surface ODE~\eqref{eq:photonsurfintro} in a rather rich subclass $\mathcal{S}_{\text{ext}}$ of $\mathcal{S}$. For example, the class $\mathcal{S}_{\text{ext}}$ of $\mathcal{S}$ contains the (exterior) subextremal Reissner--Nordstr\"om and hence the (exterior) positive mass Schwarzschild spacetimes which are of central importance in General Relativity and Geometric Analysis. More generally, $\mathcal{S}_{\text{ext}}$ contains all asymptotically flat non-degenerate black hole spacetimes in class $\mathcal{S}$ in which a suitable auxiliary function $v_\text{eff}^{f}$ computed from the metric coefficient $f$ behaves schematically like in~\Cref{fig:yintro}. The function $v_\text{eff}^{f}$ is related to the effective potential in the analysis of null geodesics (see \Cref{subsec:generating}). 
 
\begin{figure}[h!]
\centering
\includegraphics[scale=0.95]{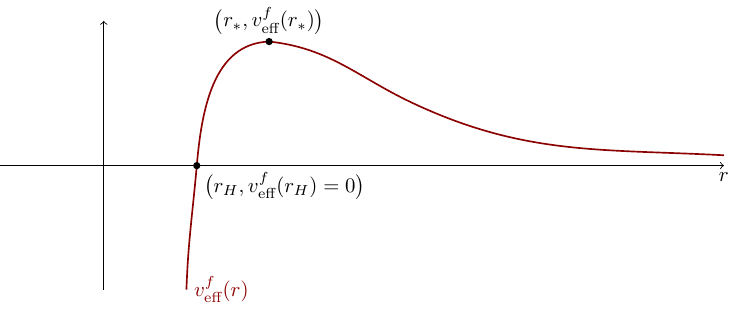}
\caption{Sketch of the auxiliary function $v_{\text{eff}}^f(r)=\frac{f(r)}{r^2}$.}
\label{fig:yintro}
\end{figure}
 
We obtain a complete qualitative understanding of all spherically symmetric photon surfaces in spacetimes of class $\mathcal{S}_{\text{ext}}$ and beyond. In a nutshell, we find that the parameter $\lambda_{*}^{2}=v_{\text{eff}}^f(r_{*})$ acts as a threshold, separating different types of behavior of solutions as indicated in \Cref{fig:casesintro}.

\begin{figure}[h!]
\centering
\includegraphics[scale=0.79]{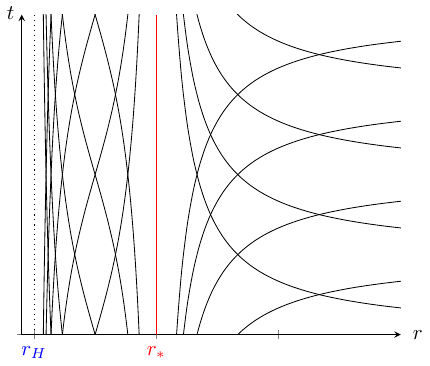}
\includegraphics[scale=0.79]{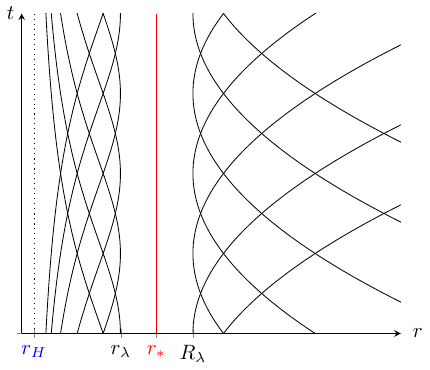}
\includegraphics[scale=0.79]{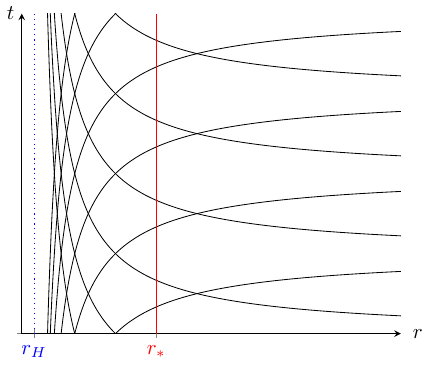}
\caption{Examples of photon surfaces for umbilicity factors $\lambda=\lambda_{*}$, $0<\lambda<\lambda_{*}$, and $\lambda>\lambda_{*}$, respectively, including time-translates and time-reflections thereof. The \textcolor{red}{vertical red line} represents the photon sphere at $r=r_{*}$ which has $\lambda=\lambda_{*}$.}
\label{fig:casesintro}
\end{figure}

Foertsch, Hasse, and Perlick~\cite{Perlicketal} found the same types of photon surfaces in the $2+1$-dim.\ spacetime obtained by restricting the $3+1$-dimensional positive mass Schwarzschild spacetime to its equatorial plane. This spacetime lies in the class $\mathcal{S}_{\text{ext}}$ studied here. Interestingly, they used a very different, dimension-specific method based on the integrability of a certain null frame.\\

\textbf{\emph{Relation to null geodesics:}} All spherically symmetric photon surfaces in a spacetime of class $\mathcal{S}$ are generated by rotating suitable null geodesics through $\mathbb{S}^{n-1}$ (see \cite[Prop. 3.14]{cederbaum2019photon}). Hence, our classification of spherically symmetric photon surfaces in spacetimes in class $\mathcal{S}_{\text{ext}}$ is closely related to the classification of null geodesics in these spacetimes. This explains the close analogy of our analysis with the well-known Schwarzschild null geodesic analysis, see for example \cite[pp. 380--384]{ONeill}. This aspect will be discussed further in \Cref{subsec:generating}.\\

\textbf{\emph{Topology of lifts to phase space:}} Moving to the phase space (i.e., to the cotangent bundle), it was shown in \cite[Prop.\ 3.18]{cederbaum2019photon} that the canonical lifts of the null bundles over the \emph{maximal} (under inclusion) spherically symmetric photon surfaces, together with the canonical lifts of the null bundles over suitably defined (maximal) principal null hypersurfaces partition the null section of the phase space of any spacetime in class~$\mathcal{S}$. From Bugden~\cite{Bugden}, we know that the lift of the (future) null bundle over the photon sphere in the (exterior) $n+1$-dimensional Schwarzschild spacetime of positive mass is diffeomorphic to $T^{1}\mathbb{S}^{n-1}\times\R^{2}$, where $T^{1}\mathbb{S}^{n-1}$ denotes the unit tangent bundle over $\mathbb{S}^{n-1}$ and where $n+1\geq4$. In \Cref{sec:topologyphasespace}, we will generalize this result to photon spheres in spacetimes of class $\mathcal{S}$ and show that it also applies to the canonical lift of the null bundle over any maximal spherically symmetric photon surface in these spacetimes. We will also compute the topology of the canonical lift of the (future or past) null bundle over any maximal principal null hypersurface in these spacetimes to be $\mathbb{S}^{n-1}\times \mathbb{R}^2$. In particular, we prove that the partition identified in~\cite[Prop. 3.18]{cederbaum2019photon} is \emph{not} a smooth foliation.\\

\textbf{\emph{Uniqueness of asymptotically flat electrostatic electro-vacuum spacetimes with equipotential photon surfaces:}} 
The above analysis in combination with a uniqueness result for photon surfaces in spacetimes in class $\mathcal{S}$ \cite[Theorem 3.8, Remark 3.10, Remark 3.11]{cederbaum2019photon} allows us to characterize \emph{all} photon surfaces in (exterior) subextremal Reissner--Nordstr\"om spacetimes of dimension $n+1\geq4$ (see \Cref{sec:prelim}). Complementing this complete characterization, we prove that any \emph{suitably asymptotically flat electrostatic electro-vacuum} spacetime of dimension $n+1\geq4$ which has an inner boundary consisting of \emph{equipotential (quasi-locally) subextremal} photon surfaces and non-degenerate black hole horizons must be isometric to an (exterior) subextremal Reissner--Nordstr\"om spacetime (see \Cref{sec:unique} and in particular \Cref{thm:uniqueness}). Here, an \emph{electrostatic spacetime} is meant to be a (standard) static spacetime carrying a time-independent electric potential. A photon surface $P^{n}$ in an electrostatic spacetime is called \emph{equipotential} if the static lapse function $N$, the electric potential $\Psi$, and the length of its derivative $\vert d\Psi\vert$ are constant on each time-slice $P^{n}\cap\{t=\text{const}\}$ of the photon surface $P^{n}$ (but may depend on the time coordinate~$t$); last but not least, we use a quasi-local definition of subextremality in terms of suitably defined quasi-local (electric) charge and mass (see \Cref{sec:prelimunique}).

\Cref{thm:uniqueness} generalizes to the electrostatic electro-vacuum setting the corresponding result~\cite[Theorem 4.1]{cederbaum2019photon} for static vacuum spacetimes, where the condition of being (quasi-locally) subextremal simplifies to being \emph{outward directed}, i.e., such that the (outward) normal derivative $\eta(N)$ of the static lapse function $N$ is strictly positive. It is left open in \cite{cederbaum2019photon} whether outward directed equipotential photon surfaces in asymptotically flat static vacuum spacetimes must have positive mean curvature. We prove said positivity in our more general setting (see \Cref{prop: H>0}). Moreover, we derive an evolution equation for the static lapse function $N$ along equipotential photon surfaces in static spacetimes without assuming any field equations (see \Cref{prop:evou}) and obtain further insights into the spacetime geometry near an equipotential photon surface (see \Cref{sec:prelimunique,sec: H>0}). 

The vacuum uniqueness result \cite[Theorem 4.1]{cederbaum2019photon} relies on work by the first named author~\cite{ndimunique} which has recently been reproven and generalized in the case of spin manifolds by Raulot~\cite{Raulot}. Both results in turn generalize results on and apply ideas from static vacuum photon sphere uniqueness in $3+1$ dimensions by the first named author and Galloway~\cite{cedergal}, see also \cite{CederPhoto}, as well as results on and ideas from static vacuum black hole uniqueness in $n+1\geq4$ dimensions by Gibbons, Ida, and Shiromizu~\cite{GIZ} and Hwang~\cite{Hwang} and in $3+1$ dimensions by Bunting and Masood-ul-Alam~\cite{BMuA}. Static vacuum black hole uniqueness for connected black holes has also been proved by many other authors, see the reviews \cite{Heusler,RobinsonReview,HeuslerLivRev}, for a (then) complete list of references on further contributions. In addition, see Simon's spinor proof recently described in Raulot's article~\cite[Appendix~A]{Raulot} and the recent article by Agostiniani and Mazzieri~\cite{AM} for a new approach in the connected case.

Sticking with electrostatic spacetimes, our uniqueness result generalizes and heavily relies on the second named author's electrostatic photon sphere uniqueness result~\cite{jahns2019photon}. This result extends and uses ideas from the first named author and Galloway's electrostatic photon sphere uniqueness result~\cite{Cedrgal2} to higher dimensions (and establishes how to apply it in the case only a weaker version of the electro-vacuum Einstein equation holds) and thereby also extends and/or relies on ideas from Ruback's and Masood-ul-Alam's electrostatic black hole uniqueness results~\cite{Ruback,MuA} and builds on work by Kunduri--Lucietti~\cite{bubble}, see also \cite{Rogatko1,Rogatko2,Chrusciel,CT,Heus1,Heus2}. For further results on photon spheres, in particular for uniqueness results in the electro-vacuum case and in the case of other matter fields, see for example~\cite{VE1,VE2,Heusler,RobinsonReview,HeuslerLivRev,GibbonsWarnick,yazadjiev, YazaLazov, YazaLazov2, Shoom, Tomi, Tomi2, Yoshino,Yoshino2,PhysRevD.93.064003,Koga:2020gqd,MS,PhysRevD.104.124016,Reiris,Beig_Simon,Harvie}.\\

\textbf{This article is structured as follows:} In \Cref{sec:prelim}, we will briefly introduce the definitions and known facts relevant for the analysis and classification of photon surfaces in spacetimes of class $\mathcal{S}$ and their lifts to phase space to be performed in \Cref{sec:spherical} and \Cref{sec:nullgeophase}, respectively. We will then formulate and prove  our uniqueness result in \Cref{sec:unique}. This section also includes a derivation of the relevant quasi-local and global properties of equipotential photon surfaces in asymptotically flat electrostatic electro-vacuum spacetimes. \Cref{sec:appendix} contains an ODE result we will exploit in \Cref{sec:spherical,sec:unique}; this may be of independent interest. A table of contents can be found on page \pageref{toc}.\\[1ex]

\textbf{Acknowledgements.} The authors would like to thank Greg Galloway, Anabel Miehe, Karim Mosani, Volker Perlick, Bernard Whiting, and Markus Wolff for helpful discussions, comments, and questions and Felix Salfelder and Georgios Vretinaris for help with the graphics. The first named author would like to extend thanks to the University of Vienna and to the Mittag-Leffler Institute for allowing her to work in stimulating environments. She is indebted to the Baden-W\"urttemberg Stiftung for the financial support of this research project by the Eliteprogramme for Postdocs. Her work is supported by the focus program on Geometry at Infinity (Deutsche Forschungsgemeinschaft,  SPP 2026). The work of the first and second named authors was supported by the Institutional Strategy of the University of T\"ubingen (Deutsche Forschungsgemeinschaft, ZUK 63).

\newpage
\tableofcontents \label{toc}
\newpage

\section{Preliminaries}\label{sec:prelim}
In this section, we will introduce some basic concepts, definitions, and important results we will make use of later. We will first introduce and discuss relevant properties of the (exterior) subextremal Reissner--Nordstr\"om spacetimes which will play a central role in the uniqueness theorem we will prove in \Cref{sec:unique} and which will be the prime example of spacetimes to which our analysis of spherically symmetric photon surfaces in \Cref{sec:spherical} applies. Next, we will recall the definitions and describe some well-known properties of photon surfaces in general spacetimes and of photon spheres in static spacetimes, leaving those properties and facts only relevant for the discussion of the relationship between null geodesics and photon surfaces for \Cref{sec:topologyphasespace} and those only relevant for the uniqueness theorem proved in \Cref{sec:unique} to be discussed in \Cref{sec:unique}. We will then introduce the class $\mathcal{S}$ of spacetimes forming the basis for the analysis of photon surfaces in \cite{cederbaum2019photon} and cite the relevant results from that work. Finally, we will explain how the (exterior) subextremal Reissner--Nordstr\"om spacetimes fit in the framework of those results.\\[-4ex]

\subsection{The Reissner--Nordstr\"om spacetimes}\label{sec:RN}
The \emph{(exterior) subextremal Reissner--Nordstr\"om spacetime of mass $m\in\R$ and charge $q\in\R$} of dimension $n+1\geq4$, satisfying the \emph{subextremality condition} $m>\vert q\vert$, is the spherically symmetric Lorentzian manifold given by $(\R\times(r_{m,q},\infty)\times\mathbb{S}^{n-1},\mathfrak{g}_{m,q})$, 
\begin{align}\label{rnmetric}
\mathfrak{g}_{m,q}&=-N_{m,q}^{2}dt^{2}+N_{m,q}^{-2}dr^{2}+r^{2}\Omega,
\end{align}
where $\Omega$ denotes the standard metric on $\mathbb{S}^{n-1}$, $N_{m,q}\colon(r_{m,q},\infty)\to\R^+$ is the \emph{(static) lapse function}
\begin{align}\label{rnlapse}
N_{m,q}(r)&=\sqrt{1-\frac{2m}{r^{n-2}}+\frac{q^2}{r^{2(n-2)}}},
\end{align}
and $r_{m,q}\definedas\left(m+\sqrt{m^2-q^2}\right)^{\frac{1}{n-2}}$ is the largest root of the radicand of the lapse function $N_{m,q}$. It carries an \emph{electric potential} $\Psi_{m,q}\colon (r_{m,q},\infty)\to\R$ given by
\begin{align}\label{rnelectric}
\Psi_{m,q}(r)&= \frac{\sqrt{(n-1)}\,q}{\sqrt{2(n-2)}\, r^{n-2}}.
\end{align}
The (exterior) subextremal Reissner--Nordstr\"om spacetime of mass $m$ and charge $q$ models the exterior region of a static, electrically charged black hole of mass $m$ and electric charge $q$ surrounded by electro-vacuum and suitably isolated from exterior influences. The black hole horizon is located "at" $r=r_{m,q}$ (and is not part of the spacetime as we have written it; however, the metric smoothly extends to the horizon upon a suitable coordinate change). 

The formulas~\eqref{rnmetric}, \eqref{rnlapse} also define a smooth, spherically symmetric Lorentzian manifold if $m$ and $q$ violate the physically reasonable subextremality condition $m>\vert q\vert$; we then speak of the \emph{extremal} or \emph{superextremal Reissner--Nordstr\"om spacetimes} if $m=\vert q\vert\neq0$ or if $m<\vert q \vert$ or $m=q=0$, and set $r_{m,q}\definedas m^\frac{1}{n-2}\geq0$ and $r_{m,q}\definedas0$, respectively. All Reissner--Nordstr\"om spacetimes satisfy the vacuum Einstein--Maxwell equations, see \Cref{sec:AF}. This also applies to the \emph{(interior) subextremal Reissner--Nordstr\"om spacetime of mass $m\in\R$ and charge $q\in\R$}, i.e., when $m>\vert q\vert$ but $0<r<\overline{r}_{m,q}\definedas \left(m-\sqrt{m^2-q^2}\right)^{\frac{1}{n-2}}$, the other root of the lapse function $N_{m,q}$.\\

For $q=0$, one recovers the well-known \emph{Schwarzschild(--Tangherlini) spacetimes} (with vanishing electric potential). If $m>0$, one finds $r_m\definedas r_{m,q=0}=(2m)^\frac{1}{n-2}$, and the metric \eqref{rnmetric} specializes to the exterior Schwarzschild metric describing the exterior region of a static black hole of mass $m$ surrounded by vacuum and suitably isolated from exterior influences. On the other hand, $\overline{r}_{m,q=0}=0$. If $m=0$, one finds $r_{m=0,q=0}=0$ and recovers the Minkowski metric written in spherical polar coordinates, and if $m<0$, one has $r_{m,q=0}=0$ and obtains the negative mass Schwarzschild spacetime which is of more theoretical interest only. All Schwarzschild spacetimes satisfy the vacuum Einstein equations, see \Cref{sec:AF}.

The Reissner--Nordstr\"om spacetime $(\R\times(r_{m,q},\infty)\times\mathbb{S}^{n-1},\mathfrak{g}_{m,q})$ of mass $m$, charge $q$, and dimension $n+1\geq4$ can be rewritten in \emph{isotropic coordinates} on $\R\times(s_{m,q},\infty)\times\mathbb{S}^{n-1}$ by setting $s=s(r)$ via
\begin{align}\label{eq:defrs}
r &\asdefined s\bigg(1+\frac{m+q}{2s^{n-2}}\bigg)^{\frac{1}{n-2}}\bigg(1+\frac{m-q}{2s^{n-2}}\bigg)^{\frac{1}{n-2}},
\end{align}
and keeping the angular and time coordinates the same. Here, 
\begin{align}\label{smq}
s_{m,q}&\definedas s(r_{m,q})=\begin{cases}\left(\frac{m^2-q^2}{4}\right)^{\frac{1}{2(n-2)}}&\text{ for }m>\vert q\vert\\\left(\frac{\vert q\vert-m}{2}\right)^{\frac{1}{n-2}}&\text{ for }m\leq\vert q\vert\end{cases}.
\end{align}
The metric then reads
\begin{align}\label{eq:isotropic}
\mathfrak{g}_{m,q}&=-\widetilde{N}^2_{m,q}dt^{2} + \varphi_{m,q}^{\frac{2}{n-2}}\,\delta\asdefined -\widetilde{N}^2_{m,q}dt^{2} + g_{m,q},
\end{align}
where $\delta=ds^2+s^2\Omega$ denotes the Euclidean metric in spherical polar coordinates, the lapse function $N_{m,q}$ and electric potential $\Psi_{m,q}$ transform to $\widetilde{N}_{m,q}=\widetilde{N}_{m,q}(s)$ defined by $\widetilde{N}_{m,q}(s(r))=N_{m,q}(r)$ and and $\widetilde{\Psi}_{m,q}=\widetilde{\Psi}_{m,q}(s)$ defined by $\widetilde{\Psi}_{m,q}(s(r))=\Psi_{m,q}(r)$. They are given by
\begin{align}
\widetilde{N}_{m,q}(s) &= \frac{ 1-\frac{m^2-q^2}{4s^{2(n-2)}}}{\left(1+\frac{m}{2s^{n-2}}\right)^2-\frac{q^2}{4s^{2(n-2)}}},\\
\widetilde{\Psi}_{m,q}(s) &=\frac{\sqrt{n-1}\,q}{\sqrt{2(n-2)}\, s^{n-2}\left( \left(1+\frac{m}{2s^{n-2}}\right)^2-\frac{q^2}{4s^{2(n-2)}}\right)},
\end{align}
respectively, and the conformal factor $\varphi_{m,q} = \varphi_{m,q}(s)$ reads
\begin{align}\label{eq:varphiiso}
\varphi_{m,q}(s)&=\bigg(1+\frac{m+q}{2s^{n-2}}\bigg)\bigg(1+\frac{m-q}{2s^{n-2}}\bigg).
\end{align} 
To distinguish verbally between the radial coordinates, we will refer to $r$ as the \emph{area radius} and to $s$ as the \emph{isotropic radius}.

For completeness sake, we note that this transformation does not apply when $m>\vert q\vert$ but $0<r<\overline{r}_{m,q}$ as \eqref{eq:defrs} maps onto $(r_{m,q},\infty)$ and $\overline{r}_{m,q}<r_{m,q}$.

We will use the Reissner--Nordstr\"om spacetimes in isotropic coordinates as a template to define a suitable asymptotic flatness condition for the uniqueness results proven in \Cref{sec:unique}, see \Cref{sec:AF}.

\subsection{Asymptotically flat electrostatic electro-vacuum spacetimes}\label{sec:AF}
Before we touch more abstract definitions, let us agree that all manifolds and tensor fields appearing in this paper are smooth unless explicitly stated otherwise, that submanifolds are necessarily embedded without a need to state it, and that spacetimes are nothing else but Lorentzian manifolds; in fact, all Lorentzian manifolds appearing here are (standard) static (see below for a definition) and hence automatically time-oriented. Also, we assume that all manifolds are connected.

\begin{Def}[Static spacetimes]\label{def:static}
A spacetime $(\mathfrak{L}^{n+1},\mathfrak{g})$ is called \emph{(standard) static} if it is a warped product of the form
\begin{align}\label{static}
\mathfrak{L}^{n+1}&=\R\times \slice,\quad \mathfrak{g}=-N^{2}dt^{2}+g,
\end{align}
where $(\slice,g)$ is a smooth Riemannian manifold and $N\colon\slice\to\R^{+}$ is a smooth, positive function called the \emph{(static) lapse function} of the spacetime.
\end{Def}

\begin{Rem}[Static spacetimes cont., (canonical) time-slices]\label{rem:slice}
We will slightly abuse standard terminology and also call a spacetime $(\mathfrak{L}^{n+1},\mathfrak{g})$ static if it is the closure of an open subset of a warped product static spacetime $(\R\times\slice,\mathfrak{g}=-N^{2}dt^{2}+g)$, $\mathfrak{L}^{n+1}\subseteq\R\times\slice$, provided $g$ and $N$ extend smoothly to this boundary. We do so to allow for inner boundary $\partial\mathfrak{L}$ not arising as a warped product. We will denote the \emph{(canonical) time-slices} $\lbrace{t=\text{const.}\rbrace}$ of a static spacetime $(\mathfrak{L}^{n+1},\mathfrak{g})$, $\mathfrak{L}^{n+1}\subseteq\R\times\slice$ by $\slice(t)$ for $t\in\R$ and continue to denote the induced metric and (restricted) lapse function on $\slice(t)$ by $g$ and $N$, respectively.
\end{Rem}

\begin{Def}[Electrostatic spacetimes]\label{def:electrostatic}
A system $(\R\times\slice,\mathfrak{g},\Psi)$ consisting of a static spacetime $(\R\times\slice,\mathfrak{g})$ and a smooth \emph{electric potential} $\Psi\colon\R\times\slice\to\R$ is called \emph{electrostatic} if $\Psi$ is time-independent, i.e., if $\Psi\colon\slice\to\R$.
\end{Def}

Next, recall that \emph{electro-vacuum spacetimes} are spacetimes satisfying the source-free Einstein--Maxwell equations
\begin{align}\label{eq:EM}
\mathfrak{Ric}_{\alpha\beta}-\frac 1 2 \mathfrak R \mathfrak g_{\alpha \beta}&=2\left(F_{\alpha\gamma}F_\beta^{\phantom{\beta}\gamma}-\frac 1 {4} \mathfrak g_{\alpha\beta}F^{\mu\nu}F_{\mu\nu}\right)\\\label{eq:EM2}
\left(\operatorname{\mathfrak{div}}F\right)_{\alpha}&=0
\end{align}
on $\mathfrak{L}^{n+1}$, where $F$ denotes the Maxwell tensor of the spacetime, $\mathfrak{Ric}$ and $\mathfrak{R}$ denote the Ricci tensor and scalar curvature of, and $\operatorname{\mathfrak{div}}$ denotes the divergence with respect to $(\mathfrak L^{n+1}, \mathfrak g)$. In the electrostatic case $(\mathfrak{L}^{n+1}=\R\times M^{n},\mathfrak{g}=-N^{2}dt^{2}+g,\Psi)$, it is typically assumed that
\begin{align}\label{eq:F}
F&=d\Psi\wedge dt.
\end{align}
In $n+1=4$ dimensions, electromagnetic duality allows us to assume \eqref{eq:F} without loss of generality, while for $n+1>4$ the vanishing of the magnetic field indeed poses an additional assumption (see e.g.\cite{bubble,jahns2019photon}). It is well-known that, assuming \eqref{eq:F}, the source-free Einstein--Maxwell equations reduce to the \emph{electrostatic electro-vacuum equations}
\begin{align}\label{EEVE3}
N\operatorname{Ric} &=\nabla^2 N-\frac{2\,d\Psi\otimes d\Psi}{N}+\frac{2|d\Psi|^2}{(n-1)N}g\\\label{EEVE4}
\operatorname{R}&=\frac{2|d\Psi|^2}{N^2}\\\label{EEVE2}
0&=\operatorname{div} \left(\frac{\operatorname{grad}\Psi}{N}\right)
\end{align} 
on $M^{n}$, where $\operatorname{Ric}$, $\operatorname{R}$, $\nabla^{2}$, $\operatorname{div}$, and $\operatorname{grad}$ denote the Ricci and scalar curvature, the Hessian, the divergence, and the gradient of $\left(M^n, g\right)$, respectively. Contracting~\eqref{EEVE3} and using~\eqref{EEVE4}, one finds
\begin{align}\label{EEVE1}
\Delta N &=\frac{2(n-2)|d\Psi|^2}{(n-1)N}
\end{align}
on $M^{n}$, where $\Delta$ denotes the Laplace-Beltrami operator with respect to $g$.

\begin{defi}[Electrostatic electro-vacuum spacetimes]
Let $\left( M^n, g, N, \Psi\right)$ be an \emph{electrostatic system}, i.e., such that $(\mathfrak{L}^{n+1}=\R\times M^{n},\mathfrak{g}=-N^{2}dt^{2}+g,\Psi)$ is an electrostatic spacetime. Then $\left( M^n, g, N, \Psi\right)$ and the associated spacetime are called \emph{electro-vacuum} if they satisfy~\eqref{EEVE3}--\eqref{EEVE2}.
\end{defi}

\begin{Rem}[Static vacuum equations]\label{rem:vacuum}
In case the electric potential $\Psi$ vanishes, the electrostatic electro-vacuum equations~\eqref{EEVE3}--\eqref{EEVE2} reduce to the \emph{static vacuum equations}
 \begin{align}\label{EVE1}
N\operatorname{Ric} &=\nabla^2 N\\\label{EVE2}
\operatorname{R}&=0,
\end{align} 
and an electrostatic electro-vacuum system $\left( M^n, g, N, \Psi=0\right)$ becomes a static vacuum system $\left( M^n, g, N\right)$. All photon surface and photon sphere notions that  will be introduced in this paper reduce to the well-known notions for static vacuum systems when $\Psi=0$.
\end{Rem}

It is well-known and easy to check that the Reissner--Nordstr\"om spacetimes satisfy~\eqref{EEVE3}--\eqref{EEVE2} and the Schwarzschild spacetimes satisfy \eqref{EVE1}, \eqref{EVE2}, see also \Cref{prop:genRN}. 

\begin{Rem}[Scaling invariance]\label{rem:scaling}
Let $\alpha,\mu>0$, $\beta\in\R$, and let $(\slice,g,N,\Psi)$ be an electrostatic electro-vacuum system. Then the rescaled system $(\slice,\mu^{2} g,\alpha N,\pm\alpha\Psi+\beta)$ is an electrostatic electro-vacuum system, as well.
\end{Rem}

We will use the following notion of asymptotic flatness for electrostatic systems and the corresponding spacetimes (as in~\cite{jahns2019photon,cederbaum2019photon}).
\begin{defi}[Asymptotic conditions]\label{def:asymptotics}
Let $m,q\in\R$. An \emph{electrostatic system} $\left( M^n, g, N, \Psi\right)$ of dimension $n\geq3$ is called \emph{asymptotically flat}, or, more precisely, \emph{asymptotic to (isotropic) Reissner-Nordstr\"om of mass $m$ and charge $q$} if the manifold $\slice$ is diffeomorphic to the union of a (possibly empty) compact set and an open \emph{end} $E^{n}$ which is diffeomorphic to $\R^{n}\setminus \overline{B}$, $\Phi=(x^{i})\colon E^{n}\to\R^{n}\setminus \overline{B^n_S(0)}$ for some $S>s_{m,q}$, and if
\begin{align}\label{AF}
(\Phi_{*}g)_{ij}-(g_{m,q})_{ij}&=\mathcal{O}_{2}(s^{-(n-1)}),\\\label{AFN}
\Phi_{*}N-\widetilde{N}_{m,q}&=\mathcal{O}_{2}(s^{-(n-1)}),\\\label{ADPsi}
\Phi_{*}\Psi-\widetilde{\Psi}_{m,q}&=\mathcal{O}_{2}(s^{-(n-1)})
\end{align}
for $i,j=1,\dots,n$ on $\R^{n}\setminus\overline{B}$ as $s\definedas \sqrt{(x^{1})^{2}+\dots+(x^{n})^{2}}\to\infty$. Here, $s_{m,q}$, $(g_{m,q})_{ij}$, $\widetilde{N}_{m,q}$, and $\widetilde{\Psi}_{m,q}$ denote the isotropic horizon radius, spatial Reissner--Nordstr\"om metric, lapse function, and electric potential of mass $m$ and charge $q$ in isotropic coordinates, respectively, see \Cref{sec:RN}.

We will abuse language and call a subset $\mathfrak{L}^{n+1}\subseteq\R\times \slice$ of an asymptotically flat electrostatic spacetime $(\R\times\slice,\mathfrak{g},\Psi)$ asymptotically flat as long as $\mathfrak{L}^{n+1}$ has timelike inner boundary $\partial\mathfrak{L}$. Also, we will call a static system $(\slice,g,N)$ asymptotically flat if it satisfies~\eqref{AF}, \eqref{AFN}.

Here and in the following, we say that a smooth function $f\colon \R^{n}\setminus\overline{B_{S}^{n}(0)} \to\R$ \emph{satisfies $f = O_{2}(r^{p})$ as $r\to\infty$}  for some $p\in\R$ if there exists a constant $C > 0$ such that $\vert D^{\alpha} f \vert\leq C r^{-p-\vert\alpha\vert}$ in $\R^{n}\setminus\overline{B_{S}^{n}(0)} $ for every multi-index $\alpha$ satisfying
$\vert\alpha\vert\leq2$.
\end{defi}

Obviously, the Reissner--Nordstr\"om electrostatic system of mass $m$ and charge $q$ is asymptotically flat of mass $m$ and charge $q$ in this sense and it can be shown by a rather straightforward computation that the asymptotic mass and charge parameters are in fact uniquely determined. 

\begin{Rem}[Metric completeness]\label{rem:complete}
Asymptotically flat electrostatic systems, with or without boundary, are necessarily metrically complete and geodesically complete (up to the boundary) with at most finitely many boundary components which are necessarily all closed, see for example~\cite[Appendix]{CGM}.
\end{Rem}

\subsection{Generalized Reissner--Nordstr\"om spacetimes}\label{sec:genRN}
As a preparation for the local characterization of equipotential photon surfaces in electrostatic, electro-vacuum spacetimes in \Cref{sec: H>0}, let us introduce the following generalization of Reissner--Nordstr\"om spacetimes. Some generalized Reissner--Nordstr\"om spacetimes are also interesting examples for spacetimes in class $\mathcal{S}$ (see \Cref{sec:classS}).

\begin{Def}[Generalized Reissner--Nordstr\"om spacetimes]\label{def:genRN}
Let $\kappa\in\{0,\pm1\}$, $m,q\in\R$, $n\geq3$, and let $(\surf_{\ast},\sigma_{\ast})$ be an $(n-1)$-dimensional Einstein manifold of scalar curvature 
\begin{align}
\operatorname{R}_{\sigma_{\ast}}&=(n-1)(n-2)\kappa.
\end{align}
The \emph{generalized Reissner--Nordstr\"om spacetime of dimension $n+1$, mass $m$, (electric) charge $q$, parameter $\kappa$, and base $(\surf_{\ast},\sigma_{\ast})$} is the (possibly disconnected) Lorentzian manifold given by $(\R\times J_{\kappa,m,q}\times \surf_{\ast},\mathfrak{g}_{\sigma_{\ast},m,q})$, with
\begin{align}\label{genrnmetric}
\mathfrak{g}_{\sigma_{\ast},m,q}&\definedas-N_{\kappa,m,q}^{2}dt^{2}+N_{\kappa,m,q}^{-2}dr^{2}+r^{2}\sigma_{\ast},
\end{align}
where $N_{\kappa,m,q}\colon J_{\kappa,m,q}\to\R^+$ is the \emph{(static) lapse function}
\begin{align}\label{genrnlapse}
N_{\kappa,m,q}(r)&\definedas\sqrt{\kappa-\frac{2m}{r^{n-2}}+\frac{q^2}{r^{2(n-2)}}}.
\end{align}
Here, we assume that the parameters are chosen such that the domain of definition $J_{m,q,\kappa}$ given by
\begin{align}\label{genrndomain}
J_{\kappa,m,q}&\definedas\left\{r\in\R^{+}\,\vert\,\kappa-\frac{2m}{r^{n-2}}+\frac{q^{2}}{r^{2(n-2)}}>0\right\}
\end{align}
is non-empty, $J_{\kappa,m,q}\neq\emptyset$. It carries an \emph{electric potential} $\Psi_{\kappa,m,q}\colon J_{\kappa,m,q}\to\R$ given by
\begin{align}\label{genrnelectric}
\Psi_{\kappa,m,q}(r)&\definedas \frac{\sqrt{n-1}\,q}{\sqrt{2(n-2)}\, r^{n-2}}.
\end{align}
\end{Def}

\begin{Rem}[On $J_{\kappa,m,q}$]
Depending on $\kappa$, $m$, and $q$, the set $J_{\kappa,m,q}$ can be disconnected (as it is well-known to be for subextremal Reissner--Nordstr\"om spacetimes). See \Cref{rem:BHgenRN} for more details.
\end{Rem}

\begin{Rem}[On the range of $\kappa$]
In view of \Cref{rem:scaling}, the only reason why we ask $\kappa\in\{0,\pm1\}$ is that the radial coordinate $r$ has a specific geometric meaning (only) in case $\kappa=\pm1$, see e.g.\ \Cref{defprop:projection}.
\end{Rem}

The most important fact about generalized Reissner--Nordstr\"om spacetimes is that they solve the electrostatic electro-vacuum equations.
\begin{Prop}[Generalized Reissner--Nordstr\"om spacetimes are electrostatic electro-vacuum spacetimes]\label{prop:genRN}
Let $\kappa\in\{0,\pm1\}$, $m,q\in\R$, and let $(\surf_{\ast},\sigma_{\ast})$ be an $(n-1)$-dimensional Einstein manifold of scalar curvature $\operatorname{R}_{\sigma_{\ast}}=(n-1)(n-2)\kappa$ for $n\geq3$. The generalized Reissner--Nordstr\"om spacetime $(\R\times J_{\kappa,m,q}\times\surf_{\ast},\mathfrak{g}_{\sigma_{\ast},m,q})$ is an electrostatic spacetime which solves \eqref{EEVE3}-\eqref{EEVE2}.
\end{Prop}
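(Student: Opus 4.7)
The plan is to verify the electrostatic electro-vacuum equations \eqref{EEVE3}--\eqref{EEVE2} directly on the canonical time-slice $\slice = J_{\kappa,m,q}\times\surf_{\ast}$ with spatial metric
\[
g = N_{\kappa,m,q}^{-2}\,dr^{2} + r^{2}\sigma_{\ast}.
\]
The key structural observation is that $g$ is a warped product with one-dimensional base $J_{\kappa,m,q}$ (equipped with $N_{\kappa,m,q}^{-2}dr^{2}$) and fiber $(\surf_{\ast},\sigma_{\ast})$, warping function $r$. Introducing proper radial distance via $d\rho = dr/N_{\kappa,m,q}$ casts $g$ into the standard warped product form $d\rho^{2} + h(\rho)^{2}\sigma_{\ast}$ with $h(\rho) = r(\rho)$; this change of variable is smooth on $J_{\kappa,m,q}$ because $N_{\kappa,m,q} > 0$ there by definition. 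The Einstein condition $\Ric_{\sigma_{\ast}} = (n-2)\kappa\,\sigma_{\ast}$ then feeds directly into the Bishop--O'Neill formulas for the Ricci tensor of a warped product.

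Setting $N \defeq N_{\kappa,m,q}$, $\Psi \defeq \Psi_{\kappa,m,q}$, and $' \defeq d/dr$, one computes $\grad \Psi = N^{2}\Psi'\,\d_{r}$, whence
\[
|d\Psi|^{2} = N^{2}(\Psi')^{2} = \frac{(n-1)(n-2)\,q^{2}\,N^{2}}{2\,r^{2(n-1)}}.
\]
Since both $N$ and $\Psi$ depend only on $r$, their Hessians as well as $d\Psi\otimes d\Psi$ are diagonal in the adapted orthonormal frame $\{e_{0} \defeq N\d_{r},\,e_{1},\dots,e_{n-1}\}$, with radial and tangential blocks expressible purely in terms of $N$, $N'$, and $r$. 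Each of \eqref{EEVE3}--\eqref{EEVE2} thus reduces to a scalar identity in $r$ involving $N^{2} = \kappa - 2m r^{-(n-2)} + q^{2}r^{-2(n-2)}$ and its derivatives, which are verified by matching powers of $r^{-(n-2)}$ and $r^{-2(n-2)}$.

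The verification then proceeds equation by equation. First, \eqref{EEVE4} follows by tracing the warped product Ricci formula: using $\kappa - N^{2} = 2m/r^{n-2} - q^{2}/r^{2(n-2)}$ and $NN' = (n-2)(m - q^{2}/r^{n-2})/r^{n-1}$, the $\kappa$ and $m$ contributions cancel and only the $q^{2}$ term survives, leaving exactly $(n-1)(n-2)q^{2}/r^{2(n-1)} = 2|d\Psi|^{2}/N^{2}$. The tensorial equation \eqref{EEVE3} is then checked on its radial-radial and tangential-tangential blocks separately: off-diagonal components vanish by rotational symmetry of $g$, $N$, and $\Psi$; the tangential block is proportional to the induced metric and reduces to one polynomial identity in $r$; the radial block picks up an extra contribution from $-2d\Psi\otimes d\Psi/N$ which again reduces to an elementary identity. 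Finally, \eqref{EEVE2} is immediate: since $\sqrt{\det g} = r^{n-1}\sqrt{\det\sigma_{\ast}}/N$ and $\grad\Psi/N = N\Psi'\,\d_{r}$,
\[
\operatorname{div}\!\left(\frac{\grad\Psi}{N}\right) = \frac{N}{r^{n-1}}\,\d_{r}\!\left(r^{n-1}\Psi'\right) = 0
\]
because $r^{n-1}\Psi'$ is constant by the explicit form \eqref{genrnelectric} of $\Psi$.

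The sole obstacle is careful bookkeeping: one must distinguish $r$-derivatives from proper-distance derivatives when applying the Bishop--O'Neill formulas, and apply the Einstein condition on $\sigma_{\ast}$ with the correct dimensional normalization $\Ric_{\sigma_{\ast}} = (n-2)\kappa\,\sigma_{\ast}$. Once this is done, every identity reduces to matching coefficients of powers of $r$, and no genuine analytic difficulty arises.
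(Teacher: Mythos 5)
The proposal is correct but takes a genuinely different route from the paper. You verify \eqref{EEVE3}--\eqref{EEVE2} by explicit computation: after casting the spatial metric into standard warped-product form and feeding in the Einstein condition on $\sigma_{\ast}$, each equation becomes a scalar identity in $r$ that you check by matching coefficients of $r^{-(n-2)}$ and $r^{-2(n-2)}$ (your sample computations for \eqref{EEVE4} and \eqref{EEVE2} are correct). The paper performs the same warped-product decomposition (tangential/normal to the leaves $\{(t,r)\}\times\surf_{\ast}$) but then avoids the power-matching entirely: it observes that the resulting radial ODEs, namely $N^{2}-\kappa=-\tfrac{r(N^{2})'}{n-2}-\tfrac{r^{2}(\Psi')^{2}}{n-1}$ for the tangential block and $(N^{2})''+\tfrac{(n-1)(N^{2})'}{r}=\tfrac{4(n-2)(\Psi')^{2}}{n-1}$ for the normal block, involve $\kappa$ only through the combination $N^{2}-\kappa$ and through derivatives of $N^{2}$, hence coincide with the ODEs of the classical $\kappa=1$ case and are satisfied because the Reissner--Nordstr\"om spacetimes are already known to be electro-vacuum. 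The paper's argument buys economy; yours is self-contained and does not presuppose the classical verification. Your treatments of \eqref{EEVE2} are essentially identical (both reduce to constancy of $r^{n-1}\Psi'$). One small caveat: the vanishing of the mixed radial--tangential components in \eqref{EEVE3} should be attributed to the warped-product structure (the Bishop--O'Neill formulas give no mixed Ricci components, and the Hessian of a function of $r$ alone has no mixed components) rather than to ``rotational symmetry'' --- the fiber $(\surf_{\ast},\sigma_{\ast})$ is a general Einstein manifold and need not admit any isometries.
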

\begin{proof}
First of all, $(\R\times J_{\kappa,m,q}\times\surf_{\ast},\mathfrak{g}_{\sigma_{\ast},m,q})$ is clearly electrostatic by definition. Let us drop all references to $\kappa$, $m$, and $q$ for notational simplicity. We exploit the warped product form of the metric: By ``tangential'' and ``normal'', we will mean tangential and normal to the leaves $\{(t,r)\}\times\surf_{\ast}$ of $\{t\}\times J\times\surf_{\ast}$, respectively. Then first of all, the tangential-normal component of \eqref{EEVE3} is trivially satisfied as both sides vanish. Next, let us verify the tangential-tangential component of \eqref{EEVE3}: From the warped product structure, one finds 
\begin{align*}
\Ric&=\,^{\sigma_{\ast}}\!\Ric-\left((n-2)N+rN'\right)N\sigma_{\ast},
\end{align*}
where $'$ denotes taking an $r$-derivative and $^{\sigma_{\ast}}\!\Ric$ denotes the Ricci tensor of $\sigma_{\ast}$. Using the fact that $\sigma_{\ast}$ is an Einstein metric with $\operatorname{R}_{\sigma_{\ast}}=(n-1)(n-2)\kappa$, the tangential-tangential component of \eqref{EEVE3} reduces to the ODE 
\begin{align*}
N^{2}-\kappa&=-\frac{r(N^{2})'}{n-2}-\frac{r^{2}(\Psi')^{2}}{n-1}
\end{align*}
for $N^{2}$ and $\Psi'$. As both sides are effectively independent of $\kappa$, this ODE is satisfied because it is satisfied in the Reissner--Nordstr\"om case. Similarly, the normal-normal component of \eqref{EEVE3} reduces to the ODE
\begin{align}\label{eq:proofgen}
(N^{2})''+\frac{(n-1)(N^{2})'}{r}&=\frac{4(n-2)(\Psi')^{2}}{n-1}
\end{align}
which involves only $(N^{2})'$ and $\Psi'$ and is hence also satisfied because it is satisfied in the Reissner--Nordstr\"om case. To see that \eqref{EEVE4} is satisfied, we use that we can equivalently verify \eqref{EEVE1} which in fact also reduces to \eqref{eq:proofgen} and is hence satisfied. Finally, \eqref{EEVE2} is equivalent to
\begin{align*}
\Psi''+\frac{(n-1)\Psi'}{r}&=0
\end{align*}
as $N>0$, which is manifestly satisfied as $\Psi$ does not depend on $\kappa$.
\end{proof}

\subsection{Photon surfaces, photon spheres, and black hole horizons}\label{sec:photonsphere}
We are now ready to recall the definition of photon surfaces, the central object in this paper (see \cite{CVE,Perlick} for more information).

\begin{Def}[Photon surfaces]\label{def:photo-surf}
A timelike hypersurface $\photo\hookrightarrow\mathfrak{L}^{n+1}$ in a Lorentzian manifold $(\mathfrak{L}^{n+1},\mathfrak{g})$ is called a \emph{photon surface} if every null geodesic initially tangent to $\photo$ remains tangent to $\photo$ as long as it exists or in other words if $\photo$ is \emph{null totally geodesic}.
\end{Def}

It will be useful to know that, by an algebraic observation, being a null totally geodesic timelike hypersurface is equivalent to being a totally umbilic timelike hypersurface:
\begin{Prop}[\!\!{\cite[Theorem II.1]{CVE}, \cite[Proposition 1]{Perlick}}]\label{prop:umbilic}
Let $(\mathfrak{L}^{n+1},\mathfrak{g})$ be a Lorentzian manifold and $\photo\hookrightarrow\mathfrak{L}^{n+1}$ an embedded, timelike hypersurface. Then $\photo$ is a photon surface if and only if it is \emph{totally umbilic}, that is, if and only if its second fundamental form is pure trace.
\end{Prop}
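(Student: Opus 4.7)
The plan is to decouple the equivalence into a geometric step and a pointwise algebraic step. Geometrically, $\photo$ is null totally geodesic if and only if its second fundamental form $h$ vanishes on every null tangent vector. Algebraically, on a Lorentzian vector space a symmetric bilinear form that vanishes on the entire null cone is necessarily a scalar multiple of the metric. Combining the two gives the equivalence between the photon surface property and $h$ being pure trace.

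For the geometric step I would fix a smooth unit spacelike normal $\eta$ (available since $\photo$ is timelike) and use the Gauss formula $\nabla^{\mathfrak{L}}_{X}Y=\nabla^{\photo}_{X}Y+h(X,Y)\,\eta$ for tangent fields $X,Y$. If $\photo$ is a photon surface, then for any null $X\in T_p\photo$ the ambient null geodesic with initial data $(p,X)$ stays inside $\photo$; decomposing its ambient geodesic equation into tangential and normal parts immediately forces $h(X,X)=0$. Conversely, an intrinsic $\photo$-geodesic starting with null initial velocity stays null because the induced Levi-Civita connection is metric-compatible, so under the assumption that $h$ vanishes on null vectors the normal term $h(\dot\gamma,\dot\gamma)\,\eta$ in Gauss is identically zero along it; hence it is also an ambient geodesic, and by uniqueness it coincides with the ambient null geodesic through $(p,X)$, which therefore remains tangent to $\photo$.

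For the algebraic step I would work at a fixed $p\in\photo$, pick a $g$-orthonormal basis $e_{0},\dots,e_{n-1}$ of $T_p\photo$ with $e_{0}$ timelike, and set $h_{ij}\definedas h(e_i,e_j)$. Testing $h(X,X)=0$ on the null vectors $e_{0}\pm e_{i}$ with $i\geq 1$ forces $h_{0i}=0$ and $h_{ii}=-h_{00}$; testing on the one-parameter null family $e_{0}+\cos\theta\,e_{i}+\sin\theta\,e_{j}$ for $i\neq j$ then yields $h_{ij}\sin(2\theta)\equiv 0$, so $h_{ij}=0$. Written coordinate-freely this reads $h_{p}=-h_{00}\,g|_{T_p\photo}$, giving $h=\lambda\,g|_{\photo}$ with a smooth umbilicity factor $\lambda$ whose smoothness follows from that of $h$ and $g$.

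The statement is classical, so I do not expect a serious obstacle. The one point that needs care is the converse direction in the geometric step: one must observe that the pointwise-looking algebraic condition "$h$ vanishes on all null tangent vectors" already propagates along the intrinsic geodesic flow (via null-preservation of parallel transport in $\photo$), so that the normal component in Gauss vanishes along the entire curve rather than only at its starting point; this is what upgrades the algebraic condition to the geometric photon-surface property.
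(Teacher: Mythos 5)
Your proof is correct. Note that the paper does not prove this proposition itself but cites it from the literature (\cite[Theorem II.1]{CVE}, \cite[Proposition 1]{Perlick}), and your argument is exactly the standard one used there: the Gauss formula reduces the photon-surface property to the vanishing of the second fundamental form on null tangent vectors (with the converse direction handled, as you correctly observe, by the fact that the induced connection preserves nullity along intrinsic geodesics), and the pointwise algebraic lemma --- that a symmetric bilinear form on a Lorentzian space vanishing on the whole null cone is proportional to the metric, which your basis computation with $e_{0}\pm e_{i}$ and $e_{0}+\cos\theta\,e_{i}+\sin\theta\,e_{j}$ establishes correctly --- then yields total umbilicity.
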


It is well-known that the Minkowski spacetime hosts many photon surfaces, namely all one-sheeted hyperboloids and all timelike hyperplanes (see for example \cite[pp. 116--117]{ONeill}). As stated above, all Reissner--Nordstr\"om spacetimes are static, by the following definition.

In the context of static spacetimes, we will use the following definition of photon spheres, extending that of~\cite{CDiss,CederPhoto,CVE,cederbaum2019photon}, as introduced in \cite{YazaLazov,Cedrgal2,jahns2019photon}, see also \Cref{rem:normpsi}. 
\begin{Def}[Photon spheres]\label{def:photonsphere}
Let $(\mathfrak{L}^{n+1},\mathfrak{g},\Psi)$ be an electrostatic spacetime, $\photo\hookrightarrow\mathfrak{L}^{n+1}$ a photon surface. Then $\photo$ is called a \emph{photon sphere} if $\photo = \R \times\surf$ for some smooth hypersurface $\surf\hookrightarrow \slice$ and if the lapse function $N$, the electric potential $\Psi$ and the length of its derivative $\vert d\Psi\vert_{\mathfrak{g}}$ are constant along $P^n$. In a static spacetime, the conditions on $\Psi$ are dropped.
\end{Def}

The most important example of a photon sphere is the spherically cylindrical hypersurface $\lbrace r = (nm)^{\frac{1}{n-2}} \rbrace$ of the Schwarzschild spacetime of mass $m>0$ and dimension $n+1\geq4$. An example of an (electrostatic) photon sphere is the spherically cylindrical hypersurface 
\begin{align}
\left\lbrace r =  \left(\frac{nm + \sqrt{n^2m^2 - 4(n-1)q^2}}{2}\right)^{\frac{1}{n-2}}\right\rbrace
\end{align}
of the subextremal Reissner--Nordstr\"om spacetime of mass $m$, charge $q$, and dimension $n+1\geq4$, see \Cref{cor:photonsphere in rn}. We will make use of the following properties defined for photon surfaces, generalizing and adopting the corresponding definitions from~\cite{cederbaum2019photon}, see also \Cref{rem:normpsi}.
\begin{defi}[Equipotential photon surfaces]\label{def:equipot}
A photon surface $P^n$ in an electrostatic spacetime $(\mathfrak{L}^{n+1},\mathfrak{g},\Psi)$ is called \emph{equipotential} if the static lapse $N$, the electric potential $\Psi$ and the length of its derivative $\vert d\Psi\vert_{\mathfrak{g}}$ are constant on each standard time slice $\Sigma^{n-1}(t_{0})\definedas P^n\cap \slice(t_{0})$. In a static spacetime $(\mathfrak{L}^{n+1},\mathfrak{g})$, the conditions on $\Psi$ are dropped.
\end{defi}

\begin{defi}[Non-degenerate and outward/inward directed photon surfaces]
Let $(\mathfrak{L}^{n+1},\mathfrak{g})$ be a static spacetime with lapse function $N$, $\photo\hookrightarrow\mathfrak{L}^{n+1}$ a photon surface. Then $\photo$ is called \emph{non-degenerate} if $dN(x)\neq0$ for all $x\in\photo$. If, in addition, $(\mathfrak{L}^{n+1},\mathfrak{g})$ is asymptotically flat with unit normal $\eta$ pointing \emph{outward}, i.e., towards the asymptotic region of the spacetime, then $\photo$ is called \emph{outward} resp.\ \emph{inward directed} if the normal derivative $\eta(N)$ is positive resp.\ negative along~$\photo$.
\end{defi}

As $N=1$ in the Minkowski spacetime, all photon surfaces in the Minkowski spacetime are automatically equipotential and degenerate. As was shown in \cite{cederbaum2019photon} and will be discussed further in \Cref{sec:classS,sec:spherical}, more general static, spherically symmetric spacetimes host many spherically symmetric and hence equipotential photon surfaces; the conditions of being non-degenerate or outward/inward directed then boil down to conditions on the radial derivative of the lapse function along them, see \Cref{prop:outwardspherical} and \Cref{coro:RNoutward}.

\begin{Rem}[Symmetries acting on photon surfaces]\label{rem:tinv}
Let $(\R\times\slice,\mathfrak{g})$ be a static spacetime. As $\d_{t}$ is a Killing vector field of $(\R\times\slice,\mathfrak{g})$, the time-translation of any (outward directed, equipotential) photon surface $P^n\hookrightarrow(\R\times\slice,\mathfrak{g})$ will also be an (outward directed, equipotential) photon surface in $(\R\times \slice,\mathfrak{g})$. As $(\R\times \slice,\mathfrak{g})$ is automatically also time-reflection symmetric (i.e., $t \mapsto -t$ is an isometry), the time-reflection of any (outward directed, equipotential) photon surface $P^n\hookrightarrow(\R\times\slice,\mathfrak{g})$ will also be a (outward directed, equipotential) photon surface in $(\R\times \slice,\mathfrak{g})$. Photon spheres are fixed points of both of these symmetries.

Moreover, if $(\R\times\slice,\mathfrak{g})$ is electrostatic with electric potential $\Psi$, then $\Psi$ is also invariant under both of these symmetries by definition.
 \end{Rem}

Before we move on, let us briefly recall the relevant definitions of black hole horizons.
\begin{Def}[Killing horizons, non-degeneracy]\label{def:Killing}
Let $(\mathfrak{L}^{n+1},\mathfrak{g})$ be a static spacetime. We say that a smooth null hypersurface $\mathcal{H}^{n}\hookrightarrow (\mathfrak{L}^{n+1},\mathfrak{g})$ is a \emph{Killing horizon} if the static lapse function $N$ of the spacetime vanishes along $\mathcal{H}^{n}$. A Killing horizon is\footnote{This is equivalent to the standard definition, see \cite{KW}.} \emph{non-degenerate} if $dN\neq0$ along $\mathcal{H}^{n}$.
\end{Def}
This definition also applies when $\mathcal{H}^{n}\subseteq\partial\mathfrak{L}^{n+1}$. It is well-known and can be checked by changing to isotropic coordinates that subextremal Reissner--Nordstr\"om spacetimes have a non-degenerate Killing horizon ``at'' $r=r_{m,q}$ and by changing to Gaussian null coordinates that extremal Reissner--Nordstr\"om spacetimes have a degenerate Killing horizon ``at'' $r=r_{m,q}$ (see e.g. \cite[Section 2.1]{KL}) when $m=\vert q\vert\neq0$. Superextremal Reissner--Nordstr\"om spacetimes do not possess Killing horizons.

In the language of canonical time slices $(\slice(t),g,N,\Psi)$ of electrostatic electro-vacuum spacetimes, non-degenerate Killing horizons give rise to hypersurfaces $\surf(t)\hookrightarrow(\slice(t),g,N,\Psi)$ where $N=0$ (where $\surf(t)\subseteq\partial\slice(t)$ is also permitted). From \eqref{EEVE4}, it follows that $\vert d\Psi\vert=N^{2}\operatorname{R}=0$ on a (time slice of a) non-degenerate Killing horizon $\surf(t)$ so that in particular $\Psi$ is constant along non-degenerate Killing horizons. From \eqref{EEVE3}, we thus learn that $\nabla^{2}N=0$ on $\surf(t)$ which implies that (time slices of) non-degenerate Killing horizons $\surf(t)\hookrightarrow(\slice(t),g,N,\Psi)$ are totally geodesic. We summarize this as follows.

\begin{deflem}[Properties of non-degenerate electrostatic horizons]
Let $(\slice,g,N,\Psi)$ be an electrostatic electro-vacuum system. A \emph{non-degenerate (electrostatic) (black hole) horizon in $(\slice,g,N,\Psi)$} is a smooth, closed, totally geodesic hypersurface of $\slice$ where $N=0$, $\nu(N)\neq0$, $\Psi=\text{const.}$, and $\nu(\Psi)=0$  for some choice of unit normal $\nu$ to $\surf$.
\end{deflem}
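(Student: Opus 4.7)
The plan is to verify the properties listed for a canonical time slice $\surf(t)\definedas\mathcal{H}^{n}\cap\slice(t)$ of a non-degenerate Killing horizon $\mathcal{H}^{n}$ in the sense of \Cref{def:Killing}, essentially assembling the observations sketched in the prose preceding the statement.

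First, I would exploit time-independence of the static lapse and of the electric potential. Since $N$ does not depend on $t$ and $\mathcal{H}^{n}$ is cylindrical in $t$ (being a Killing horizon of $\d_{t}$), the conditions $N=0$ and $dN\neq0$ on $\mathcal{H}^{n}$ descend to the same pointwise conditions on $\surf(t)\subseteq\slice(t)$. In particular, $\surf(t)$ is a smooth regular level set $\{N=0\}$ of $N$ in $\slice(t)$ by the regular value theorem, and choosing the unit normal $\nu\definedas\grad N/\vert\grad N\vert_{g}$ gives $\nu(N)=\vert\grad N\vert_{g}\neq0$. Smoothness of $\surf(t)$ follows, and closedness is inherited from closedness of $\mathcal{H}^{n}$ (in the asymptotically flat setting automatic via \Cref{rem:complete}).

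Second, for the properties of $\Psi$, I would use~\eqref{EEVE4}: rearranging yields $\vert d\Psi\vert_{g}^{2}=N^{2}\Scal/2$, so by smoothness of $\Scal$ one has $\vert d\Psi\vert_{g}=0$ on $\surf(t)=\{N=0\}$. Hence $d\Psi$ itself vanishes identically on $\surf(t)$, which simultaneously yields $\Psi=\text{const.}$ along $\surf(t)$ (tangential derivatives vanish) and $\nu(\Psi)=0$ (normal derivative vanishes).

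Third, to obtain total geodesicity, I would use~\eqref{EEVE3} to show $\nabla^{2}N=0$ on $\surf(t)$ and then invoke the standard formula
\begin{align*}
\mathrm{II}(X,Y)&=\pm\frac{\nabla^{2}N(X,Y)}{\vert\grad N\vert_{g}}, \quad X,Y\in T\surf(t),
\end{align*}
for the second fundamental form of a regular level set of $N$. The left-hand side $N\Ric$ of~\eqref{EEVE3} clearly vanishes on $\surf(t)$; the two apparently singular right-hand side terms $(2\,d\Psi\otimes d\Psi)/N$ and $(2\vert d\Psi\vert_{g}^{2}/((n-1)N))\,g$ are of order $O(N)$ near $\surf(t)$, using $\vert d\Psi\vert_{g}^{2}=O(N^{2})$ from the previous step together with the pointwise bound $\vert d\Psi\otimes d\Psi\vert_{g}\leq\vert d\Psi\vert_{g}^{2}$; both therefore extend continuously by zero to $\surf(t)$. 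Hence $\nabla^{2}N=0$ on $\surf(t)$ and $\mathrm{II}\equiv0$, so $\surf(t)$ is totally geodesic. The main (shallow) obstacle is precisely this careful propagation of the $O(N^{2})$ smallness of $\vert d\Psi\vert_{g}^{2}$ through the tensor $d\Psi\otimes d\Psi$ before dividing by $N$, so that the formally singular terms in~\eqref{EEVE3} are seen to vanish smoothly on the horizon.
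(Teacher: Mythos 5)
Your proof is correct and takes essentially the same route as the paper's own argument, which is given in the prose immediately preceding the statement: from~\eqref{EEVE4} one gets $\vert d\Psi\vert^{2}=N^{2}\operatorname{R}/2=0$ on $\{N=0\}$, hence $\Psi=\text{const.}$ and $\nu(\Psi)=0$, and then~\eqref{EEVE3} yields $\nabla^{2}N=0$ and thus total geodesicity. Your explicit check that the formally singular terms $2\,d\Psi\otimes d\Psi/N$ and $2\vert d\Psi\vert^{2}g/((n-1)N)$ are $O(N)$ is a detail the paper leaves implicit, but it is the same computation.
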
 

For later reference, let us document the precise forms of the sets $J_{\kappa,m,q}$ and the locations of the Killing horizons in generalized Reissner--Nordstr\"om spacetimes.
\begin{Rem}[Killing horizons in Generalized Reissner--Nordstr\"om spacetimes]\label{rem:BHgenRN}
By a straightforward computation, one finds that generalized Reissner--Nordstr\"om spacetimes with $\kappa=1$ have $J_{1,m,q}=(r_{m,q},\infty)$ if $m\leq\vert q\vert$ or if $q=0$, while $J_{1,m,q}=(0,\overline{r}_{m,q})\cup(r_{m,q},\infty)$ if $m>\vert q\vert>0$ as for Reissner--Nordstr\"om spacetimes, using $r_{m,q}$ and $\overline{r}_{m,q}$ from \Cref{sec:RN}. For $\kappa=0$, $J_{0,m,q}=\R^{+}$ for $m<0$ and for $m=0$ if $q\neq0$, $J_{0,m=0,q=0}=\emptyset$, and $J_{0,m,q}=\{r\in\R^{+}\,\vert\,r^{n-2}<\frac{q^{2}}{2m}\}$ for $m>0$. For $\kappa=-1$, $J_{-1,m,q=0}=\emptyset$ when $m\geq0$ and $q=0$, and $J_{-1,m,q}=\{r\in\R^{+}\,\vert\,r^{n-2}<-m+\sqrt{m^{2}+q^{2}}\}$ when $m<0$ or $m\geq0$ and $q\neq0$. 

For discussing Killing horizons, we will continue to abuse notation as in the classical Reissner--Nordstr\"om spacetimes (see \Cref{sec:RN}), hiding suitable changes of coordinates (to isotropic or Gaussian null coordinates) near the Killing horizons. For $\kappa=1$,  $\{r=r_{m,q}\}$ and $\{r=\overline{r}_{m,q}\}$ are non-degenerate Killing horizons when $m>\vert q\vert>0$ and $\{r=r_{m,q}\}$ is a non-degenerate Killing horizon when $m>0$, $q=0$. Furthermore, $\{r^{n-2}=m\}$ can be seen to be a degenerate Killing horizon when $m=\vert q\vert>0$, and there are no Killing horizons for $m<\vert q\vert$ or $m=q=0$. For $\kappa=0$, $\{r^{n-2}=\frac{q^{2}}{2m}\}$ is a non-degenerate Killing horizon when $m>0$ and $q\neq0$, otherwise there are no Killing horizons. For $\kappa=-1$, $\{r^{n-2}=-m+\sqrt{m^{2}+q^{2}}\}$ is a non-degenerate Killing horizon when $m<0$ or when $m\geq0$ and $q\neq0$, otherwise there are no Killing horizons.
\end{Rem}
\newpage
\subsection{Photon surfaces in class $\mathcal{S}$}\label{sec:classS}
We consider the class $\mathcal{S}$ of spacetimes $(\R\times \slice,\mathfrak{g})$ of the form 
\begin{align}\label{def:ST}
\slice&= \mathcal{I}\times\Sphere^{n-1}\ni (r,\xi)
\end{align}
for an open interval $\mathcal{I}\subseteq(0,\infty)$, finite or infinite, and so that there exists a smooth, positive function $f\colon\mathcal{I}\to\R^{+}$ called \emph{metric coefficient} for which we can express the spacetime metric $\mathfrak{g}$ as
\begin{align}\label{def:STmetric}
\mathfrak{g}&=-f(r)dt^{2}+\frac{1}{f(r)}dr^{2}+r^{2}\Omega
\end{align}
in the global coordinates $t\in\R$, $(r,\xi)\in \mathcal{I}\times\Sphere^{n-1}$, where $\Omega$ again denotes the canonical metric on~$\Sphere^{n-1}$. Every spacetime $(\R\times \slice,\mathfrak{g})\in\mathcal{S}$ is clearly spherically symmetric and moreover naturally (standard) static with lapse function $N=\sqrt{f}$ and Riemannian metric $g=f(r)^{-1}dr^2+r^{2}\Omega$, where we are slightly abusing notation as $N$ is defined on $\slice$ while $f$ is defined only on $\mathcal{I}$. 

We note that the spacetimes $(\R\times \slice,\mathfrak{g})\in\mathcal{S}$ are not assumed to satisfy any kind of Einstein equations or have any special type of asymptotic behavior towards the boundary of the radial interval $\mathcal{I}$. All Reissner--Nordstr\"om spacetimes and hence in particular all Schwarzschild spacetimes and Minkowski spacetimes lie in class $\mathcal{S}$ (when written in the area radius coordinate $r$). Of course, the class is much richer and contains for example the (anti-)de Sitter and the Schwarzschild--(anti-) de Sitter spacetimes. 

For spacetimes in class $\mathcal{S}$, it is useful to make the following definition of spherically symmetric photon surfaces and their radial profiles.

\begin{Def}[Spherically symmetric timelike hypersurfaces, radial profiles {\cite[Definition~3.3]{cederbaum2019photon}}]
Let $(\R\times \slice,\mathfrak{g})\in\mathcal{S}$ for $n\geq2$. A connected, timelike hypersurface $\photo\hookrightarrow(\R\times \slice,\mathfrak{g})$ will be called \emph{spherically symmetric} if, for each $t_{0}\in \R$ for which the intersection $\Sigma^{n-1}(t_0) \defeq  \photo\cap\lbrace t = t_{0}\rbrace\neq\emptyset$ is non-empty, there exists a radius $r_{0}\in \mathcal{I}$ (where $\slice=\mathcal{I}\times\Sphere^{n-1}$) such that
\begin{align}\label{def:sphsymmphoto}
\Sigma^{n-1}(t_0) &=\lbrace t_{0}\rbrace\times\lbrace r_{0}\rbrace\times\Sphere^{n-1}\subset \lbrace t_{0}\rbrace\times \slice.
\end{align}
A future timelike curve $\gamma\colon I\to \photo$ parametrized by arclength on some open interval $I\subset\R$ is called a \emph{radial profile} of $\photo$ if $\gamma'\in\hull\lbrace\d_{t},\d_{r}\rbrace\subset T_{\gamma'}(\R\times \slice)$ on $I$ and if the orbit of $\gamma$ under the rotation generates $\photo$.
\end{Def}
Obviously spherically symmetric photon surfaces are necessarily equipotential (if there is no electric field $\Psi$, then we just ignore the corresponding part of the definition, otherwise we assume that the electric potential is also spherically symmetric).

With this definition at hand, we can state one of the main results of \cite{cederbaum2019photon}, a local characterization of spherically symmetric photon surfaces in spacetimes of class $\mathcal{S}$.

\begin{Thm}[\!\!{\cite[Theorem 3.5]{cederbaum2019photon}}]\label{thm:sphsymm}
Let $(\R\times \slice,\mathfrak{g})\in\mathcal{S}$ and let $\photo\hookrightarrow(\R\times \slice,\mathfrak{g})$ be a 
spherically symmetric timelike hypersurface. Assume that $\photo\hookrightarrow(\R\times \slice,\mathfrak{g})$ is a photon surface, with umbilicity factor $\a$, i.e., $\mathfrak{h}=\a p$, where $p$ and $\mathfrak{h}$ are the induced metric and second fundamental form of $\photo\hookrightarrow(\R\times \slice,\mathfrak{g})$, respectively. Let $\gamma\colon I\to \photo$ be a radial profile for $\photo$ and write $\gamma(s)=(t(s),r(s),\xi_{*})\in\R\times\mathcal{I}\times\Sphere^{n-1}$ for some $\xi_{*}\in\Sphere^{n-1}$. Then $\a$ is a positive constant and \underline{\emph{either}} $r\equiv r_{*}$ along $\gamma$ for some $r_{*}\in \mathcal{I}$ at which the \emph{photon sphere condition} 
\begin{align}\label{eq:photonsphere}
f'(r_{*})r_{*}=2f(r_{*})
\end{align}
holds, $\a=\frac{\sqrt{f(r_{*})}}{r_{*}}$, and $(\photo,p)=(\R\times\Sphere^{n-1},-f(r_{*})dt^{2}+r^{2}_{*}\Omega)$ is a cylinder and thus a photon sphere, \underline{\emph{or}} $r=r(t)$ can globally be written as a smooth, non-constant function of $t$ in the range of $\gamma$ and $r=r(t)$ satisfies the \emph{photon surface ODE}
\begin{align}\label{eq:ODE3}
\left(\frac{dr}{dt}\right)^{2} &= \frac{f(r)^{2}\,(\a^2 r^2 - f(r))}{\a^{2} r^{2}}.
\end{align}

Conversely, whenever the photon sphere condition \eqref{eq:photonsphere} holds for some $r_{*}\in \mathcal{I}$, then the cylinder $(\photo,p)=(\R\times\Sphere^{n-1},-f(r_{*})dt^{2}+r^{2}_{*}\Omega)$ is a photon sphere in $(\R\times \slice,\mathfrak{g})$ with umbilicity factor $\a=\frac{\sqrt{f(r_{*})}}{r_{*}}$. Also, any smooth, non-constant solution $r=r(t)$ of the photon surface ODE~\eqref{eq:ODE3} for some constant $\a>0$ gives rise to a photon surface in $(\R\times \slice,\mathfrak{g})$ with umbilicity factor $\a$.  
\end{Thm}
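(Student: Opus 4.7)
The plan is to translate the defining property of a photon surface, via \Cref{prop:umbilic}, into the algebraic umbilic condition $\mathfrak h=\lambda p$ and then read off ODEs for the radial profile. Fix an arclength-parametrized future-directed radial profile $\gamma(s)=(t(s),r(s),\xi_*)$ and pick the unit spacelike normal $\eta\in\operatorname{span}\{\partial_t,\partial_r\}$ orthogonal to $\gamma'$, oriented so that the angular components of the second fundamental form are positive. A short computation with the Christoffel symbols of~\eqref{def:STmetric} gives $\nabla_YY=-\tfrac{f(r)}{r}\partial_r+(\text{terms tangent to }\Sphere^{n-1})$ for every $\mathfrak g$-unit angular vector $Y$, so the angular-angular component of $\mathfrak h=\lambda p$ reduces to
\begin{align*}
\lambda\;=\;\frac{f(r)\,\dot t}{r}\;>\;0,
\end{align*}
automatically positive as $\dot t>0$. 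The arclength constraint reads $\dot r^2/f-f\dot t^2=-1$, and a parallel computation of $\nabla_{\gamma'}\gamma'$ expresses the radial-radial component of $\mathfrak h=\lambda p$ as
\begin{align*}
\dot t\ddot r-\dot r\ddot t+\frac{f'(r)\dot t}{2}\Bigl(1-\frac{2\dot r^2}{f}\Bigr)\;=\;\lambda.
\end{align*}

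If $\dot r\equiv 0$, then $r\equiv r_*$ and $\ddot r=\ddot t=0$; the arclength condition yields $\dot t=1/\sqrt{f(r_*)}$, the angular formula yields $\lambda=\sqrt{f(r_*)}/r_*$, and the radial equation collapses to $f'(r_*)\dot t/2=\lambda$, which is the photon sphere condition $f'(r_*)r_*=2f(r_*)$; the hypersurface is a round cylinder as asserted.

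The main obstacle is the non-constant case $\dot r\not\equiv 0$, and specifically the constancy of $\lambda$: spherical symmetry alone only guarantees constancy along each time slice, not along $\gamma$. To overcome this, I would work on the open set where $\dot r\ne 0$, differentiate the arclength constraint to solve for $\ddot t$, and substitute both this expression and $f\dot t=\lambda r$ into the radial equation; after the terms in $\dot r^4/(f^3\dot t^2)+1-\dot r^2/f$ simplify (using $f^2\dot t^2-\dot r^2=f$) to $1/(f\dot t^2)$, the entire radial equation collapses to the clean identity
\begin{align*}
\ddot r\;=\;\lambda^2 r-\frac{f'(r)}{2}.
\end{align*}
Independently, solving the angular relation for $\dot t=\lambda r/f$ and plugging into the arclength constraint gives the pointwise identity $\dot r^2=\lambda^2 r^2-f$, with $\lambda=\lambda(s)$ a priori non-constant; differentiating this in $s$ and comparing with the previous display forces $\lambda\dot\lambda r^2/\dot r=0$, so $\dot\lambda=0$ and $\lambda$ is a positive constant on $\gamma$. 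Because $\dot t=\lambda r/f>0$, the coordinate $t$ is a global monotone parameter along $\gamma$, so $r$ is a smooth non-constant function of $t$, and
\begin{align*}
\Bigl(\frac{dr}{dt}\Bigr)^2\;=\;\frac{\dot r^2}{\dot t^2}\;=\;\frac{f(r)^2\bigl(\lambda^2 r^2-f(r)\bigr)}{\lambda^2 r^2},
\end{align*}
which is~\eqref{eq:ODE3}. The two converse directions are direct verifications obtained by running the calculations backward: the cylinder $\R\times\{r_*\}\times\Sphere^{n-1}$ satisfies both umbilic components at every point once~\eqref{eq:photonsphere} holds, while any smooth solution $r=r(t)$ of~\eqref{eq:ODE3} gives rise, via an arclength reparametrization satisfying $\dot t=\lambda r/f$ and $\dot r^2=\lambda^2 r^2-f$, to a spherically symmetric timelike hypersurface whose angular and radial umbilic components both equal $\lambda$ by construction.
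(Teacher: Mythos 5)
The paper does not actually prove this theorem; it is imported verbatim as \cite[Theorem 3.5]{cederbaum2019photon}, so there is no in-paper proof to compare yours against. Judged on its own, your computation is correct and follows the natural route: with $\eta=\frac{\dot r}{f}\partial_t+f\dot t\,\partial_r$ and the warped-product Christoffel symbols, the angular block of $\mathfrak h=\lambda p$ gives $\lambda=f\dot t/r>0$, the $(\gamma',\gamma')$ block is $\dot t\ddot r-\dot r\ddot t+\tfrac{f'\dot t}{2}(1-\tfrac{2\dot r^2}{f})=\lambda$, and after eliminating $\ddot t$ via the differentiated arclength constraint the identity $f^2\dot t^2-\dot r^2=f$ does collapse everything to $\ddot r=\lambda^2 r-\tfrac{f'}{2}$; comparing with the derivative of $\dot r^2=\lambda^2r^2-f$ then kills $\dot\lambda$. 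I checked the algebra (in particular the simplification of $\dot r^4/(f^3\dot t^2)+1-\dot r^2/f$ to $1/(f\dot t^2)$) and it is right, as are the photon-sphere specialization and the conversion to the $t$-parametrization using $\dot t=\lambda r/f>0$.

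One point should be tightened: your constancy argument for $\lambda$ only runs on the open set $\{\dot r\neq0\}$, so a priori $\lambda$ could jump between connected components of that set, or behave badly if $\{\dot r=0\}$ has nonempty interior. The fix is short: on $\operatorname{int}\{\dot r=0\}$ the photon-sphere computation gives $\dot\lambda=0$ as well, the union $\{\dot r\neq0\}\cup\operatorname{int}\{\dot r=0\}$ is open and dense in $I$, and $\dot\lambda$ is continuous, hence $\dot\lambda\equiv0$ on all of $I$. The same density-and-continuity remark is needed in the converse direction to verify the radial umbilicity component at parameters where $dr/dt=0$. These are one-line additions rather than structural problems, but without them the dichotomy in the statement (single global constant $\lambda$, ODE holding on the whole range of $\gamma$) is not fully justified.
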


The complementary question about existence of non-spherically symmetric photon surfaces in spacetimes of class $\mathcal{S}$ was also answered in \cite{cederbaum2019photon}.
\begin{Thm}[\!\!{\cite[Theorem 3.8]{cederbaum2019photon}}]\label{thm:isotropic}
Let $n\geq3$, $I\subseteq\R^{+}$ an open interval, and let $\widetilde{N},\psi\colon I\to\R^{+}$ be smooth, positive functions. Set $D^{n}\definedas \lbrace{y\in\R^{n}\,\vert\,\vert y\vert=s\in I\rbrace}$ and consider the \emph{static, isotropic spacetime}
$\left(\R\times D^{n},\mathfrak{g}=-\widetilde{N}^{2}dt^{2}+\psi^{2}\,\delta\right)$ of lapse $\widetilde{N}=\widetilde{N}(s)$ and conformal factor~$\psi=\psi(s)$. We write $\widetilde{g}\definedas \psi^{2}\,\delta$. A timelike hypersurface $\photo$ in $(\R\times D^{n},\mathfrak{g})$ is called \emph{isotropic} if $\photo\cap \lbrace{t=\text{const.}\rbrace}=\mathbb{S}^{n-1}_{s(t)}(0)\subset D^{n}$ for some radius $s(t)\in I$ for every $t$ for which $\photo\cap \lbrace{t=\text{const.}\rbrace}\neq\emptyset$. A \emph{(partial) centered vertical hyperplane} in $(\R\times D^{n},\mathfrak{g})$ is the restriction of a timelike hyperplane in the Minkowski spacetime containing the $t$-axis to $\R\times D^{n}$, i.e., a set of the form $\lbrace{(t,y)\in\R\times D^{n}\,\vert\, y\cdot u=0\rbrace}$ for some fixed Euclidean unit vector $u\in\R^{n}$, where $\cdot$ denotes the Euclidean inner product. Centered vertical hyperplanes are totally geodesic in $(\R\times D^{n},\mathfrak{g})$. Assume furthermore that the functions $\widetilde{N}$ and $\psi$ satisfy
\begin{align}\label{eq:unless}
\frac{\widetilde{N}'(s)}{\widetilde{N}(s)}&\neq\frac{\psi'(s)}{\psi(s)}
\end{align}
for all $s\in I$. Then any photon surface in $(\R\times D^{n},\mathfrak{g})$ is either (part of) an isotropic photon surface or (part of) a centered vertical hyperplane.
\end{Thm}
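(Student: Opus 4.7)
The plan is to reduce the classification to the classical one for totally umbilic hypersurfaces in Euclidean space, exploiting the warped product structure of $\mathfrak{g}$ and the conformal flatness of the spatial metric $\widetilde{g} = \psi^2 \delta$. At a point $p$ of a time slice $\Sigma^{n-1}(t_0) \definedas P^n \cap \{t = t_0\}$ of a photon surface $P^n$, I would decompose the unit spacelike $\mathfrak{g}$-normal $\nu$ to $P^n$ in the $\mathfrak{g}$-orthonormal frame of the $2$-plane normal to $\Sigma^{n-1}(t_0)$ in the spacetime, writing $\nu = \sinh(\theta)\,T + \cosh(\theta)\,\eta$ and $e = \cosh(\theta)\,T + \sinh(\theta)\,\eta$ for the conjugate timelike unit normal to $\Sigma^{n-1}(t_0)$ within $P^n$, where $T \definedas \partial_t/\widetilde{N}$ and $\eta$ is the $\widetilde{g}$-unit normal to $\Sigma^{n-1}(t_0)$ inside the time slice. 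Standard warped product identities for the static metric, notably $\nabla^\mathfrak{g}_X T = 0$ and $\nabla^\mathfrak{g}_X Y = \nabla^{\widetilde{g}}_X Y$ for $t$-independent horizontal $X, Y$, let me express the restriction of the spacetime second fundamental form $\mathfrak{h}_{P^n}$ to pairs of horizontal vectors tangent to $\Sigma^{n-1}(t_0)$ in terms of the second fundamental form of $\Sigma^{n-1}(t_0)$ in $(D^n, \widetilde{g})$; umbilicity $\mathfrak{h}_{P^n} = \lambda\,\mathfrak{g}|_{P^n}$ (\Cref{prop:umbilic}) then yields that $\Sigma^{n-1}(t_0)$ is totally umbilic in $(D^n, \widetilde{g})$. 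Since total umbilicity is a conformally invariant property, $\Sigma^{n-1}(t_0)$ is totally umbilic also in Euclidean $(D^n, \delta)$, and the classical Euclidean classification (valid for $n-1 \geq 2$) forces it to be (part of) a round Euclidean sphere or a Euclidean hyperplane.

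To pin down that the sphere is centered at the origin and the hyperplane passes through it, I would next evaluate the mixed umbilicity component $\mathfrak{h}_{P^n}(e, X) = 0$ for $X$ tangent to $\Sigma^{n-1}(t_0)$. A direct computation, using the conformal change formula together with $\nabla^\delta_{\hat{n}}\hat{n} = 0$ for the outward Euclidean normal $\hat{n}$ to a Euclidean sphere or hyperplane to get $\widetilde{g}(\nabla^{\widetilde{g}}_\eta \eta, X) = -X(\log \psi)$ with $\eta = \hat{n}/\psi$, reduces the mixed condition to $\sinh(\theta)\cosh(\theta)\,X(\log(\widetilde{N}/\psi)) = 0$. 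In the non-stationary regime $\sinh\theta \neq 0$, this forces $\widetilde{N}/\psi$ to be constant on $\Sigma^{n-1}(t_0)$; since \eqref{eq:unless} makes $\widetilde{N}/\psi$ a non-constant function of $s \definedas |\cdot|$, the coordinate $s$ itself must be constant on $\Sigma^{n-1}(t_0)$, which rules out any piece of a hyperplane and any off-center sphere. In the complementary stationary subcase $\sinh\theta \equiv 0$ along $P^n$, one has $P^n = \R \times \Sigma^{n-1}$ and the umbilicity reduces to matching the spatial umbilic factor of $\Sigma^{n-1}$ in $(D^n, \widetilde{g})$ with $-\eta(\log \widetilde{N})$; invoking \eqref{eq:unless} once more, possibly with help from the Codazzi identity for $\Sigma^{n-1}$ in $(D^n, \widetilde{g})$, one analogously shows that only centered spheres and centered hyperplanes (the latter then being totally geodesic) arise, completing the classification.

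The main obstacle is the careful handling of the stationary subcase and the precise use of \eqref{eq:unless} in eliminating off-center or off-origin configurations. Indeed, \eqref{eq:unless} is precisely the negation of $\widetilde{N} = c\,\psi$ for a constant $c > 0$, in which case $\mathfrak{g} = \psi^2(-c^2 dt^2 + \delta)$ becomes conformal to Minkowski space and therefore inherits, via the conformal invariance of total umbilicity, the full Minkowski family of totally umbilic timelike hypersurfaces (including off-center hyperboloids and off-origin hyperplanes). Hypothesis \eqref{eq:unless} thus plays an essential role in excluding this degenerate conformally flat scenario.
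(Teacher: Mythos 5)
First, a structural remark: the paper does not prove \Cref{thm:isotropic} at all --- it is imported verbatim from \cite{cederbaum2019photon} --- so there is no in-paper proof to compare your argument against, and I can only assess it on its own terms.

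Your first stage is sound: for $X,Y$ tangent to a time slice $\Sigma^{n-1}(t_0)$ one indeed gets $\mathfrak{h}(X,Y)=\cosh(\theta)\,h_{\Sigma}(X,Y)$ (using that the canonical time slices of a static spacetime are totally geodesic and $\nabla^{\mathfrak{g}}_X T=0$), so umbilicity of $P^n$ forces $\Sigma^{n-1}(t_0)$ to be umbilic in $(D^n,\widetilde g)$, hence in $(D^n,\delta)$, hence a piece of a round sphere or a hyperplane since $n-1\geq2$. The gap is in your second stage. Computing the mixed component using only on-slice data, one finds for $X\in T\Sigma^{n-1}(t_0)$
\begin{align*}
\mathfrak{h}(X,e)=\mathfrak{g}(\nabla^{\mathfrak{g}}_X\nu,e)
=X(\sinh\theta)\,\mathfrak{g}(T,e)+X(\cosh\theta)\,\mathfrak{g}(\eta,e)+\cosh\theta\,\mathfrak{g}(\nabla^{\widetilde g}_X\eta,e)
=-X(\theta),
\end{align*}
because $\nabla^{\mathfrak{g}}_XT=0$, $\nabla^{\widetilde g}_X\eta\in T\Sigma^{n-1}(t_0)\perp e$, $\mathfrak{g}(T,e)=-\cosh\theta$, and $\mathfrak{g}(\eta,e)=\sinh\theta$. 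Hence the mixed umbilicity condition says only that the boost angle $\theta$ is constant on each time slice; it involves neither $\widetilde N$ nor $\psi$ and is \emph{not} equivalent to $\sinh\theta\cosh\theta\,X(\log(\widetilde N/\psi))=0$. (Your expression is the contribution of the terms $\mathfrak{g}(\nabla_TT,X)=X(\log\widetilde N)$ and $\mathfrak{g}(\nabla_\eta\eta,X)=-X(\log\psi)$ to an evaluation of $\mathfrak{g}(\nabla_e\nu,X)$, but that evaluation also carries the terms with $e(\theta)$ and the genuine $t$-variation of $\eta$ along $P^n$; by symmetry of the second fundamental form the total must again equal $-X(\theta)$.) Consequently the pivotal step --- invoking \eqref{eq:unless} to force $s$ to be constant on each slice and thereby excluding off-center spheres and off-origin or tilted hyperplanes, which is the entire content of the theorem beyond the classical Euclidean classification --- is unsupported as written. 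Condition \eqref{eq:unless} must instead enter through the remaining data, namely the $\mathfrak{h}(e,e)$-component together with the $t$-evolution of the family of slices (equivalently, a Codazzi-type analysis), and your proposal does not carry this out. The stationary subcase $\theta\equiv0$ is likewise only gestured at; there the relevant equation is $\eta(\log\widetilde N)=k$ with $k$ the umbilicity factor of $\Sigma^{n-1}$ in $(D^n,\widetilde g)$, from which \eqref{eq:unless} does exclude non-centered configurations, but this needs to be written out. Your closing observation that \eqref{eq:unless} is precisely the failure of $\mathfrak{g}$ to be conformally Minkowskian, and hence is a necessary hypothesis, is correct and well taken.
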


For the sake of completeness, let us note that locally, any spacetime of class $\mathcal{S}$ can be rewritten in isotropic form and will satisfy condition \eqref{eq:unless} provided it is nowhere conformally flat, see \cite[Remarks 3.10,  3.11]{cederbaum2019photon} for more details.

Let us close this section by noting that the Reissner--Nordstr\"om spacetimes, written in isotropic coordinates \eqref{eq:isotropic} are isotropic spacetimes as described in \Cref{thm:isotropic} whenever $r>r_{m,q}$, with $\psi=\varphi_{m,q}^\frac{1}{n-2}$. They satisfy condition \eqref{eq:unless} throughout unless $0<m<\vert q\vert$ as can be seen by a straightforward computation (and, of course, except in the Minkowski case $m=q=0$, where translated hyperpoloids also occur and where \eqref{eq:unless} is violated everywhere). For $0<m<\vert q\vert$, they satisfy \eqref{eq:unless} except for one specific isotropic radius $s_{*}$ satisfying $s^{n-2}_{*}>\frac{q^{2}-m^{2}}{2m}$, as can be seen by a tedious but straightforward computation. At this specific radius $s_{*}$, these superextremal Reissner--Nordstr\"om spacetimes are in fact (infinitesimally) conformally flat. This proves the following corollary, extending \cite[Corollary 3.9]{cederbaum2019photon} to electro-vacuum and to negative mass.
\begin{Cor}[Photon surfaces in Reissner--Nordstr\"om spacetimes]\label{coro:RNphoton}
Let $n+1\geq4$, $m, q\in\R$,  and consider the $n+1$-dimensional Reissner--Nordstr\"om spacetime of mass $m$ and charge $q$ with $r>r_{m,q}$ and suppose that $m$ and $q$ do not satisfy $0<m<\vert q\vert$. Then any connected photon surface in this spacetime is either (part of) a centered vertical hyperplane as described in \Cref{thm:isotropic} or (part of) a spherically symmetric photon surface as characterized in \Cref{thm:sphsymm}.
\end{Cor}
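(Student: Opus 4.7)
The plan is to reduce the statement to \Cref{thm:isotropic} applied to the Reissner--Nordstr\"om metric in its isotropic form \eqref{eq:isotropic}. The coordinate change \eqref{eq:defrs} is a diffeomorphism from $(s_{m,q},\infty)$ onto $(r_{m,q},\infty)$ fixing the $t$- and angular coordinates, so a hypersurface is isotropic in the sense of \Cref{thm:isotropic} if and only if its time slices are concentric coordinate spheres $\{r=\text{const}\}\times\Sphere^{n-1}$, which is exactly the notion of spherical symmetry appearing in \Cref{thm:sphsymm}. Hence, provided the non-degeneracy hypothesis \eqref{eq:unless} holds throughout $s>s_{m,q}$ for the Reissner--Nordstr\"om lapse $\widetilde{N}_{m,q}$ and conformal factor $\psi_{m,q}\defeq\varphi_{m,q}^{1/(n-2)}$, \Cref{thm:isotropic} directly yields the stated dichotomy: any connected photon surface is either (part of) a centered vertical hyperplane, or (part of) an isotropic photon surface, the latter being precisely (part of) a spherically symmetric photon surface in the sense of \Cref{thm:sphsymm} after pulling back through \eqref{eq:defrs}.

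The crux of the proof is therefore the verification of \eqref{eq:unless}. I would introduce the substitution $u\defeq s^{-(n-2)}$ together with $a\defeq(m+q)/2$ and $b\defeq(m-q)/2$, under which $\varphi_{m,q}=(1+au)(1+bu)$ and $\widetilde{N}_{m,q}\,\varphi_{m,q}=1-ab\,u^{2}$. Using $(n-2)\log\psi_{m,q}=\log\varphi_{m,q}$ and rewriting \eqref{eq:unless} as the non-vanishing of $\partial_{u}\log(\widetilde{N}_{m,q}/\psi_{m,q})$, after clearing the non-vanishing factors $1-ab\,u^{2}$, $1+au$, $1+bu$ one arrives at a cubic polynomial inequality
\[
Q(u)\defeq 2a^{2}b^{2}u^{3}-(n-3)ab(a+b)u^{2}-2(2n-3)ab\,u-(n-1)(a+b)\neq 0
\]
to be established for all $u\in(0,s_{m,q}^{-(n-2)})$.

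A direct case analysis then confirms $Q(u)\neq 0$ on the relevant interval in all parameter ranges allowed by the corollary. In the Schwarzschild cases ($q=0$) and extremal cases ($m=\vert q\vert\neq 0$), $Q$ reduces to an essentially monotone expression of definite sign. In the subextremal case $m>\vert q\vert>0$, one locates the positive roots of $Q$ and shows them to lie beyond the admissible range $(0,s_{m,q}^{-(n-2)})$. For $m\leq 0$ (arbitrary $q$), a sign analysis of the coefficients of $Q$ together with Descartes' rule of signs excludes positive real roots. The main obstacle is precisely this polynomial case analysis: it is elementary but requires careful bookkeeping across the subcases, and it is what isolates the excluded regime $0<m<\vert q\vert$ as the unique parameter range in which $Q$ admits a root in the relevant interval, consistent with the infinitesimal conformal flatness at the exceptional radius $s_{\ast}$ noted in the excerpt. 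Once $Q\neq 0$ is secured, the dichotomy follows from \Cref{thm:isotropic}, with the isotropic alternative pulled back to a spherically symmetric alternative via \eqref{eq:defrs}; connectedness of $P^{n}$ ensures that no single connected photon surface can mix the two alternatives.
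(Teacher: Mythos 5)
Your proposal follows the paper's own route: the paper likewise passes to isotropic coordinates, observes that the Reissner--Nordstr\"om metric is an isotropic spacetime in the sense of \Cref{thm:isotropic} with $\psi=\varphi_{m,q}^{1/(n-2)}$, verifies \eqref{eq:unless} by a ``straightforward computation'' in all cases except $0<m<\vert q\vert$ (where it fails at exactly one radius $s_{*}$), and then invokes \Cref{thm:isotropic}; your reduction to the cubic condition $Q(u)\neq0$ with $u=s^{-(n-2)}$, $a=\tfrac{m+q}{2}$, $b=\tfrac{m-q}{2}$ is a correct way of making that computation explicit. One small caveat: in the subcase $m<0$, $\vert q\vert>\vert m\vert$, $n\geq4$ the coefficients of $Q$ have sign pattern $+,-,+,+$, so Descartes' rule alone does not exclude positive roots there -- instead one can check directly that both summands of $\partial_{u}\log\bigl(\widetilde{N}_{m,q}/\psi\bigr)$, namely $\tfrac{-2abu}{1-abu^{2}}$ and $-\tfrac{n-1}{n-2}\cdot\tfrac{m+2abu}{(1+au)(1+bu)}$, are positive on the admissible range -- but this does not affect the correctness of the overall approach.
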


\section{Spherically symmetric photon surfaces in class $\mathcal{S}_{\text{ext}}$}\label{sec:spherical}
In this section, we will prove existence and uniqueness\footnote{modulo the symmetries of the problem discussed in \Cref{rem:tinv}.} results for spherically symmetric photon surfaces in spacetimes arising in a certain subclass $\mathcal{S}_{\text{ext}}\subset\mathcal{S}$ which among others contains the (exterior) positive mass Schwarzschild and (exterior) subextremal Reissner--Nordstr\"om spacetimes of dimension $n+1\geq4$. 

Given a spacetime $(\R\times \mathcal{I}\times\mathbb{S}^{n-1},-f(r)dt^2+f(r)^{-1}dr^2+r^2\Omega)\in\mathcal{S}$, we will use the auxiliary function 
\begin{align}\label{eq:veff}
v_{\text{eff}}^f\colon \mathcal{I}\to \R^+\colon r\mapsto \frac{f(r)}{r^2}
\end{align}
to simplify the photon sphere condition~\eqref{eq:photonsphere} as well as the photon surface ODE~\eqref{eq:ODE3} to
\begin{align}\label{eq:photonspherev}
(v_{\text{eff}}^{f})'(r_{*})&=0,\\\label{eq:photonsurfv}
\left(\frac{dr}{dt}\right)^{2} &= \frac{f(r)^{2}}{\a^{2}}\left(\a^2  - v_{\text{eff}}^{f}(r)\right),
\end{align}
respectively. The index 'eff' in $v_{\text{eff}}^{f}$ stands for \emph{effective} and $v^f_\text{eff}$ is related to the effective potential in the analysis of null geodesics (see \Cref{subsec:generating}). Hence, by \eqref{eq:photonspherev}, we see that spherically symmetric photon spheres correspond exactly to critical points of $v_{\text{eff}}^{f}$:

\begin{Prop}[Photon sphere revisited]
Let $(\R\times \mathcal{I}\times\mathbb{S}^{n-1},-f(r)dt^2+f(r)^{-1}dr^2+r^2\Omega)\in\mathcal{S}$ and let $v_{\text{eff}}^f$ be as in \eqref{eq:veff}. Then a surface $P^{n}=\{r=r_{*}\}\hookrightarrow(\R\times \mathcal{I}\times\mathbb{S}^{n-1},-f(r)dt^2+f(r)^{-1}dr^2+r^2\Omega)$ is a photon sphere if and only if $r_{*}$ is a critical point of $v_{\text{eff}}^f$, i.e., if and only if $(v_{\text{eff}}^{f})'(r_{*})=0$.
\end{Prop}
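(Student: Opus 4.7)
The plan is a one-step unpacking of the photon sphere condition from \Cref{thm:sphsymm} using a straightforward quotient-rule computation.

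First, I would observe that by definition $v_{\text{eff}}^{f}(r) = f(r)/r^{2}$, so differentiating gives
\begin{align*}
(v_{\text{eff}}^{f})'(r) &= \frac{f'(r)\, r^{2} - 2r\, f(r)}{r^{4}} = \frac{f'(r)\, r - 2 f(r)}{r^{3}}.
\end{align*}
Since $r_{*} \in \mathcal{I} \subseteq (0,\infty)$, the denominator $r_{*}^{3}$ is nonzero, so $(v_{\text{eff}}^{f})'(r_{*}) = 0$ is equivalent to $f'(r_{*})\, r_{*} = 2 f(r_{*})$, which is precisely the photon sphere condition~\eqref{eq:photonsphere}.

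Next, I would invoke \Cref{thm:sphsymm} to close the loop. For the forward direction, suppose $P^{n} = \{r = r_{*}\}$ is a photon sphere. Since $P^{n}$ is a photon surface with constant radial profile $r \equiv r_{*}$, \Cref{thm:sphsymm} yields~\eqref{eq:photonsphere}, hence $(v_{\text{eff}}^{f})'(r_{*}) = 0$ by the computation above. For the converse, if $(v_{\text{eff}}^{f})'(r_{*}) = 0$, then~\eqref{eq:photonsphere} holds at $r_{*}$, and the second half of \Cref{thm:sphsymm} produces the cylinder $(\R \times \Sphere^{n-1}, -f(r_{*})dt^{2} + r_{*}^{2}\Omega)$ as a photon sphere; noting that on $\{r = r_{*}\}$ the lapse $N = \sqrt{f(r)}$ is automatically constant (so the photon sphere conditions of \Cref{def:photonsphere} in the static setting are met), we conclude.

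No real obstacle here: the statement is essentially a rewriting of~\eqref{eq:photonsphere}, and \Cref{thm:sphsymm} has already done the geometric work. The only point worth mentioning explicitly is that the two formulations are equivalent pointwise because $r_{*} > 0$ on $\mathcal{I}$, so dividing by $r_{*}^{3}$ is legitimate.
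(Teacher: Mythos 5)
Your proposal is correct and follows exactly the route the paper takes: the quotient-rule identity $(v_{\text{eff}}^{f})'(r)=\bigl(f'(r)r-2f(r)\bigr)/r^{3}$ with $r>0$ shows that the photon sphere condition~\eqref{eq:photonsphere} is equivalent to criticality of $v_{\text{eff}}^{f}$, and both directions of the equivalence are then delegated to \Cref{thm:sphsymm}. The paper treats this as immediate from the rewriting \eqref{eq:photonspherev} and gives no further argument, so nothing is missing.
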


As an example, let us apply this proposition to the Reissner--Nordstr\"om spacetimes.
\begin{Cor}[Photon spheres in Reissner--Nordstr\"om spacetimes]\label{cor:photonsphere in rn}\label{coro:SSphoto}
Let $n+1\geq4$. Consider the $(n+1)$-dimensional Reissner--Nordstr\"om spacetime of mass $m$ and charge $q$. If $m\leq0$, there is no photon sphere. Now assume $m>0$. If $m\geq\vert q\vert$, the (exterior) Reissner--Nordstr\"om spacetime has a unique photon sphere which lies at 
\begin{align}\label{eq:r*}
r_{\ast} &= \left(\frac{nm + \sqrt{n^2m^2 - 4(n-1)q^2}}{2}\right)^{\frac{1}{n-2}},
\end{align}
with $r_{*}=(nm)^{\frac{1}{n-2}}$ for $q=0$ (and the interior subextremal Reissner--Bordstr\"om spacetimes have no photon spheres). If $\vert q\vert>m>\frac{2\sqrt{n-1}}{n}\vert q\vert$, the Reissner--Nordstr\"om spacetime has precisely two photon spheres lying at 
\begin{align}\label{eq:r*super}
r_{\ast,\pm} &= \left(\frac{n m \pm \sqrt{n^2m^2 - 4(n-1)q^2}}{2}\right)^{\frac{1}{n-2}},
\end{align}
while it has precisely one at $r_{*,-}=r_{*,+}=r_{*}$ from \eqref{eq:r*super} for $m=\frac{2\sqrt{n-1}}{n}\vert q\vert$, and none if $0<m<\frac{2\sqrt{n-1}}{n}\vert q\vert$. In particular, in the Schwarzschild case $q=0$, there is a unique photon sphere at $r_\ast=(nm)^{\frac{1}{n-2}}$ for $m>0$ and none for $m\leq0$.
\end{Cor}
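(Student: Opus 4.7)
The plan is to apply the preceding proposition, which identifies photon spheres with critical points of $v_{\text{eff}}^f = f/r^2$, to the Reissner--Nordstr\"om lapse squared $f(r) = N_{m,q}(r)^2 = 1 - 2m r^{-(n-2)} + q^2 r^{-2(n-2)}$. A direct differentiation and clearing of denominators shows that critical points of $v_\text{eff}^f$ correspond exactly to the positive real roots of the quadratic
\begin{equation*}
P(u) \defeq u^2 - nm\, u + (n-1) q^2
\end{equation*}
in the new variable $u \defeq r^{n-2}$, subject to the additional constraint that the corresponding radius $r$ lie in the allowed radial range of the spacetime under consideration. The quadratic formula then yields the two candidate roots $u_{*,\pm} = \frac{1}{2}\bigl(nm \pm \sqrt{n^2 m^2 - 4(n-1) q^2}\bigr)$, which in turn produce the candidate radii $r_{*,\pm}$ from \eqref{eq:r*super}, and which are real precisely when $|m| \geq \frac{2\sqrt{n-1}}{n}|q|$.

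I then sort the cases by the sign of $m$. Vieta's formulas give $u_{*,+} + u_{*,-} = nm$ and $u_{*,+}\, u_{*,-} = (n-1) q^2 \geq 0$, so both real roots share the sign of $m$. Hence for $m \leq 0$ (as well as $m = 0$ with $q \neq 0$) there is no positive real root and hence no photon sphere. For $m > 0$, both real roots are positive, and it remains to determine which of them lie in the allowed radial range of the respective spacetime.

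The main nontrivial step is to evaluate $P$ at the two horizon radii in the subextremal case $m > |q| > 0$. Using the identity $u_H^2 = 2m u_H - q^2$ (which $u_H$ satisfies as a root of $N_{m,q}^2$), a short calculation yields
\begin{align*}
P(u_H) &= -(n-2)\sqrt{m^2 - q^2}\,\bigl(m + \sqrt{m^2 - q^2}\bigr) \leq 0,\\
P(\overline u_H) &= \phantom{-}(n-2)\sqrt{m^2 - q^2}\,\bigl(m - \sqrt{m^2 - q^2}\bigr) \geq 0,
\end{align*}
where $u_H \defeq m + \sqrt{m^2-q^2}$ and $\overline u_H \defeq m - \sqrt{m^2-q^2}$. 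Since $P$ is convex with minimum between its roots, these sign conditions force the ordering $\overline u_H \leq u_{*,-} \leq u_H \leq u_{*,+}$. Hence only $u_{*,+}$ lies in the exterior range $u > u_H$, giving the unique exterior photon sphere at $r_*$ of \eqref{eq:r*}, while no root lies in the interior range $0 < u < \overline u_H$, so the interior subextremal Reissner--Nordstr\"om spacetime carries no photon sphere. The extremal case $m = |q| > 0$ follows by continuity (the two horizon radii coincide), and in the Schwarzschild case $q = 0$ one directly reads off $u_{*,-} = 0$ and $u_{*,+} = nm$, recovering the unique photon sphere at $r_* = (nm)^{1/(n-2)}$. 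Finally, in the superextremal case $0 < m < |q|$ the radial range is all of $(0, \infty)$, so every positive real root of $P$ corresponds to a photon sphere, and the three subcases of the corollary then follow directly from the discriminant sign analysis.
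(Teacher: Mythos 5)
Your proposal is correct and takes essentially the same approach as the paper: both reduce the problem via the preceding proposition to locating the positive roots of the quadratic $x^{2}-nmx+(n-1)q^{2}$ in $x=r^{n-2}$ within the allowed radial range, followed by the same case split on $m$ and $q$. Your justification of the root ordering $\overline{u}_H<u_{*,-}<u_H<u_{*,+}$ in the subextremal case (evaluating the quadratic at the horizon values and invoking convexity) is a clean way to verify the inequality chain that the paper asserts by direct computation, but it is not a genuinely different route.
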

\begin{proof}
For $r>r_{m,q}$ as given in \Cref{sec:RN}, the metric coefficient $f_{m,q}=N_{m,q}^{2}$ (see \eqref{rnlapse}) is given by
\begin{align*}
f_{m,q}(r)&=1-\frac{2m}{r^{n-2}}+\frac{q^2}{r^{2(n-2)}}=\frac{a(r^{n-2})}{r^{2(n-2)}},
\end{align*}
for the quadratic polynomial $a(x)\definedas x^2-2mx+q^2$, where $x>m+\sqrt{m^2-q^2}>0$ or $0<x<m-\sqrt{m^{2}-q^{2}}$ if $m> \vert q\vert$, $x>m$ if $m=\vert q\vert$, and $x>0$ otherwise. By \Cref{thm:sphsymm} and the equivalent formulation \eqref{eq:photonspherev} of the photon sphere condition, finding all photon spheres is equivalent to finding all critical points of $v(r)\definedas v_{\text{eff}}^{f_{m,q}}(r)$. We find
\begin{align*}
v'(r)&= -\frac{2b(r^{n-2})}{r^{2n-1}},
\end{align*}
where $b$ is the quadratic polynomial $b(x)\definedas x^{2}-nmx+(n-1)q^{2}$. The polynomial $b$ has the real zeros 
\begin{align}\label{eq:xpm}
x_{\pm}&=\frac{nm\pm\sqrt{n^{2}m^{2}-4(n-1)q^{2}}}{2}
\end{align}
(if any). These solutions must both be non-positive, $x_{\pm}\leq0$ (if real) if $m\leq0$ and hence do not lead to photon spheres. So let $m>0$. If $m> \vert q\vert$, the radicand of \eqref{eq:xpm} is positive and one finds $x_+>m+\sqrt{m^2-q^2}> x_{-}>m-\sqrt{m^2-q^2}$ and hence there is indeed precisely one photon sphere in the exterior region $r>r_{m,q}$ at $r_*=x_+^\frac{1}{n-2}$ as in \eqref{eq:r*} and none in the interior region $0<r<\overline{r}_{m,q}$. Similarly, if $m=\vert q\vert\neq0$, one finds $x_{+}>m>x_{-}$ and hence precisely one photon sphere at $r_{*}=x_{+}^{n-2}$ as in \eqref{eq:r*}.

If $\vert q\vert>m>\frac{2\sqrt{n-1}}{n}\vert q\vert$, the radicand of \eqref{eq:xpm} is non-negative (zero iff $m=\frac{2\sqrt{n-1}}{n}\vert q\vert$) and both $x_{\pm}>0$ which proves existence of precisely two photon spheres at $r_{*,\pm}=x_{\pm}^{\frac{1}{n-2}}$ which coincide when $m=\frac{2\sqrt{n-1}}{n}\vert q\vert$. Finally, the condition $0<m<\frac{2\sqrt{n-1}}{n}\vert q\vert$ leads to a negative radicand and hence no critical points of $v$ and no photon spheres in the corresponding Reissner--Nordstr\"om spacetimes.
\end{proof}

Let us now introduce the class $\mathcal{S}_\text{ext}$ in which we will give a complete characterization of spherically symmetric photon surfaces. 
\begin{Def}[Class $\mathcal{S}_\text{ext}$]\label{def:Sext}
Let $(\R\times \slice,\mathfrak{g})\in\mathcal{S}$ with $\slice=\mathcal{I}\times\mathbb{S}^{n-1}$ and smooth, positive metric coefficient $f\colon \slice\to\R^+$ such that $\mathfrak{g}=-f(r)dt^2+f(r)^{-1}dr^2+r^2\Omega$. Set
\begin{align*}
v_{\text{eff}}^f\colon \mathcal{I}\to \R^+\colon r\mapsto \frac{f(r)}{r^2}.
\end{align*}
Then $(\R\times \slice,\mathfrak{g})\in\mathcal{S}_\text{ext}$ if $\mathcal{I}=(r_H,\infty)$ for some $r_H>0$ and if, in addition, 
\begin{enumerate}\itemsep0em
\item $f$ extends $C^3$ to $r_H$ with $f(r_H)=0$, $f'(r_H)>0$,
\item satisfies $f(r)\to1$ and $f'(r)\to 0$ as $r\to\infty$,
\item and $v_\text{eff}^{f}$  is strictly increasing on $(r_{H},r_{*})$ up to one global maximum at some $r_\ast \in \mathcal{I}$ and then strictly decreasing on $(r_\ast, \infty)$.
\end{enumerate}
\end{Def}

\begin{Rem}[Interpretation of class $\mathcal{S}_{\text{ext}}$]\label{rem: conditions S_ext} 
Condition 1 is equivalent to $r = r_H$ representing a Killing horizon in the spacetime under consideration, with $f'(r_H)>0$ corresponding to non-degeneracy (i.e., non-vanishing surface gravity) of this Killing horizon. Condition 2 is a weak asymptotic flatness condition on the spacetime under consideration. Condition 3 is a condition imposed only to simplify the exposition in this section, see also \Cref{rem: generalization PS analysis}. We also note that these conditions on $f$ in particular imply that $f$ is Lipschitz continuous on $[r_H, \infty)$.
\end{Rem}

\begin{Rem}[Examples of spacetimes in class $\mathcal{S}_{\text{ext}}$]\label{rem:examplesSext}
A straightforward computation shows that class $\mathcal{S}_{\text{ext}}$ contains the (exterior) subextremal Reissner--Nordstr\"om spacetimes of dimension $n+1\geq4$ and hence in particular the positive mass Schwarzschild spacetimes. Moreover, it contains the restriction of the $3+1$-dimensional (exterior) positive mass Schwarzschild spacetime to its equatorial plane considered by Foertsch, Hasse, and Perlick~\cite{Perlicketal}. Our existence and uniqueness results pertaining to spherically symmetric photon surfaces in spacetimes of class $\mathcal{S}_\text{ext}$ hence apply to all of these examples. 
\end{Rem}

As we cannot solve the photon surface ODE \eqref{eq:ODE3} explicitly for generic spacetimes in class $\mathcal{S}_{\text{ext}}$ (nor in the concrete examples mentioned in \Cref{rem:examplesSext}), we will instead discuss existence and uniqueness of solutions of \eqref{eq:ODE3} abstractly and will analyze some qualitative properties of the solutions of \eqref{eq:ODE3}, using only Conditions 1--3. For proving existence and uniqueness of solutions of \eqref{eq:ODE3}, we are going to make use of the following proposition which is based on the global Picard--Lindel\"of theorem (see e.g. \cite[p.\,55]{GDG}).

\begin{Prop}[Barrier principle]\label{prop:Psi-Fct.}
Let $\Psi\colon\R\to \R$ be a Lipschitz continuous function which vanishes on an interval $(-\infty,R]$ for some number $R \in \R$, $\Psi\vert_{(-\infty, R]} \equiv 0$. Then, for each $t_0, r_{0} \in \R$, there is a unique global solution $r_{t_0,r_{0}}\colon\R\to\R$ of the initial value problem
\begin{align}\label{eq:IVP}
\dot{r}(t) &=\Psi\left(r(t)\right), \quad r(t_0)=r_0.
\end{align}
This solution $r_{t_0,r_{0}}$ has $C^{1}$-regularity, continuously depends on $t_{0}$ and $r_{0}$, and satisfies the following \emph{barrier principle:} If $r_0>R$ then $r_{t_0,r_{0}}(t)>R$ for all $t \in \R$.
\end{Prop}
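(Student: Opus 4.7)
The plan is to deduce the existence, uniqueness, $C^{1}$-regularity, and continuous dependence statements from the global Picard--Lindel\"of theorem, and then to derive the barrier principle as a short consequence of uniqueness by exhibiting a suitable constant solution.

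First I would set up local existence and uniqueness. Since $\Psi$ is Lipschitz continuous on $\R$ with some Lipschitz constant $L$, the right-hand side of~\eqref{eq:IVP} is globally Lipschitz in $r$ and continuous in $t$, so standard Picard--Lindel\"of theory produces a unique local $C^{1}$-solution $r_{t_{0},r_{0}}$ through every initial datum $(t_{0},r_{0})\in\R\times\R$. To promote this to a global solution on all of $\R$, I would use the linear growth bound $|\Psi(r)|=|\Psi(r)-\Psi(R)|\leq L\,|r-R|$, which follows from $\Psi(R)=0$. Gr\"onwall's inequality then prevents finite-time blow-up, so the maximal existence interval is all of $\R$; this is exactly the content of the global Picard--Lindel\"of theorem cited from \cite{GDG}. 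Continuous dependence of $r_{t_{0},r_{0}}$ on $(t_{0},r_{0})$ is standard and follows from another application of Gr\"onwall to the difference of two solutions.

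The key step is the barrier principle, and this is where the hypothesis $\Psi\vert_{(-\infty,R]}\equiv 0$ is used. The crucial observation is that the constant curve $t\mapsto R$ is a global solution of $\dot r=\Psi(r)$, since $\Psi(R)=0$. Now suppose for contradiction that $r_{0}>R$ but there exists $t_{1}\in\R$ with $r_{t_{0},r_{0}}(t_{1})\leq R$. By continuity of $r_{t_{0},r_{0}}$ and the intermediate value theorem, there exists $t_{2}$ between $t_{0}$ and $t_{1}$ with $r_{t_{0},r_{0}}(t_{2})=R$. But then both $r_{t_{0},r_{0}}$ and the constant function $t\mapsto R$ are global solutions of~\eqref{eq:IVP} with initial datum $(t_{2},R)$, and the uniqueness part just established forces $r_{t_{0},r_{0}}\equiv R$ on $\R$, contradicting $r_{t_{0},r_{0}}(t_{0})=r_{0}>R$.

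I do not anticipate a real obstacle here: once the right-hand side is recognized as globally Lipschitz with linear growth, the analytic machinery (Picard--Lindel\"of and Gr\"onwall) does all the work, and the barrier principle is essentially a one-line consequence of uniqueness combined with the constant solution $r\equiv R$. The only point that requires some care is making sure the problem is globally well-posed on all of $\R$ rather than merely locally, which is the reason for invoking the linear growth estimate derived from $\Psi(R)=0$ and Lipschitz continuity.
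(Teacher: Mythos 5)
Your proposal is correct and follows essentially the same route as the paper: global Picard--Lindel\"of for existence, uniqueness, $C^1$-regularity and continuous dependence, then a contradiction with uniqueness via a constant solution made available by $\Psi\vert_{(-\infty,R]}\equiv 0$. The only cosmetic difference is that you first use the intermediate value theorem to land exactly on the constant solution $r\equiv R$, whereas the paper compares directly with the constant solution at the value $r_{t_0,r_0}(t_1)\leq R$; both are valid.
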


\begin{proof}
Since $\Psi$ is independent of the parameter $t$, it is naturally Lipschitz continuous w.r.t.~$r$ uniformly in $t$, and we can apply the global version of the Picard--Lindel\"of theorem~\cite[p.~55]{GDG} which yields global existence and uniqueness of a solution $r_{t_{0},r_{0}}$ of \eqref{eq:IVP} that continuously depends on $t_0$ and $r_{0}$ and has $C^{1}$-regularity in $t$. Now let $r_0 >R$, and assume that $r_{t_{0},r_0}(t_1)\leq R$ for some $t_1 \in \R$. Then by the assumption $\Psi\vert_{(-\infty, R]} \equiv 0$, we find
\begin{align*}
\dot{r}_{t_{0},r_0}(t_1)=\Psi\left(r_{t_{0},r_0}(t_1)\right)=0.
\end{align*}
But then both $r_{t_{0},r_{0}}$ and the constant function $r_{1}(t)\definedas r_{t_{0},r_0}(t_1)$ globally solve the initial value problem
\begin{align}
\dot{r}(t) &=\Psi\left(r(t)\right), \quad r(t_1)=r_{t_{0},r_0}(t_1)
\end{align}
but do not coincide at $t=t_{0}$, where $r_{t_{0},r_{0}}(t_{0})=r_{0}>R$ but $r_{1}(t_{0})=r_{t_{0},r_{0}}(t_{1})\leq R$. This contradicts the uniqueness of global solutions asserted by the global Picard--Lindel\"of theorem. Hence $r_{r_0, \lambda}(t)>R$ for all $t \in \R$.
\end{proof}

\subsection{Existence, Uniqueness, and Characterization in $\mathcal{S}_{\text{ext}}$}
Let us now consider the square root of the photon surface ODE \eqref{eq:ODE3}, that is
\begin{align}\label{eq:ODE3sqrt}
\dot{r}&=\pm f(r)\sqrt{1-\frac{v_{\text{eff}}^{f}(r)}{\lambda^2}}.
\end{align}
By \Cref{thm:sphsymm}, any smooth radial function $r\colon I \to \mathcal{I}$ with $I \subset \R$ open that is a solution to~\eqref{eq:ODE3sqrt} gives rise to a photon surface in the corresponding spacetime $(\R\times M^n, \mathfrak{g})$ with umbilicity factor $\lambda$.
As already stated, any time-reflection or time-translation of $r=r(t)$ will again be a solution and thus give rise to another radial profile of a photon surface. 
A radial function $r=r(t)$ solving~\eqref{eq:ODE3sqrt} has a turning point, meaning a minimum or maximum, only if its derivative vanishes, i.e., if $\pm\dot{r}(t_{0}) = 0$ at some instant of time $t_0 \in I$. This happens if either $f(r(t_0) )\asdefined f(r_0)=0$, or if 
\begin{align}\label{eq:lambda^2}
\lambda^2= v_{\text{eff}\,}^f(r_0)=\frac{f(r_0)}{r_0^2}.
\end{align}
Recall that $f$ is positive on $\mathcal{I}=(r_H, \infty)$ and smoothly extends to $r_H$ with $f(r_H)=0$, so the first case only happens for $r_0=r_H$ (which we exclude by construction).
In order to consider the other case, namely \eqref{eq:lambda^2}, let us look at the function $v_{\text{eff}}^f$, depicted in \Cref{fig:y(r)}. 
By assumption, $v_{\text{eff}}^f$ vanishes at $r_H$, is strictly increasing for $r_H<r<r_\ast$, has a maximum at $r_\ast$ and no further extrema, and is strictly decreasing for $r>r_\ast$. Furthermore, for $r\rightarrow \infty$ it has the asymptote $0$.

\begin{figure}[h!]
\centering
\includegraphics[scale=0.9]{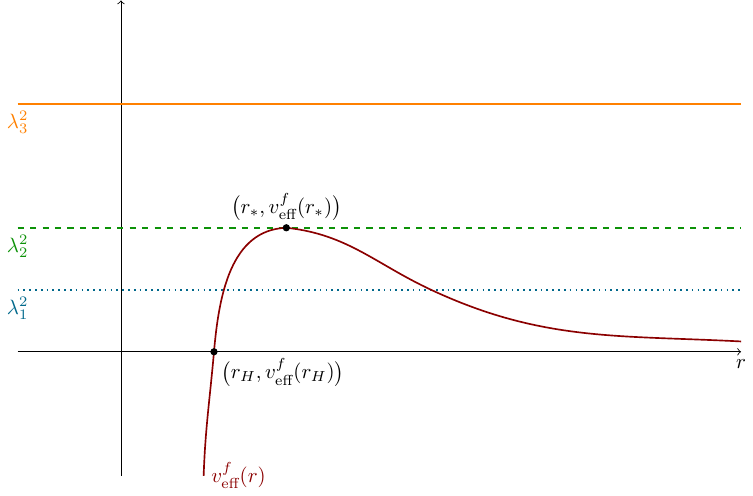}
\caption{Sketch of the function $v_{\text{eff}}^f(r)=\frac{f(r)}{r^2}$ with three different cases for $\lambda$.}
\label{fig:y(r)}
\end{figure}

This means that for $0<\lambda^2<v_{\text{eff}}^{f}(r_{*})\eqdef \lambda_\ast^2$, symbolized in \Cref{fig:y(r)} by \textcolor{greyblue}{$\lambda_1^2$}, there are always two solutions of \eqref{eq:lambda^2} in $\mathcal{I}$, i.e., two intersections with $v_{\text{eff}}^f$. Therefore any such solution $r=r(t)$ can have at most two possible turning points. For $\lambda^2=\lambda_\ast^2$, denoted by \textcolor{gruen}{$\lambda_2^2$} in \Cref{fig:y(r)}, the only possible turning point is $r_\ast$, and for $\lambda^2>\lambda_\ast^2$, symbolized by \textcolor{orange}{$\lambda_3^2$}, there is no intersection and the radial profile never turns. We are going to treat these three cases separately.\\

\newpage

\noindent \textcolor{greyblue}{\underline{\textbf{Case 1:} $0<\lambda^2<\lambda_\ast^2$.}}
\begin{thm}[Existence and uniqueness in Case 1]\label{prop:case1}
Let $f$ be the metric coefficient of a spacetime in class $\mathcal{S}_{\text{ext}}$ and let $0<\lambda^2<\lambda_\ast^2\definedas v_{\text{eff}}^{f}(r_{*})$. Let $r_H<r_{\lambda}<r_\ast<R_{\lambda}<\infty$ be the unique radii with $v_{\text{eff}\,}^{f}(r_{\lambda})=v_{\text{eff}\,}^{f}(R_{\lambda})=\lambda^{2}$. Then there exist unique smooth, time-symmetric global solutions $r_{r_{\lambda}}$ and $r_{R_{\lambda}}$ of the photon surface ODE \eqref{eq:ODE3} with this $\lambda^{2}$ and $r(0)=r_\lambda$ and $r(0)=R_{\lambda}$, respectively. Moreover, for any initial value $r_H<r_0< r_{\lambda}$ or $R_{\lambda}< r_0 <\infty$, there exist precisely two smooth, global solutions of the photon surface ODE \eqref{eq:ODE3} with this $\lambda^{2}$ and $r(0)=r_0$; these are time-reflections of each other and time-translations and possibly time-reflections of $r_{r_{\lambda}}$ or $r_{R_{\lambda}}$, respectively. Solving the initial value problem for \eqref{eq:ODE3} at generic $t_{0}$ rather than at $t_{0}=0$ results only in further time-translations. All these solutions behave as depicted in \Cref{fig:type1}. 
\end{thm}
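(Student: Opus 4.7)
The plan is to rewrite the photon surface ODE as $\dot r^2=F(r)$ with $F(r)\definedas\frac{f(r)^2}{\lambda^2}(\lambda^2-v_{\text{eff}}^f(r))$ and to apply either the appendix-strengthened variant of \Cref{prop:Psi-Fct.} to the square-rooted form \eqref{eq:ODE3sqrt}, or the classical Picard--Lindelöf theorem to the reduced second-order ODE $\ddot r=\tfrac12 F'(r)$ obtained by differentiating $\dot r^2=F(r)$; the latter has Lipschitz RHS on compact $r$-intervals because $F\in C^2$. I then use the time-translation and time-reflection symmetries from \Cref{rem:tinv} to organize the full solution family from two distinguished solutions $r_{r_\lambda}$ and $r_{R_\lambda}$.

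\textbf{Structure of $F$ and construction of $r_{R_\lambda}, r_{r_\lambda}$.} Conditions 1--3 of \Cref{def:Sext} together with $0<\lambda^2<\lambda_\ast^2$ imply that $F$ is $C^3$ on $[r_H,\infty)$, smooth on $(r_H,\infty)$, positive precisely on $(r_H,r_\lambda)\cup(R_\lambda,\infty)$, with simple zeros at $r_\lambda$ and $R_\lambda$ (so $F'(r_\lambda)<0<F'(R_\lambda)$) and a double zero at $r_H$ (since $f$ vanishes linearly there). Picard--Lindelöf applied to the second-order ODE with initial data $(r(0),\dot r(0))=(R_\lambda,0)$ yields a unique local $C^\infty$ solution $r_{R_\lambda}$; the conserved quantity $\dot r^2-F(r)$ then forces $\dot r_{R_\lambda}^2=F(r_{R_\lambda})$, which keeps $r_{R_\lambda}\in[R_\lambda,\infty)$ and $|\dot r_{R_\lambda}|$ uniformly bounded (as $F\to 1$ at infinity), so global existence follows because $r_{R_\lambda}$ grows at most linearly. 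Time-symmetry $r_{R_\lambda}(-t)=r_{R_\lambda}(t)$ is automatic from uniqueness applied to $t\mapsto r_{R_\lambda}(-t)$. Since $\ddot r_{R_\lambda}(0)=\tfrac12 F'(R_\lambda)>0$, $t=0$ is a strict minimum; $\dot r_{R_\lambda}$ then has a constant sign on $(0,\infty)$ and, combined with $F\to 1$, gives $r_{R_\lambda}(t)\to\infty$ linearly. The solution $r_{r_\lambda}$ is constructed identically from initial data $(r_\lambda,0)$; the double zero of $F$ at $r_H$ yields exponential approach, so $r_{r_\lambda}(t)\to r_H$ only as $t\to\pm\infty$ and the Killing horizon is never reached.

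\textbf{Other initial data and main obstacle.} For any $r_0\in(R_\lambda,\infty)$, the two choices $\dot r(0)=\pm\sqrt{F(r_0)}\neq 0$ yield two global $C^\infty$ solutions by Picard--Lindelöf. By strict monotonicity of $r_{R_\lambda}$ on $(0,\infty)$, there is a unique $t_0>0$ with $r_{R_\lambda}(t_0)=r_0$, so these two candidates are the time-translates $t\mapsto r_{R_\lambda}(t\pm t_0)$, which are time-reflections of each other by the symmetry of $r_{R_\lambda}$. The case $r_0\in(r_H,r_\lambda)$ is handled identically via $r_{r_\lambda}$, and autonomy of \eqref{eq:ODE3} absorbs any shift from $t_0\neq 0$ into a time-translation. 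The main technical point is that the first-order form $\dot r=\pm\sqrt{F(r)}$ is only Hölder-$\tfrac12$ at the turning points $r_\lambda,R_\lambda$, so \Cref{prop:Psi-Fct.} in its stated form cannot rule out spurious ``sticking'' solutions of the kind admitted by $\dot r=\sqrt{r-R}$; passage to the second-order ODE (or equivalently the appendix strengthening of \Cref{prop:Psi-Fct.}) restores Lipschitz uniqueness and, by induction using $\dot r^2=F(r)$, $C^\infty$ regularity across the turning points.
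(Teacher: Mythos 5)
Your route is genuinely different from the paper's. The paper works with the square-rooted first-order equation $\dot r=\pm\sqrt{F(r)}$ directly: it translates/reflects so the turning point sits at the origin, applies the appendix result \Cref{prop: appendix} (whose proof substitutes $u=\sqrt{w}$ to obtain a globally Lipschitz ODE for $u$, to which the global Picard--Lindel\"of theorem applies), constructs the `$+$' and `$-$' branches separately, glues them at $t=0$, and verifies smoothness of the glued solution by an inductive computation of the higher $t$-derivatives; the horizon barrier is handled by \Cref{prop:Psi-Fct.}. You instead differentiate once to get the autonomous second-order equation $\ddot r=\tfrac12F'(r)$, whose right-hand side is locally Lipschitz since $F$ is $C^3$ up to $r_H$ and smooth beyond, solve it by classical Picard--Lindel\"of from the data $(R_\lambda,0)$ resp.\ $(r_\lambda,0)$, and recover \eqref{eq:ODE3} from the conserved quantity $\dot r^2-F(r)$. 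What your approach buys is that existence, smoothness across the turning point, and time-symmetry all come for free from one application of Picard--Lindel\"of — no gluing and no induction on derivatives. Your qualitative analysis (confinement to $\{F\geq0\}$, linear escape to infinity from $F\to1$, exponential and hence never-completed approach to $r_H$ from the double zero of $F$ there) matches the paper's. One caveat you share with the paper: the simple-zero claim $F'(r_\lambda)<0<F'(R_\lambda)$ uses $(v^f_{\text{eff}})'\neq0$ at $r_\lambda,R_\lambda$, which strict monotonicity in \Cref{def:Sext} gives only up to possible isolated degenerate points.

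There is, however, a gap in the uniqueness direction that you flag but do not close. The theorem asserts uniqueness among smooth solutions of the \emph{first-order} ODE \eqref{eq:ODE3}, and a smooth solution of \eqref{eq:ODE3} does not automatically solve $\ddot r=\tfrac12F'(r)$: differentiating only gives $\dot r\,(2\ddot r-F'(r))=0$, which is vacuous wherever $\dot r=0$. Indeed the constant functions $r\equiv R_\lambda$ and $r\equiv r_\lambda$ are smooth, time-symmetric, global solutions of \eqref{eq:ODE3} that do \emph{not} solve your second-order equation (the paper's \Cref{prop: appendix} excludes these by restricting to \emph{non-trivial} solutions, and the theorem must be read accordingly), and a priori there could be ``sticking'' solutions that sit at $R_\lambda$ on an interval before taking off. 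To make your argument complete you must show that every non-constant $C^2$ solution of \eqref{eq:ODE3} satisfies the second-order equation everywhere: on the dense closure of $\{\dot r\neq0\}$ this follows by continuity of $\ddot r$ and $F'\circ r$, and a sticking interval $[a,b]$ at the simple zero $R_\lambda$ is excluded because $\ddot r(b)=0$ from the left while a sequence $t_n\searrow b$ with $\dot r(t_n)\neq0$ forces $\ddot r(b)=\tfrac12F'(R_\lambda)\neq0$, contradicting $C^2$-regularity. (One must also rule out a solution that is flat to infinite order at a turning time; matching Taylor coefficients in $\dot r^2=F(r)$ at a simple zero of $F$ does this.) With that converse step supplied, your proof is complete and is a clean alternative to the paper's.
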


\begin{figure}[h!]
\centering
\includegraphics[scale=1.2]{case-1-intro.pdf}
\caption{\text{Photon surfaces of \textcolor{greyblue}{Case 1}}, $0<\lambda^2<\lambda_\ast^2$.}
\label{fig:type1}
\end{figure}

\begin{proof}
Note that the radicand on the right hand side of \eqref{eq:ODE3sqrt} is non-negative iff $\lambda^2\geq v_{\text{eff}}^{f}(r)$, and vanishes only at $r_H$ and for $r\in(r_{H},\infty)$ satisfying $\lambda^2=v_{\text{eff}}^{f}(r)$. Combined with our restriction on $\lambda$ this implies that there exist radii $r_{H}<r_{\lambda} <r_\ast<R_{\lambda}<\infty$ with $v_{\text{eff}\,}^{f}(r_{\lambda})=v_{\text{eff}\,}^{f}(R_{\lambda})=\lambda^{2}$ such that the radicand is positive on $(r_H, r_{\lambda}) \,\cup\, (R_{\lambda},\infty)$ and negative on $(r_{\lambda},R_{\lambda})$, see \textcolor{greyblue}{$\lambda_{1}^{2}$} in \Cref{fig:y(r)}. Hence we only consider~\eqref{eq:ODE3sqrt} for initial values $r_{0}$ satisfying $r_H<r_0\leq r_{\lambda}$ or $R_{\lambda}\leq r_0 <\infty$. However, the right hand side of \eqref{eq:ODE3sqrt} is not Lipschitz continuous at $r=r_{\lambda}$ and $r=R_{\lambda}$ since its derivative with respect to $r$ blows up there. In order to handle this, we are going to apply \Cref{prop: appendix} after suitably rescaling and translating the variables. We will treat the '$+$' and '$-$' cases of~\eqref{eq:ODE3sqrt} separately and first focus on $r_{0}=R_{\lambda}$ and $r_{0}=r_{\lambda}$ before moving on to $r_{0}>R_{\lambda}$ and $r_{H}<r_{0}<r_{\lambda}$.

Let us first consider the '$+$' case of~\eqref{eq:ODE3sqrt} for $r_{0}=R_{\lambda}$. For $r\geq R_{\lambda}$, we set 
\begin{align*}
w \definedas r-R_{\lambda}, \quad \widehat{v}(w)\definedas v_{\text{eff}}^f(w+R_{\lambda}),\quad \text{and} \quad \widehat{f}(w) \definedas \frac{f(w+R_{\lambda})}{\lambda}.
\end{align*}
In these variables, we can rewrite the '$+$' case of~\eqref{eq:ODE3sqrt} as
\begin{align}\label{eq: ODE w+}
\dot{w}=\widehat{f}(w)\sqrt{\lambda^2-\widehat{v}(w)}\asdefined \sqrt{F(w)}.
\end{align}
Using the definition of Class $\mathcal{S}_{\text{ext}}$, it can easily be checked from its definition and the assumptions of \Cref{prop:case1} that this function $F\colon[0,\infty)\to\R^{+}_{0}$ satisfies the assumptions of \Cref{prop: appendix}, hence we get a unique non-trivial global $C^{1}$-solution $w\colon\R\to[0,\infty)$ of \eqref{eq: ODE w+} with initial value $w(0)=0$ which furthermore satisfies $w(t)=0$ for all $t\leq0$. Transforming back to the original variable $r=w+R_{\lambda}$, this yields a unique non-constant global $C^{1}$-solution $r^{+}_{R_{\lambda}}\colon\R\to[R_{\lambda},\infty)$ of the initial value problem
\begin{align}\label{eq:IVP Psi case1, 1}
\dot{r}(t)&=\left\{
\begin{array}{ll}
f(r(t))\sqrt{1-\frac{v^f_{\text{eff}}(r(t))}{\lambda^{2}}} & \text{for }r(t) \in (R_{\lambda},\infty), \\
0 & \, \textrm{otherwise,} \\
\end{array}
\right.\\
r(0)&=R_{\lambda}
\end{align}
satisfying the '$+$' case of the photon surface ODE~\eqref{eq:ODE3sqrt} on $(R_{\lambda}, \infty)$. Moreover, we know that $r^{+}_{R_{\lambda}}(t)=R_{\lambda}$ for $t\leq0$ and $r^{+}_{R_{\lambda}}(t)>R_{\lambda}$ for $t>0$ from \Cref{prop: appendix} and \Cref{prop:Psi-Fct.}, respectively. Finally, $r^{+}_{R_{\lambda}}$ is strictly increasing on $(0,\infty)$ by \eqref{eq:IVP Psi case1, 1}. One can see as follows that in fact $r^{+}_{R_{\lambda}}(t)\to\infty$ as $t\to\infty$: Suppose towards a contradiction that $r^{+}_{R_{\lambda}}$ remains bounded above by some constant $C>0$, $r^{+}_{R_{\lambda}}(t)\leq C$ for all $t\in\R$. Then as $t\to\infty$, we find $r^{+}_{R_{\lambda}}(t)\to r^{+}$ for some $r^{+}>R_{0}$ as $t\to\infty$ and
\begin{align*}
\dot{r}^{+}_{R_{\lambda}}(t)\to f(r^{+})\sqrt{1-\frac{v^f_{\text{eff}}(r^{+})}{\lambda^2}}>0
\quad \text{ for }\quad t\to \infty,
\end{align*}
a contradiction to $r^{+}_{R_{\lambda}}(t)\to r^{+}$ as $t\to\infty$. In particular, we find $\dot{r}^{+}_{R_{\lambda}}(t)\to1$ as $t\to\infty$ so that the radial profile $r^{+}_{R_{\lambda}}$ tends to the hyperbola with asymptote $r=1$, i.e., to the radial profile of the unit radius hyperboloid in the Minkowski spacetime.

Similarly, still in the '$+$' case of~\eqref{eq:ODE3sqrt} but switching to $r_{0}=r_{\lambda}$, for $r_{H}<r\leq r_{\lambda}$, we set 
\begin{align*}
\widetilde{w} \definedas -(r-r_{\lambda})\in[0,r_{\lambda}-r_{H}),\quad\widetilde{v}(\widetilde{w})\definedas v_{\text{eff}}^f(-\widetilde{w}+r_{\lambda}),\quad \text{and} \quad \widetilde{f}(\widetilde{w}) \definedas \frac{f(-\widetilde{w}+r_{\lambda})}{\lambda}.
\end{align*}
Now extend $\widetilde{v}$ and $\widetilde{f}$ trivially by $0$ to $[r_{\lambda}-r_{H},\infty)$ in view of \Cref{rem:regularityappendix}. This allows us to rewrite the photon surface ODE~\eqref{eq:ODE3} as
\begin{align}\label{eq: ODE w}
\dot{\widetilde{w}}=\widetilde{f}(\widetilde{w})\sqrt{\lambda^2-\widetilde{v}(\widetilde{w})}\asdefined\sqrt{\widetilde{F}(\widetilde{w})}
\end{align}
with $\widetilde{F}\colon[0,\infty)\to\R^{+}_{0}$ satisfying the assumptions of \Cref{prop: appendix} (see \Cref{rem:regularityappendix}). Applying \Cref{prop: appendix} and \Cref{prop:Psi-Fct.} and transforming back to the original variable $r=-\widetilde{w}+r_{\lambda}$ yields a unique non-constant global $C^{1}$-solution $r^{+}_{r_{\lambda}}\colon\R\to(r_{H},r_{\lambda}]$ of
\begin{align}\label{eq:IVP Psi case1, 2}
\dot{r}(t)&= \left\{
\begin{array}{ll}
f(r(t))\sqrt{1-\frac{v^f_{\text{eff}}(r(t))}{\lambda^{2}}} & \text{for }r(t) \in (r_H, r_{\lambda}), \\
0 & \, \textrm{otherwise,} \\
\end{array}
\right.\\
r(0)&=r_{\lambda}
\end{align}
satisfying the '$+$' case of the photon surface ODE~\eqref{eq:ODE3sqrt} on $(r_{H},r_{\lambda})$. Moreover, we know that $r^{+}_{r_{\lambda}}(t)=r_{\lambda}$ for $t\geq0$ and $r_{H}<r^{+}_{r_{\lambda}}(t)<r_{\lambda}$ for $t<0$ from \Cref{prop: appendix} and \Cref{prop:Psi-Fct.}, respectively. Again, $r^{+}_{r_{\lambda}}$ is strictly increasing on $(-\infty,0)$ by \eqref{eq:IVP Psi case1, 2}. Finally, $r^{+}_{r_{\lambda}}$ tends to $r_{H}$ as $t\to-\infty$: Suppose towards a contradiction that there is $C>r_{H}$ such that $r^{+}_{r_{\lambda}}(t)\geq C$ for all $t\in\R$. Then as $t\to-\infty$, there is $r^{+}>r_{H}$ such that $r^{+}_{r_{\lambda}}(t)\to r^{+}$ as $t\to-\infty$ and
\begin{align*}
\dot{r}^{+}_{r_{\lambda}}(t)\to f(r^{+})\sqrt{1-\frac{v^f_{\text{eff}}(r^{+})}{\lambda^2}}>0
\quad \text{ for }\quad t\to-\infty,
\end{align*}
a contradiction to $r^{+}_{r_{\lambda}}(t)\to r^{+}$ as $t\to-\infty$. Hence $r^{+}_{r_{\lambda}}$ tends to $r=r_{H}$ as $t\to-\infty$.

In the '$-$' case of~\eqref{eq:ODE3sqrt}, the exact same arguments provide unique global non-constant $C^{1}$-solutions  $r^{-}_{R_{\lambda}}\colon\R\to[R_{\lambda},\infty)$ and $r^{-}_{r_{\lambda}}\colon\R\to(r_{H},r_{\lambda})$ of
\begin{align*}
\dot{r}(t)&=\left\{
\begin{array}{ll}
-f(r(t))\sqrt{1-\frac{v^f_{\text{eff}}(r(t))}{\lambda^{2}}} & \text{for }r(t) \in (R_{\lambda},\infty), \\
0 & \, \textrm{otherwise,} \\
\end{array}
\right.\\
r(0)&=R_{\lambda}
\end{align*}
and
\begin{align*}
\dot{r}(t)&=\left\{
\begin{array}{ll}
- f(r(t))\sqrt{1-\frac{v^f_{\text{eff}}(r(t))}{\lambda^{2}}} & \text{for }r(t) \in (r_{H},r_{\lambda}), \\
0 & \, \textrm{otherwise,} \\
\end{array}
\right.\\
r(0)&=r_{\lambda},
\end{align*}
respectively, using \Cref{cor: appendix} instead of \Cref{prop: appendix}. The solutions $r^{-}_{R_{\lambda}}$ and $r^{-}_{r_{\lambda}}$ are naturally related to the previously constructed solutions by time-reflection,
\begin{align*}
r^{-}_{R_{\lambda}}(t)&=r^{+}_{R_{\lambda}}(-t),\\
r^{-}_{r_{\lambda}}(t)&=r^{+}_{r_{\lambda}}(-t)
\end{align*}
for all $t\in\R$, as can be seen from the proof of \Cref{cor: appendix}. We will now argue that $r^{+}_{R_{\lambda}}\vert_{[0,\infty)}$ and $r^{-}_{R_{\lambda}}\vert_{(-\infty,0]}$ as well as $r^{+}_{r_{\lambda}}\vert_{(-\infty,0]}$ and $r^{-}_{r_{\lambda}}\vert_{[0,\infty)}$ can be glued together, respectively, to smooth, global, time-symmetric solutions $r_{R_{\lambda}}$, $r_{r_{\lambda}}$ of the photon surface ODE~\eqref{eq:ODE3}. First of all, by smoothness of $f$ and hence $v_{\text{eff}}^{f}$ on $(r_{H},\infty)$, we know that the right hand side of \eqref{eq:ODE3sqrt} is a smooth function of $r$ on $(r_{H},r_{\lambda})\cup(R_{\lambda},\infty)$. Hence the glued solutions $r_{R_{\lambda}}\vert_{\R\setminus\{0\}}$ and $r_{r_{\lambda}}\vert_{\R\setminus\{0\}}$ are smooth. It remains to consider what happens at $t=0$. Clearly we have that $r_{R_{\lambda}}$, $r_{r_{\lambda}}$ are $C^{1}$ across $t=0$. Formally computing the time derivative of \eqref{eq:ODE3sqrt}, suppressing the dependence on $t$, we get
\begin{align*}
\ddot{r}&=f'(r)f(r)\left(1- \frac{v^f_{\text{eff}}(r)}{\lambda^2}\right)-\frac{\left(f(r)\right)^{2}{v^{f}_{\text{eff}}}'(r)}{2\lambda^2},
\end{align*}
in both the '$+$' and the '$-$' cases, where $'$ denotes the derivative with respect to $r$. Hence $\lim_{t\searrow0}\ddot{r}_{R_{\lambda}}(t)=\lim_{t\nearrow0}\ddot{r}_{R_{\lambda}}(t)$ and $r_{R_{\lambda}}$ is indeed $C^{2}$ at $t=0$; the same argument applies to $r_{r_{\lambda}}$. Continuing this argument by induction and recalling the asserted parity properties, we see that $r_{R_{\lambda}}$ and $r_{r_{\lambda}}$ are indeed the unique smooth solutions of the initial value problems for the photon surface ODE~\eqref{eq:ODE3} with the initial conditions $r(0)=R_{\lambda}$ and $r(0)=r_{\lambda}$, respectively. This asserts the claim of \Cref{prop:case1} for $r_{0}=R_{\lambda}$ and $r_{0}=r_{\lambda}$.

Let's now turn to $r_{0}>R_{\lambda}$. By the monotonicity and asymptotic properties of $r^{+}_{R_{\lambda}}$ derived above, we know that there exists a unique time $t_{0}\in\R$ such that $r^{+}_{R_{\lambda}}(t_{0})=r_{0}$. Now set 
\begin{align*}
r^{\pm}_{r_{0}}(t)&\definedas r_{R_{\lambda}}(\pm\left(t+t_{0}\right))
\end{align*}
for $t\in\R$. By time-translation and time-reflection invariance of the photon surface ODE~\eqref{eq:ODE3} and time-reflection symmetry of the solution $r_{R_{\lambda}}$, this gives the desired smooth, global solutions $r^{\pm}_{r_{0}}$ of \eqref{eq:ODE3} with $r^{\pm}_{r_{0}}(0)=r_{0}$ and positive respectively negative slopes $\dot{r}^{+}_{r_{0}}(0)>0$, $\dot{r}^{-}_{r_{0}}(0)<0$. Now let $r$ be another global solution of \eqref{eq:ODE3} with $r(0)=r_{0}$. Then by \eqref{eq:ODE3}, either $\dot{r}(0)>0$ or $\dot{r}(0)<0$, more specifically either $\dot{r}(0)=\dot{r}^{+}_{r_{0}}(0)=\dot{r}_{R_{\lambda}}(t_{0})>0$ or $\dot{r}(0)=\dot{r}^{-}_{r_{0}}(0)=-\dot{r}_{R_{\lambda}}(-t_{0})<0$. Consequently, (a suitable restriction of) $r$ solves either the '$+$' or the '$-$' case of \eqref{eq:ODE3sqrt} with $r(0)=r_{0}$. Hence by local uniqueness in the (local) Picard--Lindel\"of theorem, we know that $r$ coincides with either $r^{+}_{r_{0}}$ or $r^{-}_{r_{0}}$ near $t=0$, in fact as long as either $\dot{r}>0$ or $\dot{r}<0$, respectively. Applying \Cref{prop: appendix} and \Cref{cor: appendix} as above and noting that $r^{+}_{R_{\lambda}}$ is not $C^{2}$ at $t=0$ by the above, we conclude that in fact the smooth, global solution $r$ must coincide with either $r^{+}_{r_{0}}$ or $r^{-}_{r_{0}}$ on all of $\R$ as claimed.

The same philosophy allows to conclude the claims for initial values $r_{H}<r_{0}<r_{\lambda}$.
\end{proof}

\begin{Rem}[Precise meaning of uniqueness]\label{rem:uniqueness}
As can be seen from the proof of \Cref{prop:case1}, there are other global $C^{1}$-solutions of the photon surface ODE  \eqref{eq:ODE3}, namely the global $C^{1}$-functions $r^{\pm}_{r_{\lambda}}$ and $r^{\pm}_{R_{\lambda}}$ constructed in the proof which coincide with $r=r_{\lambda}$ and $r=R_{\lambda}$ on real half-lines and are not $C^{2}$ at isolated points. In fact, one can produce infinitely many global $C^{1}$-solutions of \eqref{eq:ODE3} by gluing two such solutions (one '$+$' and one '$-$', same index $r_{\lambda}$ or $R_{\lambda}$) together after a finite time in which they are constantly equal to $r_{\lambda}$ or $R_{\lambda}$, respectively. In particular, the uniqueness assertion in \Cref{prop:case1} only applies in comparison with $C^{2}$-solutions. 

On the other hand, the uniqueness analysis in the proof of \Cref{prop:case1} is completely local, hence we effectively obtain local uniqueness of local $C^{2}$-solutions. Our analysis indeed even gives local uniqueness of local $C^{1}$-solutions for initial values $r_{0}\neq r_{\lambda}, R_{\lambda}$, with the uniqueness claim only extending to time intervals preventing the solutions from reaching $r=r_{\lambda}$ or $r=R_{\lambda}$.
\end{Rem}
\vspace{1ex}
\noindent\textcolor{gruen}{\underline{\textbf{Case 2:} $\lambda^2=\lambda_\ast^2$.}}

\begin{thm}[Existence and uniqueness in Case 2]\label{prop:case2}
Let $f$ be the metric coefficient of a spacetime in class $\mathcal{S}_{\text{ext}}$ and let $\lambda^2=\lambda_\ast^2\definedas v_{\text{eff}}^{f}(r_{*})$. The photon sphere $r=r_{*}$ is the unique smooth solution of the photon surface ODE \eqref{eq:ODE3} with $\lambda^{2}=\lambda_{*}^{2}$ and $r(0)=r_0=r_{*}$. Moreover, there exist two smooth, strictly increasing, global solutions $r^{+}_{*}$ and $R^{+}_{*}$ of the photon surface ODE \eqref{eq:ODE3} with $\lambda^{2}=\lambda_{*}^{2}$ with respective ranges $r^{+}_{*}(\R)=(r_{H},r_{*})$ and $R^{+}_{*}(\R)=(r_{*},\infty)$ such that the following holds: For any initial value $r_H<r_0<r_{*}$, there  exist precisely two smooth, global solutions $r^{\pm}_{r_{0}}$ of the photon surface ODE \eqref{eq:ODE3} with $\lambda^{2}=\lambda_{*}^{2}$ and initial value $r(0)=r_0$; these are time-reflections of each other and $r^+_{r_{0}}$ is a time-translation of $r^{+}_{*}$. For any initial value $r_{0}>r_{*}$, there exist precisely two smooth, global solutions $R^{\pm}_{r_{0}}$ of the photon surface ODE \eqref{eq:ODE3} with $\lambda^{2}=\lambda_{*}^{2}$ and initial value $r(0)=r_0$; these are time-reflections of each other and $R^+_{r_{0}}$ is a time-translation of $R^{+}_{*}$. Solving the initial value problem for \eqref{eq:ODE3} at generic $t_{0}$ rather than at $t_{0}=0$ results only in further time-translations.

Here, uniqueness is meant locally in comparison with other local $C^{1}$-solutions. The solutions $r^{\pm}_{r_{0}}$, $R^{\pm}_{r_{0}}$ behave as depicted in \Cref{fig:type2}, including the photon sphere for $r_0=r_\ast$.
\end{thm}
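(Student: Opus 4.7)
The plan is to adapt the methodology of \Cref{prop:case1} to the critical value $\lambda^{2}=\lambda_{\ast}^{2}$, exploiting two features specific to this case. First, because $v_{\text{eff}}^{f}$ attains its unique global maximum $\lambda_{\ast}^{2}$ at $r_{\ast}$, the radicand $1-v_{\text{eff}}^{f}(r)/\lambda_{\ast}^{2}$ in \eqref{eq:ODE3sqrt} is strictly positive on $(r_{H},r_{\ast})\cup(r_{\ast},\infty)$ and vanishes only at $r=r_{\ast}$; there are no internal ``turning radii'' of the type $r_{\lambda}, R_{\lambda}$ that drove Case~1. Second, by the photon sphere condition \eqref{eq:photonspherev} we have $(v_{\text{eff}}^{f})'(r_{\ast})=0$, so the radicand vanishes at $r_{\ast}$ at least to second order. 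Consequently $\sqrt{1-v_{\text{eff}}^{f}(r)/\lambda_{\ast}^{2}}$ is Lipschitz in $r$ on a neighborhood of $r_{\ast}$ (locally it is $|r-r_{\ast}|$ times a continuous, positive factor), and hence so is the right-hand side of \eqref{eq:ODE3sqrt} on all of $(r_{H},\infty)$.

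\textbf{The constant solution.} Since the right-hand side of \eqref{eq:ODE3sqrt} vanishes at $r_{\ast}$, the constant $r\equiv r_{\ast}$ is an immediate smooth global solution of \eqref{eq:ODE3}. The Lipschitz property from the previous paragraph then allows me to invoke standard Picard--Lindel\"of locally at $(0,r_{\ast})$, so any local $C^{1}$-solution with $r(0)=r_{\ast}$ must coincide with $r\equiv r_{\ast}$. This is the essential contrast with Case~1, where the right-hand side had simple zeros at $r_{\lambda}$ and $R_{\lambda}$, permitting multiple local $C^{1}$-solutions (cf.\ \Cref{rem:uniqueness}).

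\textbf{Non-constant solutions.} On each open component $(r_{H},r_{\ast})$ and $(r_{\ast},\infty)$ the right-hand side of \eqref{eq:ODE3sqrt} (with fixed sign $\pm$) is smooth and nowhere vanishing, so smooth local solutions exist and are locally unique. To promote these to global solutions I would copy the truncation-and-barrier-principle device used in the proof of \Cref{prop:case1}: extend the right-hand side by zero outside the relevant component to obtain a globally Lipschitz function on $\R$, and invoke \Cref{prop:Psi-Fct.} to obtain a unique global $C^{1}$-solution through any interior initial value, which by the same proposition cannot exit the component in finite time. The resulting master solution in the ``$+$'' case on $(r_{H},r_{\ast})$, call it $r_{\ast}^{+}$, is strictly increasing, and the usual asymptotic contradiction (a finite limit $r^{+}\in(r_{H},r_{\ast})$ would force $\dot r$ to tend to $f(r^{+})\sqrt{1-v_{\text{eff}}^{f}(r^{+})/\lambda_{\ast}^{2}}>0$, contradicting convergence) shows $r_{\ast}^{+}(t)\to r_{H}$ as $t\to-\infty$ and $r_{\ast}^{+}(t)\to r_{\ast}$ as $t\to+\infty$, with neither value attained. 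An analogous construction on $(r_{\ast},\infty)$ produces $R_{\ast}^{+}$, and time-translation and time-reflection invariance of \eqref{eq:ODE3} then deliver the full families $r_{r_{0}}^{\pm}$ and $R_{r_{0}}^{\pm}$.

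\textbf{Main obstacle.} The subtlest step is the exhaustiveness claim: that every smooth global solution with $r(0)=r_{0}\neq r_{\ast}$ is one of $r_{r_{0}}^{\pm}$ or $R_{r_{0}}^{\pm}$. The sign of $\dot r(0)$ picks the ``$+$'' or ``$-$'' branch of \eqref{eq:ODE3sqrt}, so local Picard--Lindel\"of on the open component forces the solution to agree with the corresponding master translate for as long as it avoids $r_{\ast}$. The only way this could fail globally would be for the solution to touch $r_{\ast}$ at some finite time $t_{1}$; but then the uniqueness result for the constant solution established in the second paragraph would force it to equal $r\equiv r_{\ast}$ in a neighborhood of $t_{1}$, and hence by continuation everywhere, contradicting $r(0)\neq r_{\ast}$. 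This exhausts the list of smooth global solutions, yielding the claimed classification and the asymptotic behavior depicted in \Cref{fig:type2}.
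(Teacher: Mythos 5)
Your proposal is correct and follows essentially the same route as the paper: the key point in both is that the degenerate zero of the radicand at $r_{\ast}$ (forced by $(v_{\text{eff}}^{f})'(r_{\ast})=0$) makes the right-hand side of \eqref{eq:ODE3sqrt} Lipschitz across $r_{\ast}$, so that the constant solution is locally unique among $C^{1}$-solutions and no non-constant solution can reach $r_{\ast}$ in finite time, which yields the classification exactly as in your final paragraph. The only cosmetic differences are that the paper verifies the Lipschitz bound by an explicit l'H\^{o}pital computation of $\lim_{r\to r_{\ast}}\bigl((v_{\text{eff}}^{f})'\bigr)^{2}/\bigl(1-v_{\text{eff}}^{f}/\lambda_{\ast}^{2}\bigr)=-2\lambda_{\ast}^{2}(v_{\text{eff}}^{f})''(r_{\ast})$, which also covers a possibly degenerate maximum where your ``$|r-r_{\ast}|$ times a continuous, positive factor'' expansion would not literally hold (the Lipschitz conclusion survives regardless), and that the paper does not truncate the right-hand side at $r_{\ast}$ --- the extension by zero is only needed below $r_{H}$, with the upper ``barrier'' at $r_{\ast}$ following from uniqueness against the constant solution rather than from \Cref{prop:Psi-Fct.} itself.
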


\Cref{rem:uniqueness} applies correspondingly to the uniqueness claims made in \Cref{prop:case2}.

\begin{proof}
In this case, by our assumptions on $v_{\text{eff}}^{f}$, \eqref{eq:ODE3sqrt} specializes to 
\begin{align}\label{eq:ODE3-case2}
\dot{r}(t)&= \pm f(r(t))\sqrt{1-\frac{v^f_{\text{eff}}(r(t))}{\lambda_\ast^{2}}}=\pm f(r(t))\sqrt{1-\frac{v^f_{\text{eff}}(r(t))}{v^f_{\text{eff}}(r_\ast)}}
\end{align}
with positive radicand for $r(t)\in(r_{H},r_{*})\cup(r_{*},\infty)$ and vanishing radicand where $r(t)=r_{*}$. 

It turns out that the right hand side of \eqref{eq:ODE3-case2}, suitably extended by $0$ through $r=r_{H}$,
\begin{align}\label{eq:Psi case2}
\Psi_{*}^{\pm}\colon \R\to \R\colon r \mapsto \left\{
\begin{array}{ll}
\pm f(r)\sqrt{1-\frac{v^f_{\text{eff}}(r)}{\lambda_\ast^2}} & \text{for }r \in (r_H,\infty), \\
0 & \, \text{otherwise}, \\
\end{array}
\right. 
\end{align}
is in fact globally Lipschitz continuous: Clearly, $\Psi^{\pm}_{*}$ is $C^{1}$ away from $r=r_{H}$, $r=r_{*}$ with vanishing derivative as $r\to\infty$ and on $(-\infty,r_{H})$ and we only need to look at the behaviour of the derivative $(\Psi^{\pm}_{*})'$ as $r\to r_{*}$ and as $r\to r_{H}$. We find
\begin{align}
(\Psi^{\pm}_{*})'(r)&=\pm f'(r)\sqrt{1-\frac{v^f_{\text{eff}}(r)}{\lambda_\ast^2}}\mp \frac{f(r){v^{f}_{\text{eff}}}'(r)}{2\sqrt{1-\frac{v^f_{\text{eff}}(r)}{\lambda_{*}^2}}}
\end{align}
for $r\in(r_{H},r_{*})\cup(r_{*},\infty)$. The first term vanishes as $r\to r_{*}$, and we can neglect the factor $f(r)/2$ in the second term in view of an application of l'H\^{o}pital's rule as it converges to the definite value $f(r_{*})/2>0$. Since the limits as $r\searrow r_{*}$ and $r\nearrow r_{*}$ of the second term are finite if and only if the limits of its square are, we consider the latter. For ease of notation, we set $w\definedas r-r_{*}$ and $\widehat{v}(w)\definedas v_{\text{eff}}^f(w+r_{*})$. By l'H\^{o}pital's rule, we find
\begin{align*}
\lim_{r \to r_\ast}\frac{\left({v^{f}_{\text{eff}}}'(r)\right)^{2}}{1-\frac{v_{\text{eff}}^f(r)}{\lambda_\ast^2}}&=\lambda_{*}^{2}\,\lim_{w\to0}\frac{\left(\widehat{v}\,'(w)\right)^{2}}{\lambda_{*}^{2}-\widehat{v}(w)}=-\lambda_*^{2}\,\lim_{w\to 0}\frac{2\widehat{v}\,'(w)\widehat{v}\,''(w)}{\widehat{v}\,'(w)}=-2\lambda_*^{2}\, \widehat{v}\,''(0)=-2\lambda_{*}^{2}\,{v_{\text{eff}}^{f}}''(r_{*}).
\end{align*}
On the other hand, $\lim_{r\searrow r_{H}}(\Psi^{\pm}_{*})'(r)=\pm f'(r_{H})$ so that $\Psi^{\pm}_{*}$ is piecewise $C^{1}$ with globally bounded derivative and hence globally Lipschitz continuous. By  \Cref{prop:Psi-Fct.}, we hence obtain locally unique global $C^{1}$-solutions $r_{r_{0}}^{\pm}$ of the initial value problems \eqref{eq:Psi case2}, $r(0)=r_{0}$ for any $r_{0}>r_{H}$, which in particular satisfy $r^{\pm}_{r_{0}}(t)>r_{H}$ for all $t\in\R$. Both of these solutions $r^{\pm}_{r_{0}}$ clearly satisfy the photon surface ODE \eqref{eq:ODE3} and have $\dot{r}^{\pm}_{r_{0}}(t_{*})=0$ for some $t_{*}\in\R$ if and only if $r^{\pm}_{r_{0}}(t_{*})=r_{*}$.

Let us first consider the initial value $r_{0}=r_{*}$. As $r(t)=r_{*}$ is a global solution of both the '$+$' and the '$-$' cases of \eqref{eq:Psi case2}, we conclude from the above uniqueness assertion that $r^{\pm}_{r_{*}}(t)=r_{*}$ for all $t\in\R$, as claimed.

Now consider an initial value $r_H<r_0<r_\ast$ and suppose towards a contradiction that there exists $t_{*}^{\pm}\in\R$ such that $r^{\pm}_{r_{0}}(t_{*}^{\pm})=r_{*}$. Set $r(t)\definedas r^{\pm}_{r_{0}}(t+t_{*}^{\pm})$ for $t\in\R$ and observe that $r\colon\R\to(r_{H},\infty)$ is a global $C^{1}$-solution of \eqref{eq:Psi case2} by time-translation invariance, satisfying $r(0)=r_{*}$. Hence by the above uniqueness assertion, we know that $r(t)=r^{\pm}_{r_{0}}(t+t_{*}^{\pm})=r_{*}$ or in other words $r^{\pm}_{r_{0}}(t)=r_{*}$ for all $t\in\R$, a contradiction to $r^{\pm}_{r_{0}}(0)=r_{0}<r_{*}$. Hence $r^{+}_{r_{0}}$ is strictly increasing and $r^{-}_{r_{0}}$ is strictly decreasing on $\R$, they satisfy $r^{\pm}_{r_{0}}(\R)\subseteq(r_{H},r_{*})$, and are hence in fact smooth as $\Psi^{\pm}_{*}$ is smooth on $(r_{H},r_{*})$. On the other hand, $r^{+}_{r_{0}}$ and $r^{-}_{r_{0}}$ asymptote to $r=r_{*}$ as $t\to\pm\infty$, respectively, and to $r=r_{H}$ as $t\to\mp\infty$, respectively, which can be seen by arguments similar to those in Case 1. This proves that $r^{\pm}_{r_{0}}(\R)=(r_{H},r_{*})$. Moreover, both $r^{+}_{r_{0}}$ and $r^{-}_{r_{0}}$ solve the photon surface ODE \eqref{eq:ODE3} with initial value $r^{\pm}_{r_{0}}(0)=r_{0}$.

Now let $r$ be another global $C^{1}$-solution of \eqref{eq:ODE3} with $r(0)=r_{0}$. Then by \eqref{eq:ODE3}, either $\dot{r}(0)>0$ or $\dot{r}(0)<0$, more specifically either $\dot{r}(0)=\dot{r}^{+}_{r_{0}}(0)>0$ or $\dot{r}(0)=\dot{r}^{-}_{r_{0}}(0)<0$. Consequently, (a suitable restriction of) $r$ solves either the '$+$' or the '$-$' case of \eqref{eq:Psi case2} with $r(0)=r_{0}$. Hence by the above considerations, we know that $r$ globally coincides with either $r^{+}_{r_{0}}$ or $r^{-}_{r_{0}}$. 

Finally, arguing by time-translation as above and in the '$-$' case also by time-reflection, we find that in fact the functions $r^{\pm}_{r_{0}}$ for different values of $r_{0}\in(r_{H},r_{*})$ are nothing but time-translations and, in the '$-$' case, time-reflections of one and the same smooth, strictly increasing, global solution $r^{+}_{*}$ with range $(r_{H},r_{*})$ as claimed. 

The same philosophy allows to conclude the claims for initial values $r_{0}>r_*$; arguing as in Case~1 shows that the corresponding solution $R^{+}_{*}$ asymptotes to the unit radius hyperbola as $t\to\infty$ and to the photon sphere $r=r_{*}$ as $t\to-\infty$.
\end{proof}

\begin{figure}[h]
	\centering
		\includegraphics[scale=1.2]{case-2-intro.pdf}
	\caption{\text{Photon surfaces of \textcolor{gruen}{Case 2}}, $\lambda^2=\lambda_\ast^2$.}
	\label{fig:type2}
\end{figure}

\noindent \textcolor{orange}{\underline{\textbf{Case 3:} $\lambda^2>\lambda_\ast^2$.}}

\begin{thm}[Existence and uniqueness in Case 3]\label{prop:case3}
Let $f$ be the metric coefficient of a spacetime in class $\mathcal{S}_{\text{ext}}$ and let $\lambda^2>\lambda_\ast^2\definedas v_{\text{eff}}^{f}(r_{*})$. There exists a smooth, strictly increasing, global solution $r_{\lambda}$ of the photon surface ODE \eqref{eq:ODE3} with this $\lambda^{2}$ and range $r_{\lambda}(\R)=(r_{H},\infty)$ such that the following holds: For any initial value $r_H<r_0<\infty$, there exist precisely two smooth, global solutions $r^{\pm}_{r_{0}}$ of the photon surface ODE \eqref{eq:ODE3} with this $\lambda^{2}$ and initial value $r(0)=r_0$; these are time-reflections of each other and $r^+_{r_{0}}$ is a time-translation of $r_{\lambda}$. Solving the initial value problem for \eqref{eq:ODE3} at generic $t_{0}$ rather than at $t_{0}=0$ results only in further time-translations.

Here, uniqueness is meant locally in comparison with other local $C^{1}$-solutions. The solutions $r^{\pm}_{r_{0}}$ behave as depicted in \Cref{fig:type3}.
\end{thm}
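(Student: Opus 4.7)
The plan is to follow the blueprint of \Cref{prop:case2}, but with the decisive simplification that in Case 3 there is no radius at which $\dot{r}$ vanishes on $(r_H,\infty)$. Indeed, since $r_\ast$ is the unique global maximum of $v^f_{\text{eff}}$ by Condition 3 in \Cref{def:Sext}, the hypothesis $\lambda^2>\lambda_\ast^2$ yields the uniform bound
\[
1-\frac{v^f_{\text{eff}}(r)}{\lambda^2}\;\geq\;1-\frac{\lambda_\ast^2}{\lambda^2}\;>\;0 \qquad\text{for all } r\in(r_H,\infty),
\]
so the radicand on the right hand side of \eqref{eq:ODE3sqrt} is bounded away from zero.

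First, I would define $\Psi^\pm_\lambda\colon\R\to\R$ as the right hand side of \eqref{eq:ODE3sqrt} on $(r_H,\infty)$ and extend by $0$ on $(-\infty,r_H]$, in analogy with \eqref{eq:Psi case2}. The above uniform lower bound keeps the square root smooth and bounded below on $[r_H,\infty)$, so $\Psi^\pm_\lambda$ is $C^1$ away from $r=r_H$. Near $r_H$, the $C^3$-extension of $f$ with $f(r_H)=0$ implies $f(r)\sim f'(r_H)(r-r_H)$, and since $v^f_{\text{eff}}(r_H)=0$ the square root converges to $1$, so $\Psi^\pm_\lambda$ vanishes linearly at $r_H$ with $\lim_{r\searrow r_H}(\Psi^\pm_\lambda)'(r)=\pm f'(r_H)$; near infinity, $f'\to0$ and $(v^f_{\text{eff}})'\to0$ by Condition 2, so $(\Psi^\pm_\lambda)'\to 0$. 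Hence $(\Psi^\pm_\lambda)'$ is bounded on $(r_H,\infty)$ and $\Psi^\pm_\lambda$ is globally Lipschitz.

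Next I apply the barrier principle \Cref{prop:Psi-Fct.} to obtain, for each $r_0>r_H$, a unique global $C^1$-solution $r^\pm_{r_0}$ of the IVP $\dot r=\Psi^\pm_\lambda(r)$, $r(0)=r_0$, with $r^\pm_{r_0}(t)>r_H$ for all $t\in\R$. Since $\Psi^+_\lambda>0$ (resp.\ $\Psi^-_\lambda<0$) on $(r_H,\infty)$, $r^+_{r_0}$ is strictly increasing and $r^-_{r_0}$ is strictly decreasing; both are smooth because $\Psi^\pm_\lambda$ is smooth on $(r_H,\infty)$ and $r^\pm_{r_0}(\R)\subset(r_H,\infty)$, and both satisfy \eqref{eq:ODE3}. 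To see that $r^+_{r_0}(\R)=(r_H,\infty)$, I repeat the contradiction argument from Cases 1 and 2: a bounded-above solution would have $\dot{r}^+_{r_0}(t)\to f(r^+)\sqrt{1-v^f_{\text{eff}}(r^+)/\lambda^2}>0$, incompatible with $r^+_{r_0}(t)\to r^+<\infty$; symmetrically $r^+_{r_0}(t)\to r_H$ as $t\to-\infty$.

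Uniqueness and the time-translation structure then follow as in the previous cases. Any global $C^1$-solution of \eqref{eq:ODE3} with $r(0)=r_0$ satisfies $\dot{r}(0)\neq 0$ because the radicand is strictly positive, so after restriction it solves either the $'+'$ or $'-'$ branch of $\dot r=\Psi^\pm_\lambda(r)$ near $t=0$, and the Lipschitz uniqueness from \Cref{prop:Psi-Fct.} forces it to coincide globally with $r^+_{r_0}$ or $r^-_{r_0}$. Setting $r_\lambda\definedas r^+_{r_0}$ for one fixed choice of $r_0$, strict monotonicity and the onto property yield, for any other initial value $r_0'\in(r_H,\infty)$, a unique $t_0\in\R$ with $r_\lambda(t_0)=r_0'$; then $r^+_{r_0'}(t)=r_\lambda(t+t_0)$ by time-translation invariance, and $r^-_{r_0'}(t)=r^+_{r_0'}(-t)$ by time-reflection invariance of \eqref{eq:ODE3}. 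The main obstacle is the verification of the Lipschitz bound at the horizon $r=r_H$, but this is already handled by the $C^3$-extendability of $f$ together with $f'(r_H)>0$; everything else reduces to the bootstrap asymptotic arguments already performed in Cases 1 and 2.
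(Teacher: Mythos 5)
Your proposal is correct and follows essentially the same route as the paper, whose own proof of Case 3 simply invokes the strict positivity of the radicand and argues ``analogously to the previous cases'' via \Cref{prop:Psi-Fct.}; you have filled in exactly those details (global Lipschitz continuity of the extended right-hand side using Conditions 1 and 2 of \Cref{def:Sext}, monotonicity, the range argument by contradiction, and the time-translation/reflection structure). No gaps.
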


\Cref{rem:uniqueness} applies correspondingly to the uniqueness claims made in \Cref{prop:case2}.

\begin{proof}
In this case, the radicand of \eqref{eq:ODE3sqrt} is always positive and we can argue analogously to the previous cases with \Cref{prop:Psi-Fct.} in order to obtain unique smooth, global, strictly increasing (in the '$+$' case of \eqref{eq:ODE3sqrt}) or strictly decreasing (in the '$-$' case of \eqref{eq:ODE3sqrt}) solutions $r^{+}_{r_{0}}$ and $r^{-}_{r_{0}}$ of the '$+$' and '$-$' cases of \eqref{eq:ODE3sqrt} for all initial values $r_0 \in (r_H, \infty)$, respectively, this time obeying the barrier principle only for $r_H$ and asymptoting to the unit radius hyperbola as $t\to\pm\infty$, respectively, and to $r=r_{H}$ as $t\to\mp\infty$, respectively. Both $r^{+}_{r_{0}}$ and $r^{-}_{r_{0}}$ are hence smooth, global solutions of the photon surface ODE \eqref{eq:ODE3} and it is straightforward to see that they are time-reflections of each other. By the same argument as before, the solutions $r^{+}_{r_{0}}$ for different values of $r_{0}\in(r_{H},\infty)$ are time-translates of each other and hence of a smooth, global, strictly increasing solution $r_{\lambda}$ of the photon surface ODE \eqref{eq:ODE3} as claimed.
\end{proof}

\begin{Rem}[More detailed asymptotic considerations]\label{rem:nearhorizon}
 Of course, the asymptotic considerations performed in the above proofs are not particularly precise. To get a more precise idea of what happens near the horizon $r=r_{H}$, the asymptotic behavior needs to be studied in horizon penetrating coordinates; to see more precisely what happens near the asymptotic zone $r\to\infty$ of the spacetime, it would be preferable to study the situation in double null coordinates. Both of these questions are addressed by the first author and Wolff~\cite{CW} in the framework of generalized Kruskal--Szekeres extensions. There, it is found that the photon surfaces with $r(t)\searrow r_H$ do actually cross the black hole horizon, while those with $r(t)\nearrow \infty$ indeed asymptote to the light cone. See also \Cref{app:B}.
\end{Rem}

\begin{figure}[h]
	\centering
		\includegraphics[scale=1.2]{case-3-intro.pdf}
	\caption{\text{Photon surfaces of \textcolor{orange}{Case 3}}, $\lambda^2>\lambda_\ast^2$.}
	\label{fig:type3}
\end{figure}

\begin{Rem}[Lower regularity spacetimes]
In case one wants to consider non-smooth metric coefficients $f$ as well, the solutions obtained in \Cref{prop:case1}, \Cref{prop:case2}, and \Cref{prop:case3} are of course once more continuously differentiable in all arguments than the right hand side of the photon surface ODE \eqref{eq:ODE3} which is once more continuously differentiable than $f$ is assumed to~be. 
\end{Rem}

\begin{Rem}[Alternative proofs]
Whenever the right hand side of the square root of the photon surface ODE~\eqref{eq:ODE3sqrt} does not vanish, one can likewise argue by separation of variables, namely inverting the ODE, $\frac{dt}{dr}=\pm\frac{1}{f(r)\sqrt{1-\frac{f(r)}{\lambda^2r^2}}}\asdefined \Phi(r)$, and integrating, obtaining the locally unique solutions
\begin{align}\label{eq: sol of DGL w.r.t. t(r)}
t_{r_0,t_0}(r)=\int_{r_0}^r \Phi(\tau)\,d\tau + r_0
\end{align}
with $t_{t_{0},r_{0}}(r_{0})=t_{0}$ that are continuously differentiable in $r,$ $t_0$, and $r_0$ (as often as $\Phi$ and therefore $f$ allows). Since this method is only available where the derivative $\dot{r}_{t_{0},r_{0}}$ does not vanish, the obtained map is bijective and therefore invertible, making it possible to obtain the corresponding initial value for $t_{r_0, t_0}$ from the one given for $r_{t_0, r_0}$. Moreover, since we obtain a locally unique solution on an open interval around every $r$ for which the right hand side of~\eqref{eq:ODE3sqrt} does not vanish, we can extend the unique solution to all of $I\definedas\lbrace r>r_H\,\vert\, \frac{1}{\Phi(r)}\neq 0\rbrace$. 

This perspective makes it particularly apparent that the same initial value for different times yields exactly the respective time-translated solution by an offset which can be read off of the integrability constant in~\eqref{eq: sol of DGL w.r.t. t(r)}.
\end{Rem}

Let us close this analysis by characterizing when a spherically symmetric photon surface $\photo$ in a spacetime of class $\mathcal{S}_{\text{ext}}$ is outward directed.
\begin{Prop}[Characterizing outward directedness]\label{prop:outwardspherical}
Let $(\mathfrak{L}^{n+1},\mathfrak{g})\in\mathcal{S}_{\text{ext}}$ with metric coefficient $f$, $\photo\hookrightarrow\mathfrak{L}^{n+1}$ a spherically symmetric photon surface. Then $\photo$ is non-degenerate if and only if $f'\neq0$ along $\photo$ and
outward (inward) directed if and only if $f'>(<)\,0$ along $\photo$.
\end{Prop}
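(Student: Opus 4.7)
The plan is to translate both properties into statements about the metric coefficient $f$ using the fact that in class $\mathcal{S}$ the static lapse is $N=\sqrt{f(r)}$ (with $f>0$ on $\mathcal{I}$), so that
\[
dN = \frac{f'(r)}{2\sqrt{f(r)}}\,dr
\]
on the spatial slice. The non-degeneracy claim then falls out immediately: since $f>0$ along $P^n$, we have $dN(x)\neq 0$ at some $x\in P^n$ if and only if $f'(r(x))\neq 0$. So the substance of the proposition is the outward/inward directedness statement.

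For this, I will compute the outward spacetime unit normal $\eta$ to $P^n$ explicitly, distinguishing (cosmetically) the photon sphere case $r\equiv r_\ast$ from the non-constant radial profile case $r=r(t)$ treated by \Cref{thm:sphsymm}. In the latter, the tangent space of $P^n$ is spanned by $\partial_t+\dot r(t)\partial_r$ and the rotation fields on $\mathbb{S}^{n-1}$, so the ansatz $\eta=\alpha\,\partial_r+\beta\,\partial_t$ is automatically orthogonal to the angular directions, and imposing $\mathfrak{g}(\eta,\partial_t+\dot r\,\partial_r)=0$ together with $\mathfrak{g}(\eta,\eta)=1$ gives two algebraic equations for $\alpha,\beta$. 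The key simplification is to insert the photon surface ODE~\eqref{eq:ODE3}, which yields the identity $f^2-\dot r^2=f^3/(\lambda^2 r^2)$; after this, solving reduces to $\alpha=\pm\lambda r$ and $\beta=\alpha\dot r/f^2$. The photon sphere case ($\dot r\equiv 0$, $f(r_\ast)=\lambda^2 r_\ast^2$) is recovered as the degenerate special case with $\beta=0$ and $\alpha=\pm\sqrt{f(r_\ast)}$.

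Having the normal in hand, the final step is a one-line computation: since $N$ is time-independent,
\[
\eta(N)=\alpha\,\partial_r\!\sqrt{f(r)}=\alpha\cdot\frac{f'(r)}{2\sqrt{f(r)}},
\]
so the sign of $\eta(N)$ coincides with the sign of $\alpha\, f'(r)$. With $\lambda,r>0$ and $f>0$, this proves the claim once we fix the outward sign $\alpha=+\lambda r$.

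The main obstacle I anticipate is justifying that the choice $\alpha=+\lambda r$ is indeed the outward one along any spherically symmetric photon surface in class $\mathcal{S}_{\text{ext}}$, not merely along photon spheres. I plan to argue this by observing that $P^n$ separates the spacetime into a region ``toward the horizon'' $r=r_H$ and a region ``toward spatial infinity'' $r\to\infty$ (using conditions~1--2 of \Cref{def:Sext}, ensuring both asymptotic regimes are present), and that a normal with strictly positive $\partial_r$ component points into the latter; the asymptotic behavior of the radial profiles established in \Cref{prop:case1}--\Cref{prop:case3} makes this precise. With the sign pinned down, the asserted equivalences follow without further calculation.
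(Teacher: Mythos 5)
Your proposal is correct and follows essentially the same route as the paper: compute the outward unit normal to the radial profile explicitly, evaluate $\eta(N)=\eta(\sqrt{f})$, and read off that its sign is that of $f'$. The only cosmetic difference is that the paper parametrizes the radial profile by arclength, which yields the normal $\eta=\frac{\dot{r}}{f(r)}\partial_{t}+f(r)\dot{t}\partial_{r}$ directly (with outwardness immediate from $f\dot t>0$) and covers photon spheres uniformly, whereas you use the coordinate-time parametrization and the photon surface ODE to normalize, arriving at the same conclusion.
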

\begin{proof}
Let $\gamma\colon I\to\mathfrak{L}^{n+1}$ be a radial profile of $\photo$, $\dot{\gamma}=\dot{t}\partial_{t}\vert_{\gamma}+\dot{r}\partial_{r}\vert_{\gamma}$. Using that $\gamma$ is parametrized by arclength, the outward pointing unit normal $\eta$ to $\photo$ is given by $\eta=\frac{\dot{r}}{f(r)}\partial_{t}+f(r)\dot{t}\partial_{r}$ so that
\begin{align*}
\eta(N)\vert_{\gamma}&=\eta(\sqrt{f})\vert_{\gamma}=\frac{\dot{t}\sqrt{f(r)}f'(r)}{2}\vert_{\gamma},
\end{align*}
so $\operatorname{sign}(\eta(N))=\operatorname{sign}(f')$ along $\photo$ as $\gamma$ is future pointing and hence $\dot{t}>0$.
\end{proof}

\begin{Cor}[Outward/inward directedness in Reissner--Nordstr\"om spacetimes]\label{coro:RNoutward}
Consider an $n+1$-dimensional Reissner--Nordstr\"om spacetime of mass $m$ and charge $q$ with $r>r_{m,q}$. If $m\geq\vert q\vert$ but not $m=q=0$, all  spherically symmetric photon surfaces with $r>r_{m,q}$ are outward directed. If $0<m<\vert q\vert$, spherically symmetric photon surfaces are outward directed as long as their radius satisfies $r^{n-2}>\frac{q^{2}}{m}$, inward directed where $r^{n-2}<\frac{q^{2}}{m}$ and degenerate where they cross $r^{n-2}=\frac{q^{2}}{m}$. When $m<0$ or when $m=0$ and $q\neq0$, all spherically symmetric photon surfaces are inward directed. In particular, all spherically symmetric photon surfaces in Schwarzschild spacetimes of mass $m$ are outward directed when $m>0$ and inward directed when $m<0$. Photon surfaces in the Minkowski spacetime are degenerate.
\end{Cor}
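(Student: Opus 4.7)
I would apply \Cref{prop:outwardspherical} to the Reissner--Nordstr\"om spacetime, which identifies outward/inward directedness of any spherically symmetric photon surface with the sign of $f'_{m,q}$ along it, and degeneracy with $f'_{m,q}\equiv 0$. Hence the task reduces to a sign analysis of $f'_{m,q}$ on the admissible radial range $r>r_{m,q}$ in each parameter regime. Starting from the explicit formula for $N_{m,q}^2=f_{m,q}$ in \eqref{rnlapse}, I would compute
\begin{align*}
f'_{m,q}(r) &= \frac{2(n-2)\bigl(m\,r^{n-2}-q^{2}\bigr)}{r^{2n-3}},
\end{align*}
so that $\sgn f'_{m,q}(r)=\sgn(m\,r^{n-2}-q^{2})$ for all $r>0$, with $f'_{m,q}\equiv 0$ in the Minkowski case $m=q=0$.

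The case distinctions are then immediate from this single formula. If $m\geq\vert q\vert$ with $(m,q)\neq(0,0)$, then $m>0$ and the definition of $r_{m,q}$ combined with $q^{2}\leq m^{2}$ yields the chain $r^{n-2}>m+\sqrt{m^{2}-q^{2}}\geq m\geq q^{2}/m$, so $f'_{m,q}(r)>0$ throughout; in particular Schwarzschild with $m>0$ is covered here. If $0<m<\vert q\vert$, the admissible range is all of $(0,\infty)$ since $r_{m,q}=0$, and the sign of $f'_{m,q}(r)$ flips precisely at $r^{n-2}=q^{2}/m$ as asserted, with degeneracy exactly on that hypersurface. If $m\leq 0$ and $q\neq 0$, then $m\,r^{n-2}-q^{2}\leq-q^{2}<0$; if $m<0$ and $q=0$, then $f'_{m,q}(r)=2(n-2)m/r^{n-1}<0$; in both subcases the photon surfaces are inward directed. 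The Minkowski case $m=q=0$ gives $f_{m,q}\equiv 1$ and hence degeneracy.

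Since every step is a direct read-off from \Cref{prop:outwardspherical} and the elementary chain $q^{2}/m\leq m\leq r_{m,q}^{n-2}$ available for $m\geq\vert q\vert>0$, there is no serious obstacle. The only mild care needed is to correctly keep track of the conventions for $r_{m,q}$ in the extremal, superextremal, and Minkowski regimes (as specified in \Cref{sec:RN}), and to separate out the degenerate locus $r^{n-2}=q^{2}/m$ in the superextremal case $0<m<\vert q\vert$.
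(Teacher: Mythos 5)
Your proposal is correct and follows essentially the same route as the paper: both reduce the claim to the sign of $f_{m,q}'$ via \Cref{prop:outwardspherical}, observe that $\sgn f_{m,q}'(r)=\sgn(m\,r^{n-2}-q^{2})$, and note that in the subextremal and extremal cases the condition $m r^{n-2}>q^2$ is forced by $r>r_{m,q}$. Your write-up is just a more explicit version of the paper's (very terse) case analysis.
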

\begin{proof}
By \Cref{prop:outwardspherical}, it suffices to analyze the critical points of $f_{m,q}\definedas N_{m,q}^{2}$, see \eqref{rnlapse}. One finds $f_{m,q}'(r)>(=,<)\,0$ if and only if $m>(=,<)\,\frac{q^{2}}{r^{n-2}}$. In the subextremal and extremal cases, $m>\frac{q^{2}}{r^{n-2}}$ is globally ensured by $r>r_{m,q}$. In the other cases, one obtains exactly what is claimed.
\end{proof}
\vspace{-3ex}
\subsection{Generalizing the existence and uniqueness analysis beyond $\mathcal{S}_{\text{ext}}$}\label{rem: generalization PS analysis}
So far, our solution analysis covered all different types of spherically symmetric photon surfaces arising in spacetimes of class $\mathcal{S}_{\text{ext}}$, in particular in subextremal Reissner--Nordstr\"om and positive mass Schwarzschild spacetimes of dimensions $n+1\geq4$ (see \Cref{rem:examplesSext}). Yet, our analysis can be extended to much more general metric coefficients $f$ with only minor changes:

For example, we have restricted $v_{\text{eff}}^f$ in $\mathcal{S}_{\text{ext}}$ to have one global maximum for reasons of simplicity. However, one can proceed analogously to the above for any number of  critical points, where one will obtain a photon sphere for each critical point\footnote{We have not shown the analysis for local minima nor saddle points $r_{*}$ of $v^{f}_{\text{eff}}$ which however can be inferred from the above.} $r_{*}\in (r_H, \infty)$ of $v_{\text{eff}}^f$, and the number of different cases of $\lambda^{2}$ to be treated in the analysis is the number of different intersection possibilities, as shown in the example of \Cref{fig:v more extrema}. 

\begin{figure}[h!]
\centering
\includegraphics{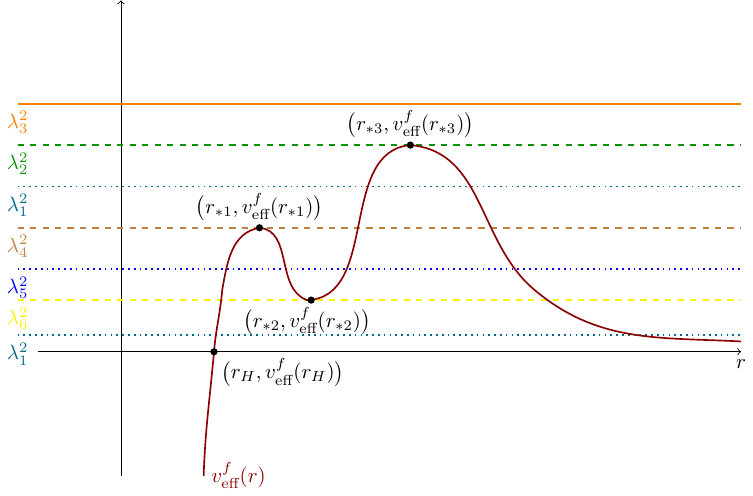}
\caption{Sketch of the function $v_{\text{eff}}^f(r)=\frac{f(r)}{r^2}$ with more extrema and seven different cases for $\lambda^{2}$.}
\label{fig:v more extrema}
\end{figure}

While the analysis remains the same, the phenomenology of the symmetric photon surfaces arising from a differently shaped effective potential $v_{\text{eff}}^{f}$ can differ drastically from what we saw above in \Cref{fig:type1,fig:type2,fig:type3}. For example, in the case labeled as \textcolor{brown}{$\lambda_{4}^{2}$} in \Cref{fig:v more extrema}, there will be symmetric photon surfaces like those asymptoting to the horizon at $r_{H}$ and to the photon sphere at $r_{\ast1}$ in \textcolor{gruen}{Case 2} (see \Cref{fig:type2}) and like those turning at some $R_{\lambda}>r_{\ast2}$ in \textcolor{greyblue}{Case 1} above (see \Cref{fig:type1}), as well as the photon spheres at $r_{\ast1}$, $r_{\ast2}$, and $r_{\ast3}$. However, in addition, there will be a new type of symmetric photon surfaces asymptoting to the photon sphere at $r_{\ast 1}$ for $t\to\pm\infty$ and turning around at some $r_{\lambda}$ with $r_{\ast2}<r_{\lambda}<r_{\ast3}$ with $v_{\text{eff}}^{f}(r_{\lambda})=v_{\text{eff}}^{f}(r_{\ast1})=\lambda_{\ast1}^{2}$, see \Cref{fig:type4}.

\begin{figure}[h]
	\centering
		\includegraphics[scale=1.2]{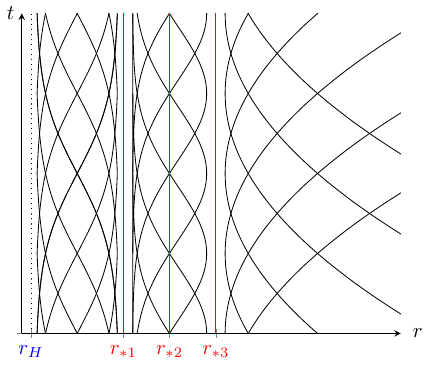}
	\caption{\text{Photon surfaces of \textcolor{brown}{Case 4}}, $\lambda^{2}=\lambda_{\ast4}^{2}$.}
	\label{fig:type4}
\end{figure}
\newpage
Next, in the case labeled as \textcolor{blue}{$\lambda_{5}^{2}$}  in \Cref{fig:v more extrema}, there will be symmetric photon surfaces like those asymptoting to the horizon at $r_{H}$ for both $t\to\pm\infty$ and turning around at some $r_{\lambda}$ with $r_{H}<r_{\lambda}<r_{\ast1}$ and like those turning at some $R_{\lambda}>r_{\ast3}$ in \textcolor{greyblue}{Case 1} above (see \Cref{fig:type1}), as well as the photon spheres at $r_{\ast1}$, $r_{\ast2}$, and $r_{\ast3}$. In addition, there will be ``trapped'' symmetric photon surfaces oscillating indefinitely between some $R_{\lambda}$ and $r_{\lambda}$ with $r_{\ast1}<R_{\lambda}<r_{\ast2}<r_{\lambda}<r_{\ast3}$, see \Cref{fig:type5} and \Cref{sec:trapped}.

\begin{figure}[h]
	\centering
		\includegraphics[scale=1.2]{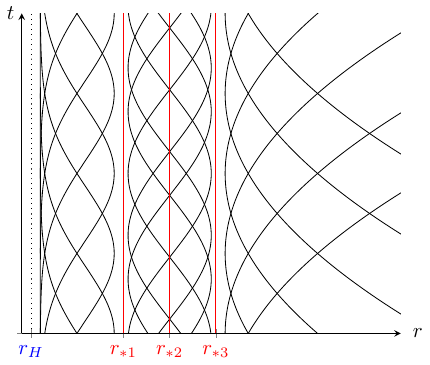}
	\caption{\text{Photon surfaces of \textcolor{blue}{Case 5}}, $\lambda^{2}=\lambda_{\ast5}^{2}$.}
	\label{fig:type5}
\end{figure}

Finally, in the case labeled as \textcolor{yellow}{$\lambda_{6}^{2}$} in \Cref{fig:v more extrema}, there will only be symmetric photon surfaces like those asymptoting to the horizon at $r_{H}$ for both $t\to\pm\infty$ and turning around at some $r_{\lambda}$ with $r_{H}<r_{\lambda}<r_{\ast1}$ and like those turning at some $R_{\lambda}>r_{\ast3}$ in \textcolor{greyblue}{Case 1} above (see \Cref{fig:type1}), as well as the photon sphere at $r_{\ast2}$, see \Cref{fig:type6}.

\begin{figure}[h!]
	\centering
		\includegraphics[scale=1.2]{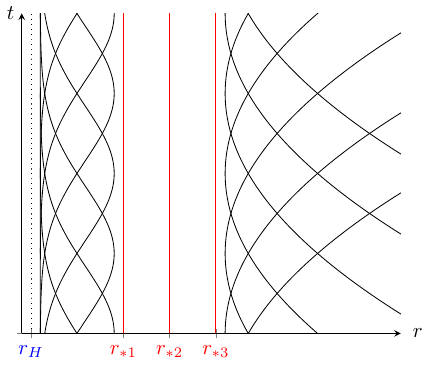}
	\caption{Photon surfaces of \textcolor{yellow}{Case 6}, $\lambda^{2}=\lambda_{\ast6}^{2}$.}
	\label{fig:type6}
\end{figure}

\subsubsection{Vertical asymptotes at some $r_{\text{v}}$}\label{sec:vertical}
One can likewise apply the techniques presented above if the function $v_{\text{eff}}^f$ has different behavior towards $r=r_{\text{v}}\geq0$, i.e., if the spacetime inner boundary is not given by a non-degenerate Killing horizon (see \Cref{rem: conditions S_ext}). In particular, the techniques can be generalized to the case of a vertical asymptote at $r=r_{\text{v}}$ as illustrated in \Cref{fig:v vertical asymptotic}. For example, in the case of a strictly decreasing $v_{\text{eff}}^f$ with vertical asymptote as depicted in \Cref{fig:v vertical asymptotic}, there is only one possibility for an intersection of the horizontal line at $\lambda^2$ and $v_{\text{eff}}^f$, and one obtains only the types of solutions for $r_0> R_{\lambda}>r_{\text{v}}$ seen in \textcolor{greyblue}{Case 1}, namely the ``hyperbolas'' on the right hand side of \Cref{fig:type1}. This is consistent with the fact that the only spherically symmetric photon surfaces in Minkowski are exactly the spatially centered one-sheeted hyperboloids of any radius. This type of effective potential occurs for negative mass Schwarzschild spacetimes and in superextremal Reissner--Nordstr\"om spacetimes with $m<\frac{2\sqrt{n-1}}{n}\vert q\vert$, i.e., when there are no photon spheres. In all those cases, one has $r_{\text{v}}=0$.

\begin{figure}[h]
\centering
\includegraphics[scale=0.9]{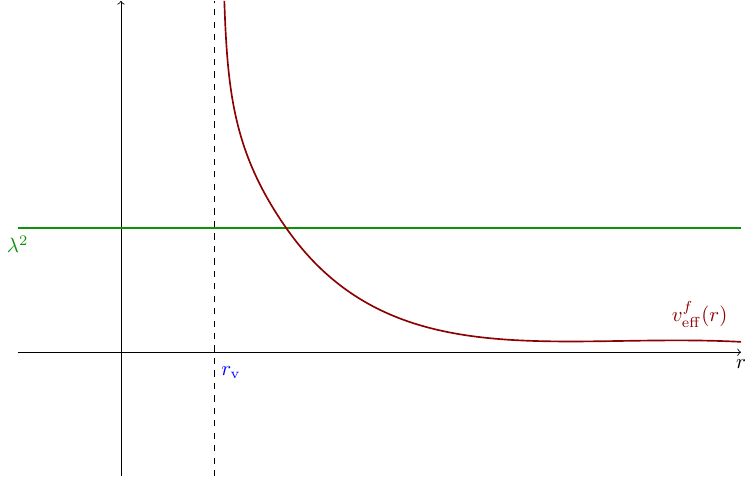}
\caption{Sketch of the function $v_{\text{eff}}^f$ with vertical asymptote at $r\equiv r_{\text{v}}$ and the only case for $\lambda^{2}$, occurring in Minkowski spacetime, in negative mass Schwarzschild spacetimes and in superextremal Reissner--Nordstr\"om spacetimes with $m<\frac{2\sqrt{n-1}}{n}\vert q\vert$, all with $r_{\text{v}}=0$.}
\label{fig:v vertical asymptotic}
\end{figure}

For superextremal Reissner--Nordstr\"om spacetimes with $\vert q\vert>m=\frac{2\sqrt{n-1}}{n}\vert q\vert$, i.e., when there is precisely one photon sphere, the effective potential will be strictly decreasing with a saddle point at $r_{\ast}=\left(\frac{nm}{2}\right)^{\frac{1}{n-2}}$, hence there will be precisely one intersection with the effective potential for each $\lambda^{2}$. If $\lambda^{2}\neq\lambda_{\ast}^{2}$, we have the same type of photon surface behavior as the hyperboloidal ones in \textcolor{greyblue}{Case 1} (see \Cref{fig:type1}), while for $\lambda^{2}=\lambda_{\ast}^{2}$, one has the unique photon sphere at $r_{\ast}$ as well as symmetric photon surfaces as the ones asymptoting this photon sphere in \textcolor{gruen}{Case 2} (see \Cref{fig:type2}). Again, $r_{\text{v}}=0$ in this case.

\subsubsection{Trapped symmetric photon surfaces which are not photon spheres}\label{sec:trapped}
It becomes more interesting when one moves to superextremal Reissner--Nordstr\"om spacetimes with two photon spheres, i.e., when $\vert q\vert>m>\frac{2\sqrt{n-1}}{n}\vert q\vert$, see also the discussion of the \textcolor{blue}{Case 5} above, \Cref{fig:type5}, and \Cref{sec:vertical}. In this case, there are \emph{trapped} symmetric photon surfaces in  which are not photon spheres (see \Cref{fig:type7,fig:veffsuper2photo}). Here, by \emph{trapped} we mean that the radial coordinate is bounded away both from the singularity ``at'' $r=r_{\text{v}}=0$ and bounded away from infinity. This may be of independent interest for example for the study of stability of the wave equation on a superextremal Reissner--Nordstr\"om spacetime in this range of mass and charge.

It is not known to the authors whether the existence of the associated trapped null geodesics\footnote{running along the trapped symmetric photon surfaces, see \Cref{sec:nullgeophase}} in superextremal Reissner--Nordstr\"om and related spacetimes was previously known, but see \cite{VE2} (in Einstein massless scalar field theory). 

\begin{figure}[h]
\centering
\includegraphics[scale=0.9]{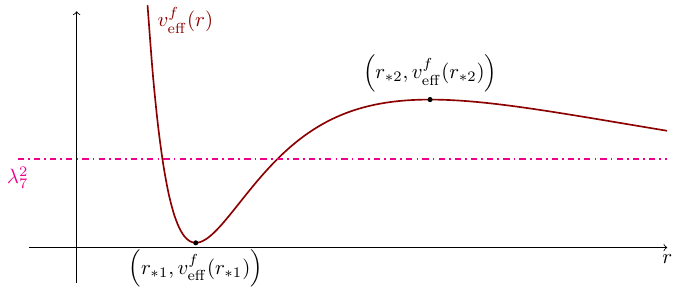}
\caption{Sketch of the function $v_{\text{eff}}^f$ for superextremal Reissner--Nordstr\"om spacetimes with $\vert q\vert>m>\frac{2\sqrt{n-1}}{n}\vert q\vert$.} 
\label{fig:veffsuper2photo}
\end{figure}

\begin{figure}[h!]
	\centering
		\includegraphics[scale=1.2]{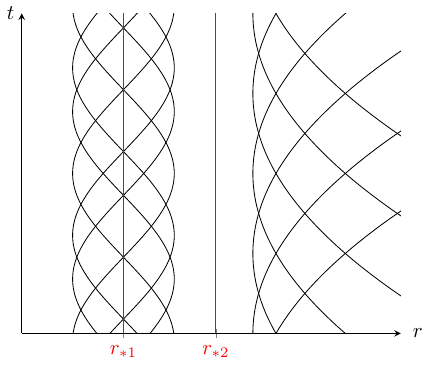}
	\caption{Trapped photon surfaces in superextremal Reissner--Norstr\"om spacetimes with two photon spheres, $\lambda^{2}=\lambda_{\ast7}^{2}$.}
	\label{fig:type7}
\end{figure}

\subsubsection{Degenerate Horizons}\label{app:B}
The same procedure can also be followed in the case of a degenerate horizon, $f(r_H)=0$ and $f'(r_H)=0$, although only on $(r_{H}+\varepsilon,\infty)$ for some fixed $\varepsilon>0$ in view of \Cref{prop: appendix}. This fits the fact that in this case -- just as in the non-degenerate case -- the surface $\lbrace r \equiv r_H\rbrace$ is in fact a principal null hypersurface and not a (timelike) photon surface. An example of a spacetime with this kind of effective potential are the extremal Reissner--Nordstr\"{o}m spacetimes. To understand more precisely what happens to symmetric photon surfaces when approaching a degenerate horizon, one needs to analyze their behavior in Gaussian null coordinates (see for example \cite[Section 2.1]{KL} or \cite{IM,FRW}), in a similar spirit as the analysis needed to understand their behavior near non-degenerate horizon (see \Cref{rem:nearhorizon}). A straightforward computation shows that any metric in Class $\mathcal{S}$ with metric coefficient $f$ can be rewritten as
\begin{align}\label{eq:GNC}
-f(r)du^{2}+du\otimes dr +dr\otimes du +r^{2}\Omega
\end{align}
where $r$ is the same radial variable as before and $u\definedas t+\int_{r_{H}}^{r}\frac{1}{f(s)}ds$. This represents a global coordinate change on $\R\times(r_H,\infty)\times\mathbb{S}^{n-1}$. If $f(r_H)=0$, \eqref{eq:GNC} smoothly extends to the hypersurface $\{r=r_H\}$ which then naturally becomes a Killing horizon of the spacetime under consideration (but not in general the entire Killing horizon).

Now, \cite[Lemma 3.4]{cedergal} informs us that along the profile curve $\gamma$ of a given symmetric photon surface in $\{r>r_H\}$ (future directed and parametrized by proper time), one has 
\begin{align}\label{eq:dotr}
\dot{r}^{2}&=\lambda^{2}r^{2}-f(r)\\\label{eq:dott}
\dot{t}&=\frac{\lambda r}{f(r)}
\end{align}
with $\lambda>0$ denoting the umbilicity constant of the photon surface. Transferring \eqref{eq:dott} into Gaussian null coordinates and applying the fundamental theorem of calculus, the last condition is equivalent to
\begin{align}\label{eq:dotu}
\dot{u}&=\frac{\lambda r+\dot{r}}{f(r)}
\end{align}
for $r>r_H$. Moreover, one finds that $\dot{u}\neq0$ along $\gamma$. The fact that $\gamma$ is future directed is equivalent to $\dot{r}<f(r)\dot{u}$ which, by the proper time parametrization condition $-f(r)\dot{u}^2+2\dot{r}\dot{u}=-1$ implies $\dot{u}>0$.

Assuming that $\gamma$ approaches the Killing horizon at $r=r_{H}$ to the future, i.e., $\dot{r}<0$ near the Killing horizon, \eqref{eq:dotr} implies $\dot{r}\to -\lambda r_{H}$. For $\dot{u}$, we compute from \eqref{eq:dotr} and \eqref{eq:dotu} that
\begin{align}
\dot{u}&=\frac{\lambda r -\sqrt{\lambda^{2}r^{2}-f(r)}}{f(r)}
\end{align}
which, assuming that the Killing horizon is non-degenerate, gives $\dot{u}\to\frac{1}{2\lambda r_{H}}$ by l'H\^opital's rule. Applying l'H\^opital's rule $k$ times, the same conclusion can also be drawn in the degenerate case provided that $f^{(k)}(r_{H})\neq0$ for some $k\geq2$. In other words, one has $\frac{dr}{du}\to-2$ as one approaches the horizon (to the future). Consequently, the symmetric photon surface must hit the Killing horizon transversally, possibly unless $f^{(k)}(r_{H})=0$ for all $k\in\mathbb{N}$.

In contrast, if $\gamma$ were to approach the Killing horizon at $r=r_{H}$ to the past, i.e., $\dot{r}>0$ near the Killing horizon, one finds
\begin{align}
\dot{r}&=\sqrt{\lambda^2r^2-f(r)}\phantom{-f(r)}\to\lambda r_H\\
\dot{u}&=\frac{\lambda r+\sqrt{\lambda^2r^2-f(r)}}{f(r)}\,\to\infty
\end{align}
and thus $\frac{dr}{du}\approx\frac{f(r)}{2}\to 0$ as one approaches the horizon (to the past). Consequently, the symmetric photon surface asymptotes to the (degenerate or non-degenerate) horizon to the past. This suggests that the symmetric photon surface leaves the part of the Killing horizon (degenerate or not) covered by the above Gaussian coordinate patch.

\subsubsection{Different asymptotic behavior}
The asymptotic behavior of symmetric photon surfaces as $r\to\infty$ in all the above cases is still captured by \Cref{rem:nearhorizon}. This changes when one allows for different (not asymptotically flat) asymptotic behavior: For our analysis for obtaining existence and uniqueness of photon surfaces, it suffices that $f$ and $f'$ are asymptotically bounded, although the asymptotic behaviour of the solutions will then differ from what is depicted above -- and one will need to use additional techniques for the proof. For example, if the function $v_{\text{eff}}^f$ has a horizontal asymptote of hight $C$ with $0<C<\lambda_{*}^{2}$ as $r\to\infty$, one continues to obtain the solutions from \textcolor{greyblue}{Case 1} with $r_{H}<r_0\leq r_{\lambda}$ for $0<\lambda^{2}<\lambda_{*}^{2}$ (see ${\color{red}\lambda_{8}^{2}}$ and ${\color{greyblue}\lambda_{1}^{2}}$ in \Cref{fig:v asymptotic}), using the same methods as in \textcolor{greyblue}{Case 1} , and for $\lambda^{2}=\lambda_{*}^{2}$ as in \textcolor{gruen}{Case 2} (see ${\color{gruen}\lambda_{2}^{2}}$ in \Cref{fig:v asymptotic}). Treating the case of initial values $r_{0}\geq R_{\lambda}$ or any initial value in case $\lambda^{2}>\lambda^{2}_{\ast}$ requires to cut off $f$ at some $r_{\mathrm{max}}<\infty$ in order to obtain the Lipschitz continuity and Lipschitz bounds necessary for the proof. One can then of course obtain global solutions by picking an unbounded sequence of such cut-off radii (see ${\color{greyblue}\lambda_{1}^{2}}$, ${\color{gruen}\lambda_{2}^{2}}$, ${\color{orange}\lambda_{3}^{2}}$ in \Cref{fig:v asymptotic} and Cases \textcolor{greyblue}{1}, \textcolor{gruen}{2}, \textcolor{orange}{3} above). This asymptotic behavior occurs for example in the (Reissner--Nordstr\"om or Schwarzschild--)Anti de Sitter spacetimes, where the asymptote is given via the cosmological constant $\Lambda$ as $C=-\frac{2}{n(n-1)}\Lambda>0$.

\begin{figure}[h!]
\centering
\includegraphics{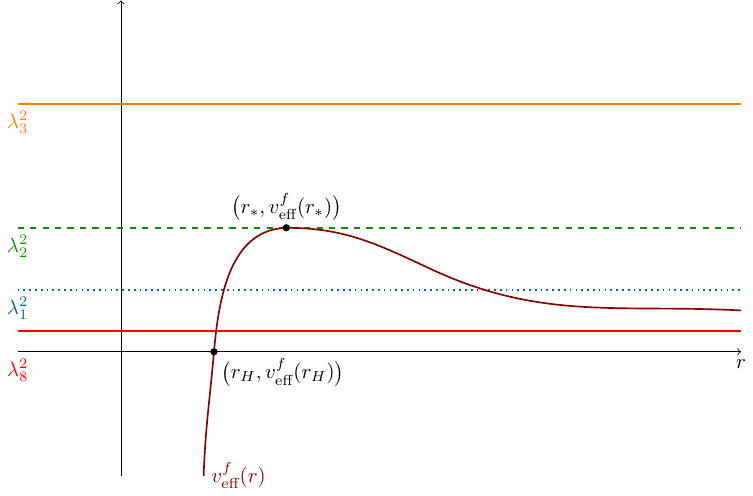}
\caption{Sketch of the function $v_{\text{eff}}^f$ with asymptote $C>0$ and four different cases for $\lambda^{2}$.}
\label{fig:v asymptotic}
\end{figure}

Another interesting case is when $v_{\text{eff}}^f$ asymptotes to $C<0$ as in the (Reissner--Nordstr\"om or Schwarzschild--)de Sitter spacetimes with $C=-\frac{2}{n(n-1)}\Lambda<0$. In this case $f$ intersects the $r$-axis for some $r_{\text{max}}>0$ (the cosmological horizon), hence one needs to extend $f$ by $0$ on $(r_{\text{max}},\infty)$ for the analysis.

\newpage

\section{Photon surfaces, null geodesics, and phase space}\label{sec:nullgeophase}
\subsection{Relating null geodesics and spherically symmetric photon surfaces}\label{subsec:generating}
As discussed in \cite{cederbaum2019photon} (see \Cref{thm:isotropic}), ``almost'' all photon surfaces in ``most'' spacetimes of class~$\mathcal{S}$ are spherically symmetric. It was also demonstrated in  \cite{cederbaum2019photon} that spherically symmetric photon surfaces can directly be related to null geodesics. As we will appeal to these results in \Cref{sec:topologyphasespace}, we will briefly summarize those considerations here.

As we will put in more concrete terms in an instant, null geodesics generate hypersurfaces the causal character of which depends on the angular momentum of the generating null geodesic(s). More precisely, let $\gamma\colon J\to \R\times \slice$ be a null geodesic in a spacetime $(\R\times \slice,\mathfrak{g})\in\mathcal{S}$ defined on some interval $J\subseteq\R.$ By the usual conservation laws for Killing vector fields, its \emph{energy} $E\definedas -\mathfrak{g}_{\gamma}(\partial_{t}\vert_{\gamma},\dot{\gamma})$ and \emph{(total) angular momentum} $\ell\definedas(G^{IJ}\ell_{I}\ell_{J})^{\frac{1}{2}}\geq0$ are constant along $\gamma$. Here, the numbers $\ell_{I}\definedas\mathfrak{g}(X_{I},\dot{\gamma})$ denote the angular momenta with respect to a local choice of linearly independent Killing vector fields $\lbrace X_{I}\rbrace_{I=1}^{n-1}$ spanning the Killing subalgebra over $\R$ of the spacetime corresponding to spherical symmetry, and $(G^{IJ})$ denotes the inverse of $(G_{IJ})\definedas(\mathfrak{g}_{\gamma}(X_{I}\vert_{\gamma},X_{J}\vert_{\gamma}))$. It is noteworthy that $\ell$ is known (and shown in \cite{cederbaum2019photon}) to be constant along $\gamma$ and moreover independent of the choice of the local linearly independent system $\lbrace X_{I}\rbrace_{I=1}^{n-1}.$

With these concepts at hand we can now make precise the notion of null geodesics generating a hypersurface and recall the relation of generating null geodesics and photon surfaces.

\begin{Prop}[\!\!{\cite[Proposition 3.13]{cederbaum2019photon}}]\label{defprop:generating}
Let $(\R\times \slice,\mathfrak{g})\in\mathcal{S}$ and let $\gamma\colon J\to \R\times \slice$, $\gamma(s)=(t(s),r(s),\xi(s))$ be a (not necessarily maximally extended) null geodesic defined on some open interval $J\subseteq\R$ with angular momentum $\ell$, where $\xi(s)\in\mathbb{S}^{n-1}$ for all $s\in J$. Then \emph{$\gamma$ generates the hypersurface $H^{n}_{\gamma}$} defined as
\begin{align*}
H^{n}_{\gamma}&\definedas\{(t,p)\in\R\times \slice\,\vert\,\exists s_{*}\in J, \xi_{*}\in\mathbb{S}^{n-1}:\,t=t(s_{*}),p=(r(s_{*}),\xi_{*})\},
\end{align*}
which is a smooth, connected, spherically symmetric hypersurface in $\R\times \slice$. The hypersurface $H^{n}_{\gamma}$ is timelike if $\ell>0$ and null if $\ell=0$.
\end{Prop}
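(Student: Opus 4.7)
The plan is to establish, in order, spherical symmetry and connectedness (from the definition), smoothness of $H^{n}_{\gamma}$ as an embedded hypersurface (by reparametrizing the radial profile), and finally the causal character (via the null condition on $\dot\gamma$). Spherical symmetry is immediate because the definition pairs each profile point $(t(s_{*}),r(s_{*}))$ with \emph{every} $\xi_{*}\in\mathbb{S}^{n-1}$, making $H^{n}_{\gamma}$ invariant under the $SO(n)$-action on the $\mathbb{S}^{n-1}$-factor. Connectedness follows since $J$ is an open interval, $\mathbb{S}^{n-1}$ is connected for $n\geq 2$, and $H^{n}_{\gamma}$ is the continuous image of the connected set $J\times\mathbb{S}^{n-1}$ under $(s,\xi_{*})\mapsto(t(s),r(s),\xi_{*})$.

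For smoothness, I would invoke the conservation of the energy $E\definedas-\mathfrak{g}(\partial_{t},\dot\gamma)=f(r)\dot t$ along $\gamma$ (from the fact that $\partial_{t}$ is Killing) and combine it with the null condition
\begin{equation*}
-f(r)\,\dot t^{2}+f(r)^{-1}\,\dot r^{2}+r^{2}\,\Omega(\dot\xi,\dot\xi)=0.
\end{equation*}
If $E=0$ then $\dot t=0$, so that the two non-negative remaining summands must vanish, forcing $\dot r=0$ and $\dot\xi=0$ and contradicting $\dot\gamma\neq 0$. Hence $\dot t=E/f$ has constant sign on $J$, so $s\mapsto t(s)$ is a diffeomorphism onto the open interval $t(J)\subseteq\R$. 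Consequently $r$ becomes a smooth function $r=r(t)$ on $t(J)$, and
\begin{equation*}
H^{n}_{\gamma}=\{(t,r(t),\xi)\mid t\in t(J),\,\xi\in\mathbb{S}^{n-1}\}\subseteq\R\times \slice
\end{equation*}
is a smoothly embedded hypersurface diffeomorphic to $t(J)\times\mathbb{S}^{n-1}$.

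For the causal character, I would compute the pullback metric via $\iota(t,\xi)\definedas(t,r(t),\xi)$,
\begin{equation*}
\iota^{*}\mathfrak{g}=\Big(-f(r(t))+\frac{r'(t)^{2}}{f(r(t))}\Big)dt^{2}+r(t)^{2}\,\Omega.
\end{equation*}
Substituting $r'(t)=\dot r/\dot t$ and applying the null condition rewrites the coefficient of $dt^{2}$ as $-r^{2}\,\Omega(\dot\xi,\dot\xi)/\dot t^{2}\leq 0$. Since $r(t)^{2}\,\Omega$ is Riemannian, $\iota^{*}\mathfrak{g}$ is Lorentzian when this coefficient is strictly negative and degenerate with $\partial_{t}$ as the null direction when it vanishes.

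The point requiring most care is tying this dichotomy to $\ell$ itself. At a chosen reference point $\gamma(s_{0})$, I would pick locally defined spherical Killing vector fields $\{X_{I}\}_{I=1}^{n-1}$ whose restrictions to $\xi(s_{0})$ form an $\Omega$-orthonormal basis of $T_{\xi(s_{0})}\mathbb{S}^{n-1}$; then at $\gamma(s_{0})$ one has $G^{IJ}=r^{-2}\delta^{IJ}$ and $\ell_{I}=r^{2}\,\Omega(X_{I},\dot\xi)$. Since $G^{IJ}$ is positive definite, $\ell=0$ is equivalent to every $\ell_{I}=0$, which in the chosen frame means $\dot\xi|_{s_{0}}=0$, i.e., $\Omega(\dot\xi,\dot\xi)|_{s_{0}}=0$. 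Conservation of $\ell$ along $\gamma$ then propagates this equivalence to all of $J$, so combining with the preceding calculation yields that $H^{n}_{\gamma}$ is null precisely when $\ell=0$ and timelike precisely when $\ell>0$, completing the proof.
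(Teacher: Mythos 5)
Your proof is correct. The paper itself does not reprove this statement (it is imported verbatim from \cite[Proposition 3.13]{cederbaum2019photon}), and your argument follows the same standard route as that source: conservation of $E$ forces $\dot t\neq0$ so that $H^{n}_{\gamma}$ is the graph of a smooth radial profile $r=r(t)$ crossed with $\Sphere^{n-1}$, the null condition turns the $dt^{2}$-coefficient of the induced metric into $-r^{2}\Omega(\dot\xi,\dot\xi)/\dot t^{2}=-\ell^{2}f^{2}/E^{2}$, and constancy of $\ell$ converts the pointwise sign dichotomy into the global timelike/null statement.
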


\begin{Prop}[\!\!{\cite[Proposition 3.14]{cederbaum2019photon}}]\label{prop:generating}
Let $(\R\times \slice,\mathfrak{g})\in\mathcal{S}$ and let $H^{n}\hookrightarrow(\R\times \slice,\mathfrak{g})$ be a connected, spherically symmetric, timelike hypersurface. Then $H^{n}$ is generated by a null geodesic $\gamma\colon J\to \R\times \slice$ if and only if $H^{n}$ is a photon surface. Moreover, $H^{n}$ is a maximal photon surface if and only if any null geodesic $\gamma\colon J\to \R\times \slice$ generating $H^{n}$ is maximally extended. 

The umbilicity factor $\lambda$ of a photon surface $P^{n}$ is related to the energy $E$ and angular momentum $\ell$ of its generating null geodesics by $\lambda=\frac{E}{\ell}$.
\end{Prop}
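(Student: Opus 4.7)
\textbf{Plan for the proof of \Cref{prop:generating}.} First, I would translate the null-geodesic problem into the photon surface ODE. For any null geodesic $\gamma(s)=(t(s),r(s),\xi(s))$ in a spacetime of class $\mathcal{S}$, conservation of the energy $E=f(r)\dot{t}$ and of the total angular momentum $\ell$ together with the isotropy relation $r^{2}\Omega(\dot\xi,\dot\xi)=\ell^{2}/r^{2}$ substituted into $\mathfrak{g}(\dot\gamma,\dot\gamma)=0$ yield $\dot{r}^{\,2}=E^{2}-\ell^{2}f(r)/r^{2}$. Combined with $\dot{t}=E/f(r)$, a short calculation then shows that whenever $\ell>0$ the radial profile $r=r(t)$ of $\gamma$ satisfies exactly the photon surface ODE \eqref{eq:ODE3} with umbilicity factor $\lambda=E/\ell$. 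The degenerate sub-case $\dot{r}\equiv 0$ along $\gamma$ has to be checked separately: then $r\equiv r_{\ast}$, and the $r$-component of the geodesic equation (evaluated using $\ell>0$ and the null condition) forces the photon sphere condition $f'(r_{\ast})r_{\ast}=2f(r_{\ast})$, with $\lambda=\sqrt{f(r_{\ast})}/r_{\ast}=E/\ell$.

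For the ``only if'' direction I would assume $H^{n}$ is generated by a null geodesic $\gamma$. Since $H^{n}$ is timelike, \Cref{defprop:generating} gives $\ell>0$, so the preceding paragraph applies: the radial profile of $H^{n}$, being the $(t,r)$-projection of $\gamma$, solves the photon surface ODE with $\lambda=E/\ell$ (or is constant at a photon sphere radius). The converse part of \Cref{thm:sphsymm} then immediately yields that $H^{n}$ is a photon surface with umbilicity factor $E/\ell$, also establishing the relation $\lambda=E/\ell$ claimed in the proposition.

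For the ``if'' direction I would pick any point $p\in H^{n}$ and exploit the Lorentzian structure of $T_{p}H^{n}$: spherical symmetry of $H^{n}$ forces $T_{p}H^{n}$ to contain the full $(n-1)$-dimensional tangent space of the spherical slice through $p$ together with a timelike direction in the $(t,r)$-plane, so the induced metric on $T_{p}H^{n}$ is Lorentzian and contains a null vector $v$ with nontrivial angular component. The photon-surface/null-totally-geodesic characterization (\Cref{prop:umbilic}) guarantees that the ambient null geodesic $\gamma_{v}$ starting at $(p,v)$ remains in $H^{n}$. Since the angular component of $v$ is nonzero, $\ell>0$ along $\gamma_{v}$; the first paragraph then shows that its radial profile solves the photon surface ODE with $\lambda=E/\ell$, which by \Cref{thm:sphsymm} must coincide with the radial profile of $H^{n}$. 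Finally, the $SO(n)$-orbit of $\gamma_{v}$ is a connected, spherically symmetric, timelike hypersurface with the same radial profile as $H^{n}$, and by connectedness of $H^{n}$ it must coincide with $H^{n}$, i.e., $\gamma_{v}$ generates $H^{n}$.

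The identity $\lambda=E/\ell$ follows from either direction. For the maximality statement, \Cref{thm:sphsymm} identifies maximal spherically symmetric photon surfaces with maximal $C^{2}$-solutions of \eqref{eq:ODE3}, while the derivation above identifies these in turn with radial projections of maximally extended null geodesics at fixed $E/\ell$; chaining the two correspondences yields the claimed equivalence between maximal photon surfaces and maximally extended generating null geodesics. The step I expect to be the main obstacle is the globalization in the ``if'' direction: ensuring that the rotational orbit of the single geodesic $\gamma_{v}$ covers $H^{n}$ in its entirety rather than only a piece near $p$. This will require invoking carefully the uniqueness analysis of the photon surface ODE from \Cref{sec:spherical} to rule out that $H^{n}$ consists of several profile-equivalent pieces glued along turning points which a single null geodesic could not simultaneously traverse.
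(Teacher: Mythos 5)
The paper does not prove this proposition: it is imported verbatim from \cite[Proposition 3.14]{cederbaum2019photon}, so there is no in-paper argument to compare against. Judged on its own, your reconstruction follows the standard (and, as far as I can tell, the cited reference's) route: conserved quantities $E$ and $\ell$ plus the null condition give $\dot r^{2}=E^{2}-\ell^{2}f(r)/r^{2}$, which after reparametrizing by $t$ is exactly \eqref{eq:ODE3} with $\lambda=E/\ell$; your normalization $r^{2}\Omega(\dot\xi,\dot\xi)=\ell^{2}/r^{2}$ is the correct one (it is what makes $\lambda=\sqrt{f(r_*)}/r_*=E/\ell$ come out right on a photon sphere), and the degenerate check via the $r$-component of the geodesic equation indeed yields $f'(r_*)r_*=2f(r_*)$. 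The ``only if'' direction via \Cref{defprop:generating} and the converse half of \Cref{thm:sphsymm}, and the construction of a null tangent vector with nonzero angular part in the ``if'' direction, are all fine.

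The one place that needs repair is the globalization step you flag at the end. You propose to close it by ``invoking the uniqueness analysis of the photon surface ODE from \Cref{sec:spherical}'', but that analysis is carried out only for metric coefficients satisfying the extra hypotheses of class $\mathcal{S}_{\text{ext}}$ (and its variants), whereas \Cref{prop:generating} is asserted for all of class $\mathcal{S}$, where $f$ is merely smooth and positive and no barrier/Lipschitz structure is available. A cleaner and assumption-free argument is a continuation argument on the geodesic itself: let $J$ be the maximal interval with $\gamma_{v}(J)\subset H^{n}$, and note that $t$ is strictly monotone along $\gamma_{v}$ since $\dot t=E/f(r)>0$. If the $t$-range $\tau(J)$ were a proper subinterval of $\mathcal{T}$, then near $\sup\tau(J)<\sup\mathcal{T}$ the radius stays in a compact subinterval of $\mathcal{I}$, so $\dot t$ is bounded below, $\sup J<\infty$, and $\gamma_{v}(s)$ converges to a point of $H^{n}$ lying over an interior time of $\mathcal{T}$; extending the geodesic through that point and using total umbilicity (\Cref{prop:umbilic}) to keep it tangent to $H^{n}$ contradicts maximality of $J$. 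Hence $\tau(J)=\mathcal{T}$ and the rotational orbit of $\gamma_{v}\vert_{J}$ is all of $H^{n}$. The same monotonicity of $t$ is what makes your maximality equivalence precise: extending a generating geodesic strictly enlarges the $t$-range of its orbit, so non-maximally-extended generators produce strictly larger spherically symmetric photon surfaces and vice versa. With that substitution your proof is complete.
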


Here (see \cite[Definition 3.12]{cederbaum2019photon}), a connected, spherically symmetric photon surface is called \emph{maximal} if it is not a subset of a strictly larger, connected, spherically symmetric photon surface.

The remaining null geodesics with $\ell=0$ -- called \emph{principal null geodesics} -- generate the so-called \emph{principal null hypersurfaces} which are called \emph{maximal} whenever the generating null geodesic is maximally extended (see \cite[Definition 3.17]{cederbaum2019photon}).

\subsection{Moving to phase space}\label{sec:topologyphasespace}
A different view on spherically symmetric photon surfaces in spacetimes of class $\mathcal{S}$ can be gained by lifting them to the phase space, that is, to the cotangent bundle of the manifold. It was shown in \cite[Proposition 3.18]{cederbaum2019photon} that the null section $\{\omega\in T^{*}\mathfrak{L}^{n+1}:\,\mathfrak{g}(\omega^{\sharp},\omega^{\sharp})=0)\}$ of the phase space of a spacetime in class $\mathcal{S}$ is partitioned by the canonical lifts of the null bundles over all maximal spherically symmetric photon surfaces together with the canonical lifts of the null bundles over all maximal principal null hypersurfaces. The following proposition is a characterization of the topologies of these lifts.

\begin{thm}[Lifts of null bundles over photon surfaces and principal null hypersurfaces in class~$\mathcal{S}$]\label{thm:topology}
Consider an $n+1$-dimensional spacetime $(\mathfrak L^{n+1},\mathfrak{g})$ of class $\mathcal S$. The lift of the future (or past) null bundle over any maximal  spherically symmetric photon surface in $\mathfrak L^{n+1}$ is a smooth submanifold of phase space and has topology $T^1\mathbb S^{n-1}\times \mathbb R^2$, where $T^{1}\mathbb S^{n-1}$ is the unit tangent bundle over $\mathbb S^{n-1}$. The lift of the future (or past) null bundle over any maximal principal null hypersurface to phase space is a smooth submanifold and has topology $\mathbb S^{n-1}\times \mathbb R^2$. 
\end{thm}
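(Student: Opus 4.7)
First, I would describe the topology of the base manifolds. By \Cref{thm:sphsymm} together with the existence and uniqueness analysis of \Cref{sec:spherical}, any maximal spherically symmetric photon surface $P^n$ arising in a spacetime of class $\mathcal{S}$ is the $\mathrm{SO}(n)$-orbit of a smooth, maximally extended radial profile curve, which -- using the ambient time coordinate $t$ as a global parameter -- can be written as $t\mapsto(t,r(t))$ (the photon sphere case corresponding to $r(t)\equiv r_\ast$). The map $\R\times\mathbb{S}^{n-1}\ni(t,\xi)\mapsto(t,r(t),\xi)\in P^n$ is then a diffeomorphism, so $P^n\cong\R\times\mathbb{S}^{n-1}$. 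The same reasoning applied to a maximal principal null hypersurface $H^n$, generated by a maximally extended radial null geodesic, yields $H^n\cong\R\times\mathbb{S}^{n-1}$.

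Second, at each $p=(t,r(t),\xi)\in P^n$ I would identify the future null fiber in $T_pP^n$. Since $P^n$ is a timelike hypersurface, $T_pP^n$ is $n$-dimensional Lorentzian and splits orthogonally as $\R\, T(t,\xi)\oplus T_\xi\mathbb{S}^{n-1}$, where $T(t,\xi)=(\lambda r(t)/f(r(t)))(\partial_t+\dot r(t)\partial_r)$ is the future-directed unit timelike tangent to the lifted profile curve and is smooth in $t$ (even at radial turning points, precisely because we parametrize by $t$ rather than by $r$). A future null vector $v=aT+w\in T_pP^n$ satisfies $a^2=|w|^2_{r^2\Omega}$ with $a>0$, so it is uniquely encoded by $a\in\R^+$ and the unit angular direction $\hat w\in T^1_\xi\mathbb{S}^{n-1}$. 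Assembling these fibers over $(t,\xi)$, the map
\begin{align*}
\Phi\colon \R\times T^1\mathbb{S}^{n-1}\times\R^+ &\longrightarrow TP^n,\\
(t,\xi,\hat w,a)&\longmapsto a\bigl(T(t,\xi)+r(t)^{-1}\hat w\bigr)
\end{align*}
is a smooth bijection onto the set of future null tangent vectors of $P^n$; composition with the musical isomorphism $\sharp$ transports this to phase space and yields a diffeomorphism with $T^1\mathbb{S}^{n-1}\times\R^2$.

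Third, for a maximal principal null hypersurface $H^n$, the tangent space at each point contains a unique null direction up to scale (the generator $\dot\gamma$), so the future null fiber is simply $\R^+\cdot\dot\gamma$, and the analogous assembly yields a diffeomorphism $\R\times\mathbb{S}^{n-1}\times\R^+\cong\mathbb{S}^{n-1}\times\R^2$. I expect the main technical obstacle to lie in verifying that the maps $\Phi$ are actually embeddings (not merely smooth injective immersions), so that the respective lifts are genuinely embedded submanifolds of phase space. This should follow from the fact that $\Phi$ admits a smooth inverse obtained by reading off $(a,\hat w)$ from the unique decomposition $v=aT+w$, combined with the smoothness of the profile $r(t)$ and of the metric coefficient $f$.
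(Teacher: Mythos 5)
Your argument is correct, but it is organized differently from the paper's. The paper fixes a time slice $\{t=t_0\}$, characterizes every future null geodesic tangent to $\photo$ by its initial data there (energy $E>0$, unit angular direction $(\xi_0,\dot\xi_0)\in T^1\Sphere^{n-1}$, and $\dot r_0$ determined by the photon surface ODE up to the sign fixed by the profile), and then obtains the second $\R$ factor from the geodesic flow parameter $s$; the principal null case is handled by restricting to $E=1$ and $t=t_0$ and again flowing. You instead first identify $\photo\cong\R\times\Sphere^{n-1}$ via the graph $t\mapsto(t,r(t))$ guaranteed by \Cref{thm:sphsymm}, and then parametrize the future null cone in each tangent space $T_p\photo$ directly by the orthogonal splitting $v=aT+w$, $a=\vert w\vert$, which trades the paper's flow parameter $s$ for the coordinate time $t$ and the energy $E$ for the scale $a$. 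What your route buys is that you never need the (unproven, though true) assertion that every null geodesic on a maximal $\photo$ reaches the slice $t=t_0$, nor any smoothness properties of the geodesic flow; your $\Phi$ is an explicit bundle trivialization with an explicit inverse $a=-\mathfrak{g}(v,T)$, $\hat w=r(v-aT)/a$, which extends smoothly to an open neighborhood of the null cone bundle in $T\photo$ and hence settles the embeddedness worry you raise at the end. What the paper's route buys is that it makes the invariance of the lift under the geodesic flow manifest, which ties the computation to the generating-null-geodesic picture of \Cref{prop:generating} and to the partition statement in the subsequent corollary. Your computation of the unit timelike tangent $T=\frac{\lambda r}{f(r)}(\partial_t+\dot r\,\partial_r)$ is consistent with the photon surface ODE (one checks $\mathfrak{g}(\partial_t+\dot r\partial_r,\partial_t+\dot r\partial_r)=-f^2/(\lambda^2r^2)$), and its smoothness at turning points is exactly the smoothness of $r(t)$ asserted in \Cref{thm:sphsymm}, so no gap there; the principal null case is handled the same way in both proofs.
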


\Cref{thm:topology} generalizes Bugden's result on the topology of the lift of the future null bundle over the photon sphere in positive mass Schwarzschild spacetimes~\cite{Bugden} to general spherically symmetric photon surfaces in spacetimes of class $\mathcal{S}$.

\begin{proof}
We will consider the lifts to the tangent bundle which is equivalent to considering the lifts to the cotangent bundle via the canonical metric isomorphism. It is clear that both types of lifts are smooth submanifolds: the tangent bundle of any hypersurface $P^n$ is, of course, a smooth submanifold of $T\mathfrak L^{n+1}$, and the null section in $TP^n$ is in turn a smooth submanifold of $TP^n$. It remains to prove the topological claims about the lifts. We focus on the future case, noting that the past case is completely analogous.\\[-5ex]

\paragraph*{Photon surface case.} Let $P^n$ be a maximal spherically symmetric photon surface with umbilicity factor $\lambda>0$. We fix a time $t_0$ and denote the coordinate radius of the photon surface $P^n$ in the slice $\{t=t_0\}$ by $r_0$. Since all null geodesics on $P^n$ exist at the coordinate time $t_0$, we may parametrize them such that they start at $t(0)=t_0$. We denote differentiation with respect to the affine parameter $s$ of the null geodesics by a dot, and evaluation of any quantity at parameter time $s=0$ by a subscript $0$. By existence and uniqueness of geodesics, every maximally extended geodesic $\gamma$ is uniquely characterized by $\left(\gamma (0), \dot \gamma (0)\right)\in T_{\gamma(0)}\mathfrak{L}^{n+1}$. A future directed maximally extended null geodesic tangent to $P^{n}$ and parametrized such that $t(0)=t_{0}$ obeys $\left(\gamma (0), \dot \gamma (0)\right)=(t_0, r_0, \xi_0, -E, \dot r_0, \dot \xi_0 )$ with $(\xi_{0},\dot{\xi}_{0})\in T\mathbb{S}^{n-1}$ and $\dot{r}_{0}\in\R$, where $E=-\frac{dt}{ds}=-\mathfrak g (\partial_t, \dot \gamma)>0$ is the energy of $\gamma$ (see also \Cref{subsec:generating}). Moreover, the null condition implies that
\begin{align}\label{eq:nullcondition}
\frac{\dot{r}_{0}^{2}}{f(r_{0})r_{0}^{2}}+\Omega(\dot{\xi}_{0},\dot{\xi}_{0})&=\frac{f(r_{0})E^{2}}{r_{0}^{2}},
\end{align}
and the condition of being tangent to the photon surface $P^{n}$ informs us that either the photon sphere condition~\eqref{eq:photonsphere} holds and $\a=\frac{\sqrt{f(r_{0})}}{r_{0}}$, or the photon surface ODE~\eqref{eq:ODE3} holds for the radial profile of $P^{n}$. In the photon sphere case, we have $\dot{r}_{0}=0$ and~\eqref{eq:nullcondition} simplifies to 
\begin{align}\label{eq:nullconditionPSphere}
\Omega(\dot{\xi}_{0},\dot{\xi}_{0})&=\frac{f(r_{0})^{2}E^{2}}{\lambda^{2}r_{0}^{4}}.
\end{align}
In the other case, we have $\dot r_{0} =\left.\frac{dr}{dt}\right\vert_{t_{0}}\left.\frac{dt}{ds}\right\vert_{s=0}=-\left.\frac{dr}{dt}\right\vert_{t_{0}} E$ and hence, by~\eqref{eq:ODE3},
\begin{align}\label{eq:dotr0}
\dot{r}_{0}^{2}&=\frac{f(r_{0})^{2}E^{2}\left(\lambda^{2}r_{0}^{2}-f(r_{0})\right)}{\lambda^{2}r_{0}^{2}}.
\end{align}
Plugging this into~\eqref{eq:nullcondition} also gives~\eqref{eq:nullconditionPSphere}. Note that the photon sphere case $\dot{r}_{0}=0$, $\a=\frac{\sqrt{f(r_{0})}}{r_{0}}$ can be viewed as a special case of~\eqref{eq:dotr0}. Finally, we abbreviate $\sigma\definedas\operatorname{sgn}\dot r_0$, with $\sigma=0$ if $\dot{r}_{0}=0$.

Now let $\Phi\colon \mathbb R\times T\mathfrak L^{n+1}\rightarrow T\mathfrak L^{n+1}$ denote the geodesic flow on the tangent bundle $T\mathfrak L^{n+1}$ of $\mathfrak L^{n+1}$. Since the lift $\mathcal{N}\!^{+}\!P^n$ of the future null bundle over a photon surface in the tangent bundle is invariant under $\Phi$ and by the above characterization of future null geodesics along $P^n$, the map
\begin{align*}
T^{1}\mathbb{S}^{n-1} \times \mathbb R\times \mathbb R^{+} &\rightarrow \mathcal{N}\!^{+}\!P^n,\\
(\xi, \dot \xi, s, E) &\mapsto \Phi_s  \left(t_0, r_0, \xi, -E, \sigma \frac{f(r_0)E\sqrt{\lambda^2 r_0^2-f(r_0)}}{\lambda r_0}, 
\frac{f(r_0)E}{\lambda r_0^2}\dot{\xi}\right)
\end{align*}
is clearly a homeomorphism. This shows that $\mathcal{N}\!^{+}\!P^n$ has topology $T^{1}\mathbb{S}^{n-1}\times\mathbb{R}^{2}$ as claimed.\\[-5ex]

\paragraph*{Principal null hypersurface case.} Now assume that $P^n$ is a maximal spherically symmetric principal null hypersurface. The lift $\mathcal{N}\!^{+}\!P^n$ of the future null bundle over $P^{n}$ is invariant unter the rescaling $\left(t,r, \xi, \dot t, \dot r, \dot \xi \right)\mapsto \left(t,r, \xi, \alpha \dot t, \alpha \dot r, \alpha \dot \xi \right)$ for $\alpha >0$ and under the geodesic flow $\Phi$. Hence we may focus on a slice of constant energy $E=1$ and constant time $t=t_0$. Also recall that there is only one null direction tangent to a null hypersurface. Thus the topology of $\mathcal{N}\!^{+}\!P^n$ is $\left(P^n\cap\{t=t_0\}\right)\times \mathbb R^2$, and since by definition of spherically symmetric principal null hypersurfaces $P^n\cap\{t=t_0\}$ is homeomorphic to $\mathbb S^{n-1}$, the claim follows.
\end{proof}

\begin{Cor}
The null section of the phase space of a spacetime in class $\mathcal{S}$ is partitioned, but not smoothly foliated, by the canonical lifts of the null bundles over all maximal spherically symmetric photon surfaces together with the canonical lifts of the null bundles over all maximal principal null hypersurfaces.
\end{Cor}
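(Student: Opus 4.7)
The partition claim is the content of \cite[Proposition 3.18]{cederbaum2019photon}, so only the ``not smoothly foliated'' part requires proof. The plan is to exploit the dimension mismatch between the two types of leaves, which is handed to us directly by \Cref{thm:topology}, together with the standard fact that the leaves of any smooth (regular) foliation of a smooth manifold must all have the same dimension.

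First, I would invoke \Cref{thm:topology}: the lift of the future (or past) null bundle over any maximal spherically symmetric photon surface is a smooth submanifold of phase space homeomorphic to $T^{1}\mathbb{S}^{n-1}\times\mathbb{R}^{2}$ and therefore of real dimension $(2n-3)+2=2n-1$, while the lift of the future (or past) null bundle over any maximal principal null hypersurface is a smooth submanifold homeomorphic to $\mathbb{S}^{n-1}\times\mathbb{R}^{2}$ and hence of real dimension $(n-1)+2=n+1$. Next, I would note that in the dimensional range considered throughout the paper, $n+1\geq 4$, i.e.\ $n\geq 3$, one has $2n-1\neq n+1$, so that the two families of lifts have distinct real dimensions.

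To complete the argument, I would verify that both families actually occur in every spacetime $(\R\times M^n,\mathfrak{g})\in\mathcal{S}$: purely radial null geodesics always exist and, being maximally extended, generate maximal principal null hypersurfaces, whereas null geodesics with nonzero angular momentum also always exist and, by \Cref{prop:generating}, generate maximal spherically symmetric photon surfaces. Hence the partition genuinely contains leaves of two different dimensions and therefore cannot be a smooth foliation.

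I do not anticipate a true obstacle here; the only mildly delicate point is to recall the precise definition of ``smooth foliation'' being used, under which equal leaf dimension is built in. The statement is then essentially a formal consequence of the topological identifications established in \Cref{thm:topology}.
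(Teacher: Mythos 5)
Your proposal is correct and takes essentially the same route as the paper, which states the corollary without explicit proof as an immediate consequence of \Cref{thm:topology}: the photon-surface lifts have dimension $2n-1$ while the principal-null-hypersurface lifts have dimension $n+1$, and these differ for $n\geq3$, so the leaves of the partition cannot all have equal dimension as a smooth foliation would require. Your additional check that both types of leaves actually occur (radial null geodesics and null geodesics with $\ell>0$ always exist) is a sensible, if routine, completion of the argument.
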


\section{Uniqueness of subextremal equipotential photon\\ surfaces in electrostatic electro-vacuum spacetimes}\label{sec:unique}
This section is dedicated to proving the uniqueness result for quasi-locally subextremal equipotential photon surfaces in asymptotically flat electrostatic electro-vacuum spacetimes stated in \Cref{thm:uniqueness}. 
In \Cref{sec:prelimunique}, we will collect a few useful properties of such photon surfaces. In \Cref{sec: H>0}, we will carefully study the geometry of electrostatic spacetimes near equipotential photon surfaces and prove positivity of the mean curvature of such photon surfaces, see \Cref{prop: H>0}. Finally, in \Cref{sec:proofuniqueness}, we will prove  \Cref{thm:uniqueness}.

\subsection{Equipotential photon surfaces and their quasi-local properties}\label{sec:prelimunique}
Let us proceed to making some useful definitions and showing some quasi-local properties of equipotential photon surfaces, generalizing results from~\cite{CederPhoto,cedergal, YazaLazov, Cedrgal2, cederbaum2019photon,jahns2019photon} but also bringing in the new idea to study the evolution of $u$ in \Cref{prop:evou}.

\begin{defprop}[Canonical time slices, function $u$, unit normals $\eta$ and $\nu$, vertical vector field $Z$]\label{defprop:u}
Let $(\mathfrak{L}^{n+1},\mathfrak{g})$ be a Lorentzian manifold, $\photo\hookrightarrow\mathfrak{L}^{n+1}$ a photon surface. Let $\eta$ denote a unit normal to $\photo$ in $\mathfrak{L}^{n+1}$ (choose $\eta$ to point outward to the asymptotically flat end if $(\mathfrak{L}^{n+1},\mathfrak{g})$ is asymptotically flat and $\photo$ is a connected component of $\partial\mathfrak{L}^{n+1}$) and let $\mathfrak{h}$ and $\mathfrak{H}$ denote the second fundamental form and mean curvature of $\photo$ with respect to $\eta$, respectively. Assume that $(\mathfrak{L}^{n+1},\mathfrak{g})$ is (electro-)static and that $\photo$ is equipotential. Then there is a maximal open interval $\mathcal{T}\subseteq\R$ such that the \emph{canonical time slice of $\photo$}, $\surf(t)\definedas \photo\cap\slice(t)$, is a smooth $n-1$-dimensional submanifold for each $t\in\mathcal{T}$. Set $u\colon \mathcal{T}\to\R\colon t\mapsto N\vert_{\surf(t)}$ and call $u$ the \emph{lapse function along $\photo$}. Then $u$ is smooth and positive. Let $\gamma=(t,x)\colon I\subseteq\mathcal{T} \to\photo$ be a smooth curve with $t(s)=s$ for all $s\in I$ and such that $\dot{\gamma}(s)\perp T\surf(s)$ for all $s\in I$. If $\photo$ is non-degenerate~then 
\begin{align}\label{eq:eta}
\eta&=\frac{\nu+\frac{\dot{u}}{u^{2}\nu(N)}\partial_{t}}{\sqrt{1-\frac{\dot{u}^{2}}{u^{2}\nu(N)^{2}}}},\\\label{eq:Z}
Z&\definedas\dot{\gamma}(t)=\partial_{t}+\frac{\dot{u}}{\nu(N)}\nu
\end{align}
for $t\in I$, where $\dot{u}\definedas \frac{d}{ds}(N\circ\gamma)$ denotes the \emph{time-derivative of $u$}, $\nu$ denotes the induced \emph{(spatial) unit normal $\nu$ to $\surf(t)$}, $\nu(N)$ denotes the $\nu$-derivative of $N$, and $Z$ denotes the \emph{vertical vector field} along $\photo$. Finally, for any smooth function $w\colon\photo\to\R$ which is constant on each $\surf(t)$, we have 
\begin{align}
\dot{w}\definedas \frac{d}{ds}(w\circ\gamma)=Z(w).
\end{align}
\end{defprop}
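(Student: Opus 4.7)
The plan is to establish the claims in the order they appear. First I would show that $\surf(t)\definedas\photo\cap\slice(t)$ is a smooth $(n-1)$-submanifold of $\mathfrak{L}^{n+1}$ whenever non-empty, by a transversality argument: the time slice $\slice(t)$ is spacelike while $\photo$ is timelike, so $T_p\photo$ contains a timelike vector that does not lie in $T_p\slice(t)=(\partial_t)^\perp$. Equivalently, $dt|_{T_p\photo}\neq 0$, so the level sets of $t|_\photo$ form a smooth codimension-one foliation of a neighborhood of any point in $\photo$; hence the set of times $t$ for which $\surf(t)$ is non-empty is open, and $\mathcal{T}$ can be taken to be the maximal such open interval containing a chosen reference time. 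Because $\photo$ is equipotential, $N$ is constant along each $\surf(t)$, so $u(t)\definedas N|_{\surf(t)}$ is unambiguous, and its smoothness and positivity are inherited from $N$ and the smooth $t$-dependence of $\surf(t)$.

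Next I would derive the formula for $Z$. The orthogonal complement of $T\surf(s)$ inside $T_{\gamma(s)}\photo$ is one-dimensional; since $T\photo$ is Lorentzian while $T\surf(s)$ is spacelike, this complement is automatically timelike, so $\gamma$ is timelike. Writing $\dot\gamma = \alpha_0\,\partial_t+\alpha_1\,\nu$ -- the spatial part must be parallel to $\nu$ because it is orthogonal to $T\surf(s)$ inside $T\slice(s)$ -- and imposing $t(s)=s$ forces $\alpha_0=1$, so $Z=\partial_t+\alpha\,\nu$ for a single scalar $\alpha$. Since $N$ is time-independent and $u(s)=N(\gamma(s))$,
\begin{align*}
\dot u \;=\; Z(N) \;=\; \alpha\,\nu(N).
\end{align*}
Non-degeneracy of $\photo$ ($dN\neq 0$) combined with $N$ being $t$-independent and constant on each $\surf(t)$ forces $dN$ to point in the $\nu$-direction with $\nu(N)\neq 0$. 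Hence $\alpha=\dot u/\nu(N)$, yielding \eqref{eq:Z}.

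Finally, for $\eta$, I would write $\eta = a\,\partial_t+b\,\nu$ -- again with no $T\surf$ component, by orthogonality to each $\surf(s)$ -- and impose $\mathfrak{g}(\eta,Z)=0$ and $\mathfrak{g}(\eta,\eta)=1$. These give $aN^2=b\alpha$ and $b^2-a^2N^2=1$; eliminating $a$, substituting $N=u$ on $\photo$ and $\alpha=\dot u/\nu(N)$, and choosing the sign of $b$ to match the prescribed orientation of $\eta$ produce exactly \eqref{eq:eta}. The discriminant $1-\dot u^2/(u^2\,\nu(N)^2)$ is strictly positive precisely because $Z$ is timelike, so the square root is well defined. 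The trailing identity $\dot w=Z(w)$ is just the chain rule applied to $w\circ\gamma$. The only non-trivial input is $\nu(N)\neq 0$, which is exactly why non-degeneracy appears as a hypothesis; the remaining steps are linear-algebra computations in the warped-product frame $(\partial_t,\nu,T\surf(s))$.
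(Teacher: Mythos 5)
Your proposal is correct and follows essentially the same route as the paper: transversality of the timelike $\photo$ with the spacelike slices $\slice(t)$ for smoothness of $\surf(t)$ and openness of $\mathcal{T}$, inheritance of smoothness and positivity of $u$ from $N$, the observation that non-degeneracy together with $\partial_t N=0$ and constancy of $N$ on $\surf(t)$ forces $\nu(N)\neq0$, and the chain rule for $\dot w=Z(w)$. The only difference is that the paper obtains \eqref{eq:eta} and \eqref{eq:Z} by citing the corresponding formulas from \cite{cederbaum2019photon} (remarking that they "can also be re-derived by an easy computation"), whereas you carry out that computation explicitly in the frame $(\partial_t,\nu,T\surf(s))$ — and your derivation is correct, including the identification of the positivity of $1-\dot u^{2}/(u^{2}\nu(N)^{2})$ with the timelikeness of $Z$.
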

\begin{proof}
As $\photo$ is timelike, it meets $\slice(t)$ transversally, hence for each $t\in\R$, $\surf(t)$ is either empty or a smooth $n-1$-dimensional manifold. As $\photo$ is connected, the set $\mathcal{T}\definedas\{t\in\R\,\vert\,\surf(t)\neq\emptyset\}$ is open, connected and hence an interval and maximal. The facts that $u$ is smooth and positive follow from smoothness of $\photo$ and smoothness and positivity of $N$, respectively. Formulas \eqref{eq:eta}, \eqref{eq:Z} are asserted\footnote{There is a typo in \cite[Formula (4.6)]{cederbaum2019photon}, the formula printed here is correct.} in \cite[Formulas (4.6), (4.5)]{cederbaum2019photon} for outward directed photon surfaces and can be applied here because $\nu(N)^{2}=\vert dN\vert^{2}\neq0$ by non-degeneracy of $\photo$; they can also be re-derived by an easy computation. To see that $\dot{w}=Z(w)$, apply the chain rule to $w\circ\gamma$.
\end{proof}

\begin{prop}[Quasi-local properties of equipotential photon surfaces \rm{I}]\label{prop:mathfrakH}
Let $(\mathfrak{L}^{n+1},\mathfrak{g})$ be a Lorentzian manifold, $\photo\hookrightarrow\mathfrak{L}^{n+1}$ a photon surface (not necessarily equipotential). Then its mean curvature  $\mathfrak{H}$ is constant along $\photo$ if and only if
\begin{align}\label{eq:mathfrakHconst}
\mathfrak{Ric}(\cdot\vert_{T\photo},\eta)&=0.
\end{align}
If $(\mathfrak{L}^{n+1}=\R\times \slice,\mathfrak{g}=-N^{2}dt^{2}+g)$ is static and $\photo$ is equipotential, \eqref{eq:mathfrakHconst} is equivalent to
\begin{align}
\begin{split}\label{eq:mathfrakHconst2}
N\Ric(\cdot\vert_{T\surf(t)},\nu)-\nabla^{2}N(\cdot\vert_{T\surf(t)},\nu)&=0,\\
\left(N\Ric(\nu,\nu)-\nabla^{2}N(\nu,\nu)-\triangle N\right)\dot{u}(t)&=0
\end{split}
\end{align}
on $\surf(t)$ for all $t\in\mathcal{T}$. 

In particular, if there is an electric potential $\Psi\colon\slice\to\R$ such that $(\mathfrak{L}^{n+1},\mathfrak{g},\Psi)$ is electrostatic electro-vacuum, we have $\mathfrak{H}=\text{const}.$ along any equipotential photon surface $\photo$.
\end{prop}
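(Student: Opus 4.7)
The first equivalence I would extract directly from the Codazzi--Mainardi equation, using that every photon surface is totally umbilic by \Cref{prop:umbilic}. Writing $\mathfrak{h} = \tfrac{\mathfrak{H}}{n}\,p$ with $p$ the induced metric on $\photo$, one has $(\nabla^{\photo}_X \mathfrak{h})(Y,Z) = \tfrac{X(\mathfrak{H})}{n}\, p(Y,Z)$; inserting this into Codazzi and tracing with $p^{YZ}$ collapses the left-hand side to $\tfrac{n-1}{n}\, X(\mathfrak{H})$, while the right-hand side becomes $\mathfrak{Ric}(X,\eta)$, up to a conventional sign, after using the Riemann symmetries and that the missing normal direction contributes nothing to the trace by antisymmetry. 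Thus $d\mathfrak{H}|_{T\photo}\equiv 0$ along $\photo$ is equivalent to $\mathfrak{Ric}(\cdot|_{T\photo},\eta) \equiv 0$ along $\photo$, proving \eqref{eq:mathfrakHconst}.

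For the static equipotential reformulation, I plan to use the standard warped-product decomposition
\begin{align*}
\mathfrak{Ric}(X,Y) = \Ric(X,Y) - \tfrac{1}{N}\nabla^{2}N(X,Y),\quad \mathfrak{Ric}(X, \partial_t) = 0,\quad \mathfrak{Ric}(\partial_t, \partial_t) = N\,\Delta N
\end{align*}
valid for spatial $X, Y$. In the non-degenerate case, \Cref{defprop:u} provides a natural spanning set of $T\photo$, namely $T\surf(t)$ together with the vertical vector $Z$ from \eqref{eq:Z}, and realizes $\eta$ as in \eqref{eq:eta}. Testing $\mathfrak{Ric}(X, \eta) = 0$ with $X \in T\surf(t)$, the $\partial_t$-component of $\eta$ drops out by $\mathfrak{Ric}(X, \partial_t) = 0$, leaving $N\Ric(X, \nu) - \nabla^2 N(X, \nu) = 0$. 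Testing with $X = Z$ and expanding $\mathfrak{Ric}(Z, \eta)$ bilinearly, the cross-terms $\mathfrak{Ric}(\partial_t, \nu)$ vanish and the surviving contributions, combined using $u = N|_{\surf(t)}$ and the precise relation between the $\partial_t$- and $\nu$-coefficients in \eqref{eq:eta}--\eqref{eq:Z}, assemble into a nonzero multiple of $\dot{u}$ times the scalar $N\Ric(\nu,\nu) - \nabla^2 N(\nu,\nu) \pm \Delta N$, yielding the second identity. The degenerate case, where \eqref{eq:eta} and \eqref{eq:Z} are not defined, can be dispatched separately using a spatial frame along $\surf(t)$ and the fact that $dN \equiv 0$ along $\photo$ renders the $Z$-component of the condition vacuous.

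Finally, to conclude that $\mathfrak{H}$ is constant along equipotential photon surfaces in the electrostatic electro-vacuum case, I would verify \eqref{eq:mathfrakHconst2} directly from \eqref{EEVE3} and \eqref{EEVE1}. Equipotentiality together with time-independence of $\Psi$ forces $d\Psi|_{T\surf(t)} = 0$, so $d\Psi = \nu(\Psi)\,\nu^{\flat}$ and $|d\Psi|^{2} = \nu(\Psi)^{2}$ pointwise on $\surf(t)$. Evaluating \eqref{EEVE3} on $X \in T\surf(t)$ and $\nu$ makes both the $d\Psi\otimes d\Psi$ and the $g(X,\nu)$ terms vanish, yielding the first identity in \eqref{eq:mathfrakHconst2}. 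Evaluating \eqref{EEVE3} on $\nu, \nu$ and simplifying the $d\Psi$-contributions using \eqref{EEVE1} reproduces exactly the scalar combination appearing in the second identity, so both hold irrespective of $\dot u$. The main technical care I anticipate is in the middle step, where one must verify that the cross-terms in the bilinear expansion of $\mathfrak{Ric}(Z, \eta)$ cancel precisely and reassemble into the stated scalar multiple of $\dot u$; this hinges crucially on the identity $u = N|_{\surf(t)}$ supplied by \Cref{defprop:u} together with the compatible normalizations of $\eta$ and $Z$.
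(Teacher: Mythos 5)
Your proposal is correct and follows essentially the same route as the paper: the contracted Codazzi identity for totally umbilic hypersurfaces gives the first equivalence, the warped-product Ricci decomposition together with the spanning set $\{T\surf(t), Z\}$ and the formulas for $\eta$ and $Z$ gives the second, and the electro-vacuum equations with $\vert d\Psi\vert^{2}=\nu(\Psi)^{2}$ verify \eqref{eq:mathfrakHconst2} directly. Your extra care about the degenerate case is a reasonable refinement the paper glosses over, but it does not change the argument.
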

\begin{proof}
It follows from the contracted Codazzi identity that $\mathfrak{H}=\text{const}.$ is equivalent to \eqref{eq:mathfrakHconst} for totally umbilic hypersurfaces in arbitrary semi-Riemannian manifolds. Standard formulas for warped products tell us that
\begin{align}
\begin{split}\label{RIC}
\mathfrak{Ric}(\partial_{t},\partial_{t})&=N\triangle N,\\
\mathfrak{Ric}(\cdot\vert_{T(\slice(t))},\partial_{t})&=0,\\
\mathfrak{Ric}(\cdot\vert_{T(\slice(t))},\cdot\vert_{T(\slice(t))})&=\Ric-\frac{1}{N}\nabla^{2}N.
\end{split}
\end{align} 
A direct computation using that $Z$ and vector fields parallel to $T\surf(t)$ for each $t\in\mathcal{T}$ span $\Gamma(T\photo)$ shows equivalence of \eqref{eq:mathfrakHconst} and \eqref{eq:mathfrakHconst2} for equipotential photon surfaces in static spacetimes. Finally, using the electrostatic electro-vacuum equations~\eqref{EEVE3}--\eqref{EEVE1} together with $\vert d\Psi\vert^{2}=\nu(\Psi)^{2}$ because $\Psi=\text{const.}$ along $\surf(t)$ shows that \eqref{eq:mathfrakHconst2} is automatically satisfied in electro-vacuum and hence $\mathfrak{H}=\text{const}.$ along $\photo$.
\end{proof}

\begin{prop}[Quasi-local properties of equipotential photon surfaces \rm{II}]\label{prop:evou}
Let $(\mathfrak{L}^{n+1},\mathfrak{g})$ be a static spacetime, $\photo\hookrightarrow\mathfrak{L}^{n+1}$ a non-degenerate equipotential photon surface. Then 
\begin{align}\label{eq:evo}
\frac{u\,\mathfrak{H}}{n}\sqrt{1-\frac{\dot{u}^{2}}{u^{2}\nu(N)^{2}}}^{\,3}&=\left(1-\frac{\dot{u}^{2}}{u^{2}\nu(N)^{2}}\right)^{2}\nu(N)+\frac{1}{u}\,Z\!\left(\frac{\dot{u}}{\nu(N)}\right)-\frac{\dot{u}^{2}}{u^{2}\nu(N)}
\end{align}
holds along $\photo$. If there is a sequence $(t_{i})_{i\in\mathbb{N}}\subset\mathcal{T}$ with accumulation point $t_{0}\in\mathcal{T}$ such that $\dot{u}(t_{i})=0$ for all $i\in\mathbb{N}$ then
\begin{align}\label{eq:evoacc}
u(t_{0})\mathfrak{H}(t_{0})&=n\nu(N)(t_{0}).
\end{align}
If $\photo$ is a photon sphere, this gives
\begin{align}\label{eq:evophoto}
N\mathfrak{H}&=n\nu(N)
\end{align}
on $\photo$.
\end{prop}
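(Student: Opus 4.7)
The plan is to exploit the total umbilicity condition $\mathfrak{h} = (\mathfrak{H}/n)\,p$, which holds for any photon surface by \Cref{prop:umbilic}, applied to the pair $(Z,Z)$ formed from the vertical vector field introduced in \Cref{defprop:u}. Since $Z \in T\photo$, one has $p(Z,Z) = \mathfrak{g}(Z,Z)$, and using \eqref{eq:Z} one computes directly that
\begin{align*}
\mathfrak{g}(Z,Z) \;=\; -u^{2} + \frac{\dot u^{2}}{\nu(N)^{2}} \;=\; -u^{2}\left(1-\frac{\dot u^{2}}{u^{2}\nu(N)^{2}}\right).
\end{align*}
Thus \eqref{eq:evo} will follow once one computes $\mathfrak{h}(Z,Z)$ explicitly as the (signed) projection of $\nabla_{Z}Z$ onto the unit normal $\eta$ given by \eqref{eq:eta}, and equates the two expressions.

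For the calculation of $\nabla_{Z}Z$, I would expand $Z = \partial_{t} + W\nu$ with $W \definedas \dot u/\nu(N)$ and use the standard Christoffel identities for a static warped product $\mathfrak{g} = -N^{2}dt^{2}+g$: namely $\nabla_{\partial_{t}}\partial_{t} = N\operatorname{grad} N$, $\nabla_{\nu}\partial_{t} = \nabla_{\partial_{t}}\nu = (\nu(N)/N)\partial_{t}$, and $\nabla_{\nu}\nu$ coincides with the spatial Levi-Civita derivative $\nabla^{g}_{\nu}\nu$, which is $g$-orthogonal to $\nu$ and $\mathfrak{g}$-orthogonal to $\partial_{t}$. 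The key input from equipotentiality enters here: since $N|_{\Sigma(t)}$ is constant for each $t$, $\operatorname{grad} N$ is $g$-perpendicular to $T\Sigma(t)$ along $\photo$, so $\operatorname{grad} N = \nu(N)\,\nu$ on $\photo$ (with $\nu$ suitably oriented). Assembling these ingredients produces a clean expression for $\nabla_{Z}Z$ in the frame $\{\partial_{t}, \nu, \nabla^{g}_{\nu}\nu\}$, and taking the $\mathfrak{g}$-inner product with $\eta$ (the $\nabla^{g}_{\nu}\nu$ contribution drops out by the orthogonalities just noted) yields a scalar expression involving only $\nu(N)$, $\dot u$, $u$, and $Z(W)$. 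Equating this with the umbilicity formula $\mathfrak{h}(Z,Z) = (\mathfrak{H}/n)\mathfrak{g}(Z,Z)$ and simplifying using the abbreviation $\alpha \definedas \sqrt{1-\dot u^{2}/(u^{2}\nu(N)^{2})}$ then delivers \eqref{eq:evo}.

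The two consequences are obtained from \eqref{eq:evo} by substituting special values. For \eqref{eq:evoacc}, continuity of $\dot u$ forces $\dot u(t_{0}) = 0$; moreover, since $W = \dot u/\nu(N)$ vanishes at each $t_{i}$, a Rolle-type argument between consecutive zeros of $W$ (passing to a subsequence if necessary) produces a sequence of zeros of $\dot W = Z(W)$ also accumulating at $t_{0}$, and continuity then gives $Z(W)(t_{0}) = 0$ as well. Plugging $\dot u(t_{0}) = 0$ and $Z(W)(t_{0}) = 0$ into \eqref{eq:evo} at $t_{0}$ immediately yields $u(t_{0})\mathfrak{H}(t_{0}) = n\nu(N)(t_{0})$. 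For \eqref{eq:evophoto}, a photon sphere has $N|_{\photo}$ independent of $t$ by definition, hence $\dot u \equiv 0$ and $Z(W) \equiv 0$, and \eqref{eq:evo} collapses to $N\mathfrak{H} = n\nu(N)$. The main technical hurdle is the algebraic bookkeeping required to project the mixed-type vector $\nabla_{Z}Z$ onto the mixed-type normal $\eta$: one must carefully track the cross terms arising from the $\partial_{t}$-component of $\eta$ paired against the $\partial_{t}$-component of $\nabla_{Z}Z$ and ensure that all powers of $\alpha$, $u$, and $\nu(N)$ combine into the compact form printed in \eqref{eq:evo}.
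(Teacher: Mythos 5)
Your proposal follows essentially the same route as the paper's proof: both compute $\mathfrak{h}(Z,Z)$ twice, once via total umbilicity combined with $\mathfrak{g}(Z,Z)=-u^{2}\bigl(1-\tfrac{\dot u^{2}}{u^{2}\nu(N)^{2}}\bigr)$, and once by projecting $\nabla_{Z}Z$ onto the normal $\eta$ from \eqref{eq:eta} using the static-metric connection identities and the equipotential fact $\operatorname{grad}N=\nu(N)\,\nu$ along $\photo$. The only immaterial deviation is how $Z\bigl(\tfrac{\dot u}{\nu(N)}\bigr)(t_{0})=0$ is obtained for \eqref{eq:evoacc} --- your Rolle-type argument versus the paper's difference-quotient computation of $\ddot u(t_{0})=0$.
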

\begin{proof}
From total umbilicity of $\photo$, we know that 
\begin{align*}
\mathfrak{h}(Z,Z)&=\frac{\mathfrak{H}}{n}\mathfrak{g}(Z,Z)=-\frac{u^{2}\mathfrak{H}}{n}\left(1-\frac{\dot{u}^{2}}{u^{2}\nu(N)^{2}}\right)^{2}.
\end{align*}
Now extend $\nu$ smoothly as a unit vector field perpendicular to $\partial_{t}$ to a neighborhood of $\photo$. Using this, we compute from \eqref{eq:eta}
\begin{align*}
-\sqrt{1-\frac{\dot{u}^{2}}{u^{2}\nu(N)^{2}}}\,\mathfrak{h}(Z,Z)&=\sqrt{1-\frac{\dot{u}^{2}}{u^{2}\nu(N)^{2}}}\,\mathfrak{g}(\,^{\mathfrak{g}}\nabla_{Z}Z,\eta)=\mathfrak{g}(\,^{\mathfrak{g}}\nabla_{Z}Z,\nu)+\frac{\dot{u}}{u^{2}\nu(N)}\mathfrak{g}(\,^{\mathfrak{g}}\nabla_{Z}Z,\partial_{t})\\
&=\mathfrak{g}(\,^{\mathfrak{g}}\nabla_{Z}\partial_{t},\nu)+Z\left(\frac{\dot{u}}{\nu(N)}\right)\underbrace{\mathfrak{g}(\nu,\nu)}_{=1}+\frac{\dot{u}}{\nu(N)}\underbrace{\mathfrak{g}(\,^{\mathfrak{g}}\nabla_{Z}\nu,\nu)}_{=\frac{1}{2}Z\left(\mathfrak{g}(\nu,\nu)\right)=0}\\
&\quad+\frac{\dot{u}}{u^{2}\nu(N)}\left[Z\left(\underbrace{\mathfrak{g}(Z,\partial_{t})}_{=-u^{2}}\right)-\mathfrak{g}(Z,\,^{\mathfrak{g}}\nabla_{Z}\partial_{t})\right]
\end{align*}
\begin{align*}
\phantom{\sqrt{-\frac{\dot{u}^{2}}{u^{2}}}\,\mathfrak{h}(Z,Z)}&=\mathfrak{g}(\,^{\mathfrak{g}}\nabla_{Z}\partial_{t},\nu)+Z\left(\frac{\dot{u}}{\nu(N)}\right)-\frac{2\dot{u}^{2}}{u\nu(N)}\\
&\quad-\frac{\dot{u}}{u^{2}\nu(N)}\left[\underbrace{\mathfrak{g}(\partial_{t},\,^{\mathfrak{g}}\nabla_{Z}\partial_{t})}_{=\frac{1}{2}Z(\mathfrak{g}(\partial_{t},\partial_{t}))=\frac{1}{2}Z(-u^{2})=-u\dot{u}}+\frac{\dot{u}}{\nu(N)}\mathfrak{g}(\nu,\,^{\mathfrak{g}}\nabla_{Z}\partial_{t})\right]\\
&=\left(1-\frac{\dot{u}^{2}}{u^{2}\nu(N)^{2}}\right)^{2}\underbrace{\mathfrak{g}(\,^{\mathfrak{g}}\nabla_{Z}\partial_{t},\nu)}_{=u\nu(N)}+Z\left(\frac{\dot{u}}{\nu(N)}\right)+\frac{\dot{u}^{2}}{u\nu(N)}\\
&=\left(1-\frac{\dot{u}^{2}}{u^{2}\nu(N)^{2}}\right)^{2}u\nu(N)+Z\left(\frac{\dot{u}}{\nu(N)}\right)+\frac{\dot{u}^{2}}{u\nu(N)}.
\end{align*}
Taken together, we obtain \eqref{eq:evo}. Now, if $\dot{u}(t_{i})=0$ along a sequence accumulating at $t_{0}$, we find $\dot{u}(t_{0})=0$ by continuity of $\dot{u}$ and $\ddot{u}(t_{0})=0$ by the definition of derivatives. Using that 
\begin{align*}
Z\left(\frac{\dot{u}}{\nu(N)}\right)=\frac{\ddot{u}}{\nu(N)}-\frac{\dot{u}Z(\nu(N))}{\nu(N)^{2}}
\end{align*}
gives \eqref{eq:evoacc} and in particular \eqref{eq:evophoto} if $\photo$ is a photon sphere in which case all $t_{0}\in\R$ are such accumulation times.
\end{proof}

\begin{prop}[Quasi-local properties of equipotential photon surfaces \rm{III}]\label{prop:scal}
Let $(\mathfrak{L}^{n+1},\mathfrak{g},\Psi)$ be an electrostatic electro-vacuum spacetime, $\photo\hookrightarrow\mathfrak{L}^{n+1}$ a non-degenerate equipotential photon surface. Then the scalar curvature $\operatorname{R}_{p}$ of $\photo$ with respect to the induced metric $p$ of $\photo$ satisfies 
\begin{align}\label{eq:Rp}
\operatorname{R}_{p}&=\frac{2(n-2)\vert d\Psi\vert^{2}}{(n-1)u^{2}}+\frac{(n-1)\mathfrak{H}^{2}}{n}.
\end{align}
In particular, $\operatorname{R}_{p}=\text{const.}$ along $\photo$ if and only if $\frac{\vert d\Psi\vert}{u}=\text{const.}$ along $\photo$. In particular, $\operatorname{R}_{p}\vert_{\surf(t)}=\text{const.}$ for each $t\in\mathcal{T}$.
\end{prop}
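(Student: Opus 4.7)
The strategy is to apply the twice-traced Gauss equation to the totally umbilic timelike hypersurface $\photo$. Since the unit normal $\eta$ is spacelike ($\mathfrak{g}(\eta,\eta)=+1$) and total umbilicity $\mathfrak{h}=\frac{\mathfrak{H}}{n}\,p$ gives $\vert\mathfrak{h}\vert_{p}^{2}=\mathfrak{H}^{2}/n$, the equation specializes to
\begin{align*}
\operatorname{R}_{p}&=\mathfrak{R}-2\,\mathfrak{Ric}(\eta,\eta)+\frac{(n-1)\mathfrak{H}^{2}}{n},
\end{align*}
so the task reduces to expressing $\mathfrak{R}$ and $\mathfrak{Ric}(\eta,\eta)$ along $\photo$ in terms of $u$ and $|d\Psi|$, using the electrostatic electro-vacuum equations \eqref{EEVE3}--\eqref{EEVE1} together with the equipotential assumption.

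For $\mathfrak{R}$, I would combine the standard static warped-product identity $\mathfrak{R}=\operatorname{R}_{g}-\frac{2\Delta N}{N}$ with \eqref{EEVE4} and \eqref{EEVE1} to rewrite $\mathfrak{R}$ as a universal multiple of $|d\Psi|^{2}/u^{2}$ on $\photo$ (where $N=u$). For $\mathfrak{Ric}(\eta,\eta)$, I would decompose $\eta$ via \eqref{eq:eta} into its $\nu$- and $\partial_{t}$-components and expand. Since \eqref{RIC} forces the cross term $\mathfrak{Ric}(\nu,\partial_{t})$ to vanish, only $\mathfrak{Ric}(\nu,\nu)=\operatorname{Ric}(\nu,\nu)-\nabla^{2}N(\nu,\nu)/N$ and $\mathfrak{Ric}(\partial_{t},\partial_{t})=N\Delta N$ enter. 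The key conceptual input is then the equipotential identity $|d\Psi|^{2}=\nu(\Psi)^{2}$ on each $\surf(t)$: because $\Psi\vert_{\surf(t)}$ is constant, $\operatorname{grad}\Psi\vert_{\surf(t)}$ is parallel to $\nu$. Inserted into \eqref{EEVE3} at $(\nu,\nu)$ and combined with \eqref{EEVE1}, this collapses $\operatorname{Ric}(\nu,\nu)-\nabla^{2}N(\nu,\nu)/N$ to a clean multiple of $|d\Psi|^{2}/u^{2}$. The two surviving contributions to $\mathfrak{Ric}(\eta,\eta)$ then combine with a telescoping cancellation: after factoring a common coefficient the remaining bracket equals $-(1-\dot u^{2}/(u^{2}\nu(N)^{2}))$, which precisely cancels the $1/(1-\dot u^{2}/(u^{2}\nu(N)^{2}))$ prefactor from the squared normalization in \eqref{eq:eta}, leaving an expression in $|d\Psi|^{2}/u^{2}$ alone, independent of $\dot u$ and $\nu(N)$.

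Substituting the two ambient curvature expressions into the Gauss identity and simplifying should yield \eqref{eq:Rp}. The \emph{in particular} statements then follow at once: by \Cref{prop:mathfrakH}, $\mathfrak{H}$ is already known to be constant along $\photo$, so constancy of $\operatorname{R}_{p}$ along $\photo$ reduces to constancy of $|d\Psi|/u$; and since both $|d\Psi|$ and $u$ are constant on each $\surf(t)$ by the equipotential definition, one automatically obtains $\operatorname{R}_{p}\vert_{\surf(t)}=\text{const.}$ for every $t\in\mathcal{T}$. I expect the main technical effort to be the algebraic bookkeeping in the $\eta$-decomposition; the only genuinely non-routine observation is the equipotential identity $|d\Psi|^{2}=\nu(\Psi)^{2}$ on $\surf(t)$, which is what reduces the matter terms in \eqref{EEVE3} to a single scalar and triggers the cancellation.
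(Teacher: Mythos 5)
Your strategy coincides with the paper's proof: the twice-contracted Gauss equation for the totally umbilic timelike hypersurface, the warped-product Ricci identities \eqref{RIC}, the decomposition \eqref{eq:eta} of $\eta$ into its $\nu$- and $\partial_{t}$-parts (with vanishing cross term), and the reduction $\vert d\Psi\vert^{2}=\nu(\Psi)^{2}$ fed into \eqref{EEVE3}, \eqref{EEVE4}, \eqref{EEVE1}; the cancellation of the normalization factor that you predict is exactly what occurs, and the two ``in particular'' claims follow from \Cref{prop:mathfrakH} as you say. One caveat worth recording: if you actually carry out the final arithmetic you obtain $\mathfrak{R}=\frac{2(3-n)}{n-1}\frac{\vert d\Psi\vert^{2}}{u^{2}}$ and $-2\,\mathfrak{Ric}(\eta,\eta)=\frac{4(n-2)}{n-1}\frac{\vert d\Psi\vert^{2}}{u^{2}}$, whose sum is $\frac{2\vert d\Psi\vert^{2}}{u^{2}}$, so the matter term comes out with coefficient $2$ rather than the $\frac{2(n-2)}{n-1}$ printed in \eqref{eq:Rp}; a direct check on the Reissner--Nordstr\"om photon sphere, where the induced product metric $-f(r_{*})dt^{2}+r_{*}^{2}\Omega$ has $\operatorname{R}_{p}=\frac{(n-1)(n-2)}{r_{*}^{2}}$, confirms the coefficient $2$. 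This discrepancy affects neither your argument's structure nor the qualitative ``in particular'' statements, which only use constancy of $\mathfrak{H}$ along $\photo$ and of $\vert d\Psi\vert$ and $u$ on each $\surf(t)$.
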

\begin{proof}
Using the contracted Gauss equation and exploiting the total umbilicity of $\photo$ as well as staticity of the spacetime (see \eqref{RIC}), we find 
\begin{align}
\operatorname{R}_{p}-\frac{n-1}{n}\mathfrak{H}^{2}&=\operatorname{R}-\frac{2\triangle N}{N}-\frac{2}{1-\frac{\dot{u}^{2}}{u^{2}\nu(N)^{2}}}\left(\Ric(\nu,\nu)-\frac{1}{N}\nabla N(\nu,\nu)+\frac{\dot{u}^{2}}{u^{4}\nu(N)^{2}}N\triangle N\right)
\end{align}
along $\photo$. Plugging in the electro-vacuum equations and using $\vert d\Psi\vert^{2}=\nu(\Psi)^{2}$ along $\photo$ leads to
\begin{align*}
\frac{2(n-2)\vert d\Psi\vert^{2}}{(n-1)u^{2}}&=\operatorname{R}_{p}-\frac{n-1}{n}\mathfrak{H}^{2}
\end{align*}
along $\photo$. Using~\Cref{prop:mathfrakH} to get $\mathfrak{H}=\text{const.}$ demonstrates the remaining claims.
\end{proof}

\begin{prop}[Quasi-local properties of equipotential photon surfaces \rm{IV}]\label{prop:I}
Let $(\mathfrak{L}^{n+1},\mathfrak{g})$ be a static spacetime, $\photo\hookrightarrow\mathfrak{L}^{n+1}$ a non-degenerate equipotential photon surface, $\surf$ a canonical time slice of $\photo$. Let $h$, $H$, $\sigma$, and $\operatorname{R}_{\sigma}$ denote the second fundamental form and mean curvature of $\surf\hookrightarrow\slice$, its induced metric and scalar curvature, respectively. Then $\surf$ is totally umbilic in $\slice$ and we have
\begin{align}\label{eq:HmathfrakH}
H&=\frac{n-1}{n}\mathfrak{H}\sqrt{1-\frac{\dot{u}^{2}}{u^{2}\nu(N)^{2}}}
\end{align}
on $\surf$. If $(\mathfrak{L}^{n+1},\mathfrak{g},\Psi)$ is an electrostatic electro-vacuum spacetime, then $H$, $\nu(N)$, and $\operatorname{R}_{\sigma}$ are constant on $\surf$ and the \emph{photon surface identity}
\begin{align}\label{foundtheconstant}
\operatorname{R}_\sigma&= \frac {2\nu(\Psi)^2}{N^2} + \frac {2H\nu(N)}{N} +\frac{(n-2)H^{2}}{(n-1)}
\end{align}
holds on $\surf$.
\end{prop}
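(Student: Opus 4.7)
\emph{Overall approach.} The proof splits into three parts. First, I will show that $\Sigma$ is totally umbilic in $M^n$ and derive \eqref{eq:HmathfrakH}; this is essentially a kinematic consequence of the umbilicity of $P^n$ in $\mathfrak{L}^{n+1}$. Second, I will derive the photon surface identity \eqref{foundtheconstant} by combining the twice-traced Gauss equation for $\Sigma \subset M^n$ with the electrostatic electro-vacuum equations. Third, I will argue that $H$, $\nu(N)$, and $\operatorname{R}_\sigma$ are constant on each $\Sigma(t)$ in the electro-vacuum setting.

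\emph{Umbilicity of $\Sigma$ in $M^n$ and derivation of \eqref{eq:HmathfrakH}.} For $X \in T\Sigma(t) \subset TP^n$, $X$ is $\mathfrak{g}$-orthogonal to $\partial_t$, so the Killing equation for $\partial_t$ yields $\mathfrak{g}(\nabla^\mathfrak{g}_X X, \partial_t) = 0$; and since the metric is (standard) static, $\nabla^\mathfrak{g}_X Y = \nabla^g_X Y$ for spatial $X, Y$. Expanding $\mathfrak{h}(X,X) = -\mathfrak{g}(\nabla^\mathfrak{g}_X X, \eta)$ with $\eta$ given by \eqref{eq:eta}, only the $\nu$-component of $\eta$ survives, and one obtains $\mathfrak{h}(X,X) = h(X,X)/\sqrt{1 - \dot{u}^2/(u^2\nu(N)^2)}$. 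Since $P^n$ is totally umbilic, $\mathfrak{h} = (\mathfrak{H}/n)\mathfrak{g}|_{TP^n}$, so $\Sigma$ is umbilic in $M^n$ with $h = \frac{\mathfrak{H}}{n}\sqrt{1 - \dot{u}^2/(u^2\nu(N)^2)}\,\sigma$. Taking the $\sigma$-trace gives \eqref{eq:HmathfrakH}.

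\emph{Derivation of \eqref{foundtheconstant}.} Apply the twice-traced Gauss equation to the umbilic $\Sigma \subset M^n$: $\operatorname{R}_\sigma = \operatorname{R} - 2\operatorname{Ric}(\nu,\nu) + \tfrac{n-2}{n-1}H^2$. From \eqref{EEVE4} and $\Psi$ constant on $\Sigma(t)$, $\operatorname{R}|_\Sigma = 2\nu(\Psi)^2/N^2$. Since $N|_\Sigma$ is constant and $\Sigma$ is umbilic with mean curvature $H$, a standard Laplacian decomposition on $\Sigma$ gives $\nabla^2 N(\nu,\nu) = \Delta N - H\nu(N)$. Substituting this into \eqref{EEVE3} evaluated at $(\nu,\nu)$, and using $|d\Psi|^2 = \nu(\Psi)^2$ on $\Sigma$ together with \eqref{EEVE1}, the $|d\Psi|^2$-contributions cancel exactly, leaving $N\operatorname{Ric}(\nu,\nu) = -H\nu(N)$. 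Plugging back into the Gauss equation yields \eqref{foundtheconstant} directly.

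\emph{Constancy on $\Sigma(t)$.} By \Cref{prop:mathfrakH}, $\mathfrak{H}$ is constant along $P^n$; and $u(t), \dot{u}(t)$ are, by their very definition, constants on $\Sigma(t)$. Hence \eqref{eq:HmathfrakH} reduces constancy of $H$ on $\Sigma(t)$ to that of $\nu(N)$. For $X \in T\Sigma$, the Hessian identity $X(\nu(N)) = \nabla^2 N(X,\nu)$ (valid because $N|_\Sigma$ is constant and $\nabla_X \nu \in T\Sigma$) combined with \eqref{EEVE3} yields $X(\nu(N)) = N\operatorname{Ric}(X,\nu)$, while the contracted Codazzi equation for umbilic $\Sigma \subset M^n$ gives $\operatorname{Ric}(X,\nu) \propto X(H)$. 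Combined with the $X$-derivative of \eqref{eq:HmathfrakH}, these two linear relations produce an equation of the form $X(H)\cdot \mathcal{B} = 0$, where $\mathcal{B}$ is an explicit rational expression in $H, \nu(N), u, \dot{u}, \mathfrak{H}$. A short case analysis finishes the argument: either $\mathcal{B}$ is nonzero at a generic point of $\Sigma(t)$, forcing $X(H) = 0$ and, by continuity, constancy of $H$ and $\nu(N)$; or $\mathcal{B} \equiv 0$ on $\Sigma(t)$, in which case $\mathcal{B} = 0$ combined with \eqref{eq:HmathfrakH} constitutes a polynomial system in $H$ and $\nu(N)$ with only $t$-dependent coefficients, whose continuous solutions along $\Sigma(t)$ are necessarily constant. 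Constancy of $\operatorname{R}_\sigma$ then follows from \eqref{foundtheconstant}. The main obstacle is this final case analysis: ruling out pathological configurations in which $\mathcal{B}$ vanishes without directly implying constancy requires careful algebraic bookkeeping.
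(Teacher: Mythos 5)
Your proposal is correct and takes essentially the same route as the paper's proof: the same kinematic computation yields umbilicity of $\Sigma$ in $\slice$ and \eqref{eq:HmathfrakH}, the same combination of the Hessian identity, \eqref{EEVE3}, and the contracted Codazzi equation forces $X(H)=0$ up to a degenerate factor, and the same Laplacian decomposition plus contracted Gauss equation gives \eqref{foundtheconstant}. For the degenerate case you flag, the paper simply differentiates the constraint $\mathcal{B}=0$ along $\Sigma$ (twice) to contradict $X(\nu(N))\neq 0$ on an open set -- exactly the bookkeeping you anticipated -- and your alternative via discreteness of the solution set of $F(\nu(N))=\mathrm{const.}$ works equally well once the dichotomy is stated locally rather than globally on $\Sigma(t)$.
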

\begin{proof}
Total umbilicity of $\surf$ in $\slice$ as well as~\eqref{eq:HmathfrakH} follow by a direct computation just as in~\cite[Formula (4.9)]{cederbaum2019photon}, where again only non-degeneracy of $\photo$ is used. By \Cref{prop:mathfrakH}, $\mathfrak{H}$ is constant along $\surf$. Let $X\in\Gamma(T\surf)$, then by \eqref{eq:HmathfrakH}, 
\begin{align}\label{eq:*}
X(H)&=\frac{(n-1)\mathfrak{H} \dot{u}^{2}}{nu^{2}\,\nu(N)^{3}\sqrt{1-\frac{\dot{u}^{2}}{u^{2}\nu(N)^{2}}}}X(\nu(N)).
\end{align}
On the other hand, by~\eqref{EEVE3} and using that $N\vert_{\surf}\equiv u$ and $X(\Psi)=0$, one finds
\begin{align*}
X(\nu(N))&=\nabla^2 N(X, \nu)=  N\Ric(X, \nu) +\frac{2\,X(\Psi) \nu(\Psi)}{N}-\frac{2|d\Psi|^2}{(n-1)N}\underbrace{g(X, \nu)}_{=0}= u\Ric(X, \nu).
\end{align*}
Using total umbilicity of $\surf$ in $\slice$, the contracted Codazzi equation gives $\Ric(X, \nu)= \frac{n-2}{n-1}X(H)$ and hence
\begin{align*}
X(H)&=\frac{(n-2)\mathfrak{H} \dot{u}^{2}}{nu\,\nu(N)^{3}\sqrt{1-\frac{\dot{u}^{2}}{u^{2}\nu(N)^{2}}}}X(H)
\end{align*}
so that $X(H)=0$ except possibly where 
\begin{align}\label{eq:**}
\frac{(n-2)\mathfrak{H} \dot{u}^{2}}{nu}&=\nu(N)^{3}\sqrt{1-\frac{\dot{u}^{2}}{u^{2}\nu(N)^{2}}}.
\end{align}
Assume towards a contradiction that $X(H)\neq0$ on an open subset $\emptyset\neq U\subseteq\surf$. Observe that the left hand side of \eqref{eq:**} is constant along $\surf$ and thus, taking an $X$-derivative, we find
\begin{align*}
0&=\left(3-\frac{2\dot{u}^{2}}{u^{2}\nu(N)^{2}}\right)\underbrace{X(\nu(N))}_{\neq0}
\end{align*}
on $U$, where we have used non-degeneracy of $\photo$ and \eqref{eq:*}. Repeating the argument on the first factor in this identity gives $X(\nu(N))=0$ on $U$ which contradicts \eqref{eq:*}. Hence $U=\emptyset$ and $H$ and $\nu(N)$ are constant along $\surf$.

Finally, from~\eqref{EEVE1} and \eqref{EEVE3} as well as by the standard decomposition of the Laplacian on functions, we find that
\begin{align*}
\frac{2(n-2)|d\Psi|^2}{(n-1)N} =\Delta N&=\Delta_\sigma N+\nabla^2 N(\nu, \nu)+H\nu(N)\\
&=N\Ric(\nu, \nu)+\frac{2(n-2)|d\Psi|^2}{(n-1)N}+H\nu(N)
\end{align*}
on $\surf$, where $\Delta_{\sigma}$ denotes the Laplacian with respect to $\sigma$. Plugging the resulting identity $\Ric(\nu, \nu)=-\frac H N \nu(N) $ into the contracted Gauss equation, we get 
\begin{align*}
\operatorname{R}_\sigma-\frac{n-2}{n-1}H^2&=\operatorname{R}+ \frac {2H} N \nu(N)=\frac {2|d\Psi|^2} {N^2} + \frac {2H} N \nu(N),
\end{align*}
where we have also used~\eqref{EEVE4}. This proves~\eqref{foundtheconstant} and shows that $\operatorname{R}_{\sigma}=\text{const.}$ along $\surf$ because $|d\Psi|^2=\nu(\Psi)^{2}$ along $\surf$ as $\Psi$ is constant on $\surf$.
\end{proof}

\begin{Cor}[Photon sphere constraint]\label{coro:photoconstraint}
Let $(\mathfrak{L}^{n+1},\mathfrak{g})$ be a static spacetime, $\photo\hookrightarrow\mathfrak{L}^{n+1}$ a non-degenerate photon sphere. Then 
\begin{align}\label{eq:photoconstraint}
NH&=(n-1)\nu(N)
\end{align}
holds for all canonical time slices of $\photo$ (without assuming any field equations!). In particular, photon spheres satisfy $\nu(N)H>0$ (or, alternatively, $\eta(N)\mathfrak{H}>0$) and have $H=\text{const.}$ if and only if $\nu(N)=\text{const.}$ along any, or every, time slice of $\photo$ (or, alternatively, $\mathfrak{H}=\text{const.}$ if and only if $\eta(N)=\text{const.}$ along $\photo$).
\end{Cor}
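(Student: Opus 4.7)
The plan is to derive the photon sphere constraint directly from the evolution identity for the lapse along equipotential photon surfaces (\Cref{prop:evou}) together with the relation between the spatial and spacetime mean curvatures (\Cref{prop:I}). The key observation is that, by definition of a photon sphere, $N$ is constant along $\photo = \R \times \surf$, so the function $u(t) = N\vert_{\surf(t)}$ is a constant function of $t$ and in particular $\dot{u} \equiv 0$ along $\photo$. Note that a photon sphere is trivially equipotential in the static (no-$\Psi$) sense, so both preceding propositions are available.

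With $\dot{u} \equiv 0$, \eqref{eq:evophoto} in \Cref{prop:evou} immediately yields $N\mathfrak{H} = n\,\nu(N)$ on $\photo$, while setting $\dot{u} = 0$ in \eqref{eq:HmathfrakH} from \Cref{prop:I} gives $H = \tfrac{n-1}{n}\mathfrak{H}$. Eliminating $\mathfrak{H}$ between these two identities produces the photon sphere constraint $NH = (n-1)\,\nu(N)$. Crucially, this derivation never invokes any field equations, which is why the statement holds in every static spacetime (not only the electrostatic electro-vacuum case).

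For the remaining assertions, non-degeneracy of $\photo$ combined with the fact that $N$ is constant along $\photo$ forces $dN$ to vanish on $T\surf(t)$ and have no $\d_t$-component, so the only nonzero component of $dN$ along $\photo$ is the $\nu$-component; hence $\nu(N) \neq 0$ pointwise. Together with $N > 0$ this gives $\nu(N)H = (n-1)\,\nu(N)^2/N > 0$. Finally, since $N\vert_{\surf(t)}$ is the constant $u(t)$, the identity $NH = (n-1)\,\nu(N)$ shows that on any fixed time slice $\surf(t)$, constancy of $H$ is equivalent to constancy of $\nu(N)$; time-translation invariance of the photon sphere then transports this equivalence from any slice to every slice. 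The analogous equivalence for $\mathfrak{H}$ and $\eta(N)$ is immediate from $\mathfrak{H} = \tfrac{n}{n-1}H$ and from the fact that $\dot{u} = 0$ reduces \eqref{eq:eta} to $\eta = \nu$, so $\eta(N) = \nu(N)$ on $\photo$. There is no real obstacle here; the corollary is the clean specialization of \Cref{prop:evou,prop:I} to the case in which $\dot{u}$ vanishes identically.
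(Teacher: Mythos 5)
Your proposal is correct and follows essentially the same route as the paper: combining \eqref{eq:evophoto} from \Cref{prop:evou} with \eqref{eq:HmathfrakH} from \Cref{prop:I} under $\dot{u}\equiv 0$, and then using $\eta=\nu$ from \eqref{eq:eta} together with non-degeneracy to obtain the sign and constancy statements. No substantive difference from the paper's argument.
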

This was previously known only in vacuum and in electro-vacuum for $n=3$.
\begin{proof}
From \Cref{prop:mathfrakH}, in particular from \eqref{eq:evophoto}, and from \Cref{prop:I}, in particular from \eqref{eq:HmathfrakH}, we know that\\[-4ex]
\begin{align*}
(n-1)\nu(N)&=\frac{(n-1)}{n}N\mathfrak{H}=NH,
\end{align*}
where we have used that $\dot{u}=0$ along photon spheres. This implies all remaining claims via \eqref{eq:HmathfrakH} and \eqref{eq:eta} which gives $\eta=\nu$.
\end{proof}

\begin{Rem}[Comparison to vacuum case: the constant $c$]\label{rem:constc}
Let $(\mathfrak{L}^{n+1},\mathfrak{g},\Psi)$ be an asymptotically flat electrostatic electro-vacuum spacetime, $\photo\hookrightarrow\mathfrak{L}^{n+1}$ a non-degenerate equipotential photon surface, $\surf$ a canonical time slice of $\photo$. From \Cref{prop:I}, we know that $H$, $\nu(N)$, and $\operatorname{R}_{\sigma}$ are constant on $\surf$, and we know that $N$, $\Psi$, and $\vert d\Psi\vert$ are constant on $\surf$ by definition. Provided that $H\neq0$ (see \Cref{prop: H>0} below), we can hence introduce a constant $c\definedas\frac{2\nu(N)}{NH}$ such that~\eqref{foundtheconstant} and the definition of $c$ can be rewritten as the system of algebraic equations
\begin{align}\label{eq:c}
\operatorname{R}_{\sigma}&=\frac{2\vert d\Psi \vert^2}{N^2} +\left(c+\frac{n-2}{n-1}\right)H^2\\
2\nu(N)&=cNH
\end{align}
on $\surf$.
In the photon sphere case, \Cref{coro:photoconstraint} tells us that $c=\frac{2}{n-1}$. 

This system of equations coincides with the ``photon surface constraints'' assumed in the vacuum uniqueness theorems proved in~\cite{ndimunique}. The constant $c$ defined here generalizes that introduced in~\cite{ndimunique}; it is related to the constant $c$ used in the application of said uniqueness theorem to equipotential photon surfaces in \cite[Theorem 4.2]{cederbaum2019photon} by addition of $\frac{n-2}{n-1}$, see \cite[Formula (4.15)]{cederbaum2019photon}. 

In the (vacuum) Schwarzschild spacetime of mass $m$, $c$ can be thought of as a function of the (area) radius coordinate. Then $c$ is strictly decreasing with image $\R^{+}$ for $m>0$, $c\equiv0$ for $m=0$, and strictly increasing with image $(-\frac{n-2}{2(n-1)},0)$ for $m<0$. In this sense, $c$ is a useful indicator function in the vacuum case. The situation is more complicated in electro-vacuum: For the subextremal and extremal Reissner--Nordstr\"om spacetimes of mass $m$ and charge $q$, $c$ is still strictly decreasing with image $\R^{+}$. However, in the superextremal case with $q\neq0$, $c$ is strictly increasing with image $(-\frac{n-2}{n-1},0)$ in case $m\leq0$, but changes sign and has two critical points in case $m>0$. Hence $c$ stops being as useful as an indicator function in electro-vacuum, except when one restricts ones attention to the extremal and subextremal cases.
\end{Rem}

The next part of our analysis of quasi-local properties of equipotential photon surfaces consists of analyzing the frequency of so-called ``exceptional'' time slices of equipotential photon surfaces, namely time slices where the photon surfaces ``turn around'' (compare \Cref{fig:casesintro}). This will be useful for our proof of positivity of the mean curvature of photon surfaces in \Cref{prop: H>0}.
\begin{Def}[Exceptional/generic time slices]
Let $(\mathfrak{L}^{n+1},\mathfrak{g},\Psi)$ be an electrostatic spacetime, $\photo\hookrightarrow\mathfrak{L}^{n+1}$ a non-degenerate equipotential photon surface which is not a photon sphere, i.e., for which there exists $t_{*}\in\mathcal{T}$ with $\dot{u}(t_{*})\neq0$. A time $t_{0}\in\mathcal{T}$ with $\dot{u}(t_{0})=0$ is called \emph{exceptional (time) for $\photo$} and the canonical time slice $\surf(t_{0})$ is called \emph{exceptional time slice (of $\photo$)}. Non-exceptional times and time slices are called \emph{generic}. Slightly abusing notation, these definitions also apply to equipotential photon surfaces in static spacetimes without electric potential. 
\end{Def}

Before we justify these names, let us prove the following lemma which will be used in its proof. Note that this lemma does not assume any field equations.
\begin{Lem}[Sufficient condition for existence of photon spheres]\label{lem:isphoto}
Let $(\mathfrak{L}^{n+1},\mathfrak{g},\Psi)$ be an electrostatic spacetime and let $\surf\hookrightarrow\slice$ be a smooth, totally umbilic hypersurface of a canonical time slice $\slice(t)$ and assume that $N$, $\Psi$, and $\vert d\Psi\vert$ are constant on $\surf$ and $dN$ is non-vanishing on $\surf$. Assume furthermore that \eqref{eq:photoconstraint} is satisfied on $\surf$. Then $\R\times\surf\hookrightarrow\mathfrak{L}^{n+1}$ is a non-degenerate photon sphere in $(\mathfrak{L}^{n+1},\mathfrak{g},\Psi)$.
\end{Lem}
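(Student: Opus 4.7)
The plan is to verify the three defining properties of a non-degenerate photon sphere for $\photo\definedas\R\times\surf$ in $(\mathfrak{L}^{n+1},\mathfrak{g},\Psi)$. First, $N$, $\Psi$, and $\vert d\Psi\vert_{\mathfrak{g}}$ are constant along $\photo$: this is immediate from the hypothesis (constancy on $\surf$) together with the $t$-independence of $N$ and $\Psi$ coming from staticity and electrostaticity. Second, $\photo$ is non-degenerate: again immediate, since $dN$ is non-vanishing on $\surf$ by hypothesis and $N$ does not depend on $t$, hence $dN$ is non-vanishing along all of $\photo$. Third, $\photo$ must be shown to be a photon surface: the induced metric $p=-N^{2}dt^{2}+\sigma$ on $\photo$ is manifestly Lorentzian, so $\photo$ is timelike, and by \Cref{prop:umbilic} it then remains only to establish that $\photo$ is totally umbilic in $(\mathfrak{L}^{n+1},\mathfrak{g})$.

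For the umbilicity computation I would choose as unit normal $\eta$ to $\photo$ the spatial unit normal $\nu$ to $\surf\hookrightarrow\slice$; this is legitimate because $\nu\perp T\surf$ and $\mathfrak{g}(\nu,\partial_{t})=0$, so $\nu\perp T\photo$ with the correct norm $\mathfrak{g}(\nu,\nu)=1$. I then split $T\photo\vert_{(t_{0},x)}=\mathrm{span}\{\partial_{t}\}\oplus T_{x}\surf$ and compute the three blocks of $\mathfrak{h}$ by extending $X,Y\in T_{x}\surf$ to $t$-independent vector fields. A brief Christoffel-symbol calculation on the warped product $\mathfrak{g}=-N^{2}dt^{2}+g$ yields $\nabla^{\mathfrak{g}}_{X}Y=\nabla^{g}_{X}Y$, $\nabla^{\mathfrak{g}}_{\partial_{t}}X=\tfrac{X(N)}{N}\,\partial_{t}$, and $\nabla^{\mathfrak{g}}_{\partial_{t}}\partial_{t}=N\,\grad N$. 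Since $N$ is constant on $\surf$, $X(N)=0$, and the mixed derivative vanishes. Feeding these into $\mathfrak{h}$, the three blocks read $\mathfrak{h}(X,Y)=h(X,Y)=\tfrac{H}{n-1}\sigma(X,Y)$ (using the assumed total umbilicity of $\surf$ in $\slice$), $\mathfrak{h}(\partial_{t},X)=0$, and $\mathfrak{h}(\partial_{t},\partial_{t})=-N\nu(N)$. Since the matching blocks of $p$ are $\sigma$ and $-N^{2}$, the relation $\mathfrak{h}=\mu\, p$ holds pointwise on $\photo$ if and only if the single scalar $\mu$ simultaneously equals $\tfrac{H}{n-1}$ and $\tfrac{\nu(N)}{N}$, i.e., if and only if $NH=(n-1)\nu(N)$ on $\surf$.

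This last identity is exactly the hypothesised photon-sphere constraint~\eqref{eq:photoconstraint}. Hence $\mathfrak{h}=\mu\, p$ pointwise on $\photo$, so $\photo$ is totally umbilic; combined with the first two items and \Cref{prop:umbilic}, this shows $\photo$ is a non-degenerate photon sphere, as claimed. The one conceptual point---hardly an obstacle---is recognising that \eqref{eq:photoconstraint} is precisely the compatibility condition between the spatial ($T\surf$) and temporal ($\partial_{t}$) blocks of the umbilicity relation for $\photo=\R\times\surf$; this mirrors and reverses the derivation of \eqref{eq:photoconstraint} carried out in \Cref{coro:photoconstraint} via \eqref{eq:evophoto} and \eqref{eq:HmathfrakH}, so all remaining work is routine warped-product bookkeeping.
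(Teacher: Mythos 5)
Your proposal is correct and follows essentially the same route as the paper's proof: both verify timelikeness via the time-translated normal $\nu$, compute the three blocks $\mathfrak{h}(\partial_t,\partial_t)$, $\mathfrak{h}(\partial_t,X)$, $\mathfrak{h}(X,Y)$ of the second fundamental form on the warped product, and observe that the photon sphere constraint \eqref{eq:photoconstraint} is exactly the compatibility condition making the temporal and spatial umbilicity factors agree. The only cosmetic difference is that you spell out the (immediate) verification of the constancy and non-degeneracy conditions from \Cref{def:photonsphere}, which the paper leaves implicit.
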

\begin{proof}
Let $\nu$ denote a unit normal to $\surf$. Then $\R\times\surf$ is a timelike hypersurface because $\nu$ naturally extends by time translation to a spacelike unit normal to $\R\times\surf$. Let $h$, $H$ and $\mathfrak{h}$, $\mathfrak{H}$ denote the second fundamental forms and mean curvatures of $\surf\hookrightarrow\slice$ and $\R\times\surf\hookrightarrow\mathfrak{L}^{n+1}$, respectively, and let $\sigma$ and $p$ denote the induced metrics on $\surf$ and $\R\times\surf$, respectively. Then \eqref{eq:photoconstraint} and $h=\frac{H}{n-1}\sigma$ imply
\begin{align*}
-\mathfrak{h}(\partial_{t},\partial_{t})&=\mathfrak{g}(\,^{\mathfrak{g}}\nabla_{\partial_{t}}\partial_{t},\nu)=N\nu(N)=\frac{N^{2}H}{n-1}=-\frac{H}{n-1}p(\partial_{t},\partial_{t}),\\
-\mathfrak{h}(X,\partial_{t})&=\mathfrak{g}(\,^{\mathfrak{g}}\nabla_{X}\partial_{t},\nu)=0=-\frac{H}{n-1}p(X,\partial_{t}),\\
-\mathfrak{h}(X,Y)&=\mathfrak{g}(\,^{\mathfrak{g}}\nabla_{X}Y,\nu)=g(\nabla_{X}Y,\nu)=-h(X,Y)=-\frac{H}{n-1}\sigma(X,Y)=-\frac{H}{n-1}p(X,Y)
\end{align*}
for all $X,Y\in\Gamma(T(\R\times \surf))$ with $X,Y\perp_{\mathfrak{g}}\partial_{t}$, using time-symmetry (i.e., vanishing of the second fundamental form) of $\slice(t)\hookrightarrow\mathfrak{L}^{n+1}$ in the last equation. Hence $\photo\hookrightarrow\mathfrak{L}^{n+1}$ is totally umbilic and thus a non-degenerate photon sphere in $(\mathfrak{L}^{n+1},\mathfrak{g},\Psi)$.
\end{proof}

The following proposition justifies the terminology of exceptional and generic time slices (without assuming any field equations!). 
\begin{Prop}[Exceptional time slices are exceptional]\label{prop:exceptional}
Let $(\mathfrak{L}^{n+1},\mathfrak{g},\Psi)$ be an electrostatic spacetime, $\photo\hookrightarrow\mathfrak{L}^{n+1}$ a non-degenerate equipotential photon surface which is not a photon sphere. Then $\photo$ has at most countably many exceptional time slices $\surf(t_{i})$ with $\vert t_{i}-t_{j}\vert>\varepsilon$ for all $i\neq j$ and for a suitable constant $\varepsilon>0$. In other words, if a non-degenerate equipotential photon surface $\photo$ in an electrostatic spacetime has a sequence of exceptional times $(t_{i})_{i\in\mathbb{N}}\subseteq\mathcal{T}$ which accumulates at $t_{0}\in\mathcal{T}$ then $\photo$ must be a photon sphere.
\end{Prop}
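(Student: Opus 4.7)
The plan is to use the accumulation of zeros of $\dot{u}$ at $t_0$ to force $\Sigma(t_0)$ to satisfy the photon sphere constraint~\eqref{eq:photoconstraint}, so that Lemma~\ref{lem:isphoto} produces a photon sphere that locally agrees with $\photo$, and then to propagate this coincidence to all of $\mathcal{T}$ via a standard open-closed-nonempty argument.

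First, by continuity of $\dot{u}$, the hypothesis $\dot{u}(t_i)=0$ with $t_i\to t_0$ gives $\dot{u}(t_0)=0$; and since the difference quotients $(\dot{u}(t_i)-\dot{u}(t_0))/(t_i-t_0)$ all vanish and must converge to $\ddot{u}(t_0)$ by smoothness of $u$, we also get $\ddot{u}(t_0)=0$. Feeding these into identity~\eqref{eq:evoacc} of Proposition~\ref{prop:evou} yields $u(t_0)\mathfrak{H}(t_0)=n\nu(N)(t_0)$ on $\Sigma(t_0)$, and combining with~\eqref{eq:HmathfrakH} evaluated at $t_0$ (where the square root factor equals $1$) produces the photon sphere constraint $u(t_0)H(t_0)=(n-1)\nu(N)(t_0)$. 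The hypotheses of Lemma~\ref{lem:isphoto} are now all in place for $\Sigma(t_0)\hookrightarrow\slice(t_0)$: total umbilicity from the static-only part of Proposition~\ref{prop:I}, constancy of $N$, $\Psi$, $|d\Psi|$ from equipotentiality, non-vanishing of $dN$ from non-degeneracy of $\photo$, and the constraint just derived. Applying it identifies $P^{n}_{\ast}\definedas\R\times\Sigma(t_0)$ as a non-degenerate photon sphere.

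Next, since $\dot{u}(t_0)=0$, the vertical field~\eqref{eq:Z} satisfies $Z|_{t_0}=\d_t$, so at every $p\in\Sigma(t_0)$ the tangent hyperplane $T_p\photo=T_p\Sigma(t_0)\oplus\R\,\d_t|_p$ coincides with $T_pP^{n}_{\ast}$. I would invoke local uniqueness of null totally geodesic timelike hypersurfaces through a common $(n-1)$-dimensional spacelike submanifold with prescribed tangent hyperplanes: the null cones of the coincident tangent hyperplanes are identical and (for $n\geq2$) span them linearly, so the common family of null geodesics issuing from $\Sigma(t_0)$ in this shared null cone sweeps out an open neighborhood of $\Sigma(t_0)$ in each hypersurface. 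This yields that $\photo$ and $P^{n}_{\ast}$ coincide in an open neighborhood of $\Sigma(t_0)$ in $\mathfrak{L}^{n+1}$, which in turn forces $\Sigma(t)=\Sigma(t_0)$, hence $\dot{u}(t)=0$, for all $t$ in an open interval around $t_0$ in $\mathcal{T}$.

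Finally, the set $B\definedas\{t\in\mathcal{T}\,:\,\dot{u}\equiv 0\text{ on a neighborhood of }t\}$ is open by construction, contains $t_0$, and is closed in $\mathcal{T}$: any $t_\ast\in\mathcal{T}$ which is a limit of points of $B$ is itself an accumulation point of zeros of $\dot{u}$, so the preceding two paragraphs apply verbatim at $t_\ast$. Connectedness of $\mathcal{T}$ forces $B=\mathcal{T}$, so $\Sigma(t)=\Sigma(t_0)$ throughout $\mathcal{T}$ and $\photo$ is a photon sphere as claimed. The hard part is the local uniqueness step for photon surfaces invoked above; this should either be established in detail via the null-geodesic generation mechanism sketched above or deduced from a Cauchy-style uniqueness statement for umbilic timelike hypersurfaces with prescribed first-order data along a spacelike codimension-two submanifold.
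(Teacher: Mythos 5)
Your argument tracks the paper's proof exactly through its first half: continuity and the vanishing difference quotients give $\dot u(t_0)=\ddot u(t_0)=0$, feeding \eqref{eq:evoacc} and \eqref{eq:HmathfrakH} yields the constraint \eqref{eq:photoconstraint} on $\surf(t_0)$, and \Cref{lem:isphoto} identifies $\R\times\surf(t_0)$ as a non-degenerate photon sphere --- all of this is correct and is precisely what the paper does. Where you diverge is the propagation step. You aim for a genuine local-uniqueness statement (coincidence of $\photo$ and $\R\times\surf(t_0)$ on an open neighborhood of $\surf(t_0)$), which you rightly flag as the hard part and leave as a sketch; it can be completed by an inverse-function-theorem argument on the shared null cone bundle, but the paper never needs this much. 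It takes a \emph{single} null geodesic $\mu$ tangent to $\photo$ at a point of $\surf(t_0)$: since $Z\vert_{\surf(t_0)}=\partial_t$ the tangent hyperplanes of the two photon surfaces agree there, so $\mu$ is initially tangent to both and hence, by \Cref{prop:umbilic} applied to each, lies in both; along $\mu\subset\R\times\surf(t_0)$ one has $N\equiv u(t_0)$, and equipotentiality then forces $u\equiv u(t_0)$ on the entire open time interval $\tau(I)$ swept out by $\mu$, with no neighborhood coincidence required. Your globalization via the open--closed--nonempty set $B$ is clean and valid (the paper instead iterates the geodesic argument at $\sup\tau(I)$), but note one small imprecision: local coincidence near $\surf(t_0)$ does not by itself give $\surf(t)=\surf(t_0)$ for nearby $t$, since the slices need not be compact (cf.\ \Cref{rem:local}); it only controls a piece of $\surf(t)$, and you must invoke equipotentiality to upgrade this to $u(t)=u(t_0)$ and hence $\dot u(t)=0$, which is in any case all your argument uses.
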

\begin{proof}
Suppose towards a contradiction that there is a time $t_{0}\in\mathcal{T}$ which is an accumulation point of exceptional times $t_{i}\neq t_{0}$, $i\in\mathbb{N}$. Note that $t_{0}$ is also an exceptional time by continuity of $\dot{u}$, hence \eqref{eq:HmathfrakH} informs us that $H(t_{0})=\frac{n-1}{n}\mathfrak{H}(t_{0})$. By \Cref{prop:evou}, we  know that $u(t_{0})\mathfrak{H}(t_{0})=n\nu(N)(t_{0})$, hence \eqref{eq:photoconstraint} holds on the totally umbilic hypersurface $\surf(t_{0})\hookrightarrow \slice(t_{0})$ and \Cref{lem:isphoto} applies. Thus $\R\times \surf(t_{0})$ is a photon surface in $(\mathfrak{L}^{n+1},\mathfrak{g},\Psi)$. A direct computation also shows that $\R\times \surf(t_{0})\hookrightarrow\mathfrak{L}^{n+1}$ has mean curvature $\mathfrak{H}(t_{0})$. Moreover, $N$, $\Psi$, and $\vert d\Psi\vert$ are constant along $\R\times \surf(t_{0})$ as $N$, $\Psi$, and $g$ are time-independent and $N$, $\Psi$, and $\vert d\Psi\vert$ are constant along $\surf(t_{0})$. Finally, $dN\neq0$ along $\R\times\surf(t_{0})$ as $dN\neq0$ on $\surf(t_{0})$. Hence $\R\times\surf(t_{0})$ is a non-degenerate photon sphere in $(\mathfrak{L}^{n+1},\mathfrak{g},\Psi)$.

Next, note that the vertical vector field $Z$ of $\photo$ restricts to $Z\vert_{\surf(t_{0})}=\partial_{t}\vert_{\surf(t_{0})}$  as $\dot{u}(t_{0})=0$ by \eqref{eq:Z} and hence coincides with the vertical vector field of $\R\times\surf(t_{0})$ on $\surf(t_{0})$. Now let $\mu\colon I\to\mathfrak{L}^{n+1}$, $0\in I$, $I$ open, be a future directed null geodesic with $\mu(I)\subset\photo$, $\mu(0)\in\surf(t_{0})$ (which exists by total umbilicity of $\photo\hookrightarrow\mathfrak{L}^{n+1}$ via \Cref{prop:umbilic}). Then $\dot{\mu}(0)=X_{0}+\frac{\vert X_{0}\vert}{u(t_{0})} \partial_{t}\vert_{\mu(0)}$ for a unique $X_{0}\in T_{\mu(0)}\surf(t_{0})$. Thus $\mu$ touches $\R\times\surf(t_{0})$ in  $\mu(0)$ and $\mu(I)\subset\R\times\surf(t_{0})$ by total umbilicity of $\R\times\surf(t_{0})\hookrightarrow\mathfrak{L}^{n+1}$ via \Cref{prop:umbilic}. Hence, writing $\mu=(\tau,x)\colon I\to\R\times\slice$, we obtain $u(\tau)=N(x)=u(t_{0})$ along $\mu$ which implies that $u$ is constant on the open interval $\tau(I)\subseteq\mathcal{T}$ as $\mu$ is null. The same holds for $\Psi$ and $\vert d\Psi\vert$, hence $\photo\cap(\tau(I)\times\slice)$ is a photon sphere. If $\tau(I)=\mathcal{T}$, we have achieved the desired contradiction. Hence suppose that $\tau(I)\subsetneq\mathcal{T}$, and without loss of generality that $\sup\tau(I)<\sup\mathcal{T}$ (otherwise flip the time orientation of the spacetime). But then $\sup\tau(I)$ is an accumulation point of a sequence of exceptional times in $\tau(I)$ for $\photo$ and we can repeat the above argument to conclude that $\photo$ coincides with the photon sphere $\R\times\surf(t_{0})$ also to the future of $\tau(I)$, achieving the desired contradiction.
\end{proof}

\begin{Rem}[On the assumption of $\vert d\Psi\vert=\vert d\Psi\vert(t)$ along $\photo$]\label{rem:normpsi}
In \Cref{def:photonsphere} and \Cref{def:equipot}, we assumed that not only $N$ and $\Psi$ but also  $\vert d\Psi\vert$ is constant along each time slice of a photon sphere or equipotential photon surface. We would like to point out that we have not used this condition in proving any of the Propositions~\ref{defprop:u}--\ref{prop:I} and \ref{prop:exceptional} nor in \Cref{coro:photoconstraint} and \Cref{lem:isphoto}, except for the last sentence of \Cref{prop:scal} and \Cref{rem:constc} (which are both just stated for convenience and will not be used later). In fact, the only reason why we have included this condition into both definitions is that for proving our uniqueness result \Cref{thm:uniqueness} as well as the local characterization result \Cref{prop:normalquasiloc} we will need that $\operatorname{R}_{\sigma}$ and $\vert d\Psi\vert$ are constant along each canonical time slice $\surf$ of an equipotential photon surface $\photo$. This assumption already (implicitly) appeared in the $3+1$-dimensional case treated in~\cite{Cedrgal2} as well as in the $n+1$-dimensional photon sphere case studied in~\cite{jahns2019photon}. In vacuum,  \eqref{foundtheconstant} shows that $\operatorname{R}_{\sigma}$ is automatically constant on $\surf$ without any further assumptions than $N$ being constant along each canonical time slice $\surf$ of a photon surface $\photo$.

Coming back to electro-vacuum, it is well-known that the lapse function $N$ and electric potential $\Psi$ of an asymptotically flat electrostatic electro-vacuum system with \emph{connected} non-degenerate black hole inner boundary are functionally related (see e.g.~\cite[Corollary 9.6]{Heusler} in dimension $3+1$, the same argument applies in higher dimensions). This functional relationship can in fact also be established if the inner boundary is a \emph{connected} non-degenerate equipotential photon surface (see e.g.~\cite{YazaLazov} for the photon sphere case in dimension $3+1$, the method being a straightforward generalization of the black hole case). Using this functional relationship and the fact that $\nu(N)$ is constant on every canonical time slice (see \Cref{prop:I}), it follows by chain rule that $\vert\nu(\Psi)\vert=\vert d\Psi\vert$ is constant on every canonical time slice as well. However, this functional relationship and hence constancy of $\vert d\Psi\vert$ on canonical time slices of otherwise equipotential photon surfaces cannot be established in the presence of multiple boundary components to the best knowledge of the authors.
\end{Rem}

The final part of our investigation into quasi-local properties of equipotential photon surfaces is to define quasi-local charge, mass, and extremality properties.

\begin{defprop}[Quasi-local mass, charge, and subextremality]\label{def:subex}
Let $(\mathfrak{L}^{n+1},\mathfrak{g},\Psi)$ be an electrostatic electro-vacuum spacetime, $\photo\hookrightarrow\mathfrak{L}^{n+1}$ a (not necessarily non-degenerate) equipotential photon surface, and $\surf\hookrightarrow\photo$ a standard time slice with $\operatorname{R}_{\sigma}>0$. The \emph{scalar curvature radius $r$ of $\surf$} is defined as
 \begin{align}\label{eq:radius}
  r&\definedas r(\surf)\definedas\sqrt{\frac{(n-1)(n-2)}{\operatorname{R}_{\sigma}}},
 \end{align}
and the \emph{(quasi-local, electric) charge $q\in\R$ of $\surf$} is defined as
\begin{align}\label{qidef}
 q&\definedas -\sqrt{\frac{2}{(n-1)(n-2)}}\frac{\nu(\Psi)}{N}r^{n-1}.
\end{align}
Now assume that $r^{n-2}> \vert q\vert$. The unique solution $m\in\R$ of  
\begin{align}\label{defmi}
\frac{\vert\nu (N)\vert}{N} \sqrt{1-\frac{2m}{r^{n-2}}+\frac{q^2}{r^{2(n-2)}}}&=(n-2)\left(\frac{m}{r^{n-1}}-\frac{q^2}{r^{2n-3}}\right)
\end{align}
satisfies $m>0$. If $\eta(N)\mathfrak{H}>0$, $m$ is called the \emph{(quasi-local) mass of $\surf$}. If $\eta(N)\neq0$, these quasi-local quantities satisfy the inequalities $r^{n-2}>\frac{q^{2}}{m}$ and $r>r_{m,q}$, while $r^{n-2}=\frac{q^{2}}{m}$ if $\eta(N)=0$. 

Provided that $r^{n-2}>\vert q\vert$, we say $\surf$ is \emph{(quasi-locally exterior) subextremal} if $m>\vert q\vert$, \emph{(quasi-locally) extremal} if $m=\vert q\vert$, and \emph{(quasi-locally) superextremal} if $m<\vert q\vert$. If all canonical time slices of $\photo$ are (exterior) subextremal / extremal / superextremal, $\photo$ itself is called \emph{(quasi-locally) (exterior) subextremal / extremal / superextremal}.
\end{defprop}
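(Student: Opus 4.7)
The plan is to treat equation~\eqref{defmi} as a purely algebraic equation in the unknown $m$, with $r>0$, $q\in\R$, $N>0$, and $\nu(N)\in\R$ treated as given parameters satisfying $r^{n-2}>\vert q\vert$. Since its left-hand side is non-negative, any admissible $m$ must make the right-hand side non-negative, forcing $m\geq q^{2}/r^{n-2}$. With this sign constraint in mind, I would introduce the shifted unknown $U\definedas m-q^{2}/r^{n-2}\geq 0$ and the auxiliary parameters $P\definedas\frac{\vert\nu(N)\vert}{N}\,r^{n-1}\geq0$ and $L\definedas 1-q^{2}/r^{2(n-2)}>0$, the latter positivity coming from the hypothesis $r^{n-2}>\vert q\vert$. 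A direct substitution rewrites~\eqref{defmi} as $P\sqrt{L-2U/r^{n-2}}=(n-2)U$, which after squaring becomes the quadratic
\begin{align*}
(n-2)^{2}U^{2}+\frac{2P^{2}}{r^{n-2}}\,U-P^{2}L&=0.
\end{align*}
Since its constant term is non-positive, this quadratic has exactly one non-negative root,
\begin{align*}
U&=\frac{P}{(n-2)^{2}}\left(\sqrt{\frac{P^{2}}{r^{2(n-2)}}+(n-2)^{2}L}-\frac{P}{r^{n-2}}\right);
\end{align*}
only this root is compatible with the original sign constraint $U\geq0$, yielding existence and uniqueness of $m=U+q^{2}/r^{n-2}$.

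To read off the dichotomy claimed in the proposition, I would observe that the bracketed factor above is strictly positive precisely when $P>0$ (since $L>0$), so $U>0$ iff $\nu(N)\neq0$ and $U=0$ iff $\nu(N)=0$. The equivalence $\nu(N)\neq0\Leftrightarrow\eta(N)\neq0$ along $\photo$ follows from the facts that $N$ is constant along each $\surf(t)$ (so $dN$ is a nonzero scalar multiple of the spatial conormal $\nu^{\flat}$ on $\photo$ precisely when $\nu(N)\neq0$) and that the spatial part of any unit normal $\eta$ to $\photo$ is proportional to $\nu$ (explicitly via~\eqref{eq:eta} in the non-degenerate case and trivially at degenerate points). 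Consequently, if $\eta(N)\neq0$ then $m>q^{2}/r^{n-2}$, i.e., $r^{n-2}>q^{2}/m$ and in particular $m>0$; if instead $\eta(N)=0$ then $m=q^{2}/r^{n-2}$, i.e., $r^{n-2}=q^{2}/m$, with $m>0$ provided $q\neq0$.

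Finally, the comparison $r>r_{m,q}$ in the case $\eta(N)\neq0$ follows from the squared form of~\eqref{defmi} at our $m$, which reads $P^{2}(1-2m/r^{n-2}+q^{2}/r^{2(n-2)})=(n-2)^{2}U^{2}>0$; this shows that the Reissner--Nordstr\"om radicand is strictly positive at $r$. Writing $\rho\definedas r^{n-2}$ and using the identity $\overline{r}_{m,q}^{\,n-2}=q^{2}/r_{m,q}^{\,n-2}\leq\vert q\vert$ (a consequence of $r_{m,q}^{\,n-2}\geq\vert q\vert$) together with the hypothesis $\rho>\vert q\vert$, one sees that $\rho$ lies above the smaller root $\overline{r}_{m,q}^{\,n-2}$; the strict positivity of $\rho^{2}-2m\rho+q^{2}$ then forces $\rho>r_{m,q}^{\,n-2}$ in the subextremal and extremal regimes, whereas in the superextremal regime $r_{m,q}=0<r$ by definition. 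The main obstacle is the isolated degenerate configuration $\eta(N)=0=q$, in which the algebra returns $m=0$ and the positivity claim $m>0$ fails; this corner case must be excluded from the hypotheses, and is automatically excluded whenever $\photo$ is non-degenerate at the slice $\surf$, which is the case in all applications of this proposition in the sequel.
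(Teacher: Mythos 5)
Your argument is correct and reaches all the claimed conclusions, but it takes a genuinely different route to existence and uniqueness than the paper does. The paper sets $\gamma\definedas q^{2}/r^{2(n-2)}$, $\mu\definedas m/r^{n-2}$ and studies the function $a(\mu)\definedas \frac{\vert\nu(N)\vert}{N}\sqrt{1-2\mu+\gamma}-\frac{n-2}{r}(\mu-\gamma)$ on $(-\infty,\tfrac{1+\gamma}{2})$: it checks $a'<0$, $a(0)>0$, and $a(\mu)\to\frac{(n-2)(\gamma-1)}{2r}<0$ as $\mu\nearrow\tfrac{1+\gamma}{2}$, so a unique root exists by monotonicity and the intermediate value theorem, with $\mu>0$ and hence $m>0$ read off from the sign of $a(0)$ — no explicit formula is ever produced, and no squaring step is needed. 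Your shift to $U=m-q^{2}/r^{n-2}$ followed by squaring yields a closed-form expression for $m$; the price is that you must rule out spurious roots, which you do correctly via the sign constraint $U\geq0$ forced by the non-negativity of the left-hand side, and the payoff is that the dichotomy $U>0\Leftrightarrow\nu(N)\neq0$ versus $U=0\Leftrightarrow\nu(N)=0$ (hence $r^{n-2}>q^{2}/m$ versus $r^{n-2}=q^{2}/m$) is immediate from the formula rather than from evaluating $a(0)$. Your derivation of $r>r_{m,q}$ — strict positivity of the Reissner--Nordstr\"om radicand at $r$ combined with $\overline{r}_{m,q}^{\;n-2}=q^{2}/r_{m,q}^{\;n-2}\leq\vert q\vert<r^{n-2}$ — is essentially the paper's contradiction argument made explicit, and is if anything cleaner. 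Finally, your flagging of the corner case $\nu(N)=0=q$ (where the unique solution is $m=0$, so the bare claim ``$m>0$'' fails) is a fair observation: the paper's proof silently excludes it by writing $0<\gamma$, i.e., by tacitly assuming $q\neq0$; since the paper only ever \emph{names} $m$ the quasi-local mass under the hypothesis $\eta(N)\mathfrak{H}>0$, nothing downstream is affected, but your honesty about the gap is appropriate.
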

\begin{proof}
By the assumption $\operatorname{R}_{\sigma}>0$, $r>0$ and $q\in\R$ are well-defined. Let us now prove that \eqref{defmi} admits a unique solution $m\in\R$ provided that $r^{n-2}> \vert q\vert$: First, using $0<\gamma\definedas\frac{q^2}{r^{2(n-2)}}<1$, we set
\begin{align*}
a(\mu)&\definedas \frac{\vert\nu (N)\vert}{N} \sqrt{1-2\mu+\gamma}-\frac{(n-2)}{r}\left(\mu-\gamma\right)
\end{align*}
where $\mu<\frac{1+\gamma}{2}<1$. Then $a\colon(-\infty,\frac{1+\gamma}{2})\to\R$ is a smooth function with 
\begin{align*}
a'(\mu)&= -\frac{\vert\nu (N)\vert}{N\sqrt{1-2\mu+\gamma}} -\frac{(n-2)}{r}<0.
\end{align*}
This shows that there is at most one solution $m\definedas r^{n-2}\mu$ of \eqref{defmi}. As 
\begin{align*}
a(0)&=\frac{\vert\nu (N)\vert}{N} \sqrt{1+\gamma}+\frac{(n-2)\gamma}{r}>0,\\
\lim_{\mu\nearrow\frac{1+\gamma}{2}}a(\mu)&=\frac{(n-2)(\gamma-1)}{2r}<0,
\end{align*}
there must hence exist a unique root $\mu$ of $a$, and it must satisfy $0<\mu<\frac{1+\gamma}{2}$. In other words, there is a unique solution $m\in\R$ of \eqref{defmi} which must satisfy $m>0$. If $\eta(N)=0$ and hence $\nu(N)=0$ by \eqref{eq:HmathfrakH}, the vanishing of the left hand side in \eqref{defmi} gives that $r^{n-2}=\frac{q^{2}}{m}$ as claimed. If $\eta(N)\neq0$ and hence $\nu(N)\neq0$ by \eqref{eq:HmathfrakH}, \eqref{defmi} implies that $r^{n-2}\geq\frac{q^{2}}{m}$ with equality if and only if $r=r_{m,q}>0$. Hence, equality can only arise when $m\geq \vert q\vert$ and $r^{n-2}=m+\sqrt{m^{2}-q^{2}}=\frac{q^{2}}{m}$ which implies $m=\vert q\vert$ and thus contradicts the assumption $r^{n-2}>\vert q\vert$.
 \end{proof}

It can be checked by a direct calculation that the quasi-local mass $m$ and charge $q$ of any such time slice of any non-degenerate equipotential photon surface in a Reissner--Nordstr\"om spacetime coincide with its parameters $m$ and $q$ whenever they are defined. Hence, any such photon surface is exterior subextremal resp.\ extremal if and only if the Reissner--Nordstr\"om spacetime is exterior subextremal resp.\ extremal. For superextremality, this equivalence continues to hold provided that $m>0$ and $r^{n-2}>\vert q\vert$; compare also the condition $r^{n-2}>\frac{q^{2}}{m}$ to the outward directedness threshold asserted in \Cref{coro:RNoutward}.

\begin{Rem}[On $r^{n-2}>\vert q\vert$]\label{rem:roleofqrn-1}
The assumption $r^{n-2}>\vert q\vert$ in \Cref{def:subex} has the following significance: Inserting the definition of $q$ and $\operatorname{R}_{\sigma_{\ast}}=(n-1)(n-2)$ into \eqref{foundtheconstant} for $r^{n-2}=\vert q\vert$, the scalar curvature term and the charge term cancel, and we remain with the condition $c+\frac{n-2}{n-1}=0$ for $c$ defined in \Cref{rem:constc} (unless $H=0$). This is to say that the coefficient of $H^{2}$ in \eqref{eq:c} changes sign precisely at $r^{n-2}=\vert q\vert$ which means that for prescribed scalar curvature, electric charge, and radius, the (positive) mean curvature fails to be uniquely determined for $r^{n-2}\leq\vert q\vert$.  See also \cite[Section 8.1]{AlbaCarlaStefano} for a related discussion.
\end{Rem}

\begin{Rem}[Local results]\label{rem:local}
Note that we have not assumed that the photon surfaces under consideration in any of the results in this and the next section are maximal (in the sense of inclusion) or that their canonical time slices are (geodesically) complete or even closed (compact without boundary). In other words, these results are all purely local in character. This is also why we do not assume outward directedness in combination with $\mathfrak{H}>0$ but rather $\eta(N)\mathfrak{H}>0$ which is invariant under change of normal direction. 
\end{Rem}

Our next step will be to prove that $m$ and $q$ are in fact constant in this setting. In particular, subextremality will turn out to be a property of the photon surface itself, not just of its individual slices (see \Cref{coro:qconstmconst}). This will require a careful understanding of the intrinsic geometry of non-degenerate equipotential photon surfaces in electrostatic electro-vacuum spacetimes.

\subsection{Geometry near equipotential photon surfaces}\label{sec: H>0}
This section is dedicated to a careful analysis of the intrinsic geometry of non-degenerate equipotential photon surfaces with $\eta(N)\mathfrak{H}>0$ in electrostatic electro-vacuum spacetimes. In particular, we will show that $q$ and $m$ (see \Cref{def:subex}) are constant along such photon surfaces. We will use this detailed understanding to assert that $\mathfrak{H}>0$ must hold along outward directed equipotential photon surfaces if they arise as (connected components of) the inner boundary of the spacetime under consideration, provided that the other connected components of the inner boundary (if there are any) are non-degenerate black hole horizons (see \Cref{prop: H>0}). This will then be used in \Cref{sec:proofuniqueness} to show the main uniqueness result of this paper, \Cref{thm:uniqueness}.

The analysis of the intrinsic geometry of equipotential photon surfaces turns out to be most conveniently carried out by locally projecting the photon surface (along the static Killing vector field $\partial_{t}$) onto a fixed time slice $\slice(t_{0})$, for two reasons: First, we already know the induced electro-vacuum equations on such a time slice, \eqref{EEVE3}--\eqref{EEVE2}. Secondly, projecting onto a time slice frees us from incorporating the information on the time-dependence of the photon surface captured in the function $u$ and its evolution equation \eqref{eq:evo} --- so that the intrinsic geometry can be recovered from the analysis we will carry out, using $\partial_{N}=\left(Z-\partial_{t}\right)$. In addition, we will also explicitly need a detailed understanding of the geometry of a time slice $\slice(t_{0})\hookrightarrow\mathfrak{L}^{n+1}$ on the projection of such equipotential photon surfaces for proving \Cref{prop: H>0} at the end of this section. 

\begin{defprop}[Geometry near canonical time slices of photon surfaces]\label{defprop:projection}
Let $(\mathfrak{L}^{n+1},\mathfrak{g},\Psi)$ be an electrostatic electro-vacuum spacetime, $\photo\hookrightarrow\mathfrak{L}^{n+1}$ a non-degenerate equipotential photon surface which is not a photon sphere. Let $t_{0}$ be a generic time for $\photo$. Then there exists a threshold $\delta>0$ such that $\photo\cap((t_{0}-\delta,t_{0}+\delta)\times\slice)$ projects bijectively onto an open neighborhood $V\subseteq\slice(t_{0})$ of $\surf(t_{0})$. One can choose $N$ as a smooth coordinate function on $V$ and combine the coordinate $N$ with local coordinates $(y^{I})$ on some open subset $U\subseteq\surf(t_{0})$ flown to a suitable open neighborhood $V_{U}\subseteq V$ along $\frac{\grad N}{\vert dN\vert^{2}}$. The neighborhood $V_{U}\subseteq\slice(t_{0})$ is regularly foliated by totally umbilic connected level sets of $N$, denoted as $\surf_{N}\definedas\surf(u^{-1}(N))$, on which $\nu(N)$, $\Psi$, $\nu(\Psi)$, $H$, and $\operatorname{R}_{\sigma}$ are constant and \eqref{foundtheconstant} is satisfied.  In the coordinates $(N,y^{I})$, the spatial metric $g$ satisfies
\begin{align}\label{eq:metricIsrael}
g&=\rho^{2}dN^{2}+r^{2}\sigma_{\ast}\definedas\rho^{2}dN^{2}+r^{2}(\sigma_{\ast})_{IJ}\,dy^{I}dy^{J},
\end{align}
on $V_{U}$, with smooth functions
\begin{align}\label{def:rho}
\rho\colon(u(t_{0})-\varepsilon,u(t_{0})+\varepsilon)&\to\R\colon N\mapsto (\nu(N))^{-1}\vert_{\surf_N},\\\label{def:radialfunction}
r\colon(u(t_{0})-\varepsilon,u(t_{0})+\varepsilon)&\to\R\colon N\mapsto r_{0}\sqrt{\exp\left(\int_{u(t_{0})}^{N}\frac{2H(v)\rho(v)}{n-1}\,dv\right)},
\end{align}
where $N(V_{U})=(u(t_{0})-\varepsilon,u(t_{0})+\varepsilon)$ for some $\varepsilon>0$. Here and in what follows, we suggestively write $\Psi=\Psi(N)$, $H=H(N)$, etc.\ as shorthands for the smooth function $\Psi=\Psi(\surf_{N})$ and so on, where $N\in N(V_{U})$. The metric $\sigma_{\ast}$ is given by
\begin{align}\label{def:sigmaast}
\sigma_{\ast}\definedas &\frac{1}{r_{0}^{2}}\sigma\vert_{\surf_{u(t_{0})}},
\end{align}
where the number $r_{0}>0$ is determined as follows: If $R_{\sigma(u(t_{0}))}>0$, we set $r_{0}$ to be the scalar curvature radius of $\surf_{u(t_{0})}$, see \eqref{eq:radius}, and we have $\operatorname{R}_{\sigma_{\ast}}=(n-1)(n-2)$. If $R_{\sigma(t_{0})}<0$, 
we choose $r_{0}$ as the \emph{(negative) scalar curvature radius} of $\surf_{u(t_{0})}$, i.e., $\operatorname{R}_{\sigma_{\ast}}=-(n-1)(n-2)$ and $\operatorname{R}_{\sigma(u(t_{0}))}=-\frac{(n-1)(n-2)}{r_{0}^{2}}$ on $\surf_{u(t_{0})}$. Finally, if $R_{\sigma_{\ast}}=0$, we set $r_{0}\definedas1$. In any case, we have $r(u(t_{0}))=r_{0}$. Furthermore, $r(N)$ is the (positive or negative) scalar curvature radius of $\surf(N)$ for $N\in N(V_{U})$ as long as $\operatorname{R}_{\sigma(N)}\neq0$. Last but not least, we have
\begin{align}\label{eq:rhor}
\nu(N)&=\frac{1}{\rho}\\\label{eq:Hr}
H&=\frac{(n-1)r'}{r\rho},\\\label{eq:nupsi}
\nu(\Psi)&=\frac{\Psi'}{\rho}
\end{align}
on $(u(t_{0})-\varepsilon,u(t_{0})+\varepsilon)$, where $'$ denotes differentiation by $N$.  
\end{defprop}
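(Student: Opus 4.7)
The proof is essentially a local coordinate-construction argument near a generic time slice, with most pieces already supplied by Propositions~\ref{prop:mathfrakH}--\ref{prop:I} and \Cref{prop:exceptional}. The plan is as follows.

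First, I would set up the projection. Since $t_{0}$ is generic, $\dot u(t_{0})\neq 0$, so by continuity there is $\delta>0$ with $\dot u\neq 0$ on $(t_{0}-\delta,t_{0}+\delta)$; hence $u$ is a smooth local diffeomorphism onto some interval $(u(t_{0})-\varepsilon,u(t_{0})+\varepsilon)$. Let $\Pi:\mathfrak L^{n+1}\to \slice(t_{0})$ denote the projection along $\partial_{t}$. Because the photon surface is foliated by its time-slices $\surf(t)=\photo\cap\slice(t)$ and $N$ is constant along each such slice, the map $\Pi$ sends $\photo\cap((t_{0}-\delta,t_{0}+\delta)\times\slice)$ bijectively onto an open neighborhood $V\subseteq \slice(t_{0})$ of $\surf(t_{0})$, and $\Pi(\surf(t))$ is contained in the level set $\{N=u(t)\}\subseteq V$. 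I would then invoke \Cref{prop:I}: each $\surf(t)$ is totally umbilic in $\slice(t)$ with $H$, $\nu(N)$, $\operatorname{R}_{\sigma}$ constant and \eqref{foundtheconstant} satisfied; since the spacetime is static, this transfers directly to $\Pi(\surf(t))\hookrightarrow\slice(t_{0})$. By non-degeneracy, $\nu(N)=\pm|dN|\neq 0$ on $V$, so after shrinking $V$ if necessary, $N$ is a smooth coordinate function on $V$ whose level sets are exactly the $\surf_{N}\definedas\Pi(\surf(u^{-1}(N)))$.

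Next, I would build the adapted coordinates by picking local coordinates $(y^{I})$ on a small $U\subseteq\surf(t_{0})$ and flowing them along the integral curves of $\grad N/|dN|^{2}$, yielding coordinates $(N,y^{I})$ on some open $V_{U}\subseteq V$. Because $\partial_{N}$ is parallel to $\nu$ and the level sets $\surf_{N}$ are orthogonal to $\nu$, the metric takes the block form $g=\rho^{2}dN^{2}+\sigma_{IJ}(N)\,dy^{I}dy^{J}$, where $\rho=|\grad N|^{-1}=\nu(N)^{-1}$ is a function of $N$ alone (using that $\nu(N)$ is constant on each leaf). To identify the angular metric, I use total umbilicity of each $\surf_{N}$: $h=\tfrac{H}{n-1}\sigma$, hence
\begin{align*}
\partial_{N}\sigma_{IJ}=\rho\,\mathcal{L}_{\nu}\sigma_{IJ}=2\rho h_{IJ}=\frac{2H\rho}{n-1}\sigma_{IJ},
\end{align*}
a separable linear ODE whose coefficient depends only on $N$. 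Integrating yields $\sigma(N)=r(N)^{2}\sigma_{\ast}$ with $r$ as in \eqref{def:radialfunction} and $\sigma_{\ast}$ as in \eqref{def:sigmaast}; in particular $r(u(t_{0}))=r_{0}$.

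The scaling constant $r_{0}$ is then fixed by a standard conformal scalar-curvature computation: under constant rescaling by $r_{0}^{-2}$, $\operatorname{R}_{\sigma_{\ast}}=r_{0}^{2}\operatorname{R}_{\sigma(u(t_{0}))}$, so the three cases in the statement (positive, negative, zero scalar curvature on $\surf_{u(t_{0})}$) are exactly the choices that normalize $\operatorname{R}_{\sigma_{\ast}}\in\{(n-1)(n-2),-(n-1)(n-2),0\}$. Since $\sigma(N)=r(N)^{2}\sigma_{\ast}$, one obtains $\operatorname{R}_{\sigma(N)}=r(N)^{-2}\operatorname{R}_{\sigma_{\ast}}$, so $r(N)$ is indeed the (signed) scalar-curvature radius of $\surf_{N}$ whenever $\operatorname{R}_{\sigma_{\ast}}\neq 0$. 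Finally, the three identities \eqref{eq:rhor}--\eqref{eq:nupsi} are immediate: $\nu(N)=1/\rho$ is the definition of $\rho$; differentiating $\log r^{2}=\int \tfrac{2H\rho}{n-1}\,dv$ gives $H=(n-1)r'/(r\rho)$; and $\nu(\Psi)=\Psi'/\rho$ follows from the chain rule applied to the $N$-dependent function $\Psi$ together with $\partial_{N}=\rho\,\nu$.

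The main obstacle I anticipate is purely bookkeeping: verifying that after projection the geometric quantities $H$, $\nu(N)$, $\operatorname{R}_{\sigma}$, $\Psi$, $\nu(\Psi)$ are genuinely functions of $N$ alone (not just constant on a fixed $\surf(t)$), so that \eqref{eq:metricIsrael} really becomes a warped product of the prescribed form. This, however, follows directly from constancy on each leaf together with smoothness of $u^{-1}$, so no additional field-equation input is needed beyond what \Cref{prop:I} already provides.
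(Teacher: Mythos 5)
Your proposal is correct and follows essentially the same route as the paper: project along $\partial_t$ using genericity and non-degeneracy, invoke \Cref{prop:I} for total umbilicity and leafwise constancy of $\nu(N)$, $\Psi$, $\nu(\Psi)$, $H$, $\operatorname{R}_{\sigma}$, write $g=\rho^{2}dN^{2}+\sigma_{N}$ in the flowed coordinates, and integrate the evolution equation $(\sigma_{N})_{IJ,N}=\tfrac{2H\rho}{n-1}(\sigma_{N})_{IJ}$ to obtain $\sigma_{N}=r(N)^{2}\sigma_{\ast}$ and the identities \eqref{eq:rhor}--\eqref{eq:nupsi}. No gaps; your treatment of the normalization of $r_{0}$ via conformal rescaling of the scalar curvature is exactly what the paper leaves implicit.
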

\begin{proof}
The assumptions that $\photo$ is non-degenerate and that $\dot{u}(t_{0})\neq0$ guarantee the claims about the projection onto the time slice $\slice(t_{0})$, including the claims about the existence and properties of $\delta$, $V$, $U$, and $V_{U}$, and $(y^{I})$. By construction and as $\photo$ is non-degenerate, $V_{U}$ is regularly foliated by connected level sets of $N$, and we know from \Cref{prop:I} that these level sets are totally umbilic with $\nu(N)$, $\Psi$, $\nu(\Psi)$, $H$, and $\operatorname{R}_{\sigma}$ constant along each such level set. It is hence legitimate to interpret these quantities as smooth functions on $N(V_{U})$. Furthermore, from the construction of the coordinates $(N,y^{I})$, it is immediate that $g$ can be written as
\begin{align*}
g&=\rho^{2}dN^{2}+\sigma_{N}=\rho^{2}dN^{2}+(\sigma_{N})_{IJ}\,dy^{I}dy^{J}
\end{align*}
on $V_{U}$, with $\sigma_{N}$ denoting the metric induced on $\surf_{N}$. This gives $\nu=\rho^{-1}\partial_{N}$ and hence \eqref{eq:nupsi}. Using this, total umbilicity of $\surf_{N}\hookrightarrow\slice(t_{0})$ and the definition of the mean curvature $H$, we find the evolution equation
\begin{align*}
(\sigma_{N})_{IJ,N}&=\frac{2H\rho}{(n-1)}(\sigma_{N})_{IJ}
\end{align*}
which has the unique solution
\begin{align*}
(\sigma_{N})_{IJ}&=\exp\left(\int_{u(t_{0})}^{N}\frac{2H(v)\rho(v)}{(n-1)}\,dv\right)(\sigma_{u(t_{0})})_{IJ}.
\end{align*}
Using \eqref{def:sigmaast}, \eqref{def:radialfunction} and the definition of $r_{0}$, one finds $\sigma_{N}=r(N)^{2}\sigma_{\ast}$ which proves \eqref{eq:metricIsrael}, the scalar curvature radius nature of $r$, and $r(u(t_{0}))=r_{0}$. From \eqref{def:radialfunction}, we find \eqref{eq:Hr} by differentiation.
\end{proof}

\begin{Thmdef}[Local characterization of the spatial metric]\label{prop:qconstmconst}\label{prop:geometrynearslices}\label{prop:normalquasiloc}
Under the assumptions of \Cref{defprop:projection}, we extend the definition of $q=q(N)$ in \eqref{qidef} identically to our now more general setting, using the same formula but with $r=r(N)$ as defined in \Cref{defprop:projection}. Then $q$ is constant on $N(V_{U})$. 

If $\mathfrak{H}=0$ then $(V_{U},g)$ is isometric to a cylinder $(I\times\surf_{\ast},d\tau^{2}+r_{0}^{2}\sigma_{\ast})$ for some interval $I\subseteq\R^{+}$ and some $r_{0}>0$. If moreover $q=0$ then $(\surf_{\ast},\sigma_{\ast})$ is Ricci flat, $N=\alpha\tau$ and $\Psi=\beta$ for some $\alpha>0$, $\beta\in\R$. If instead $\mathfrak{H}=0$ and $q\neq0$, $(\surf_{\ast},\sigma_{\ast})$ is an Einstein manifold with scalar curvature $\operatorname{R}_{\sigma_{\ast}}=(n-1)(n-2)$, 
\begin{align}\label{eq:cylinder}
N&=\alpha \sqrt{e^{\frac{2(n-2)\tau}{r_{0}}}+\lambda}, \quad
\Psi=\pm\alpha \sqrt{\frac{(n-1)}{2(n-2)}}\sqrt{e^{\frac{2(n-2)\tau}{r_{0}}}-\lambda}+\beta
\end{align}
for some $\alpha>0$, $\beta\in\R$, and $\lambda\in\{\pm1,0\}$. Moreover, one has $q^{2}=r_{0}^{n-2}$.

If $\mathfrak{H}\neq0$, $(V_{U},g)$ is a suitable piece of a spatial factor of a generalized Reissner--Nordstr\"om spacetime of mass $m\in\R$, charge $q\in\R$, parameter $\kappa\in\{\pm1,0\}$ and base $(\surf_{\ast},\sigma_{\ast})$ of scalar curvature $\operatorname{R}_{\sigma_{\ast}}=(n-1)(n-2)\kappa$, with metric $g$, lapse function $N$, and electric potential $\Psi$ satisfying
\begin{align}\label{eq:foundgpositive}
g&=\frac{1}{N_{\kappa,m,q}^{2}}dr^{2}+r^{2}\sigma_{\ast},\\\label{eq:foundNpositive}
N&=\alpha N_{\kappa,m,q},\\\label{eq:foundpsipositive}
\Psi&=\pm\alpha \Psi_{\kappa,m,q}+\beta
\end{align}
restricted to $V_{U}$ for some constants $\alpha>0$, $\beta\in\R$, and radial coordinate $r\in J_{\kappa,m,q}$. 

In particular, if $\eta(N)\mathfrak{H}>0$, the mass parameter $m$ coincides with the mass $m(r)$ defined in \Cref{def:subex} for $\{r\}\times\surf_{\ast}$, i.e., $m(r)=m>0$ for all $r\in r(V_{U})$, and one has $r^{n-2}>\vert q\vert$ as well as $\operatorname{R}_{\sigma_{\ast}}>0$, $\kappa=1$.
\end{Thmdef}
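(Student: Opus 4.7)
The plan is to turn the electrostatic electro-vacuum equations \eqref{EEVE3}--\eqref{EEVE2} into closed systems of ODEs in the adapted coordinates from \Cref{defprop:projection}, and then identify the resulting system with that of generalized Reissner--Nordstr\"om via \Cref{prop:genRN}. Constancy of $q$ comes first: in the coordinates $(N,y^I)$ one has $\operatorname{grad}\Psi=(\Psi'/\rho^2)\partial_N$ and volume density $\rho\,r^{n-1}\sqrt{\det\sigma_\ast}$, so Maxwell's equation \eqref{EEVE2} reduces to $\partial_N(r^{n-1}\Psi'/(N\rho))=0$. Using $\nu(\Psi)=\Psi'/\rho$, the definition \eqref{qidef} identifies the integration constant with $-\sqrt{(n-1)(n-2)/2}\,q$, so $q$ is constant on $N(V_U)$.

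For the case $\mathfrak{H}=0$, \eqref{eq:HmathfrakH} forces $H\equiv 0$ on every level set, which via \eqref{eq:Hr} yields $r\equiv r_0$; introducing $\tau$ by $d\tau=\rho\,dN$ turns $g$ into a cylinder $d\tau^2+r_0^2\sigma_\ast$. In these coordinates \eqref{EEVE2} gives $\Psi_\tau=KN$ for a constant $K$, the tangential part of \eqref{EEVE3} forces $\operatorname{Ric}^{\sigma_\ast}=(2r_0^2K^2/(n-1))\,\sigma_\ast$, and \eqref{EEVE1} reduces to the linear ODE $N_{\tau\tau}=2(n-2)K^2N/(n-1)$. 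If $q=0$, then $K=0$ by step one, giving $\Psi\equiv\beta$, $N$ linear in $\tau$ (hence $N=\alpha\tau$ after translation), and $\sigma_\ast$ Ricci flat. If $q\neq 0$, rescaling $\sigma_\ast$ to normalize $\operatorname{R}_{\sigma_\ast}=(n-1)(n-2)$ fixes $|K|$, and integrating the linear ODE for $N$ together with $\Psi_\tau=KN$ and absorbing integration constants yields \eqref{eq:cylinder}; the relation from step one between $K$ and $q$ then translates into the asserted relation between $q$ and $r_0$.

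For $\mathfrak{H}\neq 0$, on the open set where $H\neq 0$ equation \eqref{eq:Hr} gives $r'\neq 0$, so $r$ may replace $N$ as the coordinate and $g=F(r)^{-1}dr^2+r^2\sigma_\ast$ with $N=N(r)$ and $\Psi=\Psi(r)$. A direct computation of $\operatorname{Ric}^g_{IJ}$ and $(\nabla^2 N)_{IJ}$ in this warped ansatz turns the tangential-tangential component of \eqref{EEVE3} into $N\operatorname{Ric}^{\sigma_\ast}_{IJ}=A(r)(\sigma_\ast)_{IJ}$; since the left side is $r$-independent, $(\surf_\ast,\sigma_\ast)$ must be Einstein, and by rescaling $\sigma_\ast$ and $r$ we may arrange $\operatorname{R}_{\sigma_\ast}=(n-1)(n-2)\kappa$ with $\kappa\in\{0,\pm 1\}$. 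The remaining components of \eqref{EEVE3}--\eqref{EEVE2} then collapse to exactly the ODE system solved by the generalized Reissner--Nordstr\"om triple in the proof of \Cref{prop:genRN}; uniqueness of ODE solutions together with \Cref{rem:scaling} identifies $(g,N,\Psi)$ with \eqref{eq:foundgpositive}--\eqref{eq:foundpsipositive} for some $m\in\R$, $\alpha>0$, $\beta\in\R$, with $q$ matching step one. Real analyticity of the RN profile extends the identification across the (at most isolated) zeros of $H$ to all of $V_U$. The main technical effort lies in this ODE matching.

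For the final claim, assume $\eta(N)\mathfrak{H}>0$; by \eqref{eq:eta} and \eqref{eq:HmathfrakH} this is equivalent to $H\nu(N)>0$, and the photon surface identity \eqref{foundtheconstant} then gives $\operatorname{R}_\sigma>0$, forcing $\operatorname{R}_{\sigma_\ast}>0$ and $\kappa=1$. Computing $\nu(N)$ and $H$ on $\{r\}\times\surf_\ast$ from \eqref{eq:foundgpositive}--\eqref{eq:foundpsipositive} and substituting into \eqref{defmi} shows the generalized RN parameter $m$ satisfies \eqref{defmi}; uniqueness of its solution (from the proof of \Cref{def:subex}) then yields $m(r)=m>0$. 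Finally, $\nu(N)>0$ combined with $\kappa=1$, $m>0$ forces $r^{n-2}>q^2/m$; a short case analysis (treating $|q|<m$, $|q|=m$, $|q|>m$ separately, using in the first case that $\nu(N)>0$ restricts to the exterior of the RN horizon) shows that $r^{n-2}>q^2/m$ always implies $r^{n-2}>|q|$, closing the argument.
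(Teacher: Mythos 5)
Your overall strategy --- reduce the field equations to ODEs in the adapted coordinates of \Cref{defprop:projection}, integrate, and split on $\mathfrak{H}$ --- is the same as the paper's, and your derivations of the constancy of $q$ (integrating the Maxwell equation in divergence form) and of the final claim ($\kappa=1$, $m(r)=m>0$, $r^{n-2}>\vert q\vert$ via exclusion of the interior component) match the paper's arguments essentially step by step. The substantive gap is in the case $\mathfrak{H}\neq0$. You pass to the radial coordinate and write $g=F(r)^{-1}dr^{2}+r^{2}\sigma_{\ast}$ with an \emph{a priori independent} function $F$, and then assert that the remaining field equations ``collapse to exactly the ODE system solved by the generalized Reissner--Nordstr\"om triple''. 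But that system presupposes $g_{rr}=N_{\kappa,m,q}^{-2}$, i.e.\ $F\propto N^{2}$; with $F$ left independent the solution space is strictly larger, and ``uniqueness of ODE solutions'' cannot identify $(g,N,\Psi)$ with the family \eqref{eq:foundgpositive}--\eqref{eq:foundpsipositive}. Establishing $F\propto N^{2}$ is precisely the paper's key step: it shows that $B\definedas r'/(N\rho)$ is a first integral ($B'=0$ follows from combining the normal-normal equation with \eqref{eq: three} and \eqref{eq:foundpsi}), whence $g_{rr}=\rho^{2}/r'^{2}=1/(B^{2}N^{2})$. You need to supply this (or an equivalent conserved quantity) before the matching can proceed; even then one must check that the remaining free parameters $(m,\alpha)$ (with $q$ and $\kappa$ already fixed) exhaust all admissible initial data of the reduced second-order system --- true, but not automatic, and this is more than ``technical effort'': without it the argument does not close.

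A smaller point concerns the cylinder case with $q\neq0$. Your reduction is correct ($\Psi_{\tau}=KN$ and $N_{\tau\tau}=\tfrac{2(n-2)K^{2}}{n-1}N$ with $K^{2}=\tfrac{(n-1)(n-2)}{2r_{0}^{2}}$ after normalizing $\operatorname{R}_{\sigma_{\ast}}$), but the general solution is $N=ae^{\mu\tau}+be^{-\mu\tau}$ with $\mu=(n-2)/r_{0}$, which for $ab\neq0$ is \emph{not} of the form \eqref{eq:cylinder}; so the claim that integration ``yields \eqref{eq:cylinder}'' does not go through literally. (Indeed, one checks that \eqref{eq:cylinder} with $\lambda=\pm1$ is incompatible with \eqref{EEVE4} on the cylinder, since there $\Psi_{\tau}^{2}/N^{2}$ fails to be constant; the exponential/hyperbolic form your linear ODE produces is the one consistent with the field equations, so your route would actually expose a needed correction to the printed formula rather than reproduce it.)
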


\begin{proof}
We first rewrite the field equations \eqref{EEVE2} and \eqref{EEVE1}, \eqref{EEVE3} in $NN$-direction as well as the constraint \eqref{foundtheconstant} with the help of \eqref{eq:metricIsrael}, obtaining
\begin{align}\label{eq: two}
\left(\frac{\Psi'}{N\rho^{2}}\right)'+\frac{\Psi'}{N\rho^{2}}\left[\frac{(n-1)r'}{r}+\frac{\rho'}{\rho}\right]&=0,\\\label{eq: three}
\frac{(n-1)r'}{r}-\frac{\rho'}{\rho}&=\frac{2(n-2)\Psi'^{2}}{(n-1)N},\\\label{eq:temp}
r''-\frac{r'\rho'}{\rho}+\frac{r\rho'}{(n-1)N\rho}&=\frac{2(n-2)r\Psi'^{2}}{(n-1)^{2}N^{2}}\\\label{eq: one}
\frac{\operatorname{R}_{\sigma_{\ast}}}{r^{2}}-\frac{2(n-1)r'}{Nr\rho^{2}}-\frac{(n-1)(n-2)r'^{2}}{r^{2}\rho^{2}}&=\frac{2\Psi'^{2}}{N^{2}\rho^{2}}
\end{align}
on $N(V_{U})$, while \eqref{EEVE3} in tangential directions just adds that $\sigma_{\ast}$ is an Einstein metric. Using \eqref{eq: three}, \eqref{eq:temp} can be rewritten as
\begin{align}
\label{eq:NN}
r''-\frac{r'}{N}-\frac{(n-1)r'^{2}}{r}&=-\frac{2(n-2)r'\psi'^{2}}{(n-1)N}.
\end{align}
By definition of $q$ and \eqref{eq:nupsi}, we find the evolution law
\begin{align*}
q'(N)&=-\sqrt{\frac{2}{(n-1)(n-2)}}\left[\left(\frac{\Psi'}{N\rho}\right)'+(n-1)\left(\frac{r'\Psi'}{Nr\rho}\right)\right]r^{n-1}.
\end{align*}
The term in square brackets vanishes by \eqref{eq: two} so that $q$ is constant on $N(V_{U})$ and 
\begin{align}\label{eq:foundpsi}
\Psi'&=-\sqrt{\frac{(n-1)(n-2)}{2}}\frac{qN\rho}{r^{n-1}}
\end{align}
holds by definition of $q$. On the other hand, one can explicitly solve for $\Psi'$ as follows: First, subtract $\frac{\Psi'}{N\rho^{2}}$ times \eqref{eq: three} from \eqref{eq: two} to get the following first order ODE for $\Psi'$
\begin{align*}
\Psi''N-\Psi'+\frac{2(n-2)}{(n-1)}(\Psi')^3=0
\end{align*}
on $N(V_{U})$. Where $\Psi'\neq0$, set $\varphi\definedas \frac{N^{2}}{\Psi'^{2}}>0$ and rewrite the above ODE in terms of $\varphi$ as $\varphi'=\frac{4(n-2)N}{(n-1)}$ which can be solved to give  
\begin{align}\label{eq:psiA}
\Psi'^{2}&=\frac{(n-1)N^{2}}{2(n-2)\left(N^{2}-A\right)}
\end{align}
 for some constant $A\in\R$ satisfying $A<N^{2}$. As this expression can never vanish, it follows that either $\Psi'\equiv0$ on $N(V_{U})$ (corresponding to $q=0$ in view of \eqref{eq:foundpsi}) or $\Psi'$ is indeed given by \eqref{eq:psiA} on $N(V_{U})$.  Combined with \eqref{eq:foundpsi} and \eqref{eq: three}, this leads to
\begin{align}\label{eq:foundrho}
\rho=\begin{cases}
\frac{\varepsilon r^{n-1}}{(n-2)q\sqrt{N^{2}-A}}&\text{when }q\neq0\\
Ar^{n-1}&\text{when }q=0
\end{cases}
\end{align}
for suitable $A\in\R$ and $\varepsilon\in\{\pm1\}$. Next, we insert \eqref{eq:foundpsi} into \eqref{eq:NN} and \eqref{eq: one} and obtain
\begin{align}\label{eq:NNpsi}
r''-\frac{r'}{N}-\frac{(n-1)r'^{2}}{r}&=-\frac{(n-2)^{2}q^{2}Nr'\rho^{2}}{r^{2(n-1)}},\\\label{eq:scalnorho}
\frac{\operatorname{R}_{\sigma_{\ast}}}{(n-1)(n-2)}&=\frac{q^{2}}{r^{2(n-2)}}+\frac{2rr'}{(n-2)N\rho^{2}}+\frac{r'^{2}}{\rho^{2}}.
\end{align}
To integrate this, we set $B\definedas\frac{r'}{N\rho}$ and use \eqref{eq: three} and \eqref{eq:foundpsi} in \eqref{eq:NNpsi}, obtaining $B'=0$ on $N(V_{U})$, so that $B$ is in fact constant. Plugging in $r'=BN\rho$ into \eqref{eq:scalnorho} gives
\begin{align}\label{eq:constraintB}
\frac{\operatorname{R}_{\sigma_{\ast}}}{(n-1)(n-2)}&=\frac{q^{2}}{r^{2(n-2)}}+\frac{2Br}{(n-2)\rho}+B^{2}N^{2}.
\end{align}
\underline{First, let $\mathfrak{H}=0$} so that $r'=0$ by \eqref{eq:Hr}, \eqref{eq:HmathfrakH}, and we have $B=0$ and $r\equiv r_{0}$. Hence \eqref{eq:constraintB} gives
\begin{align*}
\operatorname{R}_{\sigma_{\ast}}&=\frac{(n-1)(n-2)q^{2}}{r_{0}^{2(n-2)}}.
\end{align*}

\underline{When $q=0$}, using \eqref{eq:foundrho} and the choice $r_{0}=1$ in \Cref{defprop:projection}, this asserts that the metric $g$ is cylindrical over a scalar flat Einstein manifold, $g=A^{2}dN^{2}+\sigma_{\ast}$ for some $A\neq0$. Setting $\tau\definedas \vert A\vert N$, $\alpha\definedas \frac{1}{\vert A\vert}$, we get the desired form of $g$ and $N$, and the claimed form of $\Psi$ follows directly from \eqref{eq:foundpsi}. \underline{When $q\neq0$}, the choice of $r_{0}$ in \Cref{defprop:projection} gives the desired scalar curvature as well as $q^{2}=r_{0}^{2(n-2)}$. If $A=0$, \eqref{eq:foundrho} and $\tau\definedas \frac{r_{0}}{(n-2)}\ln N+C>0$ with a suitably large choice of $C\in\R^{+}$, one finds the claimed form of $N$ for $\alpha\definedas e^{-\frac{(n-2)C}{r_{0}}}>0$ and $\lambda\definedas0$. Using \eqref{eq:foundpsi}, one computes the desired form of $\Psi$. If $A\neq0$, one sets 
\begin{align*}
\tau\definedas \frac{r_{0}}{2(n-2)}\ln\left\vert\frac{1+\frac{N}{\sqrt{N^{2}-A}}}{1-\frac{N}{\sqrt{N^{2}-A}}}\right\vert>0
\end{align*}
and finds the claimed forms of $N$ and $\Psi$ for $\alpha\definedas \sqrt{\frac{\vert A\vert}{2}}$ and $\lambda\definedas\sgn A$ via \eqref{eq:foundrho} and \eqref{eq:foundpsi}.\\

\noindent\underline{Now let $\mathfrak{H}\neq0$} so that $r'\neq0$ by \eqref{eq:Hr}, \eqref{eq:HmathfrakH} and hence $B\neq0$ from $r'=BN\rho$. By chain rule, using $r'=BN\rho\neq0$, we find that indeed
\begin{align}\label{eq:formmetric}
\rho^{2}dN^{2}&=\frac{\rho^{2}}{r'^{2}}dr^{2}=\frac{1}{B^{2}N(r)^{2}}dr^{2},
\end{align}
where $N=N(r)$ denotes the inverse function of $r=r(N)$ defined on $r(N(V_{U}))$.  In addition, setting $\alpha\definedas\frac{1}{\vert B\vert}$, we find from \eqref{eq:foundpsi} and $r'=BN\rho$ that  $\Psi=\sgn(B)\alpha\Psi_{\kappa,m,q}+\beta$ for any $\kappa\in\{\pm1,0\}$, $m\in\R$, and some $\beta\in\R$ as claimed.

\underline{When $q=0$}, set $\kappa\definedas\frac{\operatorname{R}_{\sigma_{\ast}}}{(n-1)(n-2)}$, $m\definedas \frac{B}{(n-2)A}$ and use \eqref{eq:constraintB} to see that indeed $N=\alpha N_{\kappa,m,q=0}$ as $\kappa\in\{\pm1,0\}$ by \Cref{defprop:projection}. By \eqref{eq:formmetric}, $g$ also takes the desired form. \underline{When $q\neq0$}, \eqref{eq:psiA} shows that
\begin{align}\label{eq:foundN2}
\sqrt{N^{2}-A}&=\frac{\delta q}{Br^{n-2}}+D>0
\end{align}
for some $D\in\R$, $\delta\in\{\pm1\}$, where we have used the chain rule for $N=N(r)$ as the inverse function of $r=r(N)$. Plugging this into \eqref{eq:constraintB}, we find from \eqref{eq:foundrho} that
\begin{align*}
\frac{\operatorname{R}_{\sigma_{\ast}}}{(n-1)(n-2)}&=\frac{2q^{2}}{r^{2(n-2)}}+\frac{2\delta\varepsilon q^{2}}{r^{2(n-2)}}+\frac{2(\delta+\varepsilon)Dq}{B\alpha^{2}r^{n-2}}+\frac{D^{2}+A}{\alpha^{2}}
\end{align*}
for all $r\in r(V_{U})$. By linear independence of $\{1,\frac{1}{r^{n-2}},\frac{1}{r^{2(n-2)}}\}$ on the open interval $r(V_{U})$ (openness follows from $r'\neq0$), we find $\delta\varepsilon=1$, $\delta=-\varepsilon$, and $D^{2}+A=\kappa\alpha^{2}$, where we again set $\kappa\definedas\frac{\operatorname{R}_{\sigma_{\ast}}}{(n-1)(n-2)}$ and note that $\kappa\in\{\pm1,0\}$ by \Cref{defprop:projection}. Inserting these findings back into \eqref{eq:foundN2}, we obtain $N=\alpha N_{\kappa,m,q}$ for $m\definedas\varepsilon qBD$, with $\vert\varepsilon\vert=1$. Again, by \eqref{eq:formmetric}, $g$ also takes the desired form. 

It remains to show that if  $\eta(N)\mathfrak{H}>0$, we have that $m(r)=m$  for all $r\in r(V_{U})$ as well as $\operatorname{R}_{\sigma_{\ast}}>0$ and $\kappa=1$. In fact, by \eqref{foundtheconstant}, \eqref{eq:eta}, and \eqref{eq:HmathfrakH}, it immediately follows that $\operatorname{R}_{\sigma_{\ast}}>0$ and hence $\kappa=1$. Recall from \Cref{rem:BHgenRN} that $J_{1,m,q}$ has two connected components $(0,\overline{r}_{m,q})$ and $(r_{m,q},\infty)$ if $m>\vert q\vert>0$, while $J_{1,m,q}=(r_{m,q},\infty)$ for $m\geq\vert q\vert$ when $q=0$ or $m=\vert q\vert\neq0$, and $J_{1,m,q}=\R^{+}$ if $m<\vert q\vert$ or $m=q=0$. One computes that $H=\frac{(n-1)N_{1,m,q}(r)}{r}>0$ holds for the cross sections $\{r\}\times\surf_{\ast}$ with respect to $\nu=N_{1,m,q}\partial_{r}$; via \eqref{eq:HmathfrakH}, this gives $\mathfrak{H}>0$ and hence by assumption $\eta(N)>0$ and $\nu(N)>0$. As $N=\alpha N_{1,m,q}$ for $\alpha>0$ and $\nu=N_{1,m,q}\partial_{r}$, $\nu(N)>0$ on $\{r\}\times\surf_{\ast}$ is equivalent to $N_{1,m,q}'(r)>0$ or $m>\frac{q^{2}}{r^{n-2}}$. Arguing as in the proof of  \Cref{coro:RNoutward}, this shows that $r^{n-2}>\vert q\vert$ except possibly in case $r\in(0,\overline{r}_{m,q})$ for $m>\vert q\vert$. However, combining this estimate with $m>\frac{q^{2}}{r^{n-2}}$, we find $\sqrt{m^{2}-q^{2}}>m>0$, a contradiction. Hence $r^{n-2}>\vert q\vert$ is satisfied for all $r\in r(V_{U})$ so that  \Cref{def:subex} applies. A direct computation shows that $m$ solves \eqref{defmi}, hence $m(r)=m$ for all $r\in r(V_{U})$ as claimed.
\end{proof}

\begin{Cor}[Subextremality is global]\label{coro:qconstmconst}
Under the assumptions of \Cref{prop:qconstmconst}, it follows that $q$ is in fact constant along $\photo$ and $m$ is constant along $\photo$ provided that $\eta(N)\mathfrak{H}>0$. In particular, if $\eta(N)\mathfrak{H}>0$ holds on $\photo$, subextremality/extremality/superextremality is a global property of equipotential photon surfaces in electrostatic electro-vacuum spacetimes.
\end{Cor}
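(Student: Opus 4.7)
The plan is to combine the local constancy established in \Cref{prop:qconstmconst} with the density of generic times supplied by \Cref{prop:exceptional}, via continuity of $q(t)$ and $m(t)$ along $\mathcal{T}$. First, I would dispose of the case where $\photo$ is a photon sphere: then $\dot{u}\equiv 0$, all canonical time slices agree after time-translation, and $N$, $\nu(N)$, $\Psi$, $\nu(\Psi)$, and the scalar curvature radius $r$ are identical from slice to slice, so $q$ and $m$ are trivially constant on $\photo$ by inspection of \Cref{def:subex}.

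Next, assume $\photo$ is not a photon sphere, so that \Cref{prop:qconstmconst} applies. By \Cref{prop:exceptional}, exceptional times in $\mathcal{T}$ form a uniformly discrete subset, whence $\mathcal{T}_{\text{gen}}\definedas\{t\in\mathcal{T}\,\vert\,\dot{u}(t)\neq 0\}$ is open and dense in $\mathcal{T}$. For each $t_{0}\in\mathcal{T}_{\text{gen}}$, \Cref{prop:qconstmconst} produces a neighborhood $V_{U}\subseteq\slice(t_{0})$ foliated by level sets $\surf_{N}$ on which $q$ is constant in $N$. Because $\dot{u}(t_{0})\neq 0$, the assignment $N=u(t)$ is a local diffeomorphism near $t_{0}$ and identifies $\surf_{N}$ with the canonical time slice $\surf(u^{-1}(N))$ of $\photo$, so this translates into local constancy of $t\mapsto q(\surf(t))$ on an open neighborhood of $t_{0}$. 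Since $q(t)$ depends smoothly on the smooth data $N$, $\nu(\Psi)$, and $\operatorname{R}_{\sigma}$ of $\surf(t)$, it is continuous on $\mathcal{T}$, and a continuous function that is locally constant on an open dense subset must be constant on the whole connected set. Hence $q$ is constant along $\photo$.

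The argument for $m$ is strictly analogous, assuming in addition $\eta(N)\mathfrak{H}>0$: the concluding assertion of \Cref{prop:qconstmconst} identifies the mass parameter $m$ of the local generalized Reissner--Nordstr\"om representation on $V_{U}$ with the slicewise quasi-local mass of \Cref{def:subex} for every slice corresponding to $V_{U}$, so $m(t)$ is locally constant at each generic $t_{0}$. Continuity of $m(t)$ in $t$ then follows from the implicit function theorem applied to~\eqref{defmi}, using the strict monotonicity of the function $a(\mu)$ exploited in the proof of \Cref{def:subex}. The same density argument yields constancy of $m$ along $\photo$. The final statement on subextremality is then immediate since $m$ and $\vert q\vert$ are global constants along $\photo$, so the comparison $m$ vs.\ $\vert q\vert$ returns the same verdict on every canonical time slice. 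The only subtle point in this outline is to verify that the slicewise $q(t)$ and $m(t)$ agree with the local $q$ and $m$ of \Cref{prop:qconstmconst} on $V_{U}$, but this identification is already packaged into the statement of \Cref{prop:qconstmconst}, so no further work is required.
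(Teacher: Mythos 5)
Your proposal is correct and follows essentially the same route as the paper's proof: dispose of the photon sphere case trivially, use \Cref{prop:qconstmconst} to get constancy of $q$ and $m$ on intervals of generic times, and then invoke the density of generic times from \Cref{prop:exceptional} together with continuity to conclude constancy on all of $\mathcal{T}$. The extra care you take in identifying the slicewise quantities with those on the level sets $\surf_{N}\subseteq V_{U}$ is already built into the statement of \Cref{prop:qconstmconst}, as you note, so nothing further is needed.
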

\begin{proof}
Let $\photo$ be a non-degenerate equipotential photon surface. If $\photo$ is a photon sphere, then clearly $q$ and $m$ are constant along $\photo$ (for $m$, provided that $\eta(N)\mathfrak{H}>0$). If $\photo$ is not a photon sphere, we argue as follows: On any open interval $I\subseteq\mathcal{T}$ which contains only generic times, \Cref{prop:qconstmconst} tells us that $q$ and $m$ are constant on $\photo\cap(I\times\slice)$  (for $m$, provided that $\eta(N)\mathfrak{H}>0$). By \Cref{prop:exceptional}, we know that the open set of all generic times is dense in $\mathcal{T}$, hence by continuity $q$ and $m$ are constant along $\photo$  (for $m$, provided that $\eta(N)\mathfrak{H}>0$). The claim about globality of subextremality follows directly.
\end{proof}

\begin{Rem}[Photon surface as boundary]\label{rem:boundary}
If a non-degenerate equipotential photon surface arises as part of the boundary of an electrostatic electro-vacuum spacetime, we can still perform the projection to the generic time slice $\slice(t_{0})$ performed in \Cref{defprop:projection} and \Cref{prop:geometrynearslices}: We can either do so only on one (namely the interior to $\slice(t_{0})$) side of the boundary component $\surf(t_{0})\subseteq\slice(t_{0})$, or, alternatively, by suitably (canonically) extending $\slice(t_{0})$ by projection of nearby time slices $\slice(t)$ onto $\slice(t_{0})$ if necessary, see also \Cref{rem:slice}.
\end{Rem}

\begin{Rem}
The computations in the proofs of \Cref{defprop:projection} and \Cref{prop:normalquasiloc} are somewhat similar to computations appearing in the proofs of \cite[Theorems 1.2, 1.6]{Leandro}, but in a different setup and under different assumptions.
\end{Rem}

Before we proceed to proving positivity of mean curvature of photon surface inner boundary components, let us collect a few useful facts about generalized Reissner--Nordstr\"om spacetimes.
\begin{Lem}[On the sign of $\nu(N)H$ in generalized Reissner--Nordstr\"om systems]\label{lem:etaNH-RN}
Let $\kappa\in\{\pm1,0\}$, $(\surf_{*},\sigma_{*})$ be an Einstein metric of scalar curvature $\operatorname{R}_{*}=(n-1)(n-2)\kappa$, $m,q\in\R$, and consider the spatial factor $(J_{\kappa,m,q}\times\surf_{*},g_{\kappa,m,q},N_{\kappa,m,q},\Psi_{\kappa,m,q})$ of a generalized Reissner--Nordstr\"om spacetime. Then for hypersurfaces $\{r\}\times\surf_{*}$ and with respect to any unit normal $\nu$, for $\kappa=1$, one has
\begin{center}\begin{tabular}{c||c|c||c||c|}
 & $m>\vert q\vert$, $r>r_{m,q}$ & $m>\vert q\vert>0$, $0<r<\overline{r}_{m,q}$ & $m=\vert q\vert\neq0$ & $m=q=0$\\[0.5ex]\hline&&&&\\[-1.5ex]
$\nu(N)H$& \textcolor{blue}{$>0$} & \textcolor{red}{$<0$} & \textcolor{blue}{$>0$} & $=0$ \\[0.5ex]
\end{tabular}

\phantom{.}\vspace{3ex}

\begin{tabular}{c||c|c|c|c|}
 & $0<m<\vert q\vert$, $r^{n-2}>\frac{q^{2}}{m}$  & $0<m<\vert q\vert$, $r^{n-2}<\frac{q^{2}}{m}$ & $m=0$, $q\neq0$ & $m<0$\\[0.5ex]\hline&&&&\\[-1.5ex]
$\nu(N)H$& \textcolor{blue}{$>0$} & \textcolor{red}{$<0$} & \textcolor{red}{$<0$} & \textcolor{red}{$<0$}\\[0.5ex]
\end{tabular}
\end{center}
In contrast, for $\kappa=0$ and $\kappa=-1$, one has $\textcolor{red}{\nu(N)H <0}$ unless $m=q=0$ in which case one has $\nu(N)H=0$.
\end{Lem}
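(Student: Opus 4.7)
The plan is to compute $\nu(N)H$ explicitly for any cross-section $\{r\}\times\surf_{\ast}$, reduce the question to a sign analysis of a single elementary expression in $m$, $q$, and $r$, and then go through the cases.

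\textbf{Step 1: explicit formula for $\nu(N)H$.} Since $g_{\kappa,m,q}=N_{\kappa,m,q}^{-2}dr^{2}+r^{2}\sigma_{\ast}$, the unit normal to $\{r\}\times \surf_{\ast}$ is $\nu=\pm N_{\kappa,m,q}\,\partial_{r}$, so
\begin{align*}
\nu(N_{\kappa,m,q})&=\pm N_{\kappa,m,q}N_{\kappa,m,q}',
\end{align*}
where $'$ denotes differentiation in $r$. From the warped product form, the second fundamental form of $\{r\}\times\surf_{\ast}$ with respect to $\nu$ satisfies $h_{IJ}=\pm N_{\kappa,m,q}\,r\,(\sigma_{\ast})_{IJ}$, hence
\begin{align*}
H&=\pm\frac{(n-1)N_{\kappa,m,q}}{r},\qquad \nu(N_{\kappa,m,q})\,H=\frac{(n-1)N_{\kappa,m,q}^{2}N_{\kappa,m,q}'}{r},
\end{align*}
which is manifestly independent of the choice of normal, as asserted in the statement.

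\textbf{Step 2: reduction of sign.} On $J_{\kappa,m,q}$ we have $N_{\kappa,m,q}^{2}>0$ and $r>0$, so on $J_{\kappa,m,q}$
\begin{align*}
\sgn\!\big(\nu(N_{\kappa,m,q})H\big)=\sgn\!\big(N_{\kappa,m,q}'\big)=\sgn\!\big((N_{\kappa,m,q}^{2})'\big).
\end{align*}
A direct differentiation of $N_{\kappa,m,q}^{2}=\kappa-\frac{2m}{r^{n-2}}+\frac{q^{2}}{r^{2(n-2)}}$ gives
\begin{align*}
(N_{\kappa,m,q}^{2})'(r)&=\frac{2(n-2)}{r^{2n-3}}\left(mr^{n-2}-q^{2}\right),
\end{align*}
which is remarkably independent of $\kappa$. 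Therefore
\begin{align}\label{eq:signreduction}
\sgn\!\big(\nu(N_{\kappa,m,q})H\big)=\sgn\!\big(mr^{n-2}-q^{2}\big)\qquad\text{on }J_{\kappa,m,q}.
\end{align}

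\textbf{Step 3: case distinction using \Cref{rem:BHgenRN}.} With \eqref{eq:signreduction} in hand, all claims reduce to elementary estimates on the various sets $J_{\kappa,m,q}$ listed in \Cref{rem:BHgenRN}. For $\kappa=1$, one uses the identities $r_{m,q}^{n-2}=m+\sqrt{m^{2}-q^{2}}$ and $\overline{r}_{m,q}^{n-2}=m-\sqrt{m^{2}-q^{2}}=q^{2}/(m+\sqrt{m^{2}-q^{2}})$ to place $mr^{n-2}$ relative to $q^{2}$ in each region; for instance, on $r>r_{m,q}$ with $m\geq \vert q\vert$ one has $mr^{n-2}>m\cdot r_{m,q}^{n-2}\geq m^{2}\geq q^{2}$, while on $0<r<\overline{r}_{m,q}$ (only relevant for $m>\vert q\vert>0$) one finds $mr^{n-2}<m\overline{r}_{m,q}^{n-2}=q^{2}\cdot m/(m+\sqrt{m^{2}-q^{2}})<q^{2}$. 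The superextremal, extremal, zero-mass, and negative-mass subcases are handled in the same spirit. For $\kappa=0$, the defining inequality for $J_{0,m,q}$ immediately forces $q^{2}>2mr^{n-2}$, hence $q^{2}>mr^{n-2}$ whenever $m>0$, and the case $m\leq 0$ together with $q\neq0$ is trivial. For $\kappa=-1$, the defining inequality $r^{n-2}<-m+\sqrt{m^{2}+q^{2}}$ similarly gives $mr^{n-2}<m(-m+\sqrt{m^{2}+q^{2}})<q^{2}$ when $m>0$ (using $m<\sqrt{m^{2}+q^{2}}$), while the case $m\leq0$ is again trivial. Finally, $m=q=0$ gives $(N_{\kappa,m,q}^{2})'\equiv 0$ (and is vacuous for $\kappa\in\{0,-1\}$ anyway).

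\textbf{Expected obstacle.} There is no substantive difficulty: after Step~2, the entire lemma is a bookkeeping exercise. The only mildly delicate point is carefully identifying which subregions of $J_{\kappa,m,q}$ are non-empty in the borderline subcases $m=\vert q\vert$, $m=0$, and $m=q=0$, for which one invokes the explicit description of $J_{\kappa,m,q}$ recorded in \Cref{rem:BHgenRN}.
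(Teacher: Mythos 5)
Your proposal is correct and follows essentially the same route as the paper's proof: both compute $H=\pm\frac{(n-1)N_{\kappa,m,q}}{r}$ and $\nu(N_{\kappa,m,q})=\pm N_{\kappa,m,q}N_{\kappa,m,q}'$, reduce the sign of $\nu(N)H$ to that of $m-\frac{q^{2}}{r^{n-2}}$ (your $mr^{n-2}-q^{2}$), and then run through the domains $J_{\kappa,m,q}$ from \Cref{rem:BHgenRN} case by case. The only cosmetic difference is that the paper fixes $\nu=N\partial_{r}$ and states $\nu(N)$ directly, whereas you track both normal choices and note the sign of the product is normal-independent; the elementary estimates in your Step~3 all check out.
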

\begin{proof}
For the sake of legibility, we will suppress all indices $\kappa,m,q$ and use $\nu=N\partial_{r}$. Then $H=\frac{(n-1)N}{r}>0$ and $\nu(N)=\frac{(n-2)}{r^{n-1}}\left(m-\frac{q^{2}}{r^{n-2}}\right)$. Combining this with \Cref{rem:BHgenRN}, we immediately find what is claimed when $\kappa=0$. The same applies to $\kappa=-1$ upon realizing that $\frac{q^{2}}{m}>-m+\sqrt{m^{2}+q^{2}}$ when $m>0$ and $q\neq0$. When $\kappa=1$, $m\geq\vert q\vert$ and $r>r_{m,q}$ imply that $r^{n-2}>\frac{q^{2}}{m}$ and hence $\nu(N)>0$, while $m>\vert q\vert$ and $0<r<\overline{r}_{m,q}$ gives $r^{n-2}<\frac{q^{2}}{m}$ and hence $\nu(N)<0$. If $m\leq0$, one has $\nu(N)<0$ unless $m=q=0$.
\end{proof}

In particular, we learn from \Cref{lem:etaNH-RN} that equipotential photon surfaces with $\eta(N)\mathfrak{H}>0$ can only exist in exterior subextremal and in extremal generalized Reissner--Nordstr\"om spacetimes, and in the region $r^{n-2}>\frac{q^{2}}{m}$ of superextremal generalized Reissner--Nordstr\"om spacetimes with $m>0$. Note that, in view of \Cref{def:subex}, $m>0$ in all those cases.

Next, recall from \Cref{coro:photoconstraint} that $NH=(n-1)\nu(N)$ characterizes photon spheres. This motivates the following considerations which will be useful in the proof of the positivity of mean curvature of photon surface inner boundary components.
\begin{Lem}[Photon spheres in generalized Reissner--Nordstr\"om spacetimes.]\label{lem:photogenRN}
Let $\kappa\in\{\pm1,0\}$, $(\surf_{*},\sigma_{*})$ be an Einstein metric of scalar curvature $\operatorname{R}_{*}=(n-1)(n-2)\kappa$, $m,q\in\R$, and consider the generalized Reissner--Nordstr\"om spacetime of mass $m$, charge $q$, base $(\surf_{*},\sigma_{*})$ and parameter $\kappa$. For $\kappa=1$, it has photon spheres under the same conditions as the Reissner--Nordstr\"om spacetime of mass $m$ and charge $q$, sitting at the exact same photon surface radii, see \Cref{coro:SSphoto}. There are no photon spheres when $\kappa=0$ or $\kappa=-1$.

Moreover, choosing $\nu=N_{\kappa,m,q}\partial_{r}$ as unit normal, one finds that $H>0$ for all $r\in J_{\kappa,m,q}$. Then \textcolor{orange}{$NH-(n-1)\nu(N)>0$} for $\kappa=0$ and $\kappa=-1$ and, for $\kappa=1$, one has
\begin{center}\begin{tabular}{c||c|c|c||}
$m>\vert q\vert$ & $r>r_{*}$ & $r_{*}>r>r_{m,q}$ & $q\neq0$, $\overline{r}_{m,q}>r$ \\[0.5ex]\hline&&&\\[-1.5ex]
$NH-(n-1)\nu(N)$&\textcolor{orange}{$>0$} & \textcolor{green}{$<0$} & \textcolor{orange}{$>0$}\\[0.5ex]
\end{tabular}

\phantom{.hallo}\vspace{1ex}
\begin{tabular}{c||c|c||c||c||}
 & $m=\vert q\vert\neq0$, $r>r_{*}$ & $m=\vert q\vert\neq0$, $r_{*}>r$ & $m\leq0$ & $0<m<\frac{2\sqrt{n-1}}{n}\vert q\vert$\\[0.5ex]\hline&&&&\\[-1.5ex]
$NH-(n-1)\nu(N)$ & \textcolor{orange}{$>0$} & \textcolor{green}{$<0$}  & \textcolor{orange}{$>0$} & \textcolor{orange}{$>0$}\\[0.5ex]
\end{tabular}

\phantom{.hallo}\vspace{1ex}
\begin{tabular}{c||c|c||}
$m=\frac{2\sqrt{n-1}}{n}\vert q\vert\neq0$ & $r>r_{*,+}=r_{*,-}$ & $r_{*,-}>r$\\[0.5ex]\hline&&\\[-1.5ex]
 $NH-(n-1)\nu(N)$ & \textcolor{orange}{$>0$} & \textcolor{orange}{$>0$} \\[0.5ex]
 \end{tabular}

\phantom{.hallo}\vspace{1ex}
\begin{tabular}{c||c|c|c||}
$\vert q\vert>m>\frac{2\sqrt{n-1}}{n}\vert q\vert$ & $r>r_{*,+}$ & $r_{*,+}>r>r_{*,-}$ & $r_{*,-}>r$\\[0.5ex]\hline&&&\\[-1.5ex]
 $NH-(n-1)\nu(N)$ & \textcolor{orange}{$>0$} & \textcolor{green}{$<0$}  & \textcolor{orange}{$>0$}\\[0.5ex]
 \end{tabular}\,.
\end{center}
\end{Lem}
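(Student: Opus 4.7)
The plan is to reduce both the existence question for photon spheres and the sign analysis to one auxiliary polynomial. For $f=N_{\kappa,m,q}^{2}$, direct differentiation yields
\[
(v_{\text{eff}}^{f})'(r) \;=\; -\frac{2}{r^{2n-1}}\,p(r^{n-2}), \qquad p(x)\definedas\kappa\,x^{2}-nm\,x+(n-1)q^{2}.
\]
As in the class $\mathcal{S}$ case, the photon sphere condition on cross-sections $\{r_{\ast}\}\times\surf_{\ast}$ of the warped product reduces to $(v_{\text{eff}}^{f})'(r_{\ast})=0$ (the computation is verbatim; only the Einstein property of the base is needed), so photon spheres correspond exactly to positive real roots of $p$ lying in the image of $r\mapsto r^{n-2}$ on $J_{\kappa,m,q}$. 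For $\kappa=1$, this is precisely the equation classified in the proof of \Cref{coro:SSphoto}, and $J_{1,m,q}$ has the same components as in the classical case by \Cref{rem:BHgenRN}, so the list of photon sphere radii transfers verbatim. For $\kappa=0$, the only positive root (in the nontrivial case $m>0,\,q\neq0$) is $x_{0}=(n-1)q^{2}/(nm)>q^{2}/(2m)$, hence outside $J_{0,m,q}$ by \Cref{rem:BHgenRN}. For $\kappa=-1$, one computes $p(y)=(n-2)\sqrt{m^{2}+q^{2}}\,(\sqrt{m^{2}+q^{2}}-m)>0$ at $y\definedas-m+\sqrt{m^{2}+q^{2}}$ (the supremum of $r^{n-2}$ on $J_{-1,m,q}$), so $y$ lies strictly between the two roots of the downward-opening parabola $p$; in particular the positive root of $p$ exceeds $y$ and no photon spheres exist.

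For the second part, the warped product form \eqref{genrnmetric} yields by a standard calculation
\[
H \;=\; \frac{(n-1)\,N_{\kappa,m,q}}{r}
\]
with respect to $\nu=N_{\kappa,m,q}\,\partial_{r}$; positivity of $H$ on $J_{\kappa,m,q}$ is then immediate. Combined with $\nu(N)=N_{\kappa,m,q}N_{\kappa,m,q}'=\tfrac{1}{2}f'(r)$, one obtains the key identity
\[
NH-(n-1)\nu(N) \;=\; \frac{n-1}{2r}\bigl(2f(r)-rf'(r)\bigr) \;=\; \frac{n-1}{r^{2n-3}}\,p(r^{n-2}),
\]
so the sign of $NH-(n-1)\nu(N)$ at radius $r$ equals the sign of $p$ at $x=r^{n-2}$. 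Since $p$ is an upward-opening parabola for $\kappa=1$, a downward-opening parabola for $\kappa=-1$, and a line for $\kappa=0$, each table entry then follows by locating $r^{n-2}$ relative to the real roots of $p$ already identified in part one.

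The main obstacle will be the bookkeeping in the $\kappa=1$ case, i.e., checking in each subcase that the horizon radii split the components of $J_{1,m,q}$ compatibly with the photon sphere radii. In the subextremal exterior one uses $x_{\ast-}<r_{m,q}^{\,n-2}<x_{\ast+}$, which is already asserted in the proof of \Cref{coro:SSphoto}. In the subextremal interior, one evaluates $p$ at $\overline{r}_{m,q}^{\,n-2}=m-\sqrt{m^{2}-q^{2}}$, obtaining the manifestly positive value $(n-2)\sqrt{m^{2}-q^{2}}\,(m-\sqrt{m^{2}-q^{2}})>0$; together with $\overline{r}_{m,q}^{\,n-2}<nm/2$ (the vertex of $p$), this forces $\overline{r}_{m,q}^{\,n-2}<x_{\ast-}$ and hence $p>0$ throughout the interior component of $J_{1,m,q}$. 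In the extremal case, $p$ factors as $p(x)=(x-m)(x-(n-1)m)$, so $x_{\ast-}=r_{m,q}^{\,n-2}=m$ directly. The remaining parameter regimes ($m\leq 0$, or $0<m\leq\tfrac{2\sqrt{n-1}}{n}|q|$, or the superextremal two-photon-sphere case) reduce to the trivial observations that $p$ has no positive real root, a double root, or two positive roots whose locations split $J_{1,m,q}$ exactly as asserted.
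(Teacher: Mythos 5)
Your proposal is correct and follows essentially the same route as the paper: both arguments reduce the entire statement to the sign and root locations of the auxiliary polynomial $p(x)=\kappa x^{2}-nmx+(n-1)q^{2}$ evaluated at $x=r^{n-2}$, via the identity $NH-(n-1)\nu(N)=\tfrac{n-1}{r^{2n-3}}\,p(r^{n-2})$, which is exactly the paper's \eqref{eq:wherephoto}. The only cosmetic difference is that the paper excludes photon spheres for $\kappa\in\{0,-1\}$ (and in the $\nu(N)\leq0$ regimes of $\kappa=1$) by combining \Cref{coro:photoconstraint} with \Cref{lem:etaNH-RN} instead of your direct evaluation of $p$ at the endpoints of $J_{\kappa,m,q}$, and otherwise leans on the root ordering already established in the proof of \Cref{coro:SSphoto} for the $\kappa=1$ bookkeeping that you redo by hand.
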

\begin{proof}
From \Cref{coro:photoconstraint}, we know that $\nu(N)H>0$ must hold on any time slice of a photon sphere. From \Cref{lem:etaNH-RN}, it hence follows that there are no photon spheres when $\kappa=0$ or $\kappa=-1$ and neither in any of the cases with $\nu(N)H\leq0$ when $\kappa=1$. Again from the proof of \Cref{lem:etaNH-RN}, using the same notation and choosing $\nu=N\partial_{r}$, we find that
\begin{align}\label{eq:wherephoto}
NH-(n-1)\nu(N)&=\frac{n-1}{r}\left(\kappa-\frac{nm}{r^{n-2}}+\frac{(n-1)q^{2}}{r^{2(n-2)}}\right).
\end{align}
Comparing this to the characterization of photon spheres in Reissner--Nordstr\"om spacetimes in the proof of \Cref{coro:SSphoto}  gives the desired result on the location of photon spheres.

The claims on the sign of $NH-(n-1)\nu(N)$ then immediately follow from \eqref{eq:wherephoto} in combination with \Cref{lem:etaNH-RN} and \Cref{coro:SSphoto}: When $\nu(N)\leq0$, $NH-(n-1)\nu(N)>0$ follows automatically. This settles the claims for $\kappa=0$ and $\kappa=-1$ and some cases of $\kappa=1$. Where $\nu(N)>0$, we rewrite \eqref{eq:wherephoto} for $\kappa=1$ as $NH-(n-1)\nu(N)=\frac{n-1}{r^{2n-3}}f(r^{n-2})$, with $f(x)\definedas x^{2}-nmx+(n-1)q^{2}$. Thus, the sign of $NH-(n-1)\nu(N)$ at some radius $r$ coincides with the sign of $f(r^{n-2})$. We note that $f$ is a convex quadratic polynomial. Its zeros in $J_{1,m,q}$ (if any) coincide with photon sphere radii. If $m>\vert q\vert$, $f(r_{m,q}^{n-2})<0$ which ensures that the claims for $m>\vert q\vert$ hold. If $m=\vert q\vert\neq0$, $f(r_{m,q}^{n-2})=0$ which settles the claims for this case. If $\vert q\vert>m>\frac{2\sqrt{n-1}}{n}\vert q\vert$, i.e., when there are two distinct photon spheres, $f$ is positive outside both and negative between their radii $r_{\pm}$ as claimed. If $m=\frac{2\sqrt{n-1}}{n}\vert q\vert$, i.e., when there is precisely one photon sphere at $r_{*,+}=r_{*,-}$, $f$ touches the $r$-axis at $r_{*,+}=r_{*,-}$ and hence $f>0$ away from $r_{*,+}=r_{*,-}$. If $0<m<\frac{2\sqrt{n-1}}{n}\vert q\vert$, i.e., when there are no photon spheres, $f>0$. This finishes the proof.
\end{proof}

\subsection{Positivity of mean curvature of equipotential photon surfaces}\label{subsec:positivity}
It is left open in \cite{cederbaum2019photon} whether outward directed equipotential photon surfaces in asymptotically flat static vacuum spacetimes must have positive mean curvature. This is of course obvious for photon spheres by \Cref{coro:photoconstraint}, regardless of the field equations under consideration. In the vacuum case, we will close this gap in \Cref{cor:outwardpositive}. In the general case, we will instead assume that $\eta(N)\mathfrak{H}>0$ along equipotential photon surfaces in electrostatic electro-vacuum spacetimes and will restrict our attention to subextremal, extremal, and some superextremal equipotential photon surfaces, see also \Cref{cor:outwardpositiveelectro} and \Cref{rem:local}. 

We now proceed to proving said positivity in our more general setting. To do so, we get inspiration from the proof of \cite[Lemma~2.6]{cedergal} which in turn is inspired by \cite[Theorem~3.1]{GalMiao}.

\begin{Thm}[Positivity of mean curvature]\label{prop: H>0}
Let $(\mathfrak{L}^{n+1},\mathfrak{g},\Psi)$ be an electrostatic electro-vacuum spacetime such that the/each associated electrostatic system $(\slice(t),g,N,\Psi)$, $t\in\mathcal{T}$, is asymptotically flat. Assume that $(\mathfrak{L}^{n+1},\mathfrak{g},\Psi)$ has a (possibly disconnected) inner boundary consisting of equipotential photon surfaces with $\eta(N)\mathfrak{H}>0$ and/or of non-degenerate black hole horizons. Assume furthermore that the equipotential photon surface boundary components are (exterior) subextremal, extremal, or superextremal with $\vert q\vert >m\geq\frac{2\sqrt{n-1}}{n}\vert q\vert$ and scalar curvature radius $r\geq r_{*,\pm}$. Then all photon surface components of the inner boundary have $\mathfrak{H}>0$ and are outward directed.
\end{Thm}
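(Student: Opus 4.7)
My plan is to argue by contradiction: suppose some equipotential photon surface inner boundary component $P^n$ has $\mathfrak{H}\le 0$. The hypothesis $\eta(N)\mathfrak{H}>0$ then forces $\mathfrak{H}<0$ and $\eta(N)<0$ strictly. The strategy is to combine the local generalized Reissner--Nordstr\"om structure near $P^n$ with a global maximum-principle argument on $(M^n(t_0),g,N)$.

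The first step is to set up the local model. By Corollary \ref{coro:qconstmconst}, both $m$ and $q$ are globally constant along $P^n$. The hypothesis $\eta(N)\mathfrak{H}>0$ combined with the identities \eqref{foundtheconstant} and \eqref{eq:HmathfrakH} forces $\operatorname{R}_{\sigma_*}>0$, so the parameter appearing in Proposition \ref{prop:normalquasiloc} is $\kappa=1$. Applying that proposition together with Remark \ref{rem:boundary}, at any generic time $t_0$ a one-sided collar of $\Sigma(t_0)$ in $M^n(t_0)$ is isometric to a piece of the spatial factor of a generalized Reissner--Nordstr\"om spacetime with parameters $(1,m,q)$; $\Sigma(t_0)$ is identified with a coordinate sphere $\{r=r_0\}$ and $N=\alpha N_{1,m,q}(r)$ along the collar for some $\alpha>0$.

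The second step uses the prescribed extremality ranges to pin down the orientation. By Lemmas \ref{lem:etaNH-RN} and \ref{lem:photogenRN}, in each of the three configurations allowed by the theorem (subextremal with $r>r_{m,q}$, extremal with $r>r_*$, and superextremal with $|q|>m\ge\frac{2\sqrt{n-1}}{n}|q|$ and $r\ge r_{*,+}$ or $r\le r_{*,-}$) one has, at $r=r_0$, both $\nu^{\text{RN}}(N_{1,m,q})=N_{1,m,q}N_{1,m,q}'>0$ and $N_{1,m,q}H^{\text{RN}}-(n-1)\nu^{\text{RN}}(N_{1,m,q})\ge 0$. Since the outward normal $\eta$ to $P^n$ is $\pm\nu^{\text{RN}}$, the assumption $\eta(N)<0$ forces $\eta=-\nu^{\text{RN}}$, i.e., $M^n(t_0)$ extends locally on the $r$-\emph{decreasing} side of $\Sigma(t_0)$. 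Because $N=\alpha N_{1,m,q}(r)$ is strictly monotone in $r$ throughout each of these ranges, $N$ is strictly decreasing as one moves from $\Sigma(t_0)$ into the bulk.

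The last step is the global argument. Equation \eqref{EEVE1} gives $\Delta N=\frac{2(n-2)|d\Psi|^2}{(n-1)N}\ge 0$, so $N$ is subharmonic on $M^n(t_0)$; it equals $u_k(t_0)$ on each photon surface component of $\partial M^n(t_0)$, vanishes on each non-degenerate horizon, and tends to $1$ at infinity. Combining the strict $N$-decreasing collar behavior from Step~2 with the Hopf boundary-point lemma at $\Sigma(t_0)$ produces the required contradiction provided one can arrange $\Sigma(t_0)$ to be a point of maximum of $N$ on a suitable subdomain. I expect the main obstacle to be precisely this last step: when $u(t_0)<1$ the value on $\Sigma(t_0)$ is not the global maximum and a direct Hopf argument does not apply, so one must either replace $M^n(t_0)$ by a super-level set $\{N\ge u(t_0)-\varepsilon\}$ whose boundary still contains $\Sigma(t_0)$, or, more robustly, invoke the rigidity case of the Riemannian positive mass theorem with boundary (the same ingredient that will drive the uniqueness proof in Section \ref{sec:proofuniqueness}) to rule out the possibility that the $N$-decreasing collar from Step~2 can be extended all the way to the asymptotically flat end.
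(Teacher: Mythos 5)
Your Steps 1 and 2 (the local preparation) are essentially sound and parallel the paper's use of \Cref{prop:normalquasiloc}, \Cref{coro:qconstmconst}, and \Cref{lem:etaNH-RN}, although your blanket claim that $N_{1,m,q}H^{\mathrm{RN}}-(n-1)\nu^{\mathrm{RN}}(N_{1,m,q})\ge 0$ in all allowed configurations is false (it is strictly negative for subextremal components with $r_{m,q}<r_0<r_*$, the ``green'' case of \Cref{lem:photogenRN}); fortunately you only use $\nu^{\mathrm{RN}}(N)>0$ there. The genuine gap is Step 3, and it is not merely the technicality you flag --- the subharmonicity/Hopf strategy is structurally incapable of producing the contradiction. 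Since $\Delta N\ge 0$, the function $N$ obeys a \emph{maximum} principle, and the Hopf boundary-point lemma at a boundary maximum on $\surf_{i_0}(t_0)$ asserts that the derivative in the direction of the \emph{exterior} normal to $\slice(t_0)$ (which is $-\eta$, since $\eta$ points toward the asymptotic end) is positive, i.e.\ precisely $\eta(N)<0$. That is the sign you assumed for contradiction, so Hopf confirms rather than refutes it; to get a contradiction you would need a minimum principle for $N$, which subharmonic functions do not satisfy. The super-level-set modification suffers the same fate ($\surf_{i_0}(t_0)$ is still a maximum of $N$ on the thin collar, yielding the consistent sign), and the appeal to positive-mass rigidity is circular within the paper's logic, since the uniqueness argument of \Cref{sec:proofuniqueness} that invokes the PMT requires $\mathfrak{H}>0$ as an input to match mean curvatures across the glued necks.

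The missing idea is the paper's entirely different global mechanism: one passes to the Fermat (optical) metric $\widehat{g}=N^{-2}g$ on a suitably prepared time slice, takes a length-minimizing $\widehat{g}$-geodesic $\widehat{\gamma}$ from $\surf_{i_0}(t_0)$ to a large coordinate sphere of positive $g$-mean curvature, and shows $\widehat{\gamma}$ avoids all other boundary components --- horizons are at infinite $\widehat{g}$-distance, photon spheres are totally geodesic barriers since $\widehat{H}=NH-(n-1)\nu(N)=0$ there (which is why the other components are first excised or extended out to photon spheres, exploiting \Cref{lem:photogenRN}), and components with $\widehat{H}<0$ cannot be touched by a minimizer (Bishop). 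One then runs the Galloway--Miao monotonicity of $t\mapsto H(t)/(N\circ\widehat{\gamma}(t))$ along the equidistant hypersurfaces of $\widehat{\gamma}$, valid because electro-vacuum satisfies the null energy condition; this is incompatible with $H(0)=H_{i_0}<0$ and $H(T)>0$. Without this (or some genuinely equivalent quantitative use of the field equations transverse to the level sets of $N$), your argument does not close.
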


\begin{proof}
We will give the proof for a fixed time slice $(\slice(t_{0}),g,N,\Psi)$, chosen such that $t_{0}$ is a generic time for all boundary components which are neither photon spheres nor black hole horizons (such a time exists by \Cref{prop:exceptional}). We will denote the connected components of $\partial\slice(t_{0})$ by $\surf_{i}(t_{0})$, where $i=1,\dots,I$ for some $I\geq1$. Where applicable, the photon surface corresponding to $\surf_{i}(t_{0})$ will be denoted as $\photo_{i}$. As $(\slice(t_{0}),g,N,\Psi)$ is asymptotically flat, each $\surf_{i}(t_{0})$ must be closed and $(\slice(t_{0}),g)$ must be complete (see \Cref{rem:complete}). Our goal now is to show that if the mean curvature $H_{i}(t_{0})$ of $\surf_{i}(t_{0})$ satisfies $H_{i}(t_{0})\leq0$ for some $i\in\{1,\dots,I\}$, $\surf_{i}(t_{0})$ must be a black hole horizon component. This then proves that all time slices of photon surface boundary components have $H_{i}(t_{0})>0$ and hence $\mathfrak{H}_{i}>0$ by \eqref{eq:HmathfrakH}. By the assumption $\eta(N)\mathfrak{H}>0$, they are also outward directed. From now on, we will drop the explicit reference to the selected generic time $t_{0}$. Moreover, we will suggestively attach an index $i$ to any quantity referring to $\surf_{i}$ or $\photo_{i}$.

Suppose towards a contradiction that there exists a photon surface boundary component $\photo_{i_{0}}$ with $\mathfrak{H}_{i_{0}}\leq0$. Then in fact $\mathfrak{H}_{i_{0}}<0$ and $\photo_{i_{0}}$ is inward directed as $\eta(N)_{i}\mathfrak{H}_{i}>0$ holds for all $i=1,\dots,I$. By \eqref{eq:HmathfrakH}, we also know that $H_{i_{0}}<0$ on $\surf_{i_{0}}$. 

Let us now prepare $(\slice,g,N,\Psi)$ for the main part of the argument: Consider any non-photon sphere photon surface boundary component $\photo_{j}$ with $j\neq i_{0}$. By non-degeneracy of $\photo_{j}$ and because $t_{0}$ is generic, we can apply \Cref{prop:geometrynearslices}, deducing that there is a neighborhood $\surf_{j}\subseteq V_{j}\subset\slice$ on which $(V_{j},g,N,\Psi)$ satisfies \eqref{eq:foundgpositive}--\eqref{eq:foundpsipositive} for suitable parameters $m_{j},q_{j}\in\R$, $\kappa_{j}\in\{\pm1,0\}$, $\alpha_{j}>0$, $\beta_{j}\in\R$, and such that the induced metric $\sigma_{j}$ on $(\surf_{*})_{j}=\surf_{j}$ is an Einstein metric of scalar curvature $\operatorname{R}_{\sigma_{j}}=\frac{(n-1)(n-2)\kappa_{j}}{r_{j}^{2}}$ with $r_{j}>0$ as in \Cref{prop:geometrynearslices}. In particular, $(V_{j},g,N,\Psi)$ is in fact a piece of a spacial factor of a generalized Reissner--Nordstr\"om spacetime, up to scaling (see \Cref{rem:scaling}). Moreover, by the assumption that $\eta(N)_{j}\mathfrak{H}_{j}>0$,  we learn from \Cref{lem:etaNH-RN} that $\kappa_{j}=1$ and that $m_{j},q_{j}$ correspond to one of the \textcolor{blue}{blue cases} in  \Cref{lem:etaNH-RN}.

If $\photo_{j}$ is not a photon sphere, we know from \Cref{lem:isphoto} that $N_{j}H_{j}-(n-1)\nu(N)_{j}\neq0$. \underline{If $N_{j}H_{j}-(n-1)\nu(N)_{j}<0$ (Case A)}, we remove a small neighborhood $U_{j}$ of $\surf_{j}$, i.e., $\surf_{j}\subset U_{j}\subsetneq V_{j}$, for $\slice_{j}$ such that $\slice_{j}\setminus U_{j}$ obtains a new inner boundary component $\widetilde{\Sigma}^{n-1}_{j}$ with mean curvature $\widetilde{H}_{j}<0$; we do so such that $\widetilde{\Sigma}^{n-1}_{j}$ is a radial surface in the rescaled generalized Reissner--Nordstr\"om system $(V_{j},g,N,\Psi)$ and hence also a time slice of a non-degenerate equipotential photon surface in the corresponding generalised Reissner--Nordstr\"om spacetime (at least locally in time), see \Cref{sec:spherical}. Depending on whether $\nu_{j}$ points in direction of $\partial_{r}$ or in the opposite direction, this corresponds to  removing a piece corresponding to a radius interval $[r_{j},r_{j}+\varepsilon_{j})$ (same direction) or $(r_{j}-\varepsilon_{j},r_{j}]$ (opposite direction), for some suitably small $\varepsilon_{j}>0$. We choose $\varepsilon_{j}$ small enough to ensure that $\overline{N}_{j}\overline{H}_{j}-(n-1)\overline{\nu}(\overline{N})_{j}<0$ also holds for the new boundary component $\overline{\Sigma}^{\,n-1}_{j}$ and such that the neighborhoods excised around different boundary components do not intersect.

On the other hand, \underline{if $N_{j}H_{j}-(n-1)\nu(N)_{j}>0$ (Case B)}, we need to refer to \Cref{lem:photogenRN}: From there, we learn that either $\nu_{j}$ points in the same direction as $\partial_{r}$ and we are in of the \textcolor{orange}{orange cases} or $\nu_{j}$ points in the opposite direction from $\partial_{r}$ and we are in of the \textcolor{green}{green cases} of \Cref{lem:photogenRN}. Recall that we already know that $\kappa_{j}=1$ and that by assumption and by \Cref{lem:photogenRN}, $\photo_{j}$ is either (exterior) subextremal or extremal and hence has a unique photon sphere at some radius $r_{*_{j}}>r_{m_{j},q_{j}}$ or $\photo_{j}$ is superextremal with one or two photon spheres at $r=r_{\ast_{j},\pm}>0$ and $r_{j}> r_{\ast_{j},\pm}$. This leaves the following possibilities when there is only one photon sphere: Either $\surf_{j}$ lies \textcolor{orange}{``outside''} the respective photon sphere, $r_{j}>r_{*_{j}}$, and $\nu_{j}$ points in the \textcolor{orange}{same direction} as $\partial_{r}$ or $\surf_{j}$ lies \textcolor{green}{``inside''} the respective photon sphere, $r_{*_{j}}>r_{j}>r_{m_{j},q_{j}}$, and $\nu_{j}$ points in the \textcolor{green}{opposite direction} from $\partial_{r}$. The latter possibility does not occur in the superextremal case with $m=\frac{2\sqrt{n-1}}{n}\vert q\vert$ as $r>r_{*_{j},+}=r_{*_{j},-}$. When there are two photon spheres, i.e., in the superextremal case with $\vert q\vert >m>\frac{2\sqrt{n-1}}{n}\vert q\vert$, either $\surf_{j}$ lies \textcolor{orange}{``outside''} the outer photon sphere, $r_{j}>r_{*_{j},+}$, and $\nu_{j}$ points in the \textcolor{orange}{same direction} as $\partial_{r}$ or $\surf_{j}$ lies \textcolor{green}{``between''} the two photon spheres, $r_{*_{j},-}<r_{j}<r_{*_{j},+}$, and $\nu_{j}$ points in the \textcolor{green}{opposite direction} from $\partial_{r}$.

 \underline{In the \textcolor{orange}{orange case $r_{j}>r_{*_{j}}$}}, we smoothly extend $(\slice,g,N,\Psi)$ near $V_{j}$/across $\surf_{j}$ by the exact same formulas \eqref{eq:foundgpositive}--\eqref{eq:foundpsipositive} on $J_{j}\times\surf_{j}$ with $J_{j}=[r_{*_{j}},r_{j})\subset \{r>r_{m_{j},q_{j}}\}$, i.e., all the way ``down'' to the (outer) photon sphere at $r_{*_{j}}$ ($r_{*_{j},+}$). On the other hand, \underline{in the \textcolor{green}{green case $r_{*_{j}}>r_{j}>r_{m_{j},q_{j}}$}} \underline{\textcolor{green}{(or $r_{*_{j},+}>r_{j}>r_{*_{j},-}$)}}, we smoothly extend $(\slice,g,N,\Psi)$ near $V_{j}$/across $\surf_{j}$ by the exact same formulas \eqref{eq:foundgpositive}--\eqref{eq:foundpsipositive} on some manifold $J_{j}\times\surf_{j}$ with $J_{j}=(r_{j},r_{*_{j}}]\subset \{r>r_{m_{j},q_{j}}\}$, i.e., all the way ``up'' to the photon sphere at $r_{*_{j}}$ ($r_{*_{j},+}$). 
 
Note that the new boundary components $\overline{\Sigma}^{\,n-1}_{j}$ are time-slices of photon spheres by \Cref{coro:photoconstraint}. Hence we can also extend the spacetime $(\mathfrak{L}^{n+1},\mathfrak{g},\Psi)$ across $\photo_{j}$ to a larger electrostatic electro-vacuum spacetime with a new photon sphere inner boundary component $\overline{P}^{\,n}_{j}$ instead of $\photo_{j}$.

After this preparation, we are now left with a smooth electrostatic electro-vacuum system $(\overline{M}^{\,n},\overline{g},\overline{N},\overline{\Psi})$ with inner boundary consisting of $\surf_{i_{0}}$, and potentially also of time-slices of non-degenerate black hole horizons and/or photon spheres and/or of time-slices $\overline{\Sigma}^{\,n-1}_{j}$ of equipotential photon surfaces on which $\overline{N}_{j}\overline{H}_{j}-(n-1)\overline{\nu}(\overline{N})_{j}<0$. This system is asymptotically flat and metrically complete (see \Cref{rem:complete}).

At the same time, not removing anything (i.e., ignoring the preparations in Case A) but extending across the photon surfaces in Case B, we have obtained a smooth electrostatic electro-vacuum spacetime extension $(\overline{\mathfrak{L}}^{\,n+1},\overline{\mathfrak{g}},\overline{\Psi})$ of $(\mathfrak{L}^{n+1},\mathfrak{g},\Psi)$ the inner boundary of which consists of $\photo_{i_{0}}$, and possibly of non-degenerate black hole horizons and/or photon spheres and/or and equipotential photon surfaces $\photo_{j}$ with $\overline{N}_{j}\overline{H}_{j}-(n-1)\overline{\nu}(\overline{N})_{j}<0$ on $\surf_{j}$. This spacetime is still standard static (but see \Cref{rem:boundary}) and will become relevant later in the proof.

As in \cite{cedergal}, we now consider the \emph{Fermat metric} or \emph{optical metric}
\begin{align*}
\widehat{g}\definedas\frac{1}{N^2}g,
\end{align*}
yet here on the (modified) time-slice $\overline{M}^{\,n}$, see for example Frankel~\cite{Frankel}. Without introducing a new name for $\overline{M}^{\,n}$, we hence remove the time-slices of black hole horizons from $\overline{M}^{\,n}$, as $N=0$ there which would lead to degeneracies in the Fermat metric. As is well-known, horizons with respect to $g$ become asymptotic cylinders in the Fermat metric $\widehat{g}$ and are hence ``infinitely far removed''. We will denote quantities corresponding to the Fermat metric $\widehat{g}$ by oversetting them with $\,\widehat{}\,$. 

Now let $R>0$ be large enough such that the (asymptotic) coordinate sphere $\mathbb{S}^{n-1}_R(0)$ in the asymptotic end of $(\overline{M}^{\,n},\overline{g},\overline{N},\overline{\Psi})$ has positive mean curvature with respect to the metric $g$ and the outward pointing unit normal. By compactness of $\surf_{i_{0}}$ and $\mathbb{S}^{n-1}_R(0)$, there exist points $p\in \surf_{i_{0}}$ and $q\in\mathbb{S}^{n-1}_R(0)$ such that $\operatorname{dist}_{\widehat{g}}(p,q)=\operatorname{dist}_{\widehat{g}}(\surf_{i_{0}},\mathbb{S}^{n-1}_R(0))$. From work by Alexander~\cite{alexander2006distance},  we know that there is a length minimizing unit speed $C^{1}$-curve $\widehat{\gamma}\colon [0,T]\to \overline{M}^{\,n}$ with $\widehat{\gamma}(0)=p$ and $\widehat{\gamma}(T)=q$ such that $T=\operatorname{dist}_{\widehat{g}}(p,q)=\operatorname{dist}_{\widehat{g}}(\surf_{i_{0}},\mathbb{S}^{n-1}_R(0))$ in the smooth, metrically complete Riemannian manifold with boundary $(\overline{M}^{\,n},\widehat{g}\,)$. Alexander~\cite{alexander2006distance} also proves that $\widehat{\gamma}$ is a smooth geodesic away from the boundary of $\overline{M}^{\,n}$ and that $\widehat{\gamma}$ is transversal to $\surf_{i_{0}}$ and $\mathbb{S}^{n-1}_R(0)$. Moreover, in case $\widehat{\gamma}$ intersects any boundary components of $\overline{M}^{\,n}$ other than $\surf_{i_{0}}$, it has to touch it tangentially because it is $C^{1}$. 

A direct computation shows that the mean curvature $\widehat{H}$ with respect to $\widehat{g}$ of any smooth hypersurface $\surf$ of $\overline{M}^{\,n}$ satisfies $\widehat{H}=\overline{N}\,\overline{H}-(n-1)\overline{\nu}(\overline{N})$. Next, recall that the (open/removed) boundary components corresponding to time-slices of black hole horizons are infinitely far removed with respect to $\widehat{g}$ and hence will not be reached by $\widehat{\gamma}$. As observed already in \cite[Lemma~2.6]{cedergal}, a time-slice $\overline{\Sigma}^{\,n-1}_{j}$ of a photon sphere will have $\widehat{H}_{j}=\overline{N}_{j}\overline{H}_{j}-(n-1)\overline{\nu}(\overline{N})_{j}=0$ and will hence be totally geodesic (as umbilicity is invariant under conformal transformations). This is enough to conclude that $\widehat{\gamma}$ cannot touch $\overline{\Sigma}^{\,n-1}_{j}$ by uniqueness of maximally extended geodesics and because $\widehat{\gamma}$ is $C^{1}$ and $\widehat{\gamma}(T)=q\in\mathbb{S}^{n-1}_{R}(0)$. Finally, any other boundary component $\overline{\Sigma}^{\,n-1}_{j}$ must also be umbilic (again by conformal invariance of umbilicity) and have mean curvature $\widehat{H}_{j}=\overline{N}_{j}\overline{H}_{j}-(n-1)\overline{\nu}(\overline{N})_{j}<0$ with respect to $\widehat{g}$. Inserting back the removed piece of $\slice$ across $\overline{\Sigma}^{\,n-1}_{j}$, $\overline{\Sigma}^{\,n-1}_{j}$ is now a smooth hypersurface of $\slice$ contained in a (two-sided) tubular neighborhood. Assuming that $\widehat{\gamma}$ touches some such $\overline{\Sigma}^{\,n-1}_{j}$, consider the first contact point $\widehat{\gamma}(t_0)$ of $\widehat{\gamma}$ with one such $\overline{\Sigma}^{\,n-1}_{j_{0}}$. Note that up to that point $\widehat{\gamma}$ is a smooth $\widehat{g}$-geodesic in $\overline{M}^{\,n}\cup (V_{j_{0}}\setminus U_{j_{0}})$ (although not necessarily length minimizing). By a local result by Bishop~\cite{bishop1974infinitesimal}, one concludes that the second fundamental form $\widehat{h}_{j}$ of $\overline{\Sigma}^{\,n-1}_{j}$ with respect to $\widehat{g}$ must be positive semi-definite, contradicting $\widehat{H}_{j}<0$. Hence $\widehat{\gamma}$ is a smooth, length minimizing geodesic in $(\overline{M}^{\,n},\widehat{g}\,)$ (not touching any boundary components other than $\surf_{i_{0}}$). Consequently, it must meet $\surf_{i_{0}}$ and $\mathbb{S}^{n-1}_{R}(0)$ orthogonally.

This brings us precisely to the setting studied by Galloway and Miao in the proof of~\cite[Theorem 3.1]{GalMiao} to which we refer for details (see also \cite[Lemma 2.6]{cedergal} where photon spheres serve as barriers just as in our case): $\widehat{\gamma}$ is a unit speed, length minimizing $\widehat{g}$-geodesic from $\surf_{i_{0}}$ to $\mathbb{S}^{n-1}_{R}(0)$ which does not meet the boundary of $\overline{M}^{n}$ and hits both $\surf_{i_{0}}$ and $\mathbb{S}^{n-1}_{R}(0)$ orthogonally. As $\widehat{\gamma}$ is length minimizing, there will be not $\widehat{g}$-cut points to $\surf_{i_{0}}$ along $\widehat{\gamma}$, except possibly at the endpoint $q$, a technical possibility that can be handled by slightly perturbing $\mathbb{S}^{n-1}_{R}(0)$, but preserving its positive mean curvature. Assuming that $q$ is not a cut point allows to conclude that a suitable restriction of the set $W(t)\definedas\lbrace p \in M^n\,:\, \mathrm{dist}_{\widehat{g}}(\surf_{i_{0}}, p)=t\rbrace$ is a smooth hypersurface in some neighborhood of $\widehat{\gamma}(s)$ for each $s\in [0,T]$. As $\widehat{\gamma}$ is length minimizing, we know that $W(T)$ lies to the inside of $\mathbb{S}^{n-1}_{R}(0)$ but touches it at $q$. Moving back to the original metric $g$ on $\overline{M}^{n}$, let $H(t)$ denote the mean curvature of $W(t)$ with respect to $g$ (and the unit normal pointing to the asymptotically flat end). By the maximum principle and our choice of $\mathbb{S}^{n-1}_{R}(0)$ with $H>0$, we know that $H(T)>0$. On the other hand, by the monotonicity formula \cite[Equation 3.24]{GalMiao}, the map $t\mapsto \frac{H(t)}{N\circ\widehat{\gamma}(t)}$, $t\in[0,T]$, must be non-increasing as the electro-vacuum extended spacetime $(\overline{\mathfrak{L}}^{\,n+1},\overline{\mathfrak{g}},\overline{\Psi})$ satisfies the null energy condition. This leads to a contradiction as we assumed that $\surf_{i_{0}}$ has $H(0)=H_{i_{0}}<0$ and we have just argued that $H(T)>0$. Hence all non-horizon boundary components have positive mean curvature and are outward directed as claimed.
 \end{proof}

\begin{Cor}[Mean curvature positivity in the vacuum case]\label{cor:outwardpositive}
Let $(\mathfrak{L}^{n+1},\mathfrak{g})$ be a static vacuum spacetime such that the/each associated static system $(\slice(t),g,N)$, $t\in\mathcal{T}$ is asymptotically flat. Assume that $(\mathfrak{L}^{n+1},\mathfrak{g})$ has a (possibly disconnected) inner boundary consisting of outward directed equipotential photon surfaces and/or of non-degenerate black hole horizons. Then all photon surface components of the inner boundary have $\mathfrak{H}>0$  and $\eta(N)\mathfrak{H}>0$.
\end{Cor}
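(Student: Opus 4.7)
The plan is to adapt the proof of \Cref{prop: H>0} to the vacuum setting, where the quasi-local subextremality hypothesis of that theorem becomes automatic. Since $\Psi \equiv 0$, formula \eqref{qidef} gives $q = 0$ on every canonical time slice of every photon surface boundary component, so whenever the quasi-local mass $m$ is well-defined, \Cref{def:subex} forces $m > 0 = |q|$, i.e., exterior subextremality. Moreover, outward directedness $\eta(N) > 0$ together with \eqref{eq:eta} yields $\nu(N) > 0$ on each canonical time slice, and then \eqref{eq:HmathfrakH} gives $\sgn(\mathfrak{H}) = \sgn(H)$; in particular $\eta(N)\mathfrak{H} > 0$ is equivalent to $\mathfrak{H} > 0$, so it suffices to prove the latter on each photon surface inner boundary component.

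Assuming for contradiction that some photon surface component $P^{n}_{i_0}$ has $\mathfrak{H}_{i_0} \leq 0$ (a value that is constant along $P^{n}_{i_0}$ by \Cref{prop:mathfrakH}), I would fix a generic time $t_0$ for all non-photon-sphere components (available by \Cref{prop:exceptional}) and carry out the preparation of \Cref{prop: H>0} on the remaining photon surface boundary components $\Sigma^{n-1}_j$. Using $q_j = 0$, \Cref{prop:normalquasiloc} identifies a neighborhood of each such $\Sigma^{n-1}_j$ with a piece of a generalized Schwarzschild system of mass $m_j > 0$. In the outward case where $N_j H_j - (n-1)\nu(N)_j > 0$, this piece can be smoothly extended up to the associated photon sphere at $r_{\ast_j} = (n m_j)^{1/(n-2)}$, which exists by \Cref{cor:photonsphere in rn}; in the inward case where $N_j H_j - (n-1)\nu(N)_j < 0$, a small neighborhood is excised to produce a new inner boundary of strictly negative Fermat mean curvature, exactly as in Case~A of that proof.

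After this preparation I pass to the Fermat metric $\widehat{g} = N^{-2} g$ on the modified time slice, with black-hole-horizon slices deleted as they become complete asymptotic ends, and run the Galloway--Miao argument verbatim. One obtains a length-minimizing $\widehat{g}$-geodesic $\widehat\gamma$ from $\Sigma^{n-1}_{i_0}$ to a large asymptotic coordinate sphere $\mathbb{S}^{n-1}_R(0)$ of positive $g$-mean curvature, meeting both orthogonally. Using the identity $\widehat{H} = N H - (n-1)\nu(N)$: photon-sphere boundary slices are $\widehat{g}$-totally geodesic (by \Cref{coro:photoconstraint}) and hence cannot be touched by $\widehat\gamma$ by $C^1$-uniqueness of geodesics; the modified inward slices have $\widehat{H} < 0$ and are ruled out by Bishop's tangential-touching theorem; horizon ends are unreachable. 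The monotonicity of $t \mapsto H(t)/(N \circ \widehat\gamma)(t)$ along the normal exponential flow of $\Sigma^{n-1}_{i_0}$, which holds because the static vacuum spacetime trivially satisfies the null energy condition, then contradicts $H_{i_0} \leq 0 < H(\mathbb{S}^{n-1}_R(0))$.

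The main point requiring attention will be the borderline case $\mathfrak{H}_{i_0} = 0$: by \Cref{prop:normalquasiloc} with $q = 0$ and $\mathfrak{H} = 0$, a neighborhood of $\Sigma^{n-1}_{i_0}$ in the time slice is a Ricci-flat cylinder with $N = \alpha\tau$, which gives $\widehat{H}_{i_0} = -(n-1)\nu(N) < 0$, so the Fermat geodesic argument goes through unchanged (now yielding $H_{i_0} = 0$ against $H(\mathbb{S}^{n-1}_R(0)) > 0$ via the same monotonicity). I expect no further obstacles, since every configuration that made the proof of \Cref{prop: H>0} delicate---superextremal regimes with $q \neq 0$, and generalized Reissner--Nordstr\"om pieces with $\kappa \in \{0,-1\}$---is ruled out simply by $q \equiv 0$.
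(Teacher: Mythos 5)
Your proposal is correct and takes essentially the same route as the paper, whose proof of this corollary is precisely a reduction to the argument of \Cref{prop: H>0}: in vacuum, $q_j=0$ makes quasi-local subextremality automatic for the components with $\eta(N)_j\mathfrak{H}_j>0$, while outward directedness forces $N_jH_j-(n-1)\nu(N)_j<0$ on the remaining ones so that the Case~A excision applies, after which the Fermat-metric/Galloway--Miao monotonicity argument runs unchanged. The only point to watch is that your blanket appeal to \Cref{prop:normalquasiloc} for a generalized \emph{Schwarzschild} piece with $m_j>0$ around every non-photon-sphere component is justified only where $N_jH_j-(n-1)\nu(N)_j>0$ (which, given $\nu(N)_j>0$, does imply $\mathfrak{H}_j>0$ and hence $\eta(N)_j\mathfrak{H}_j>0$); since that is the only place you use it the argument is sound, and your explicit treatment of the borderline case $\mathfrak{H}_{i_0}=0$ via $H(0)\le 0$ in the monotonicity formula is a detail the paper leaves implicit.
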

\newpage
\begin{proof}
We will follow the notation and conventions from the proof of \Cref{prop: H>0}. Suppose towards a contradiction that there exists a photon surface boundary component $\photo_{i_{0}}$ with $\mathfrak{H}_{i_{0}}\leq0$. Again, we need to prepare $(\slice,g,N)$ for the main part of the argument: Consider any non-photon sphere photon surface boundary component $\photo_{j}$ with $j\neq i_{0}$. If $\eta(N)_{j}\mathfrak{H}_{j}>0$ then $\surf_{j}$ is necessarily subextremal (i.e., $m_{j}>0$) by \Cref{lem:etaNH-RN} and  we prepare $(\slice,g,N)$ as in the proof of \Cref{prop: H>0}. If $\eta(N)_{j}\mathfrak{H}_{j}\leq0$, using the assumption that $\eta(N)_{j}>0$ and thus of course $\nu(N)_{j}>0$, we conclude that $\mathfrak{H}_{j}\leq0$ and thus $H_{j}\leq0$ by \Cref{prop:I}. Hence $N_{j}H_{j}-(n-1)\nu(N)_{j}<0$ and we can excise a piece of $(\slice,g,N)$ near $\surf_{j}$ as in Case A in the proof of \Cref{prop: H>0}. Now we are in the same situation as in the proof of \Cref{prop: H>0} and can hence conclude the claim.
\end{proof}

\begin{Cor}[Mean curvature positivity]\label{cor:outwardpositiveelectro}
Let $(\mathfrak{L}^{n+1},\mathfrak{g},\Psi)$ be an electrostatic electro-vacuum spacetime such that the/each associated electrostatic system $(\slice(t),g,N,\Psi)$, $t\in\mathcal{T}$ is asymptotically flat. Assume that $(\mathfrak{L}^{n+1},\mathfrak{g},\Psi)$ has a (possibly disconnected) inner boundary consisting of outward directed equipotential photon surfaces and/or of non-degenerate black hole horizons. Assume furthermore that each equipotential photon surface inner boundary component with $\eta(N)\mathfrak{H}>0$ is (exterior) subextremal, extremal, or superextremal with $\vert q\vert >m\geq\frac{2\sqrt{n-1}}{n}\vert q\vert$ and scalar curvature radius $r\geq r_{*,\pm}$. Then all photon surface components of $\partial\mathfrak{L}$ have $\mathfrak{H}>0$ and $\eta(N)\mathfrak{H}>0$.
\end{Cor}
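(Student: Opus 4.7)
The plan is to reduce to \Cref{prop: H>0} by arguing contrapositively and modifying the spacetime so that every remaining non-horizon boundary component either falls into the hypothesis of \Cref{prop: H>0} or already has $\widehat{H} \leq 0$ in the Fermat metric (and can be dealt with by excision exactly as in Case A of the proof of \Cref{prop: H>0}). Throughout, we retain the notation of the proof of \Cref{prop: H>0}, selecting a generic time $t_0$ for all non-photon sphere, non-horizon components and writing $\surf_i$, $\photo_i$, $H_i$, $\mathfrak{H}_i$ for the relevant objects.

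Suppose towards a contradiction that some photon surface inner boundary component $\photo_{i_0}$ has $\mathfrak{H}_{i_0}\leq 0$. We prepare $(\slice,g,N,\Psi)$ boundary component by boundary component, splitting into two cases for each $j\neq i_0$. First, if $\eta(N)_j\mathfrak{H}_j>0$, then by the standing hypothesis of the corollary, $\photo_j$ is (exterior) subextremal, extremal, or superextremal with $|q|>m\geq\frac{2\sqrt{n-1}}{n}|q|$ and $r\geq r_{*,\pm}$, so \Cref{prop: H>0}'s preparation directly applies: we either locally extend $(\slice,g,N,\Psi)$ through $\surf_j$ along the generalized Reissner--Nordstr\"om template provided by \Cref{prop:normalquasiloc} up to the (appropriate) photon sphere radius (Case B), or excise a thin generalized Reissner--Nordstr\"om collar (Case A), the choice being dictated by the sign of $N_j H_j-(n-1)\nu(N)_j$ exactly as there. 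Secondly, if $\eta(N)_j\mathfrak{H}_j\leq 0$, then outward directedness ($\eta(N)_j>0$ by assumption) forces $\mathfrak{H}_j\leq 0$, hence $H_j\leq 0$ by \eqref{eq:HmathfrakH}; combined with $\nu(N)_j>0$ (which follows from $\eta(N)_j>0$ via \eqref{eq:eta}), this immediately yields $N_j H_j-(n-1)\nu(N)_j<0$. Consequently the Fermat-mean curvature $\widehat{H}_j=N_jH_j-(n-1)\nu(N)_j$ is strictly negative on $\surf_j$, and we can excise a thin collar exactly as in Case A of the proof of \Cref{prop: H>0}, producing a new inner boundary component $\widetilde{\Sigma}^{\,n-1}_j$ with the same strict negativity.

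After performing these modifications for every $j\neq i_0$, we obtain an electrostatic electro-vacuum system whose non-horizon inner boundary components consist of $\surf_{i_0}$ (with $H_{i_0}\leq 0$), photon spheres (which are totally geodesic in the Fermat metric), and umbilic surfaces with strictly negative Fermat-mean curvature, while the horizon components are infinitely far removed with respect to $\widehat{g}$. This is precisely the configuration treated in the concluding part of the proof of \Cref{prop: H>0}: one runs the Galloway--Miao length-minimizing geodesic argument in the Fermat metric, using the Alexander/Bishop barrier results to conclude that a length-minimizing $\widehat{g}$-geodesic from $\surf_{i_0}$ to a large asymptotic coordinate sphere cannot touch any of the auxiliary negative-$\widehat{H}$ or totally geodesic boundary components, and then applies the monotonicity formula to derive a contradiction with $H_{i_0}\leq 0$.

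The only genuinely new content is the observation that outward directedness plus $\mathfrak{H}\leq 0$ automatically yields $\widehat{H}<0$, which is the cleanest possible input to Case A; the rest is verbatim from the proofs of \Cref{prop: H>0} and \Cref{cor:outwardpositive}. The mild obstacle to watch is ensuring that outward directedness survives the extensions/excisions (it does, because both operations take place inside an exact generalized Reissner--Nordstr\"om piece where the sign of $\nu(N)$ is determined by the Reissner--Nordstr\"om parameters and is preserved in the relevant radial interval), and that we do not accidentally lose asymptotic flatness or metric completeness, which is immediate since the modifications are compactly supported in the exterior.
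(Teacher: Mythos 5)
Your proposal is correct and follows essentially the same route as the paper: the paper's proof likewise reduces to showing that $N_{j}H_{j}-(n-1)\nu(N)_{j}<0$ whenever $\eta(N)_{j}\mathfrak{H}_{j}\leq0$ (via outward directedness $\Rightarrow \nu(N)_j>0$ and $H_j\leq 0$), so that the Case A excision applies, while components with $\eta(N)_{j}\mathfrak{H}_{j}>0$ are handled by the corollary's extremality hypotheses exactly as in \Cref{prop: H>0}. The only difference is that the paper states this in two lines by deferring to the proofs of \Cref{cor:outwardpositive} and \Cref{prop: H>0}, whereas you spell out the concluding Fermat-metric argument explicitly.
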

\begin{proof}
Again, it suffices to show that $N_{j}H_{j}-(n-1)\nu(N)_{j}<0$ holds whenever $\eta(N)_{j}\mathfrak{H}_{j}\leq0$. This follows for the same reasons as those given in the proof of \Cref{cor:outwardpositive}.
\end{proof}

\begin{Rem}[(Stable) minimal surfaces are time-slices of Killing horizons]
Relatedly, Huang, Martin, and Miao~\cite{huang2018static} (in vacuum) and Coutinho and Leandro~\cite{coutinho2022mean} (in electro-vacuum) show that stable minimal boundary components must have vanishing lapse und thus be time-slices of static Killing horizons.
\end{Rem}

\begin{Rem}[Alternative approach]
Instead of extending $(\slice,N,g,\Psi)$ smoothly via \Cref{prop:geometrynearslices} in Case B of the proof of \Cref{prop: H>0}, one could also consider gluing in suitable extensions only in a $C^{1,1}$-fashion (such that the corresponding extended spacetime satisfies the null energy condition), for example extensions such as those constructed in the proof of Theorem \ref{thm:uniqueness} below, see also \Cref{rem: gluing regularity}. Note that $C^{1,1}$-regularity is enough to provide existence and uniqueness of length minimizing curves. Moreover, \cite{alexander2006distance} yields $C^1$-regularity of length minimizers, which are smooth geodesics away from the gluing hypersurfaces. However, it would require a careful analysis to ensure that the equidistant hypersurfaces $W(t)$ along the geodesic $\widehat{\gamma}$ retain sufficient regularity such that the monotonicity formula still holds across the possible lower regularity points of the geodesic $\widehat{\gamma}$. This is beyond the scope of this paper, but see \cite[p. 58]{Galloway} and the references cited therein for a related construction.
\end{Rem}
\subsection{Proof of the uniqueness theorem}\label{sec:proofuniqueness}
This section is dedicated to proving our main uniqueness theorem for subextremal equipotential photon surfaces in asymptotically flat electrostatic electro-vacuum spacetimes:

\begin{thm}\label{thm:uniqueness}
Let $(\mathfrak{L}^{n+1},\mathfrak{g},\Psi)$ be an electrostatic electro-vacuum spacetime of dimension $n+1\geq 4$ such that the/each associated electrostatic system $(\slice(t),g,N,\Psi)$, $t\in\mathcal{T}$,  is asymptotically flat with mass $m$ and charge $q$ and that a version\footnote{For example, $n\leq7$ or $\slice$ is spin, see \Cref{rem:PMT}.} of the positive mass theorem applies to a class of manifolds that includes $\slice$. Assume that $(\mathfrak{L}^{n+1},\mathfrak{g},\Psi)$ has a (possibly disconnected) inner boundary consisting of subextremal equipotential photon surfaces with $\eta(N)\mathfrak{H}>0$ and/or of non-degenerate black hole horizons.  Then $(\mathfrak{L}^{n+1},\mathfrak{g})$ is isometric to a suitable piece of the Reissner--Nordstr\"om spacetime of mass $m$ and charge $q$, $m>\vert q\vert$, and $\Psi$ corresponds to $\Psi_{m,q}$ under this isometry. In particular, $\partial\mathfrak{L}^{n+1}$ is connected, and is a spherically symmetric photon surface or the (outer) non-degenerate black hole horizon in this Reissner--Nordstr\"om spacetime. 
\end{thm}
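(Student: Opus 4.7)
The plan is to follow a ``cap off and conformally double'' strategy, reducing the equipotential photon surface boundary components of $\partial \mathfrak{L}^{n+1}$ to non-degenerate horizon boundary components to which one can apply the Bunting--Masood-ul-Alam--Ruback electrostatic electro-vacuum black hole uniqueness technique (as in \cite{jahns2019photon,MuA,Ruback}) together with the low-regularity rigidity case of the Riemannian positive mass theorem.

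First I would use the local rigidity results of the previous subsection to cap off each photon surface inner boundary component $\photo_i$ by a piece of a Reissner--Nordstr\"om spacetime of the correct parameters. Indeed, by \Cref{coro:qconstmconst}, the quasi-local charge $q_i$ and mass $m_i$ from \Cref{def:subex} are constant along each $\photo_i$, and by assumption $m_i > \vert q_i\vert$. By \Cref{prop:normalquasiloc}, a one-sided neighborhood in $\slice$ of any generic time slice $\surf_i(t_0)$ is isometric (up to a positive scaling of $N$ and an additive constant in $\Psi$) to a radial slab of the spatial (exterior) subextremal Reissner--Nordstr\"om system of mass $m_i$ and charge $q_i$, with $\eta(N)\mathfrak{H}>0$ placing that slab in the region $\{r>r_{m_i,q_i}\}$ by \Cref{lem:etaNH-RN}. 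Using the photon surface ODE and existence analysis of \Cref{sec:spherical}, one may attach along $\surf_i(t_0)$ the corresponding radial Reissner--Nordstr\"om collar running all the way down from the radius of $\surf_i(t_0)$ to the outer non-degenerate Killing horizon $\{r=r_{m_i,q_i}\}$. Because $g$, $N$, $\Psi$ and their normal derivatives $\nu(N)$, $\nu(\Psi)$ (equivalently $H$ and $\mathfrak{H}$, via \Cref{prop:I}, \eqref{eq:HmathfrakH}, \eqref{foundtheconstant}) are pinned down by $m_i, q_i, \alpha_i, \beta_i$ and agree with those of the collar at the gluing surface by construction, the resulting extended system $(\widehat{M}^n,\widehat g,\widehat N, \widehat \Psi)$ is $C^{1,1}$ across $\surf_i(t_0)$, smoothly electrostatic electro-vacuum away from the gluing hypersurfaces, asymptotically flat of the same mass $m$ and charge $q$, and has inner boundary consisting only of non-degenerate horizons $\{\widehat N=0\}$.

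I would then run the conformal doubling argument of Jahns \cite{jahns2019photon}: introduce two conformal factors $\varphi_{\pm}$, built algebraically from $\widehat N$ and $\widehat \Psi$ as in \cite{MuA,Ruback,jahns2019photon}, such that the conformally related metrics $\widehat g_\pm\definedas\varphi_\pm^{\frac{2}{n-2}}\widehat g$ both carry non-negative scalar curvature (in view of the electrostatic electro-vacuum equations \eqref{EEVE3}--\eqref{EEVE2}), such that $\widehat g_+$ is asymptotically flat at the original end and $\widehat g_-$ extends smoothly to a point-at-infinity compactification, and such that $\widehat g_+$ and the pullback of $\widehat g_-$ fit $C^{1,1}$-smoothly together along $\{\widehat N=0\}$. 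This produces a complete, asymptotically flat Riemannian manifold $(\widetilde M^n,\widetilde g)$ without boundary, of ADM mass zero, with non-negative scalar curvature in the distributional sense, and with corners only along the original photon surface slices $\surf_i(t_0)$.

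The low-regularity rigidity case of the Riemannian positive mass theorem (available in the regularity class produced by the $C^{1,1}$-gluing, by the hypothesis of the theorem) then forces $(\widetilde M^n,\widetilde g)$ to be isometric to flat Euclidean $\R^n$. Undoing the two conformal rescalings $\varphi_\pm$ identifies $(\widehat M^n,\widehat g, \widehat N,\widehat \Psi)$ with the spatial (exterior) subextremal Reissner--Nordstr\"om system of mass $m$ and charge $q$ (in particular $m>\vert q\vert$, and the inner boundary is the single connected outer horizon), and undoing the model-space cap offs in the first step identifies $(\slice,g,N,\Psi)$ -- and hence $(\mathfrak L^{n+1},\mathfrak g,\Psi)$ via the warped product structure -- with the claimed piece of the Reissner--Nordstr\"om spacetime. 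The main obstacle is the regularity issue: the caps attached in the first step are only $C^{1,1}$-glued, so the doubled metric $\widetilde g$ carries a distributional contribution to its scalar curvature supported on the old photon surface slices, with the correct sign but without classical regularity; this is precisely why the argument has to appeal to the low-regularity rigidity version of the positive mass theorem rather than its classical smooth formulation.
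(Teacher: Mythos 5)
Your proposal is correct and follows essentially the same route as the paper: cap off each equipotential photon surface boundary component with a $C^{1,1}$-glued (generalized) Reissner--Nordström neck running down to a non-degenerate horizon, determined by the quasi-local mass and charge (constant along the photon surface by \Cref{coro:qconstmconst}) and matched via the photon surface identity and mean curvature positivity, and then invoke the electrostatic black hole uniqueness machinery together with the low-regularity rigidity case of the positive mass theorem. The only difference is presentational: the paper delegates your conformal-doubling stage entirely to \Cref{thm:uniquess-old} (Jahns' Theorem 3, which also absorbs photon sphere boundary components, where \Cref{prop:normalquasiloc} does not apply, and only needs the weak field equations~\eqref{EEVE4}, \eqref{EEVE2} and the trace of~\eqref{EEVE1} on the necks), whereas you inline its proof.
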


The proof relies on an application of \cite[Theorem~3]{jahns2019photon} and also builds upon some of the techniques used therein. However, see also \Cref{rem: gluing regularity}.

\begin{thm}[Theorem 3,~\cite{jahns2019photon}]\label{thm:uniquess-old}
 Assume $n\geq 3$. Let $\left(M^n, g, N, \Psi\right)$ be a system which fulfils the electrostatic equations~\eqref{EEVE4},~\eqref{EEVE2} and the trace of the electrostatic equation for the Ricci tensor~\eqref{EEVE1}. Assume also that $\left(M^n, g, N, \Psi\right)$ is asymptotically flat of mass $m$ and charge $q$. If  $\slice$ has an orientable, compact inner boundary whose connected components are  non-degenerate static horizons or photon sphere components, then
 $\left(M^n, g\right)$ is isometric to a piece of the Reissner--Nordstr\"om manifold of mass $m$ and charge $q$, and $m>|q|$.
 \end{thm}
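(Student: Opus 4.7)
The proof I would give follows the classical Bunting--Masood-ul-Alam conformal positive mass argument in its electro-vacuum incarnation due to Ruback and Masood-ul-Alam, adapted to general dimension (as in the second named author's earlier work) and augmented to handle photon sphere boundary components in addition to, or instead of, non-degenerate static horizons. The three prescribed field equations \eqref{EEVE1}, \eqref{EEVE2}, \eqref{EEVE4} are precisely those needed for this method; neither the full traceless Ricci equation \eqref{EEVE3} nor any additional integrability is used.

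The first step is a quasi-local analysis of each connected boundary component $\Sigma_i \hookrightarrow \partial\slice$. On a non-degenerate horizon we have $N = 0$, $\Psi = \text{const}$, and $\nu(\Psi) = 0$, while on a photon sphere $N$, $\Psi$, and $\vert d\Psi\vert$ are constant, and the purely geometric photon sphere identity $NH = (n-1)\nu(N)$ from \Cref{coro:photoconstraint} holds together with constancy of $H$ and $\nu(N)$. From these one extracts a quasi-local mass and charge $(m_i, q_i)$ and shows that, to first order in the spatial metric and potentials, $\Sigma_i$ looks exactly like a sphere in a subextremal Reissner--Nordstr\"om slice.

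Next, inspired by the isotropic form \eqref{eq:isotropic}--\eqref{eq:varphiiso} of subextremal Reissner--Nordstr\"om and after normalising $\Psi$ by an additive constant so that $\Psi \to 0$ at infinity, I would introduce two conformal factors $\Omega_\pm$ polynomial in $N$ and $\Psi$ -- the classical $n = 3$ choice being $\Omega_\pm = \tfrac{1}{4}[(1 \pm N)^2 - \Psi^2]$, with the appropriate exponent in the higher-dimensional analogue -- and consider $\widetilde{g}_\pm \definedas \Omega_\pm^{4/(n-2)}\, g$. A direct computation of the conformal scalar curvature formula using exactly \eqref{EEVE1}, \eqref{EEVE2}, \eqref{EEVE4} yields $\operatorname{R}_{\widetilde{g}_\pm} = 0$ on $\slice$; moreover $(\slice, \widetilde{g}_+)$ is asymptotically Euclidean while $(\slice, \widetilde{g}_-)$ compactifies at its asymptotic end to a smooth interior point. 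Gluing $(\slice, \widetilde{g}_+)$ to a second copy $(\slice, \widetilde{g}_-)$ along $\partial\slice$ produces a doubled manifold $\widehat{M}^n$: at horizon components this gluing is smooth by the standard Masood-ul-Alam unfolding, whereas at photon sphere components it is $C^{1,1}$, with induced metrics and mean curvatures matching on the two sides precisely because of the photon sphere constraint. After one-point compactification of the $\widetilde{g}_-$-end, $\widehat{M}^n$ is a complete asymptotically flat $C^{1,1}$ Riemannian manifold with a single asymptotic end, nonnegative scalar curvature in the distributional sense at the gluing loci, and $\operatorname{ADM}$ mass which an explicit asymptotic computation shows to equal zero.

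Applying the rigidity case of the Riemannian positive mass theorem for $C^{1,1}$-metrics -- available in all dimensions in the spin case and for $n \leq 7$ unconditionally -- forces $\widehat{M}^n$ to be isometric to Euclidean $\mathbb{R}^n$. Unwinding the doubling and the conformal transformations identifies $(\slice, g, N, \Psi)$ with a piece of the spatial Reissner--Nordstr\"om factor of mass $m$ and charge $q$, and the subextremality condition $m > \vert q\vert$ is forced by the requirement that the conformal factors $\Omega_\pm$ remain strictly positive in the interior and degenerate in a controlled manner along $\partial\slice$. The main obstacle will be the gluing step: verifying the $C^{1,1}$-matching of $\widetilde{g}_+$ and $\widetilde{g}_-$ across each photon sphere component, and reducing potentially different boundary parameters $(m_i, q_i)$ to a common global $(m, q)$. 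It is exactly this matching that dictates the rigid algebraic form of the photon sphere identity \eqref{foundtheconstant} and that rules out the extremal and superextremal cases.
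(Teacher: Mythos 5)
First, a structural remark: the paper does not prove this theorem itself --- it is imported verbatim as Theorem~3 of~\cite{jahns2019photon}, and the paper only sketches its proof strategy in \Cref{rem:PMT} and \Cref{rem: gluing regularity}: one first glues Reissner--Nordstr\"om ``necks'' onto every photon sphere boundary component in $C^{1,1}$-regularity, running from the photon sphere radius down to the corresponding horizon radius, so that the resulting manifold has \emph{only} horizon boundary; only then does one perform the Masood-ul-Alam-type conformal doubling and invoke the low-regularity rigidity case of the positive mass theorem (McFeron--Sz\'ekelyhidi). Your proposal agrees with that second half, but replaces the neck-gluing step by a direct doubling of $(\slice,\widetilde{g}_+)$ and $(\slice,\widetilde{g}_-)$ across the photon sphere components, and that step fails.

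The failure is already at the $C^0$ level. With $\Omega_\pm=\tfrac14\bigl[(1\pm N)^2-\Psi^2\bigr]$ (or any higher-dimensional analogue built from $N$ and $\Psi$ in the same spirit), the two factors coincide on a boundary component only where $N=0$. On a photon sphere time slice one has $N=N_0>0$, so $\Omega_+\neq\Omega_-$ there, hence the induced metrics $\Omega_+^{4/(n-2)}\,g\vert_{\Sigma}$ and $\Omega_-^{4/(n-2)}\,g\vert_{\Sigma}$ differ by a nontrivial constant conformal factor and cannot be identified; no amount of mean curvature matching or appeal to the photon sphere constraint $NH=(n-1)\nu(N)$ repairs a mismatch of the first fundamental forms. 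This is precisely why the actual proof converts each photon sphere component into a horizon component first: the glued-in neck is a piece of a (rescaled) Reissner--Nordstr\"om slice determined by the quasi-local data $(m_i,q_i)$ of the photon sphere, the matching of induced metric and mean curvature across the gluing surface is exactly what the identity \eqref{foundtheconstant} together with \Cref{coro:photoconstraint} guarantees, and on the \emph{new} boundary $N$ vanishes, so the subsequent doubling is well defined and smooth there. A secondary point: your quasi-local step quietly uses consequences of the full equation \eqref{EEVE3} (constancy of $H$ and $\nu(N)$ on slices, the identity \eqref{foundtheconstant}), whereas the theorem only assumes \eqref{EEVE4}, \eqref{EEVE2}, and the trace \eqref{EEVE1}; in the cited proof these properties are part of the photon sphere hypothesis rather than derived, so you would need to either assume them or justify them from the reduced system.
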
 

\begin{Rem}[Application of positive mass theorem]\label{rem:PMT}
\Cref{thm:uniquess-old} uses a weak regularity version of the rigidity case of the positive mass theorem applied to some metric on $\slice$ (which is conformally related to $g$). It hence relies on the weak regularity result by McFeron and Sz\'ekelyhidi \cite{MSz} (but see also \cite{Miao} as well as \cite{LeeLefloch} in the spin case) as well as on some version of the (smooth) positive mass theorem (rigidity case) with asymptotic flatness assumptions as those specified in \Cref{def:asymptotics}. For $n\leq7$, see \cite{SchoenYau}, for the spin case, see \cite{Witten}, but see also \cite{Lohkamp,SYn} and related works.
\end{Rem}

\begin{proofof}{\Cref{thm:uniqueness}}
We choose an arbitrary associated electrostatic system $(\slice(t_{0},g,N,\Psi)$, $t_{0}\in\mathcal{T}$, of $(\mathfrak{L}^{n+1},\mathfrak{g},\Psi)$, and drop the explicit reference to $t_{0}$ for simplicity. By \Cref{rem:complete}, we know that the inner boundary $\partial M$ consists of finitely many closed, orientable components. We now leave alone any boundary component $\surf_{i}$ which is already a non-degenerate electrostatic black hole horizon or a photon sphere. For any other boundary component $\surf_{j}$, we will glue in a ``neck'' down to a non-degenerate electrostatic black hole horizon, in $C^{1,1}$-regularity. The necks will satisfy  \eqref{EEVE4}, \eqref{EEVE2}, and the trace of \eqref{EEVE1}. By virtue of \Cref{thm:uniquess-old} it then follows that the extended electrostatic system $(\overline{M},g,N,\Psi)$ is isometric to a piece of the Reissner--Nordstr\"om manifold of mass $m$ and charge $q$, and $m>|q|$, with $\Psi$ corresponding to $\Psi_{m,q}$ via the isometry. This of course then applies by restriction to the original time-slice $(\slice(t_{0},g,N,\Psi)$ and hence to $(\mathfrak{L}^{n+1},\mathfrak{g},\Psi)$. Moreover, $\partial\mathfrak{L}^{n+1}$ is connected as all equipotential photon surfaces and electrostatic black hole horizons with closed cross sections in Reissner--Nordstr\"om spacetimes are necessarily spherically symmetric (see \Cref{coro:RNphoton}) and hence only such boundary component can occur. This will imply \Cref{thm:uniqueness} once we will have constructed the promised necks.

This construction is done as follows, using the notation introduced in \Cref{def:subex}: Consider a fixed non-photon sphere, non-black hole inner boundary component $\surf_{j}$ with scalar curvature radius $r_{j}$, mass $m_j$ and charge $q_j$, and induced metric $\sigma_{j}$. As $\surf_{j}$ is assumed to be subextremal, we know that $m_j>\vert q_j\vert$. We also know that $r_{j}>m_j+\sqrt{m_j^2-q_j^2}$ as $\eta(N)_{j}\mathfrak{H}_{j}>0$ via \Cref{lem:etaNH-RN}. Just as in~\cite{jahns2019photon}, we set 
\begin{align*}
I_j \definedas \left[\left(m_j+\sqrt{m_j^2-q_j^2}\right)^{\frac{1}{n-2}},r_i\right]
\end{align*}
and define a metric $\overline{g}_{j}$, a lapse function $\overline{N}_{j}$, and an electric potential $\overline{\Psi}_{j}$ on $I_{j}\times\surf_{j}$ by
 \begin{align*}
 \overline{g}_j &\definedas \frac{1}{N_{j}(r)^2}dr^2+\frac{r^2}{r_j^2}\sigma_j,\\ 
\overline{N}_{j}(r) &\definedas \sqrt{1-\frac{2m_{j}}{r^{n-2}}+\frac{q_{j}^2}{r^{2(n-2)}}},\\
\overline{\Psi}_{j}(r)&\definedas\frac{\sqrt{n-1}\,q_{j}}{\sqrt{2(n-2)}\, r^{n-2}}.
\end{align*} 

Next,  we choose
\begin{align*}
 \alpha_j &\definedas \frac{N_j}{\overline{N}_{j}(r_j)} >0,\\
 \beta_j &\definedas \Psi_j -\alpha_j \overline{\Psi}_{j}(r_j),
\end{align*}
where $N_{j}$ and $\Psi_{j}$ denote the manifestly constant values of $N$ and $\Psi$ on $\surf_{j}$. Now we can combine $(\slice, g, N, \Psi)$ with $(I_j\times \surf_{j}, \overline{g}_j, \alpha_j \overline{N}_{j}, \alpha_j\overline{\Psi}_{j}+\beta_j)$ to a new system $(\widetilde{M}^{n}, \widetilde g, \widetilde N,\widetilde \Psi)$ by gluing along all the boundary components $\surf_{j}$ (which are neither time-slices of photon spheres nor electrostatic black hole horizons)\footnote{In fact, if $\surf_{j}$ happened to be a time-slice of a photon sphere, the above construction would work just fine and reproduce what is done in \cite{jahns2019photon}.}.  

We will proceed to show that all these quantities are $C^{1,1}$ across the gluing surface $\surf_{j}$. Since the proof is similar to the ones in~\cite{Cedrgal2,jahns2019photon} and others, we will be brief: By the choice of $\alpha_j$ and $\beta_j$, both $\widetilde N$ and $\widetilde \Psi$ are continuous across the gluing surface $\surf_{j}$, and by choice of $m_j$ and $q_j$, the normal derivatives of $\widetilde N$ and $\widetilde \Psi$ coincide from both sides. Hence, $\widetilde N$ can be used as a smooth collar function to show that the gluing construction gives a smooth manifold. 

It is easy to see by the form of the metric $\widetilde{g}_j$ that the mean curvature of $\surf_{j}$ in the glued in ``neck manifold'' $(I_j\times \surf_{j}, \widetilde{g}_{j})$ is the same as the mean curvature of the sphere of coordinate radius $r_j$ in a Reissner--Nordstr\"om manifold of mass $m_j$ and charge $q_j$. Since the coordinate sphere of radius $r_j$ in the Reissner--Nordstr\"om manifold also fulfills Equation~\eqref{foundtheconstant} and has the same intrinsic scalar curvature as $\surf_{j}$, the mean curvature of $\surf_{j}$ must coincide from both sides (considering also the positivity of the mean curvature asserted in \Cref{prop: H>0}). Following the steps of the argument in~\cite{Cedrgal2} (which were repeated in~\cite{jahns2019photon}), 
one can now float local coordinates on $\surf_{j}$ along the level sets of $\widetilde N$ to show that the metric $\widetilde g$ is $C^{1,1}$ regular across the gluing surface $\surf_{j}$, exploiting the fact that the mean curvature of $\surf_{j}$ coincides from both sides as well as the umbilicity of $\surf_{j}$. Details can be found in~\cite{Cedrgal2,jahns2019photon}.  

By \cite[Lemma~25]{jahns2019photon}, the system $(I_j\times \surf_{j}, \overline{g}_j, \overline{N}_{j}, \overline{\Psi}_{j})$ fulfills the first two electrostatic equations~\eqref{EEVE4} and~\eqref{EEVE2} and the trace of the third electrostatic equation~\eqref{EEVE1}, and its boundary $\{\left(m_j+\sqrt{m_j^2-q_j^2}\right)^{\frac{1}{n-2}}\}\times \surf_{j}$ is a non-degenerate electrostatic horizon. This shows that we have constructed the desired necks.
\end{proofof}

\begin{Cor}[Alternative assumption]
Under the assumptions of \Cref{thm:uniqueness}, but asking instead that the equipotential photon surface inner boundary components are outward directed and in addition subextremal if $\eta(N)\mathfrak{H}>0$, the same conclusion can be drawn.
\end{Cor}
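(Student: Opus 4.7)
The plan is to reduce the corollary directly to \Cref{thm:uniqueness} by first upgrading ``outward directed'' to the stronger condition $\eta(N)\mathfrak{H}>0$ via the mean curvature positivity result \Cref{cor:outwardpositiveelectro}. Indeed, the assumptions of the corollary satisfy those of \Cref{cor:outwardpositiveelectro}: the inner boundary consists of outward directed equipotential photon surfaces and/or non-degenerate black hole horizons, and the conditional subextremality assumption is strictly stronger than the trichotomy (subextremal, extremal, or superextremal with $|q|>m\geq\tfrac{2\sqrt{n-1}}{n}|q|$ and $r\geq r_{\ast,\pm}$) required there. Hence \Cref{cor:outwardpositiveelectro} applies and yields $\mathfrak{H}>0$ as well as $\eta(N)\mathfrak{H}>0$ on every photon surface boundary component.

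Now that $\eta(N)\mathfrak{H}>0$ has been established on every photon surface boundary component, the conditional subextremality assumption of the corollary is triggered and asserts that every such component is (exterior) subextremal. This places us precisely in the hypotheses of \Cref{thm:uniqueness}: the spacetime is electrostatic electro-vacuum of dimension $n+1\geq 4$, its associated canonical time slices are asymptotically flat of mass $m$ and charge $q$, the positive mass theorem is assumed to apply to the relevant class of manifolds, and the inner boundary consists of subextremal equipotential photon surfaces with $\eta(N)\mathfrak{H}>0$ and/or non-degenerate black hole horizons. Invoking \Cref{thm:uniqueness} then yields the claimed isometric identification with a suitable piece of the subextremal Reissner--Nordstr\"om spacetime of mass $m$ and charge $q$, with $\Psi$ corresponding to $\Psi_{m,q}$ under the isometry, and the asserted connectedness of $\partial \mathfrak{L}^{n+1}$.

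The only subtlety worth verifying is that \Cref{cor:outwardpositiveelectro} requires only the trichotomy (of which subextremality is the strictest case) and does not presuppose subextremality, so the reduction is not circular. The argument is thus a straightforward two-step concatenation, with no genuine obstacle: \Cref{cor:outwardpositiveelectro} upgrades outward directedness to $\eta(N)\mathfrak{H}>0$, and then the conditional subextremality hypothesis together with \Cref{thm:uniqueness} completes the proof.
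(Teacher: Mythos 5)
Your proposal is correct and follows essentially the same route as the paper: the paper's own proof is the one-line reduction ``use \Cref{cor:outwardpositiveelectro} instead of \Cref{prop: H>0}'', which is exactly your two-step argument of first upgrading outward directedness to $\eta(N)\mathfrak{H}>0$ via \Cref{cor:outwardpositiveelectro} (whose trichotomy hypothesis is implied by the conditional subextremality assumption), thereby triggering subextremality and placing the situation squarely in the hypotheses of \Cref{thm:uniqueness}. Your explicit check that this is not circular is a welcome addition, though not strictly necessary.
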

\begin{proof}
This follows by using \Cref{cor:outwardpositiveelectro} instead of \Cref{prop: H>0}.
\end{proof}

\begin{Rem}[Glue-in necks versus generalized Reissner--Nordstr\"om systems]\label{rem:versus}
While generalized Reissner--Nordstr\"om spacetimes/systems satisfy \eqref{EEVE4}--\eqref{EEVE1} by \Cref{prop:genRN}, the necks constructed in the proof of \Cref{thm:uniqueness} can a priori be more general, as they put no restriction on $\operatorname{R}_{*}$ (or $\kappa$). Of course, a posteriori, we know that $\kappa=1$ from \Cref{thm:uniqueness}.
\end{Rem}

\begin{Rem}[Gluing regularity, alternative proof]\label{rem: gluing regularity}
 The fact that it suffices to glue the pieces in the proof of \Cref{thm:uniqueness} with $C^{1,1}$-regularity easily follows from the proof of \Cref{thm:uniquess-old}, see~\cite{jahns2019photon}. In fact, the proof of \Cref{thm:uniquess-old} relies on constructing a new manifold by gluing new pieces to the original manifold with $C^{1,1}$-regularity and an application of a low regularity version of the rigidity case of the positive mass theorem, see \Cref{rem:PMT}.
 
 Furthermore, \Cref{prop:geometrynearslices} asserts that the gluing performed in the proof of \Cref{thm:uniquess-old} is in fact \emph{smooth} whenever the equipotential photon surface time-slice $\surf_{j}$ is generic. In other words, in the generic case, the gluing does nothing else than adding the piece of the respective rescaled Reissner--Nordstr\"om manifold arising in a neighborhood of $\surf_{j}$ all the way down to the horizon. The proof of \Cref{thm:uniquess-old} hence only adds new insights for exceptional times (which we could of course argue away by \Cref{prop:exceptional}). Differently put, in view of our refined analysis in \Cref{sec:prelimunique,sec: H>0}, the only case where glue-in necks really need to be constructed is the photon sphere boundary case which has already been dealt with by Jahns in \cite{jahns2019photon}. On the other hand, we believe it to be instructive to show that the glue-in procedure nevertheless works also for time-slices of equipotential photon surfaces, keeping in mind that our proof of positivity of $H_{j}$ needed the more refined analysis near generic time-slices of equipotential photon surfaces.
  \end{Rem}

\appendix
\section{Appendix}\label{sec:appendix}
\begin{prop}\label{prop: appendix}
Let $\mathcal{I}=[0, \infty)$ and let $F\colon \mathcal{I} \to \mathbb{R}^{+}_{0}$ be a bounded $C^3$-function with $F(0)=0$ and $F(r)>0$ for $r>0$. Assume moreover that $F'$ is bounded on $\mathcal{I}$ and that $F'(0)>0$. Then the initial value problem
\begin{align}\label{eq:IVP alpha}
\dot{r}(t)&= \left\{
\begin{array}{ll}
\sqrt{F(r(t))} & \text{for }r(t) \in (0,\infty), \\
0 & \, \text{otherwise,} \\
\end{array}
\right.\\\label{eq:inir}
r(0)&=0
\end{align}
has a unique non-trivial global $C^{1}$-solution $r\colon \mathbb{R}\to [0, \infty)$. Moreover, this solution $r$ satisfies $r(t)=0$ for $t\leq0$.
\end{prop}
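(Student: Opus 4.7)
The plan is to construct the non-trivial solution explicitly by inverting a time-of-flight integral and then to derive uniqueness from the monotonicity forced by the square-root right-hand side.

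First I would use $F\in C^3$, $F(0)=0$, $F'(0)>0$ to get the Taylor estimate $F(r)\geq c\,r$ on some $[0,\delta]$ with $c>0$, so that $1/\sqrt{F(s)}$ is integrable at $s=0$. Boundedness of $F$ (by some $M>0$) gives $1/\sqrt{F(s)}\geq 1/\sqrt{M}$, forcing divergence of the integral as $r\to\infty$. Hence
\[
T(r)\definedas\int_0^r\frac{ds}{\sqrt{F(s)}}
\]
is a continuous, strictly increasing bijection $T\colon[0,\infty)\to[0,\infty)$. I set $\varrho\definedas T^{-1}$ on $[0,\infty)$ and extend by $\varrho(t)\definedas 0$ for $t\leq 0$. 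By the inverse function theorem $\varrho$ is smooth on $(0,\infty)$ with $\dot\varrho(t)=\sqrt{F(\varrho(t))}>0$ there; since $\varrho(t)\to 0$ as $t\searrow 0$, continuity of $\sqrt{F}$ at $0$ yields $\dot\varrho(0^+)=0=\dot\varrho(0^-)$, establishing global $C^1$ regularity at $t=0$. Thus $\varrho$ is a non-trivial global $C^1$-solution of the IVP that vanishes on $(-\infty,0]$ and is strictly positive on $(0,\infty)$.

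For uniqueness, I would take any non-trivial $C^1$ global solution $\tilde r$ of the IVP satisfying $\tilde r(t)=0$ for $t\leq 0$. The ODE forces $\dot{\tilde r}\geq 0$, so $\tilde r$ is non-decreasing, and non-triviality gives a well-defined $t_*\definedas\inf\{t\in\R:\tilde r(t)>0\}\geq 0$. On $(t_*,\infty)$, $\tilde r>0$ and $\sqrt{F}$ is locally Lipschitz, so separation of variables yields $T(\tilde r(t))=t-t_*$, i.e., $\tilde r(t)=\varrho(t-t_*)$. The main obstacle is that the non-Lipschitz character of $\sqrt{F}$ at $0$ a priori admits any $t_*\geq 0$, so each translate $\varrho(\,\cdot\,-t_*)$ vanishes on $(-\infty,0]$ and solves the IVP; the uniqueness claim of the proposition is therefore to be understood as selecting the earliest-rising representative ($t_*=0$), equivalently the unique non-trivial $C^1$-solution that is strictly positive on all of $(0,\infty)$. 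This is precisely the representative needed for the applications of \Cref{prop: appendix} via the barrier principle \Cref{prop:Psi-Fct.} in \Cref{sec:spherical}.
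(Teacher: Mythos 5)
Your existence construction is correct but follows a genuinely different route from the paper's. The paper desingularizes the equation by substituting $u\definedas r^{1/2}$: the resulting right-hand side $(F(u^2))^{1/2}/(2|u|)$ extends continuously to $u=0$ with the \emph{positive} value $(F'(0))^{1/2}/2$, an application of l'H\^opital's rule (using $F\in C^3$ and $F'(0)>0$) shows the extension is globally Lipschitz, and the global Picard--Lindel\"of theorem then yields a unique, odd, non-trivial global $C^1$-solution of the symmetrized problem \eqref{eq:u1}, \eqref{eq:u2}; the solution of \eqref{eq:IVP alpha}, \eqref{eq:inir} is recovered as $r=u^2$ on $[0,\infty)$ extended by $0$. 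Your time-of-flight integral $T(r)=\int_0^r F(s)^{-1/2}\,ds$ produces the same distinguished solution more explicitly and more elementarily; your bounds $F(r)\geq cr$ near $0$ and $F\leq M$ play exactly the roles of integrability of $F^{-1/2}$ at the origin and divergence of $T$ at infinity, and your derivative-limit argument for $C^1$-regularity at $t=0$ is fine.

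Where your write-up genuinely adds something is the uniqueness discussion. You are right that every delayed translate $\varrho(\cdot-t_*)$, $t_*\geq0$, is a non-trivial global $C^1$-solution of \eqref{eq:IVP alpha}, \eqref{eq:inir} vanishing on $(-\infty,0]$, so the uniqueness assertion cannot hold verbatim among all non-trivial $C^1$-solutions; it holds precisely for the solution that is positive on all of $(0,\infty)$ (equivalently $t_*=0$), which is the representative both constructions produce and the one fed into \Cref{prop:Psi-Fct.} in \Cref{sec:spherical}. Your observation in fact pinpoints the tacit assumption in the paper's own uniqueness step: there, an arbitrary solution $r$ is converted to $u=\pm\sqrt{r}$ and claimed to solve \eqref{eq:u1}, \eqref{eq:u2}, but the computation $\dot u=\dot r/(2\sqrt{r})$ is only carried out where $r>0$; for a delayed translate, $u$ vanishes on $(0,t_*)$ where \eqref{eq:u1} would force $\dot u=(F'(0))^{1/2}/2>0$, and $u$ is not even differentiable at $t_*$, so it is not a solution of the $u$-problem and the argument does not exclude it. In short, the paper's uniqueness proof implicitly selects the same representative you select by hand; compare \Cref{rem:uniqueness}, where the analogous lower-regularity non-uniqueness for the photon surface ODE itself is acknowledged explicitly. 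What the paper's substitution buys beyond yours is the clean structural information (oddness of $u$, genuine Picard--Lindel\"of uniqueness for the regularized problem) that is reused later when gluing and when distinguishing $C^1$- from $C^2$-solutions.
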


\begin{proof}
Let us first only consider the range $r>0$ and multiply both sides of \eqref{eq:IVP alpha} by $r^{-\frac{1}{2}}$, obtaining
\begin{align}\label{eq:DG r^-1+alpha}
r^{-\frac{1}{2}}\,\dot{r}=\left(F(r)\right)^{\frac{1}{2}} r^{-\frac{1}{2}}.
\end{align}
Next we substitute $u\defeq r^\frac{1}{2}$ in~\eqref{eq:DG r^-1+alpha}, which yields the differential equation
\begin{align}\label{eq: DG u}
\dot{u}=\frac{\left(F\!\left(u^2\right)\right)^\frac{1}{2}}{2u}.
\end{align}
We would like to show that the right hand side of \eqref{eq: DG u} can be continuously extended by $\frac{\left(F'(0)\right)^{\frac{1}{2}}}{2}$ as $u\searrow0$ and that the so-extended right hand side is indeed Lipschitz continuous on $\mathcal{I}$. To see this, we observe that as $u\searrow0$, we have
\begin{align*}
\frac{\left(F\!\left(u^2\right)\right)^\frac{1}{2}}{2u}&=\frac{\left(F(0)+F'(0)u^{2}+o\left(u^{2}\right)\right)^\frac{1}{2}}{2u}=\frac{\left(F'(0)+o(1)\right)^{\frac{1}{2}}}{2}
\end{align*}
by Taylor's theorem and $F(0)=0$, so that the right hand side of \eqref{eq: DG u} indeed converges to $\frac{\left(F'(0)\right)^{\frac{1}{2}}}{2}$ as $u\searrow0$. To see that the right hand side of \eqref{eq: DG u} (extended by $\frac{\left(F'(0)\right)^{\frac{1}{2}}}{2}$ in $u=0$) is Lipschitz continuous on $\mathcal{I}$, we compute its derivative 
\begin{align}\label{eq: u derivative}
\frac{\mathrm{d}}{\mathrm{d}u}\frac{\left(F\!\left(u^2\right)\right)^\frac{1}{2}}{2u}&=\frac{F'\!\left(u^2\right)u^2-F\!\left(u^2\right)}{2F\!\left(u^2\right)^{\frac{1}{2}}u^2}
\end{align}
and note that both the numerator and the denominator of the right hand side of \eqref{eq: u derivative} degenerate to $0$ as $u\searrow0$. For the sake of applying l'H\^{o}pital's rule, we take the quotient of the derivatives with respect to $u$ of the numerator and denominator of the right hand side of \eqref{eq: u derivative}, resulting in the term
\begin{align}\label{eq:hopital 1}
\frac{F''\!\left(u^2\right)u^2}{2F\!\left(u^2\right)+F'\!\left(u^2\right)u^2}\times F\!\left(u^2\right)^\frac{1}{2},
\end{align}
where again both the numerator and the denominator of the first factor degenerate to $0$ as $u\searrow0$. Repeating the process with said first factor, we obtain
\begin{align*}
\frac{F'''\!\left(u^{2}\right)u^{2}+F''\!\left(u^{2}\right)}{3F'\!\left(u^{2}\right)+F''\!\left(u^{2}\right)u^{2}}
\end{align*}
which tends to $\frac{F''(0)}{3F'(0)}$ as $u\searrow0$. Hence, by l'H\^{o}pital's rule and recalling that $F'(0)>0$, we have 
\begin{align*}
\lim_{u\searrow0}\frac{\mathrm{d}}{\mathrm{d}u}\frac{\left(F\!\left(u^2\right)\right)^\frac{1}{2}}{2u}&=0.
\end{align*}
Together with the boundedness assumptions on $F$ and $F'$, this implies Lipschitz continuity of the right hand side of~\eqref{eq: DG u} in $u$ uniformly in $t$. In other words, the symmetrized initial value problem
\begin{align}\label{eq:u1}
\dot{u}(t)&=
\left\{
\begin{array}{ll}
\frac{\sqrt{F\left(\left(u(t)\right)^2\right)}}{2\vert u(t)\vert} & \text{for }u(t) \neq0, \\[1ex]
\frac{\sqrt{F'(0)}}{2} & \, \text{otherwise,} \\
\end{array}
\right.\\\label{eq:u2}
u(0)&=0
\end{align}
has a unique global $C^{1}$-solution $u\colon\R\to\R$ by the global Picard--Lindel\"{o}f theorem (see e.g. \cite[p.\,55]{GDG}). Moreover, as $\dot{u}(0)>0$ by \eqref{eq:u1} and $F'(0)>0$, we can conclude that $u$ is non-trivial. Note that $u$ must be odd in $t$, $u(-t)=-u(t)$ for all $t\in\R$, as otherwise $\widetilde{u}(t)\definedas -u(-t)$ would yield another solution to \eqref{eq:u1}, \eqref{eq:u2}.

Next, setting $r(t)\definedas \left(u(t)\right)^{2}$ for $t\geq0$ and $r(t)\definedas0$ for $t\leq0$ for the unique solution $u$ of \eqref{eq:u1}, \eqref{eq:u2} yields a non-trivial global $C^{1}$-solution $r\colon\R\to\mathcal{I}$ of \eqref{eq:IVP alpha}, \eqref{eq:inir} as $\dot{r}(t)=2\dot{u}(t)u(t)\to0$ for $t\searrow0$ by construction. To see that this function $r$ is indeed the unique non-trivial global $C^{1}$-solution to \eqref{eq:IVP alpha}, \eqref{eq:inir}, we argue as follows: Given any solution $r$ of \eqref{eq:IVP alpha}, \eqref{eq:inir}, the fact that $\dot{r}\geq0$ implies that $r(t)=0$ for $t\leq0$. Setting 
\begin{align*}
u(t)&\definedas \left\{
\begin{array}{ll}
\sqrt{r(t)}& \text{for }t \geq0, \\
-\sqrt{r(t)} & \, \text{otherwise,}
\end{array}
\right.
\end{align*}
gives an odd function with
\begin{align*}
\dot{u}(t)&=\frac{\dot{r}(t)}{2\sqrt{r(t)}}=\frac{\sqrt{F(r(t))}}{2\sqrt{r(t)}}=\frac{\sqrt{\left(F(u(t)\right)^{2}}}{2\vert u(t)\vert}
\end{align*}
for $t>0$. Hence $u$ solves the initial value problem \eqref{eq:u1}, \eqref{eq:u2}.
\end{proof}
\begin{Cor}\label{cor: appendix}
Let $\mathcal{I}$ and $F$ be in as in \Cref{prop: appendix}. Then the initial value problem
\begin{align}\label{eq:IVP alpha-}
\dot{s}(t)&= \left\{
\begin{array}{ll}
-\sqrt{F(s(t))} & \text{for }s(t) \in (0,\infty), \\
0 & \, \text{otherwise,} \\
\end{array}
\right.\\\label{eq:inir-}
s(0)&=0
\end{align}
has a unique non-trivial global $C^{1}$-solution $s\colon \mathbb{R}\to [0,\infty)$. Moreover, this solution $s$ satisfies $s(t)=0$ for $t\geq0$.
\end{Cor}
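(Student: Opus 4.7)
My plan is to obtain \Cref{cor: appendix} as a direct consequence of \Cref{prop: appendix} via the elementary time-reversal symmetry of the ODE. Specifically, I observe that if $s\colon\R\to[0,\infty)$ solves \eqref{eq:IVP alpha-}--\eqref{eq:inir-}, then the function $r(t)\definedas s(-t)$ satisfies $\dot r(t)=-\dot s(-t)$ and, wherever $r(t)>0$, the identity $\dot r(t)=-\dot s(-t)=\sqrt{F(s(-t))}=\sqrt{F(r(t))}$; elsewhere $\dot r(t)=-\dot s(-t)=0$. Moreover $r(0)=s(0)=0$. Thus $r$ solves \eqref{eq:IVP alpha}--\eqref{eq:inir}. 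Conversely, if $r$ solves \eqref{eq:IVP alpha}--\eqref{eq:inir}, then $s(t)\definedas r(-t)$ solves \eqref{eq:IVP alpha-}--\eqref{eq:inir-} by the analogous computation. This sets up a bijection between $C^1$-solutions of the two initial value problems that preserves both non-triviality and $C^1$-regularity (since reflection in $t$ is a $C^\infty$-diffeomorphism of $\R$).

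Given this bijection, I would then simply transfer the existence, uniqueness, and support statements from \Cref{prop: appendix}: the unique non-trivial global $C^1$-solution $r$ of \eqref{eq:IVP alpha}--\eqref{eq:inir} provided by \Cref{prop: appendix} satisfies $r(t)=0$ for all $t\leq0$, so its reflection $s(t)=r(-t)$ is the unique non-trivial global $C^1$-solution of \eqref{eq:IVP alpha-}--\eqref{eq:inir-} and satisfies $s(t)=0$ for all $t\geq0$, as claimed.

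There is no real obstacle here. The only point that requires mild care is the uniqueness transfer: one must check that a hypothetical second non-trivial global $C^1$-solution $\widetilde s$ of \eqref{eq:IVP alpha-}--\eqref{eq:inir-} would, via the reflection $\widetilde r(t)\definedas \widetilde s(-t)$, yield a second non-trivial global $C^1$-solution of \eqref{eq:IVP alpha}--\eqref{eq:inir}, contradicting the uniqueness asserted in \Cref{prop: appendix}. This is immediate from the bijection described above, which completes the proof.
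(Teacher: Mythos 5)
Your proposal is correct and follows exactly the same route as the paper: the paper's proof of \Cref{cor: appendix} also sets $r(t)\definedas s(-t)$, notes that this gives a bijection between solutions of the two initial value problems, and invokes \Cref{prop: appendix}. Your write-up simply spells out the sign computation and the uniqueness transfer in more detail than the paper does.
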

\begin{proof}
Any $C^{1}$-solution $s\colon \mathbb{R}\to [0,\infty)$ of \eqref{eq:IVP alpha-}, \eqref{eq:inir-} induces a solution $r\colon \mathbb{R}\to [0,\infty)$ of \eqref{eq:IVP alpha}, \eqref{eq:inir} via $r(t)\definedas s(-t)$ for all $t\in\R$ and vice versa. Hence, \Cref{prop: appendix} asserts the claim.
\end{proof}

\begin{Ex}
Choosing $F(r)=r$ for all $r\in\mathcal{I}$, the IVP \eqref{eq:IVP alpha}, \eqref{eq:inir} has the unique non-trivial solution $r\colon\R\to[0,\infty)$ given by
\begin{align*}
r(t)&=
\left\{
\begin{array}{ll}
\frac{t^{2}}{4}& \text{for }t \geq0, \\
0 & \, \text{otherwise,} \\
\end{array}
\right.
\end{align*} 
which is $C^{1}$ but not $C^{2}$ at $t=0$. This solution corresponds to $u(t)=\frac{t}{2}$ for $t\in\R$.
\end{Ex}

\begin{Rem}\label{rem:regularityappendix}
As can be seen from its proof, the assertion of \Cref{prop: appendix} continues to hold if $F$ is $C^{3}$ with bounded first derivative only on $[0,R_{*})$ for some threshold $R_{*}>0$ provided that $F$ is Lipschitz continuous on $\mathcal{I}=[0,\infty)$. Moreover, it suffices that $F(r)>0$ for $r\in[0,R_{*})$ and $F(r)\geq0$ for $r\geq R_{*}$. Indeed, the analysis of the right hand side of the ODE~\eqref{eq: DG u} is unaffected by these changes; in particular, said right hand side continues to be Lipschitz continuous which allows for the application of the global Picard--Lindel\"of theorem. \end{Rem}

\begin{prop}\label{prop: appendix P-L lapse}
	The IVP 
	\begin{align*}
		0=&\bigg(u''(r)u(r)r-\frac{r}{2}u'(r)^2+\frac{n}{2}u'(r)u(r)\bigg)\left(C^2+u(r)-1\right)-\frac{r}{2}u'(r)^2u(r)\\
		&+\frac{\kappa^2\mathrm{R}_{\sigma_i}}{8(n-1)}r^{2(n-1)}u'(r)^3-\frac{r^2}{8(n-2)}u'(r)^3,\\
		u(r_0)=&u_0,\\
		u'(r_0)=&u'_0
	\end{align*}
from Proposition \ref{prop:normalquasiloc} has a unique solution on some open interval $(r_0-\varepsilon, r_0+\varepsilon).$
\end{prop}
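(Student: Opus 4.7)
The plan is to reduce the given second-order ODE to a first-order system in standard form and apply the classical local Picard--Lindel\"of theorem. The crucial observation is that $u''$ appears in the equation only through the single term $u''(r)\, u(r)\, r\, (C^2 + u(r) - 1)$, with coefficient $u(r)\, r\, (C^2 + u(r) - 1)$. Provided that $u_0 \neq 0$, $r_0 \neq 0$, and $C^2 + u_0 - 1 \neq 0$ at the initial point --- conditions that are ensured by the geometric context in which this ODE arises (positive lapse function, positive radius, etc., coming from \Cref{prop:normalquasiloc}) --- this coefficient is nonvanishing on a neighborhood of $r_0$, and we may algebraically solve for $u''$:
\begin{equation*}
u''(r) \;=\; G\bigl(r,u(r),u'(r)\bigr),
\end{equation*}
where $G$ is a rational function of $(r,u,u')$ whose denominator is $u\cdot r\cdot (C^2+u-1)$ and hence smooth (in fact real-analytic) on a neighborhood of $(r_0, u_0, u_0')$.

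Next, I would pass to the first-order system by setting $v_1 \definedas u$ and $v_2 \definedas u'$, which rewrites the ODE as
\begin{equation*}
v_1'(r) = v_2(r), \qquad v_2'(r) = G\bigl(r, v_1(r), v_2(r)\bigr),
\end{equation*}
with initial conditions $v_1(r_0) = u_0$ and $v_2(r_0) = u_0'$. Since $G$ is smooth near $(r_0, u_0, u_0')$, the right-hand side of this system is $C^1$ and in particular locally Lipschitz in $(v_1, v_2)$ uniformly in $r$ near $r_0$. The classical (local) Picard--Lindel\"of theorem (see e.g.\ \cite[p.\,55]{GDG}) then yields an $\varepsilon > 0$ and a unique $C^1$ solution $(v_1, v_2)\colon (r_0 - \varepsilon, r_0 + \varepsilon) \to \R^2$; bootstrapping with the smoothness of $G$ shows the solution is in fact smooth. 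Setting $u \definedas v_1$ produces the unique solution of the original second-order IVP on $(r_0 - \varepsilon, r_0 + \varepsilon)$.

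In contrast to \Cref{prop: appendix}, where the degeneracy of the right-hand side at $r = 0$ forced a delicate substitution $u = r^{1/2}$ and a careful l'H\^opital analysis to secure Lipschitz continuity, no such difficulty arises here: the only work is verifying that the leading coefficient of $u''$ does not vanish at $r_0$. Consequently I expect the main obstacle to be essentially bookkeeping --- explicitly writing out $G$ and confirming the non-degeneracy of its denominator using the standing hypotheses from \Cref{prop:normalquasiloc} --- rather than any substantive analytic difficulty.
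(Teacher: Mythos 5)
Your proposal is correct and follows essentially the same route as the paper: reduce the second-order equation to a first-order system, observe that the coefficient of $u''$, namely $u\,r\,(C^2+u-1)$, is bounded away from zero near the initial data, and invoke the local Picard--Lindel\"of theorem. The only cosmetic difference is that the paper substitutes $w = u'/u$ (which lets it solve the $u$-equation explicitly as $u = u_0\,e^{\int_{r_0}^{r}w}$) instead of your standard $(v_1,v_2)=(u,u')$, but the non-degeneracy requirement and the conclusion are the same.
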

\begin{proof}
	We substitute $w(r):=\frac{u'(r)}{u(r)}$ in order to obtain the ODE-system of two first order equations
	\begin{align*}
		w'=&-\frac{1}{\left(C^2+u(r)-1\right)r}\bigg(\frac{n}{2}\left(C^2+u(r)-1\right)w+\left(\left(C^2+u(r)-1\right)-u\right)\frac{r}{2}w^2\\
		&+\left[\frac{\kappa^2\mathrm{R}_{\sigma_i}}{(n-1)}r^{2(n-1)}-\frac{r^2}{(n-2)}\right]\frac{u}{8}w^3\bigg),\\
		u'=&wu,
	\end{align*}
where we omit the dependence on $r$ for simplicity.
The second ODE can be solved explicitly by $u(r)=u_0\,\mathrm{e}^{\int_{r_0}^{r}w(s)\mathrm{d}s},$ 
while we can apply the local version of Picard--Lindelöff  with initial value condition $w(r_0)=\frac{u'_0}{u_0}$ to the first, as the denominator $r\left(C^2+u(r)-1\right)$ stays away from $0.$ 
\end{proof}

\bibliographystyle{amsplain}
\bibliography{photon-surfaces}

\end{document}